\documentclass[format=acmsmall, review=false]{acmart}
\usepackage{acm-ec-20}
\usepackage{booktabs} 
\usepackage[ruled]{algorithm2e} 
\usepackage{algorithmic}

\SetAlFnt{\small}
\SetAlCapFnt{\small}
\SetAlCapNameFnt{\small}
\SetAlCapHSkip{0pt}
\IncMargin{-\parindent}

\usepackage{graphicx}
\usepackage{xcolor}
\usepackage{amsmath}
\usepackage{booktabs}
\urlstyle{same}
\usepackage{dsfont}
\usepackage{enumerate}
\usepackage{multirow}
\usepackage{nicematrix}
\usepackage{amsthm}
\usepackage{amsfonts}
\usepackage{pifont}
\usepackage{mathrsfs}
\usepackage{enumitem}
\usepackage{nicefrac}
\usepackage{tikz,pgfplots}
\usepackage{natbib}
\usepackage{thm-restate}
\usepackage{mathtools}
\usepackage{physics}
\usepackage[capitalize]{cleveref}
\usepackage{bbm}

\setcitestyle{authoryear}

\allowdisplaybreaks

\newtheorem{theorem}{Theorem}[section]
\newtheorem{corollary}[theorem]{Corollary}
\newtheorem{proposition}[theorem]{Proposition}
\newtheorem{lemma}[theorem]{Lemma}

\theoremstyle{definition}
\newtheorem{definition}[theorem]{Definition}
\newtheorem{example}[theorem]{Example}

\usetikzlibrary{positioning}
\usetikzlibrary{arrows.meta}
\tikzset{>={Latex[width=1.5mm,length=1.5mm]}}
\pgfplotsset{compat=1.18}

\DeclareMathOperator*{\argmax}{arg\,max}
\DeclareMathOperator*{\argmin}{arg\,min}

\newcommand{\Z}{\mathbb{Z}}
\newcommand{\Zplus}{\Z_{>0}}
\newcommand{\Zpluszero}{\Z_{\geq 0}}
\newcommand{\R}{\mathbb{R}}
\newcommand{\Rplus}{\R_{>0}}
\newcommand{\Rpluszero}{\R_{\geq 0}}
\newcommand{\inlinefrac}[2]{#1/#2}

\renewcommand{\emptyset}{\varnothing}

\newcommand{\condnorealgeneral}{1}
\newcommand{\condnorealnormtwo}{2}
\newcommand{\condnointidengood}{3}
\newcommand{\condnointbinary}{4}
\newcommand{\condnointtwovalue}{5}
\newcommand{\condnointgeneral}{6}

\newcommand{\condrealgeneral}{Condition~\condnorealgeneral}
\newcommand{\condrealnormtwo}{Condition~\condnorealnormtwo}
\newcommand{\condintidengood}{Condition~\condnointidengood}
\newcommand{\condintbinary}{Condition~\condnointbinary}
\newcommand{\condinttwovalue}{Condition~\condnointtwovalue}
\newcommand{\condintgeneral}{Condition~\condnointgeneral}

\title[On the Fairness of Additive Welfarist Rules]{On the Fairness of Additive Welfarist Rules}

\author{Karen Frilya Celine}
\affiliation{%
  \institution{National University of Singapore, Singapore}}

\author{Warut Suksompong}
\affiliation{%
  \institution{National University of Singapore, Singapore}}

\author{Sheung Man Yuen}
\affiliation{%
  \institution{National University of Singapore, Singapore}}

\begin{abstract}
Allocating indivisible goods is a ubiquitous task in fair division.
We study \emph{additive welfarist rules}, an important class of rules which choose an allocation that maximizes the sum of some function of the agents' utilities.
Prior work has shown that the maximum Nash welfare (MNW) rule is the unique additive welfarist rule that guarantees envy-freeness up to one good (EF1).
We strengthen this result by showing that MNW remains the only additive welfarist rule that ensures EF1 for identical-good instances, two-value instances, as well as normalized instances with three or more agents.
On the other hand, if the agents' utilities are integers, we demonstrate that several other rules offer the EF1 guarantee, and provide characterizations of these rules for various classes of instances.
\end{abstract}

\begin{document}

\maketitle


\section{Introduction}
\label{sec:intro}

The distribution of limited resources to interested agents is a fundamental problem in society.
Fairness often plays a key role in the distribution process---whether it be dividing inheritance between family members, allocating the national budget across competing sectors, or sharing credit and responsibility amongst participants in collaborative efforts.
The complexity of this problem has led to the research area of \emph{fair division}, which develops methods and algorithms to ensure that all agents feel fairly treated and has been studied in both economics and computer science \citep{BramsTa96,RobertsonWe98,Moulin03,Moulin19,Thomson16}.

What does it mean for an allocation of the resources to be ``fair''?
The answer depends on the context and the specific fairness benchmark applied.
Among the plethora of fairness criteria that have been proposed, one of the most prominent is \emph{envy-freeness}, which states that no agent should prefer another agent's share to her own \citep{Foley67,Varian74}.
In the ubiquitous setting of allocating \emph{indivisible goods}---such as houses, cars, jewelry, and artwork---it is sometimes infeasible to attain envy-freeness.
For example, if a single valuable good is to be allocated between two or more agents, only one of the agents can receive the good, thereby incurring envy from the remaining agents.
In light of this, envy-freeness is often relaxed to \emph{envy-freeness up to one good (EF1)}.
The EF1 criterion allows an agent to envy another agent provided that the removal of some good in the latter agent's bundle would eliminate the former agent's envy.
It is known that an EF1 allocation always exists, and that such an allocation can be computed in polynomial time \citep{LiptonMaMo04,Budish11,CaragiannisKuMo19}.

In addition to fairness, another important property of allocations is \emph{efficiency}.
A well-known efficiency notion is \emph{Pareto optimality (PO)}, which stipulates that no other allocation makes some agent better off and no agent worse off.
In a far-reaching work, \citet{CaragiannisKuMo19} proved that a \emph{maximum Nash welfare (MNW)} allocation, which maximizes the product of the agents' utilities across all possible allocations,%
\footnote{This is equivalent to maximizing the sum of the logarithm of the agents' utilities.}
satisfies both EF1 and PO, thereby offering fairness and efficiency simultaneously.
In spite of this, the MNW rule does have limitations.
For example, it avoids giving an agent zero utility at all costs---an allocation that gives most agents a tiny positive utility is preferred to another allocation that gives one agent zero utility and every other agent a large utility, as illustrated in the following example.

\begin{example}
\label{ex:mnw}
Consider an instance with $n \geq 4$ agents and $n$~goods $g_1,\dots,g_n$ such that the utilities of the goods are as follows.
\begin{itemize}
    \item $u_1(g_1) = n$.
    \item For $i \in \{2, \ldots, n\}$, $u_i(g_{i-1}) = n-1$ and $u_i(g_i) = 1$.
    \item $u_i(g) = 0$ for all other pairs $(i, g)$.
\end{itemize}
The only allocation $\mathcal{A}$ that gives positive utility to every agent is the one that assigns good $g_i$ to agent $i$ for all $i \in \{1, \ldots, n\}$---this is also the MNW allocation.
The sum of the agents' utilities in this allocation is $n + (n-1) = 2n-1$.
On the other hand, consider the allocation $\mathcal{B}$ that gives good $g_1$ to agent $1$, $g_i$ to agent $i+1$ for $i \in \{2, \ldots, n-1\}$, and $g_n$ to agent $n$.
The sum of the agents' utilities in this allocation is $n^2-2n+3$, which is much larger than that in $\mathcal{A}$.
Not only is $\mathcal{B}$ also EF1 and PO, but it also ensures that one agent receives the same utility, all remaining agents except one receive much higher utility, while the exceptional agent receives only marginally lower utility.
\end{example}

The MNW rule belongs to the class of \emph{additive welfarist rules}---rules that choose an allocation that maximizes the sum of some function of the agents' utilities for their bundles \cite[p.~67]{Moulin03}.
Additive welfarist rules have the advantage that they satisfy PO by definition.
\citet{Suksompong23} showed that MNW is the only additive welfarist rule%
\footnote{\citet{YuenSu23} extended this result by showing that MNW is the only (not necessarily additive) welfarist rule that guarantees EF1 for all instances.} that guarantees EF1 for all instances.
Nevertheless, there exist other additive welfarist rules that guarantee EF1 for restricted classes of instances.
For example, \citet[Appendix C]{MontanariScSu25} proved that the \emph{maximum harmonic welfare (MHW)} rule ensures EF1 for the class of \emph{integer-valued} instances.
Note that this is an important class of instances in practice---for example, the popular fair division website Spliddit \citep{GoldmanPr14} only allows each user to specify integer values for goods summing up to $1000$.
In \Cref{ex:mnw}, allocation $\mathcal{B}$ is the MHW allocation, and we saw earlier that $\mathcal{B}$ has certain advantages compared to the MNW allocation $\mathcal{A}$.
Beyond integer-valued instances, \citet{EckartPsVe24} showed that certain additive welfarist rules based on \emph{$p$-mean} functions, besides the MNW rule, also yield EF1 for the class of \emph{normalized} instances with two agents---normalized instances have the property that the utility for the entire set of goods is the same for every agent.
By the remark above, all instances on Spliddit are normalized.

In this paper, we study the necessary and sufficient conditions for additive welfarist rules to guarantee EF1 for different classes of instances.
These classes include integer-valued instances, identical-good instances, two-value instances, normalized instances, as well as some of their combinations.
We also establish relationships between the conditions for different classes of instances.

\subsection{Our Results}

In our model, we assume that each agent has an additive utility function over a set of indivisible goods.
We consider additive welfarist rules, each of which is defined by a function $f$; the rule chooses an allocation that maximizes the sum of the function $f$ applied to the agents' utilities for their own bundles.
In order for the additive welfarist rules to ensure that agents receive as much utility as possible, we assume that $f$ is strictly increasing.%
\footnote{
  If $f$ is not strictly increasing, then the allocation chosen by the additive welfarist rule may not even be PO.
  In fact, we show in Appendix~\ref{ap:welfare_function_strictly_increasing} that even if $f$ is \emph{non-decreasing} but not \emph{strictly increasing}, then there exists an identical-good normalized instance such that the additive welfarist rule with function $f$ does \emph{not} always choose an EF1 allocation.
  This means that it is desirable for $f$ to be strictly increasing.
}
For our characterizations, we only consider instances that are \emph{positive-admitting}, i.e., instances in which there exists an allocation that gives positive utility to every agent.%
\footnote{
  Even the MNW rule does not always return EF1 for instances that are not positive-admitting, unless there are additional tie-breaking mechanisms \citep{CaragiannisKuMo19}.
}
Our model is described formally in \Cref{sec:prelim}.

As one of the highlights of our paper, we extend the result of \citet{Suksompong23} that the MNW rule is the unique additive welfarist rule that guarantees EF1 for all (positive-admitting) instances.
In particular, we show that the same result still holds even if we restrict the class of instances to only identical-good instances, where each agent values every good equally but the value may differ between agents.
This solidifies the advantage of MNW over other additive welfarist rules and indicates that the unique fairness of MNW stems from its \emph{scale-invariance},\footnote{That is, scaling the utility function of an agent by a constant factor does not change the outcome of the MNW rule.} since for normalized identical-good instances, an additive welfarist rule with any strictly concave function $f$ guarantees EF1.
The proof of this result is technically involved, requiring a careful mathematical analysis of the growth rate of functions.
The theorem statement and its proof can be found in \Cref{subsec:real_idengood}.

To provide a more complete picture, we complement this result by studying other classes of instances.
In \Cref{subsec:real_twovalue,subsec:real_normalized}, we show that MNW remains the unique additive welfarist rule that ensures EF1 for the classes of two-value instances as well as normalized instances with three or more agents.
Our result on normalized instances with three or more agents extends a result of \citet[Theorem 6]{EckartPsVe24}, which only handles additive welfarist rules that are $p$-mean rules.
For normalized instances with two agents, however, this characterization ceases to hold, and we provide a necessary condition and a sufficient condition for additive welfarist rules to guarantee EF1.

In \Cref{sec:integer}, we consider integer-valued instances.
For such instances, not only the MNW rule but also the MHW rule guarantees EF1 \citep{MontanariScSu25}.
We shall attempt to characterize rules with this property.
We show that for identical-good, binary, and two-value instances (which are also integer-valued), the respective functions defining the additive welfarist rules are characterized by Conditions {\condnointidengood}, {\condnointbinary}, and {\condnointtwovalue} respectively (see \Cref{def:conditions}).
This allows much larger classes of functions beyond the MNW rule or the MHW rule.
We provide examples and non-examples of $p$-mean rules as well as additive welfarist rules defined by modified logarithmic functions and modified harmonic functions that guarantee EF1 for these classes of instances.
For the larger class of all integer-valued instances, we give a necessary condition and a sufficient condition for such additive welfarist rules.

In general, our characterizations are proven using the following outline:
When a certain instance admits a unique EF1 allocation, an additive welfarist rule guaranteeing EF1 needs to return this allocation.
In particular, moving some good to a different bundle in such a way that the allocation is no longer EF1 must decrease the social welfare.
That is, the increase in welfare caused by adding the good to the new bundle must be strictly less than the decrease in welfare due to the removal of the good from the original bundle.
This imposes a restriction on the growth rate of a function defining an additive welfarist rule guaranteeing EF1.

For all of our characterization results (Theorems~\ref{thm:real_idengood}, 
\ref{thm:real_twovalue}, \ref{thm:real_normalized_three}, \ref{thm:integer_idengood}, \ref{thm:integer_binary}, and \ref{thm:integer_twovalue}), we show that if a function defining an additive welfarist rule guarantees EF1 for that particular class of instance for \emph{some} number of agents $n$, then it also guarantees EF1 for that class for \emph{every} number of agents $n$.

Our results are summarized in Table~\ref{tab:results}.
The relationships between the different conditions are summarized in \Cref{fig:implications}, and the relationships between additive welfarist rules that guarantee EF1 for different classes of instances are illustrated in \Cref{fig:venn_diagram}.

\renewcommand{\arraystretch}{1.2}

\begin{table*}[t]
\centering
\resizebox{\textwidth}{!}{
\begin{NiceTabular}{|c||c|c|}
\hline 
    & \textbf{Real-valued}
    & \textbf{Integer-valued} \\ \hline \hline
\textbf{Identical-good}
    & Logarithmic (Th.~\ref{thm:real_idengood})
    & {\condintidengood} (Th.~\ref{thm:integer_idengood}) \\ \hline
\textbf{Binary}
    & {\condintbinary} (Th.~\ref{thm:integer_binary})
    & {\condintbinary} (Th.~\ref{thm:integer_binary}) \\ \hline
\textbf{Two-value}
    & Logarithmic (Th.~\ref{thm:real_twovalue})
    & {\condinttwovalue} (Th.~\ref{thm:integer_twovalue}) \\ \hline
\textbf{Normalized}
    & $n \geq 3$: Logarithmic (Th.~\ref{thm:real_normalized_three})
    & ? \\ \hline
\textbf{No other restrictions}
    & Logarithmic \citep{CaragiannisKuMo19,Suksompong23}
    & ? \\
\hline
\end{NiceTabular}
}
\vspace{2mm}
\caption{
  Conditions on the functions defining the additive welfarist rules that guarantee EF1 allocations for different classes of instances (see \Cref{def:conditions}).
  The logarithmic function corresponds to the MNW rule.
  The symbol ``?'' means that the precise condition remains unknown.
  For normalized instances with two agents, concaveness is a necessary condition (\Cref{prop:real_normalized_two_necessary}) and {\condrealnormtwo} is a sufficient condition (\Cref{prop:real_normalized_two_sufficient}).
  For integer-valued instances with no other restrictions, {\condintgeneral}a is a necessary condition (\Cref{prop:integer_general_necessary}) and {\condintgeneral}b is a sufficient condition (\Cref{prop:integer_general_sufficient}).  
} 
\label{tab:results}
\end{table*}

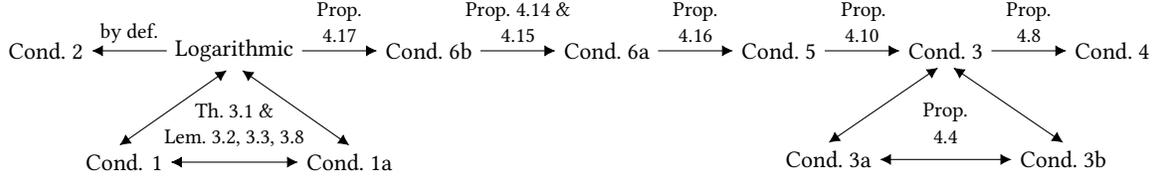
\begin{figure*}[t]
\centering
\begin{tikzpicture}
\fontsize{7.5}{9}\selectfont
\node (logarithmic) {Logarithmic};
\node[text width={width("Cond.~{\condnorealgeneral}a")}, align=right] (cond1) [below left=1cm and -0.05cm of logarithmic] {Cond.~{\condnorealgeneral}};
\node (cond1a) [below right=1cm and -0.05cm of logarithmic] {Cond.~{\condnorealgeneral}a};
\node (cond2) [left=of logarithmic] {Cond.~{\condnorealnormtwo}};
\node (cond6b) [right=of logarithmic] {Cond.~{\condnointgeneral}b};
\node (cond6a) [right=of cond6b] {Cond.~{\condnointgeneral}a};
\node (cond5) [right=of cond6a] {Cond.~\condnointtwovalue};
\node (cond3) [right=of cond5] {Cond.~{\condnointidengood}};
\node (cond3a) [below left=1cm and 0.27cm of cond3] {Cond.~{\condnointidengood}a};
\node (cond3b) [below right=1cm and 0.27cm of cond3] {Cond.~{\condnointidengood}b};
\node (cond4) [right=of cond3] {Cond.~{\condnointbinary}};

\draw[->] (logarithmic.west) -- node [midway, above, font=\scriptsize, align=center] {by def.} (cond2.east);
\draw[<->] (logarithmic.south)++(-0.1cm,0) -- (cond1.north);
\draw[<->] (logarithmic.south)++(0.1cm,0) -- (cond1a.north);
\draw[<->] (cond1.east) -- node [midway, above, font=\scriptsize, align=center, yshift=0.1cm] {Th.~\ref{thm:real_idengood} \& \\  Lem.~\ref{lem:real_idengood_cond1}, \ref{lem:real_idengood_constant}, \ref{lem:real_idengood_log}} (cond1a.west);
\draw[->] (logarithmic.east) -- node [midway, above, font=\scriptsize, align=center] {Prop.\\\ref{prop:integer_general_log}} (cond6b.west);
\draw[->] (cond6b.east) -- node [midway, above, font=\scriptsize, align=center] {Prop.~\ref{prop:integer_general_necessary} \& \\\ref{prop:integer_general_sufficient}} (cond6a.west);
\draw[->] (cond6a.east) -- node [midway, above, font=\scriptsize, align=center] {Prop.\\\ref{prop:integer_general_implies_twovalue}}(cond5.west);
\draw[->] (cond5.east) -- node [midway, above, font=\scriptsize, text width=1cm, align=center] {Prop.\\\ref{prop:integer_twovalue_sufficient}} (cond3.west);
\draw[<->] (cond3.south)++(-0.1cm,0) -- (cond3a.north);
\draw[<->] (cond3.south)++(0.1cm,0) -- (cond3b.north);
\draw[<->] (cond3a.east) -- node [midway, above, font=\scriptsize, align=center, yshift=0.1cm] {Prop.\\\ref{prop:condition_three_eqv}} (cond3b.west);
\draw[->] (cond3.east) -- node [midway, above, font=\scriptsize, align=center] {Prop.\\\ref{prop:integer_idengood_implies_binary}} (cond4.west);
\end{tikzpicture}
\caption{Relationships between the different conditions (see \Cref{def:conditions}).}
\label{fig:implications}
\end{figure*}

\section{Preliminaries}
\label{sec:prelim}

Let $N = \{1, \ldots, n\}$ be a set of $n \geq 2$ agents and $M = \{g_1, \ldots, g_m\}$ be a set of goods.
Each agent $i \in N$ has a \emph{utility function} $u_i : 2^M \to \Rpluszero$ such that $u_i(S)$ is $i$'s utility for a subset $S$ of goods; we write $u_i(g)$ instead of $u_i(\{g\})$ for a single good $g \in M$.
The utility function is \emph{additive}, i.e., for each $S \subseteq M$, $u_i(S) = \sum_{g \in S} u_i(g)$; in particular, $u_i(\emptyset) = 0$.
An \emph{instance} consists of $N$, $M$, and $(u_i)_{i \in N}$.

An allocation $\mathcal{A} = (A_1, \ldots, A_n)$ is an ordered partition of $M$ into $n$ bundles $A_1, \ldots, A_n$ such that $A_i$ is allocated to agent $i \in N$.
An allocation is \emph{envy-free up to one good (EF1)} if for all $i, j \in N$ with $A_j \neq \emptyset$, there exists a good $g \in A_j$ such that $u_i(A_i) \geq u_i(A_j \setminus \{g\})$.

We say that an instance is
\begin{itemize}
    \item \emph{integer-valued} if $u_i(g) \in \Zpluszero$ for all $i \in N$ and $g \in M$;
    \item \emph{identical-good} if for each $i \in N$, there exists $a_i \in \Rplus$ such that $u_i(g) = a_i$ for all $g \in M$;
    \item \emph{binary} if $u_i(g) \in \{0, 1\}$ for all $i \in N$ and $g \in M$;
    \item \emph{two-value} if there exist distinct $a_1, a_2 \in \Rpluszero$ such that $u_i(g) \in \{a_1, a_2\}$ for all $i \in N$ and $g \in M$;\footnote{Two-value instances have received attention in fair division, particularly in relation to the MNW rule \citep{AmanatidisBiFi21,AkramiChHo22}.}
    \item \emph{normalized} if $u_i(M) = u_j(M)$ for all $i, j \in N$;
    \item \emph{positive-admitting} if there exists an allocation $(A_1, \ldots, A_n)$ such that $u_i(A_i) > 0$ for all $i \in N$.
\end{itemize}

Let an instance be given.
An \emph{additive welfarist rule} is defined by a strictly increasing function $f : \Rpluszero \to \R \cup \{-\infty\}$ such that the rule chooses an allocation $\mathcal{A} = (A_1, \ldots, A_n)$ that maximizes $\sum_{i \in N} f(u_i(A_i))$; if there are multiple such allocations, the rule chooses one arbitrarily.
Since $f$ is strictly increasing, $f(x) = -\infty$ implies that $x = 0$, which means that $f(x) \in \R$ for all $x \in \Rplus$.

A function $f$ is \emph{strictly concave} if for any distinct $x, y \in \Rpluszero$ and $\alpha \in (0, 1)$, we have $f(\alpha x + (1-\alpha)y) > \alpha f(x) + (1-\alpha)f(y)$.
We use $\log$ to denote the natural logarithm, and define $\log 0$ as $-\infty$.

\begin{definition}
\label{def:conditions}
Let $f: \Rpluszero \to \R \cup \{-\infty\}$ be a strictly increasing function.
For each $k \in \Zpluszero$, define $\Delta_{f, k} : \Rplus \to \R \cup \{\infty\}$ such that $\Delta_{f, k}(x) = f((k+1)x) - f(kx)$.
The function $f$ is said to satisfy
\begin{itemize}
    \item \emph{\condrealgeneral} if $\Delta_{f, k}(b) > \Delta_{f, k+1}(a)$ for all $k \in \Zpluszero$ and $a, b \in \Rplus$;
    \item \emph{{\condrealgeneral}a} if $\Delta_{f, k}$ is a constant function (on domain $\Rplus$) for every $k \in \Zplus$;
    \vspace{2mm}
    \item \emph{\condrealnormtwo} if $f(a) + f(b) < f(c) + f(d)$ for all $a, b, c, d \in \Rpluszero$ such that $\min\{a, b\} \leq \min\{c, d\}$ and $ab < cd$;
    \vspace{2mm}
    \item \emph{\condintidengood} if $\Delta_{f, k}(b) > \Delta_{f, k+1}(a)$ for all $k \in \Zpluszero$ and $a, b \in \Zplus$;
    \item \emph{{\condintidengood}a} if $\Delta_{f, \ell}(b) > \Delta_{f, k}(a)$ for all $k, \ell \in \Zpluszero$ and $a, b \in \Zplus$ such that $\ell < k$;
    \item \emph{{\condintidengood}b} if $\Delta_{f, k}(1) > \Delta_{f, k+1}(a) > \Delta_{f, k+2}(1)$ for all $k \in \Zpluszero$ and $a \in \Zplus$;
    \vspace{2mm}
    \item \emph{\condintbinary} if $\Delta_{f, k}(1) > \Delta_{f, k+1}(1)$ for all $k \in \Zpluszero$;
    \vspace{2mm}
    \item \emph{\condinttwovalue} if $f((\ell+1)b+ra) - f(\ell b+ra) > \Delta_{f, k+1}(a) > \Delta_{f, k+2}(1)$ for all $a, b \in \Zplus$ and $k, \ell, r \in \Zpluszero$ such that $a \geq b$ and $(k+1)b > \ell b + ra$;
    \vspace{2mm}
    \item \emph{{\condintgeneral}a} if $f((k+1)b-1) - f(kb-1) > \Delta_{f, k}(a)$ for all $k, a, b \in \Zplus$;
    \item \emph{{\condintgeneral}b} if $f(y+b) - f(y) > f(x+a) - f(x)$ for all $a, b \in \Zplus$ and $x, y \in \Zpluszero$ such that $x/a \geq (y+1)/b$.
\end{itemize}
\end{definition}

Note that for $k \in \Zpluszero$ and $x \in \Rplus$, we have $f((k+1)x) > f(kx)$ since $f$ is strictly increasing, so $\Delta_{f, k}$ is well-defined and positive.
Moreover, we have $\Delta_{f, k}(x) = \infty$ if and only if $k = 0$ and $f(0) = -\infty$.

Before proceeding further, we provide some intuition behind these conditions. 
Firstly, {\condrealgeneral}a is a characterization of the logarithmic function $f(x) = \alpha \log x + \beta$ where $\alpha \in \Rplus$ and  $\beta \in \R$: clearly, any logarithmic function satisfies {\condrealgeneral}a, while for the converse, see \Cref{lem:real_idengood_log}.
Most other conditions are related to {\condrealgeneral}a in that they impose some restrictions on $\Delta_{f, k}$.
For example, Conditions~{\condnointidengood}b and {\condnointtwovalue} allow $\Delta_{f, k}$ to take on any values as long as they lie within specific upper and lower bounds, whereas {\condintbinary} is equivalent to being concave on the integer domain.
In other words, these conditions are satisfied by functions with properties similar to $\log$, with varying degrees of similarity depending on the conditions.
{\condrealnormtwo} is slightly different from the rest in that it references not only the logarithmic function (since $ab < cd$ is equivalent to $\log a + \log b < \log c + \log d$) but also the $\min$ function.

\subsection{Examples of Additive Welfarist Rules}
\label{subsec:examples}

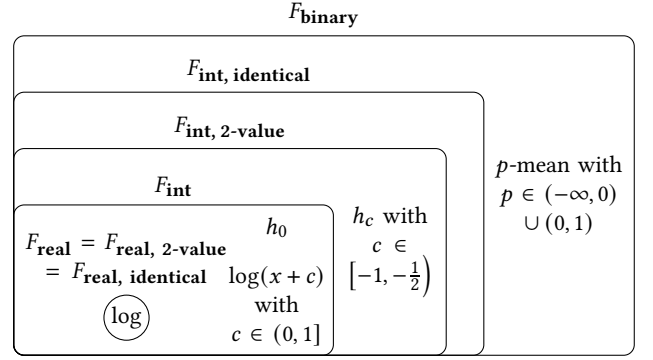
\begin{figure}[t]
\fontsize{9}{12}\selectfont
\centering
\begin{tikzpicture}
\draw (1.5, 0.5) circle (0.3) node {$\log$};
\draw (1.5, 0.8) node [text=black,above,text width=10em,text centered,font=\small] {$F_{\textbf{real}} = F_{\textbf{real, 2-value}}$\\$=F_{\textbf{real, identical}}$};

\draw[rounded corners] (0, 0) rectangle (5.25, 2);
\draw (2.625, 2) node [text=black,above,font=\small] {$F_{\textbf{int}}$};
\draw (4, 0.65) node [text=black,text width=10em,text centered] {$\log(x+c)$ with \\ $c \in (0, 1]$};
\draw (4, 1.7) node [text=black,text width=5em,text centered] {$h_0$};

\draw[rounded corners] (0, 0) rectangle (7.75, 2.75);
\draw (3.875, 2.75) node [text=black,above,font=\small] {$F_{\textbf{int, 2-value}}$};
\draw (6.5, 1.375) node [text=black,text width=10em,text centered] {$h_c$ with \\ $c \in \left[-1, -\frac{1}{2}\right)$};

\draw[rounded corners] (0, 0) rectangle (8, 3.5);
\draw (4, 3.5) node [text=black,above,font=\small] {$F_{\textbf{int, identical}}$};

\draw[rounded corners] (0, 0) rectangle (11.5, 4.25);
\draw (5.75, 4.25) node [text=black,above,font=\small] {$F_{\textbf{binary}}$};
\draw (9.75, 2.125) node [text=black,text width=10em,text centered] {$p$-mean with \\ $p \in (-\infty, 0) \cup\, (0,1)$};
\end{tikzpicture}
\caption{Venn diagram showing the relationships between additive welfarist rules that guarantee EF1 for different classes of instances (see \Cref{tab:results} for the full names of these classes).
Each region, labeled by $F$ with a subscript, represents the set of functions defining the additive welfarist rules that guarantee EF1 for the class of instances corresponding to the subscript.
Some examples of these functions are given in the respective regions.}
\label{fig:venn_diagram}
\end{figure}

The \emph{maximum Nash welfare (MNW) rule} is the additive welfarist rule defined by the function $\log$.
We consider variations of this rule by defining, for each $c \in \Rpluszero$, the \emph{modified logarithmic function} $\lambda_c : \Rpluszero \to \R \cup \{-\infty\}$ such that $\lambda_c(x) = \log(x+c)$.

The \emph{maximum harmonic welfare (MHW) rule} is the additive welfarist rule defined by the function $h_0(x) = \sum_{t=1}^x 1/t$.
Note that this definition only makes sense when $x$ is a non-negative integer; hence, we extend its definition to the non-negative real domain using the function $h_0(x) = \int_{0}^{1} \frac{1-t^{x}}{1-t} \, \dd t$ \citep{Hintze19}.
We also consider variations of this rule.
We define, for each $c \geq -1$, the \emph{modified harmonic function} $h_c : \Rpluszero \to \R \cup \{-\infty\}$ such that%
\footnote{This family of functions is defined for $c \in [-1, 0]$ in the domain of non-negative integers in previous work \citep{MontanariScSu25}.}
\begin{align*}
    h_{-1}(x) = \int_{0}^{1}{\frac{1-t^{x-1}}{1-t}}\, \dd t \hspace{2mm} \text{ and } \hspace{2mm} h_c(x) = \int_{0}^{1} \frac{t^c-t^{x+c}}{1-t} \, \dd t
\end{align*}
for $c > -1$.
When the domain of $h_c$ is restricted to the set of non-negative integers, we can rewrite%
\footnote{We prove this equivalence in Appendix~\ref{ap:harmonic_extension}.}
the values as
\begin{align*}
    h_{-1}(x) =
    \begin{cases}
        \sum_{t=1}^{x-1} \frac{1}{t} & \text{if $x \geq 1$}; \\
        -\infty                    & \text{if $x = 0$}
    \end{cases}
    \hspace{2mm} \text{ and } \hspace{2mm}
    h_c(x) = \sum_{t=1}^x \frac{1}{t+c}
\end{align*}
for $c > -1$ and all $x \in \Zpluszero$.

An \emph{MNW allocation} (resp.~\emph{MHW allocation}) is an allocation chosen by the MNW rule (resp.~MHW rule).

For $p \in \R$, the \emph{(generalized) $p$-mean rule} is an additive welfarist rule defined by the function $\varphi_p$, where
\begin{align*}
    \varphi_p(x) = \begin{cases}
        x^p & \text{if $p > 0$;} \\
        \log x & \text{if $p = 0$;} \\
        -x^p & \text{if $p < 0$.}
    \end{cases}
\end{align*}
Hence, the MNW rule is also the $0$-mean rule.

\section{Real-Valued Instances}
\label{sec:real}

We start with the general case where the utility of a good can be any (non-negative) real number.
As mentioned earlier, the only additive welfarist rule that guarantees EF1 for  all such instances is the MNW rule \citep{Suksompong23}.
In this section, we consider the subclasses of \emph{identical-good} instances (\Cref{subsec:real_idengood}), \emph{two-value} instances (\Cref{subsec:real_twovalue}), and \emph{normalized} instances (\Cref{subsec:real_normalized}).
We demonstrate that the characterization of MNW continues to hold for the first two classes, as well as for the third class when there are three or more agents.
To this end, we show that the only additive welfarist rule that ensures EF1 for these classes of instances is defined by the logarithmic function, which is equivalent to Conditions {\condnorealgeneral} and {\condnorealgeneral}a.

\subsection{Identical-Good Instances}
\label{subsec:real_idengood}

Recall from \Cref{sec:prelim} that identical-good instances have the property that each agent assigns the same utility to every good, although the utilities may be different for different agents.
This is a simple and natural subclass corresponding to the scenario where the utility of each agent depends only on the number of goods received, but the scaling may differ among agents.
We state the characterization for such instances.

\begin{theorem}
\label{thm:real_idengood}
Let $n \geq 2$ be given, and let $f: \Rpluszero \to \R \cup \{-\infty\}$ be a strictly increasing function continuous on $\Rplus$.
Then, the following statements are equivalent:
\begin{enumerate}[label=(\alph*)]
    \item For every positive-admitting identical-good instance with $n$ agents, every allocation chosen by the additive welfarist rule with $f$ is EF1.
    \item There exist constants $\alpha \in \Rplus$ and $\beta \in \R$ such that $f(x) = \alpha \log x + \beta$ for all $x \in \Rpluszero$.
\end{enumerate}
\end{theorem}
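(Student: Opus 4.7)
The plan is to prove both implications of the equivalence. The easy direction $(b) \Rightarrow (a)$ follows because the additive welfarist rule with $f(x) = \alpha \log x + \beta$ (for $\alpha > 0$) selects the same allocations as the MNW rule, which is known to produce EF1 allocations for positive-admitting instances \citep{CaragiannisKuMo19}.

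For $(a) \Rightarrow (b)$, I would proceed through Conditions~{\condnorealgeneral} and~{\condnorealgeneral}a of \Cref{def:conditions}. The first step derives Condition~{\condnorealgeneral} from (a): for each $k \in \Zpluszero$ and $a, b \in \Rplus$, take the identical-good instance where one agent has per-good utility $a$ and the remaining $n-1$ agents have per-good utility $b$, with $m = n(k+1)$ goods. The only EF1 allocation is $(k+1, \dots, k+1)$; requiring it to strictly beat the non-EF1 allocations in which one good is moved between the $a$-agent and a $b$-agent (in either direction) yields $\Delta_{f, k}(a) > \Delta_{f, k+1}(b)$ and $\Delta_{f, k}(b) > \Delta_{f, k+1}(a)$. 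Ranging over $a, b \in \Rplus$ and $k \in \Zpluszero$ gives Condition~{\condnorealgeneral}.

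The second step upgrades Condition~{\condnorealgeneral} to Condition~{\condnorealgeneral}a (that $\Delta_{f,k}$ is a constant function for each $k \geq 1$) using continuity of $f$ on $\Rplus$. Condition~{\condnorealgeneral} immediately gives $\inf_x \Delta_{f, k}(x) \geq \sup_x \Delta_{f, k+1}(x)$, so both sequences are non-increasing in $k$ and converge to a common non-negative limit, which must be $0$ since $\Delta_{f, k}(x) \to 0$ as $k \to \infty$ by continuity. To force $\inf_x \Delta_{f, k}(x) = \sup_x \Delta_{f, k}(x)$ for each finite $k$, one exploits that the baseline gap $\log((k+1)/k) - \log((k+2)/(k+1))$ shrinks like $1/k^2$, together with the telescoping identity $f((k+1)x) - f(x) = \sum_{j=1}^{k} \Delta_{f, j}(x)$, to squeeze the variation of each $\Delta_{f, k}$ down to zero.

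The third step converts Condition~{\condnorealgeneral}a into the logarithmic form. The equations $f((k+1)x) - f(kx) = c_k$ (constant in $x$) translate, via the substitution $h(t) = f(e^t) - (c_1/\log 2)\, t$, to the statement that the continuous function $h$ has two incommensurable periods $\log 2$ and $\log(3/2)$: the $k = 1$ equation gives period $\log 2$ directly, while the $k = 2$ equation gives $h(t + \log(3/2)) - h(t)$ equal to a constant whose non-vanishing would contradict the boundedness of the periodic $h$. Density of the additive subgroup generated by $\log 2$ and $\log(3/2)$, together with continuity, forces $h$ to be constant, yielding $f(x) = \alpha \log x + \beta$ on $\Rplus$; strict monotonicity then forces $f(0) = -\infty$. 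The main obstacle is the second step, where continuity is used to leverage the delicate asymptotics of the logarithmic gaps; the first and third steps are more routine.
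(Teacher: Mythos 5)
Your overall architecture ((b)~$\Rightarrow$~(a) via \citet{CaragiannisKuMo19}; (a)~$\Rightarrow$~(b) via {\condrealgeneral}, then {\condrealgeneral}a, then the logarithmic form) matches the paper's, and your first and third steps are sound. In fact your third step is a genuinely different and arguably slicker route than the paper's \Cref{lem:real_idengood_log}: the paper shows $q_{f,t}(x) = f(tx)-f(x)$ is constant for all $t$ and then proves $q_f(2^t) = t\,q_f(2)$ on the rationals, whereas you pass to $h(t) = f(e^t) - (c_1/\log 2)\,t$ and use that a continuous function with two incommensurable periods ($\log 2$ and $\log(3/2)$, the latter forced because a nonzero translation constant would contradict boundedness of the $\log 2$-periodic $h$) must be constant. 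Both uses of continuity are legitimate.

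The genuine gap is in your second step, the passage from {\condrealgeneral} to {\condrealgeneral}a. What {\condrealgeneral} gives directly is $\inf_x \Delta_{f,k}(x) \geq \sup_x \Delta_{f,k+1}(x)$, and telescoping this yields only $\sum_{k\ge 1} \bigl(\sup \Delta_{f,k} - \inf \Delta_{f,k}\bigr) \leq \sup \Delta_{f,1} < \infty$, i.e.\ the oscillations are \emph{summable} --- not that each individual oscillation vanishes. Your proposed mechanism for closing this, ``the baseline gap $\log((k+1)/k) - \log((k+2)/(k+1))$ shrinks like $1/k^2$,'' is circular: you do not yet know that $\Delta_{f,k}$ tracks $\log((k+1)/k)$ --- that is essentially the conclusion you are trying to reach --- and the telescoping identity $f((k+1)x)-f(x) = \sum_{j=1}^k \Delta_{f,j}(x)$ by itself gives no control on the $x$-dependence of any single $\Delta_{f,k}$. (Your side claim that $\Delta_{f,k}(x) \to 0$ ``by continuity'' is also unjustified, though it is not what is needed.) The missing ingredient is the self-similarity identity $\Delta_{f,k}(x) = \Delta_{f,2k}(x/2) + \Delta_{f,2k+1}(x/2)$, which the paper uses (\Cref{lem:condition1_p2k,lem:condition1_dk,lem:condition1_sum}) to show that the oscillation $d_{f,k} = \sup\Delta_{f,k} - \inf\Delta_{f,k}$ satisfies $d_{f,k} \leq d_{f,2k} + d_{f,2k+1}$; iterating spreads $\ell$ copies of $d_{f,k}$ across disjoint dyadic index blocks whose total is bounded by the fixed constant $\sup \Delta_{f,1}$, forcing $d_{f,k} \leq (\sup\Delta_{f,1})/\ell$ for every $\ell$ and hence $d_{f,k}=0$. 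Notably, this step needs no continuity at all. Without this (or an equivalent) idea, your step two does not go through.
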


Note that the MNW rule is equivalent to the additive welfarist rule with a function that satisfies \Cref{thm:real_idengood}(b); therefore, the theorem essentially says that the only additive welfarist rule that guarantees EF1 for identical-good instances is the MNW rule.

The implication (b) $\Rightarrow$ (a) follows from the result of \citet{CaragiannisKuMo19} that MNW guarantees EF1.
We shall prove the implication (a) $\Rightarrow$ (b) via a series of lemmas.
We begin by showing that \emph{{\condrealgeneral}} is a necessary condition for a function $f$ that satisfies the statement in \Cref{thm:real_idengood}(a).

\begin{lemma}
\label{lem:real_idengood_cond1}
Let $n \geq 2$ be given, and let $f: \Rpluszero \to \R \cup \{-\infty\}$ be a strictly increasing function such that the statement in \Cref{thm:real_idengood}(a) holds.
Then, $f$ satisfies {\condrealgeneral}.
\end{lemma}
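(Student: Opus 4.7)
The plan is to argue the contrapositive: assuming that every allocation chosen by the welfarist rule with $f$ on positive-admitting identical-good instances with $n$ agents is EF1, I will deduce that $f$ satisfies \condrealgeneral. Concretely, I will suppose for contradiction that there exist $k \in \Zpluszero$ and $a, b \in \Rplus$ with $\Delta_{f,k}(b) \leq \Delta_{f,k+1}(a)$, which rearranges to $f((k+1)a)+f((k+1)b) \leq f((k+2)a)+f(kb)$, and then build a specific identical-good instance on which the welfarist rule can be forced to output a non-EF1 allocation.

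The instance I intend to use takes $m = n(k+1)$ goods, gives agents $1$ and $2$ per-good values $a$ and $b$ respectively, and assigns every remaining agent an arbitrary positive per-good value $c$. Two structural features will be key. First, in an identical-good instance the EF1 condition collapses to the requirement that any two bundle sizes differ by at most one; because $m$ equals $n(k+1)$, the \emph{only} EF1 bundle-size profile is the balanced one $(k+1, \ldots, k+1)$, and every other profile contains two bundles whose sizes differ by at least two. Second, the balanced profile gives each agent positive utility, so the instance is positive-admitting.

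The core step is then to compare the welfare of the balanced profile with that of the unbalanced profile $(k+2, k, k+1, \ldots, k+1)$. These two profiles agree on agents $3, \ldots, n$, so the assumed inequality says exactly that the unbalanced profile has welfare at least as large as the balanced one. From this I can conclude that some welfare-maximizing allocation must have a non-balanced profile: if the two welfares are equal, the unbalanced profile is itself a maximizer; if the inequality is strict, the maximum welfare strictly exceeds that of the balanced profile, and---since all allocations with a common bundle-size profile share the same welfare in the identical-good setting---no maximizer can be balanced. In either case, the welfarist rule is permitted to output a non-EF1 allocation, contradicting the hypothesis on $f$.

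One edge case deserves attention: when $k = 0$ and $f(0) = -\infty$. In that sub-case $\Delta_{f,0}(b) = f(b) - f(0) = +\infty$ while $\Delta_{f,1}(a) = f(2a) - f(a)$ is finite, so the assumed inequality $\Delta_{f,0}(b) \leq \Delta_{f,1}(a)$ cannot hold and no construction is needed. I expect the main conceptual step to be the choice $m = n(k+1)$: this is precisely what pins the balanced profile down as the unique EF1 profile and neutralizes agents $3, \ldots, n$ in the welfare comparison, thereby reducing the $n$-agent question to the two-agent swap that \condrealgeneral\ encodes.
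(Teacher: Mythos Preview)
Your proposal is correct and follows essentially the same approach as the paper: the paper constructs the identical instance with $m = n(k+1)$ goods (taking $c = b$ for agents $3,\ldots,n$), observes that the balanced profile $(k+1,\ldots,k+1)$ is the unique EF1 allocation, and compares its welfare to that of the profile $(k+2,k,k+1,\ldots,k+1)$ to extract $\Delta_{f,k}(b) > \Delta_{f,k+1}(a)$. The only cosmetic difference is that the paper argues directly rather than by contradiction, which sidesteps the small wrinkle in your equality case (where the unbalanced profile need not itself be a maximizer, though some non-balanced maximizer must still exist, so your conclusion holds regardless).
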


\begin{proof}
Let $k \in \Zpluszero$ and $a, b \in \Rplus$ be given.
Consider an instance with $n$ agents and $(k+1)n$ goods such that the utility of each good is $a$ for agent $1$ and $b$ for the remaining agents.
This instance is an identical-good instance; moreover, it is positive-admitting since the allocation where every agent receives $k+1$ goods gives positive utility to every agent.

Let $\mathcal{A} = (A_1, \ldots, A_n)$ be the allocation such that every agent receives $k+1$ goods.
We have $\sum_{i \in N} f(u_i(A_i)) = f((k+1)a) + (n-1)f((k+1)b)$.
Note that $\mathcal{A}$ is the only EF1 allocation; indeed, if some agent does not receive exactly $k+1$ goods, then some agent $i$ receives fewer than $k+1$ goods while another agent $j$ receives more than $k+1$ goods, so agent $i$ envies agent $j$ by at least two goods, thereby violating EF1.
Let $\mathcal{B} = (B_1, \ldots, B_n)$ be the allocation such that agent $1$ receives $k+2$ goods, agent $2$ receives $k$ goods, and every other agent receives $k+1$ goods.
We have $\sum_{i \in N} f(u_i(B_i)) = f((k+2)a) + f(kb) + (n-2)f((k+1)b)$.
Since $\mathcal{A}$ is the only EF1 allocation, the additive welfarist rule with $f$ chooses (only) $\mathcal{A}$, and it holds that $\sum_{i \in N} f(u_i(A_i)) > \sum_{i \in N} f(u_i(B_i))$.
Rearranging the terms, we get 
\begin{align*}
f((k+1)b) - f(kb) > f((k+2)a) - f((k+1)a),
\end{align*}
or equivalently, $\Delta_{f,k}(b) > \Delta_{f,k+1}(a)$.
Since $k, a, b$ were arbitrarily chosen, $\Delta_{f,k}(b) > \Delta_{f,k+1}(a)$ holds for all $k \in \Zpluszero$ and $a, b \in \Rplus$.
Therefore, $f$ satisfies {\condrealgeneral}.
\end{proof}

Next, we show that {\condrealgeneral} is sufficient for the function $f$ to satisfy \Cref{thm:real_idengood}(b).
To this end, we prove in \Cref{lem:real_idengood_constant,lem:real_idengood_log} that {\condrealgeneral} implies {\condrealgeneral}a, which in turn implies that $f$ satisfies the statement in \Cref{thm:real_idengood}(b).

\begin{restatable}{lemma}{lemrealidengoodconstant}
\label{lem:real_idengood_constant}
Let $f: \Rpluszero \to \R \cup \{-\infty\}$ be a strictly increasing function that satisfies {\condrealgeneral}.
Then, $f$ satisfies {\condrealgeneral}a.
\end{restatable}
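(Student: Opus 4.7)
The plan is to recast {\condrealgeneral}a as the coincidence, for every $k \in \Zplus$, of the extrema $L_k := \sup_{x \in \Rplus} \Delta_{f,k}(x)$ and $\ell_k := \inf_{x \in \Rplus} \Delta_{f,k}(x)$, and to derive this coincidence by coupling a telescoping scaling identity with the hierarchy imposed by {\condrealgeneral}.

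The key identity, obtained by splitting the increment from $kmx$ to $(k+1)mx$ into $m$ unit-of-$x$ steps, is
\[
    \Delta_{f,k}(mx) \;=\; f((k+1)mx) - f(kmx) \;=\; \sum_{j=mk}^{mk+m-1} \Delta_{f,j}(x)
\]
for every $k, m \in \Zplus$ and $x \in \Rplus$. By {\condrealgeneral}, $\ell_k \geq L_{k+1}$ for every $k \in \Zpluszero$, so $(L_k)_{k \geq 1}$ is positive and non-increasing. I would also verify that each $L_k$ is finite: for $k \geq 2$, applying {\condrealgeneral} at $b = 1$ gives $L_k \leq \Delta_{f,k-1}(1) = f(k) - f(k-1) < \infty$, and then the scaling identity with $k = 1$, $m = 2$ yields $L_1 \leq L_2 + L_3 < \infty$. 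Hence $(L_k)_{k \geq 1}$ is a bounded non-increasing, and therefore Cauchy, sequence.

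Now fix $k \in \Zplus$. Taking sup and inf over $x \in \Rplus$ in the scaling identity (using that the sup of a sum is at most the sum of sups, and symmetrically for inf) gives $L_k \leq \sum_{j=mk}^{mk+m-1} L_j$ and $\ell_k \geq \sum_{j=mk}^{mk+m-1} \ell_j$. Subtracting and using $L_j - \ell_j \leq L_j - L_{j+1}$ (a direct consequence of $\ell_j \geq L_{j+1}$), the right-hand side telescopes:
\[
    0 \;\leq\; L_k - \ell_k \;\leq\; \sum_{j=mk}^{mk+m-1} (L_j - L_{j+1}) \;=\; L_{mk} - L_{mk+m}.
\]
Letting $m \to \infty$ with $k$ fixed drives the right-hand side to $0$ by the Cauchy property, so $L_k = \ell_k$, i.e., $\Delta_{f,k}$ is constant on $\Rplus$, which is {\condrealgeneral}a.

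The hard part will be spotting the right coupling: combining the sup bound for $L_k$ with the inf bound for $\ell_k$ against the same scaling identity so that the slack $L_j - \ell_j$ collapses against $L_j - L_{j+1}$ and telescopes cleanly. Once this coupling is in place, the rest is routine. A secondary wrinkle is the finiteness of $L_1$, since a direct application of {\condrealgeneral} with $k = 0$ would involve $f(0)$, possibly $-\infty$; the scaling identity sidesteps this by reducing to indices $\geq 2$.
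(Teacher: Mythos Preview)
Your proof is correct and shares the paper's overall architecture: define $L_k=\sup\Delta_{f,k}$ and $\ell_k=\inf\Delta_{f,k}$, establish finiteness (with the same special treatment of $k=1$ via a scaling identity), use $\ell_j\ge L_{j+1}$ to replace $L_j-\ell_j$ by $L_j-L_{j+1}$, and telescope to force $L_k=\ell_k$.

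Where you differ is in the scaling identity and how the telescoping is reached. The paper only records the dyadic version $\Delta_{f,k}(x)=\Delta_{f,2k}(x/2)+\Delta_{f,2k+1}(x/2)$ and consequently needs an inductive lemma (its \Cref{lem:condition1_sum}) to propagate the bound $d_{f,k}\le d_{f,2k}+d_{f,2k+1}$ across dyadic blocks before telescoping to $\ell\,d_{f,k}\le \sup\Delta_{f,1}$. Your general-$m$ identity $\Delta_{f,k}(mx)=\sum_{j=mk}^{mk+m-1}\Delta_{f,j}(x)$ collapses all of this into a single step: taking sup and inf directly yields $L_k-\ell_k\le L_{mk}-L_{mk+m}$, and convergence of the monotone bounded sequence $(L_k)$ finishes it. The upshot is a shorter, cleaner argument that avoids the dyadic induction entirely; the paper's route, while equivalent in spirit, trades directness for a more incremental build-up.
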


We shall prove \Cref{lem:real_idengood_constant} via a series of lemmas.
Let $\inf \Delta_{f, k} = \inf \left\{ \Delta_{f, k}(x) : x \in \Rplus \right\}$ and $\sup \Delta_{f, k} = \sup \left\{ \Delta_{f, k}(x) : x \in \Rplus \right\}$.

\begin{lemma}
\label{lem:condition1_p2k}
Let $k \in \Zplus$ and $x \in \Rplus$, and let $f: \Rpluszero \to \R \cup \{-\infty\}$ be a strictly increasing function. 
Then, we have $\Delta_{f, k}(x) = \Delta_{f, 2k}(x/2) + \Delta_{f, 2k+1}(x/2)$.
\end{lemma}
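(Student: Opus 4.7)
The plan is to observe that this is a pure telescoping identity that follows directly from unpacking the definition $\Delta_{f,j}(y) = f((j+1)y) - f(jy)$. First I would substitute $y = x/2$ into the definitions of $\Delta_{f,2k}$ and $\Delta_{f,2k+1}$ to obtain
\begin{align*}
\Delta_{f,2k}(x/2) &= f\bigl((2k+1)x/2\bigr) - f(kx),\\
\Delta_{f,2k+1}(x/2) &= f\bigl((k+1)x\bigr) - f\bigl((2k+1)x/2\bigr).
\end{align*}
Adding these two, the middle term $f((2k+1)x/2)$ cancels, leaving $f((k+1)x) - f(kx)$, which is exactly $\Delta_{f,k}(x)$ by definition.

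Before concluding, I would briefly verify that all quantities involved are finite so that the cancellation is valid. The only way $\Delta_{f,j}(y)$ can be $\infty$ (per the remark after \Cref{def:conditions}) is when $j = 0$ and $f(0) = -\infty$. Since we assume $k \in \Zplus$, we have $k, 2k, 2k+1 \geq 1$, so none of $\Delta_{f,k}(x)$, $\Delta_{f,2k}(x/2)$, $\Delta_{f,2k+1}(x/2)$ is infinite, and arithmetic in $\R$ is justified.

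There is essentially no obstacle here; the lemma is a one-line telescoping computation, and its role in the larger argument is just to reduce a statement about $\Delta_{f,k}$ to statements about $\Delta_{f,2k}$ and $\Delta_{f,2k+1}$, which will then be combined with \condrealgeneral{} in subsequent lemmas to force $\Delta_{f,k}$ to be constant.
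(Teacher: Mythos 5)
Your proof is correct and is essentially identical to the paper's: both simply unpack the definition of $\Delta_{f,j}$ at $y = x/2$ and cancel the middle term $f((2k+1)x/2)$. The extra remark about finiteness (using $k \in \Zplus$) is a harmless bonus that the paper omits.
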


\begin{proof}
We have
\begin{align*}
    \Delta_{f, k}(x) &= f((k+1)x) - f(kx) \\
    &= f\left((2k+2)\frac{x}{2}\right) - f\left((2k)\frac{x}{2}\right) \\
    &= f\left((2k+1)\frac{x}{2}\right) - f\left((2k)\frac{x}{2}\right) + f\left((2k+2)\frac{x}{2}\right) - f\left((2k+1)\frac{x}{2}\right) \\
    &= \Delta_{f, 2k}\left(\frac{x}{2}\right) + \Delta_{f, 2k+1}\left(\frac{x}{2}\right). \qedhere
\end{align*}
\end{proof}

\begin{lemma}
\label{lem:condition1_supinfpk}
Let $k \in \Zplus$, and let $f: \Rpluszero \to \R \cup \{-\infty\}$ be a strictly increasing function that satisfies {\condrealgeneral}.
Then, $\sup \Delta_{f, k}$ and $\inf \Delta_{f, k}$ are both non-negative and finite.
Moreover, $\inf \Delta_{f, k} \geq \sup \Delta_{f, k+1}$.
\end{lemma}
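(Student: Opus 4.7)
The plan is to establish the three claims---non-negativity, finiteness, and the inequality $\inf \Delta_{f,k} \geq \sup \Delta_{f,k+1}$---in that order, using \condrealgeneral{} to compare $\Delta_{f,k}$ at different indices, and using \Cref{lem:condition1_p2k} to handle a boundary case where \condrealgeneral{} would otherwise be vacuous.

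Non-negativity is immediate. For $k \in \Zplus$ and $x \in \Rplus$, both $kx$ and $(k+1)x$ lie in $\Rplus$, where $f$ is real-valued, so strict monotonicity gives $\Delta_{f,k}(x) = f((k+1)x) - f(kx) > 0$. Hence $0 \leq \inf \Delta_{f,k} \leq \sup \Delta_{f,k}$. This sandwich also shows that once $\sup \Delta_{f,k}$ is proved finite, the infimum is automatically finite as well.

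For finiteness of $\sup \Delta_{f,k}$ I split into two cases. When $k \geq 2$, apply \condrealgeneral{} with index pair $(k-1,k)$ and $b = 1$: this yields $\Delta_{f,k}(a) < \Delta_{f,k-1}(1) = f(k) - f(k-1)$ for every $a \in \Rplus$, and the right-hand side is a finite real number because $k-1 \geq 1$ places both arguments of $f$ in $\Rplus$. Hence $\sup \Delta_{f,k} \leq f(k) - f(k-1) < \infty$. The case $k = 1$ is the main obstacle: \condrealgeneral{} applied with index pair $(0,1)$ only gives $\Delta_{f,1}(a) < f(b) - f(0)$, which is vacuous if $f(0) = -\infty$. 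To sidestep this, I would invoke \Cref{lem:condition1_p2k} with $k = 1$ to write $\Delta_{f,1}(x) = \Delta_{f,2}(x/2) + \Delta_{f,3}(x/2) \leq \sup \Delta_{f,2} + \sup \Delta_{f,3}$, and the right-hand side is finite by the case $k \geq 2$ just handled.

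Finally, the comparison $\inf \Delta_{f,k} \geq \sup \Delta_{f,k+1}$ follows by reading off \condrealgeneral{} with the quantifiers in the right order: fixing $a \in \Rplus$, the inequality $\Delta_{f,k}(b) > \Delta_{f,k+1}(a)$ holds for every $b \in \Rplus$, so taking the infimum over $b$ gives $\inf \Delta_{f,k} \geq \Delta_{f,k+1}(a)$; then taking the supremum over $a$ gives the desired bound. The subtlety that drives the whole argument is which indices are allowed to produce an infinite value of $\Delta_{f,\cdot}$: the only such index is $k = 0$ (and only when $f(0) = -\infty$), which is exactly why the $k = 1$ finiteness step must route through the decomposition lemma rather than a direct application of \condrealgeneral{}.
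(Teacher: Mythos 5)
Your proof is correct and follows essentially the same route as the paper's: non-negativity from strict monotonicity, finiteness for $k \geq 2$ via \condrealgeneral{} with $b=1$, the $k=1$ case via the decomposition $\Delta_{f,1}(x) = \Delta_{f,2}(x/2) + \Delta_{f,3}(x/2)$ from \Cref{lem:condition1_p2k}, and the final inequality by taking the infimum over $b$ and supremum over $a$ in \condrealgeneral{}. The only cosmetic difference is that the paper chains the $k \geq 2$ bound all the way down to $\Delta_{f,1}(1) = f(2)-f(1)$, whereas you stop at $\Delta_{f,k-1}(1) = f(k)-f(k-1)$; both are valid.
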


\begin{proof}
First, we show non-negativity.
For $x \in \Rplus$, it holds that $(k+1)x > kx$.
Since $f$ is strictly increasing, we have $f((k+1)x) > f(kx)$, which implies that $\Delta_{f, k}(x) > 0$.
Therefore, both $\inf \Delta_{f, k}$ and $\sup \Delta_{f, k}$ are non-negative.

Next, we show finiteness.
Let $x \in \Rplus$.
For $k \geq 2$, since $f$ satisfies {\condrealgeneral}, we have $\Delta_{f, k}(x) < \Delta_{f, k-1}(1) < \cdots < \Delta_{f, 1}(1) = f(2) - f(1)$, which is finite, so $\sup \Delta_{f, k}$ is finite.
For $k = 1$, we have $\Delta_{f, 1}(x) = \Delta_{f, 2}(x/2) + \Delta_{f, 3}(x/2)$ by \Cref{lem:condition1_p2k}, and the latter is upper-bounded by $\sup \Delta_{f, 2} + \sup \Delta_{f, 3}$, which is finite, so $\sup \Delta_{f, 1}$ is also finite.
Therefore, $\sup \Delta_{f, k}$ is finite for all $k \in \Zplus$.
Since $\inf \Delta_{f, k} \leq \sup \Delta_{f, k}$, the finiteness of $\inf \Delta_{f, k}$ follows.

Finally, since $f$ satisfies {\condrealgeneral}, we get $\Delta_{f, k}(b) > \Delta_{f, k+1}(a)$ for all $a, b \in \Rplus$.
This implies that $\inf \Delta_{f, k} \geq \sup \Delta_{f, k+1}$.
\end{proof}

For $k \in \Zplus$, define $d_{f, k} = \sup \Delta_{f, k} - \inf \Delta_{f, k}$.
Note that $d_{f, k}$ is well-defined and finite by \Cref{lem:condition1_supinfpk}.

\begin{lemma}
\label{lem:condition1_dk}
Let $k \in \Zplus$, and let $f: \Rpluszero \to \R \cup \{-\infty\}$ be a strictly increasing function that satisfies {\condrealgeneral}. 
Then, $0 \leq d_{f, k} \leq d_{f, 2k} + d_{f, 2k+1}$.
\end{lemma}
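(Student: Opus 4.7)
The plan is to derive both inequalities directly from the decomposition established in \Cref{lem:condition1_p2k}, namely $\Delta_{f, k}(x) = \Delta_{f, 2k}(x/2) + \Delta_{f, 2k+1}(x/2)$ for all $x \in \Rplus$.

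First, the non-negativity $d_{f, k} \geq 0$ is immediate: $\sup \Delta_{f, k} \geq \inf \Delta_{f, k}$ by definition, so their difference is non-negative. (We already know from \Cref{lem:condition1_supinfpk} that both quantities are finite, so no $\infty - \infty$ issue arises.)

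For the upper bound, I would bound the supremum and infimum of $\Delta_{f, k}$ separately using the pointwise identity from \Cref{lem:condition1_p2k}. Since the substitution $x \mapsto x/2$ is a bijection on $\Rplus$, taking the supremum over $x \in \Rplus$ gives
\begin{align*}
    \sup \Delta_{f, k} &= \sup_{x \in \Rplus} \bigl[\Delta_{f, 2k}(x/2) + \Delta_{f, 2k+1}(x/2)\bigr] \\
    &\leq \sup \Delta_{f, 2k} + \sup \Delta_{f, 2k+1},
\end{align*}
by the standard fact that the supremum of a sum is at most the sum of suprema. Analogously,
\begin{align*}
    \inf \Delta_{f, k} \geq \inf \Delta_{f, 2k} + \inf \Delta_{f, 2k+1}.
\end{align*}

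Subtracting the second inequality from the first then yields
\begin{align*}
    d_{f, k} &= \sup \Delta_{f, k} - \inf \Delta_{f, k} \\
    &\leq \bigl(\sup \Delta_{f, 2k} - \inf \Delta_{f, 2k}\bigr) + \bigl(\sup \Delta_{f, 2k+1} - \inf \Delta_{f, 2k+1}\bigr) \\
    &= d_{f, 2k} + d_{f, 2k+1},
\end{align*}
completing the proof. There is essentially no obstacle here: the entire argument reduces to the pointwise identity of \Cref{lem:condition1_p2k} plus the elementary sup/inf inequalities for sums; the role of \condrealgeneral{} is only to guarantee, via \Cref{lem:condition1_supinfpk}, that $d_{f, 2k}$ and $d_{f, 2k+1}$ are well-defined finite non-negative numbers so that the arithmetic makes sense.
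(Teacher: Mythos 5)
Your proof is correct and follows essentially the same route as the paper: both arguments rest entirely on the decomposition $\Delta_{f,k}(x) = \Delta_{f,2k}(x/2) + \Delta_{f,2k+1}(x/2)$ from \Cref{lem:condition1_p2k} together with the finiteness guaranteed by \Cref{lem:condition1_supinfpk}. The only difference is cosmetic—the paper extracts approximating sequences for the supremum and infimum and passes to the limit, whereas you invoke the sup/inf inequalities for sums directly, which is slightly cleaner but mathematically identical.
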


\begin{proof}
Let $k \in \Zplus$, and note that
$d_{f, k} \geq 0$ since $\sup \Delta_{f, k} \geq \inf \Delta_{f, k}$.
By the definitions of supremum and infimum, there exist sequences of positive real numbers $(a_t)_{t \in \Zplus}$ and $(b_t)_{t \in \Zplus}$ such that
\begin{align*}
    \lim_{t \to \infty} \Delta_{f, k}(a_t) = \sup \Delta_{f, k} \hspace{2mm} \text{ and } \hspace{2mm} \lim_{t \to \infty} \Delta_{f, k}(b_t) = \inf \Delta_{f, k}.
\end{align*}
Then,
\begin{align*}
    d_{f, k} &= \sup \Delta_{f, k} - \inf \Delta_{f, k} \\
    &= \lim_{t \to \infty} \Delta_{f, k}(a_t) - \lim_{t \to \infty} \Delta_{f, k}(b_t) \\
    &= \lim_{t \to \infty} \left( \Delta_{f, k}(a_t) - \Delta_{f, k}(b_t) \right) \\
    &= \lim_{t \to \infty} \left( \Delta_{f, 2k}\left(\frac{a_t}{2}\right) + \Delta_{f, 2k+1}\left(\frac{a_t}{2}\right) - \Delta_{f, 2k}\left(\frac{b_t}{2}\right) - \Delta_{f, 2k+1}\left(\frac{b_t}{2}\right) \right) \tag{by \Cref{lem:condition1_p2k}} \\
    &\leq \lim_{t \to \infty} \left( \sup \Delta_{f, 2k} + \sup \Delta_{f, 2k+1} - \inf \Delta_{f, 2k} - \inf \Delta_{f, 2k+1} \right) \tag{since each term is finite by \Cref{lem:condition1_supinfpk}} \\
    &= \sup \Delta_{f, 2k} - \inf \Delta_{f, 2k} + \sup \Delta_{f, 2k+1} - \inf \Delta_{f, 2k+1} \\
    &= d_{f, 2k} + d_{f, 2k+1}. \tag*{\qedhere}
\end{align*}
\end{proof}

\begin{lemma}
\label{lem:condition1_sum}
Let $r \in \Zpluszero$ and $\ell \in \Zplus$, and let $f: \Rpluszero \to \R \cup \{-\infty\}$ be a strictly increasing function that satisfies {\condrealgeneral}. 
Then,
\begin{align*}
    \ell \cdot \sum_{t=2^r}^{2^{r+1}-1} d_{f, t} \leq \sum_{t=2^r}^{2^{r+\ell}-1} d_{f, t}.
\end{align*}
\end{lemma}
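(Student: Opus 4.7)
The plan is to recognize that the sum $\sum_{t=2^r}^{2^{r+1}-1} d_{f,t}$ corresponds to summing $d_{f,t}$ over one "layer" of the binary tree of indices, and then exploit Lemma~\ref{lem:condition1_dk} to show that consecutive layers form a non-decreasing sequence. Telescoping this inequality $\ell$ times yields the desired bound.

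More concretely, for each $s \in \Zpluszero$, I would set $S_s := \sum_{t=2^s}^{2^{s+1}-1} d_{f,t}$, noting that this is a sum of $2^s$ terms indexed by the integers at level $s$. The first key step is to show $S_s \leq S_{s+1}$. To do this, apply Lemma~\ref{lem:condition1_dk} to each $d_{f,k}$ with $k \in \{2^s, \ldots, 2^{s+1}-1\}$, giving $d_{f,k} \leq d_{f,2k} + d_{f,2k+1}$. Summing these inequalities over $k$, the right-hand side becomes $\sum_{j=2^{s+1}}^{2^{s+2}-1} d_{f,j} = S_{s+1}$, since as $k$ ranges over $\{2^s, \ldots, 2^{s+1}-1\}$, the indices $2k$ and $2k+1$ together cover exactly $\{2^{s+1}, \ldots, 2^{s+2}-1\}$.

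Iterating, I obtain $S_r \leq S_{r+1} \leq \cdots \leq S_{r+\ell-1}$, so
\begin{align*}
    \ell \cdot \sum_{t=2^r}^{2^{r+1}-1} d_{f,t} = \ell \cdot S_r \leq \sum_{i=0}^{\ell-1} S_{r+i} = \sum_{t=2^r}^{2^{r+\ell}-1} d_{f,t},
\end{align*}
where the last equality holds because the layers $S_r, S_{r+1}, \ldots, S_{r+\ell-1}$ partition the index set $\{2^r, \ldots, 2^{r+\ell}-1\}$.

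I do not anticipate a serious obstacle here, as the argument is essentially bookkeeping on top of Lemma~\ref{lem:condition1_dk}. The only subtlety to verify carefully is that the index ranges in the telescoping sum match precisely, which follows from the identity $\{2k, 2k+1 : k \in \{2^s, \ldots, 2^{s+1}-1\}\} = \{2^{s+1}, \ldots, 2^{s+2}-1\}$.
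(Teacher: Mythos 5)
Your proof is correct and rests on exactly the same ingredients as the paper's: Lemma~\ref{lem:condition1_dk} together with the observation that doubling the index set $\{2^s,\dots,2^{s+1}-1\}$ yields $\{2^{s+1},\dots,2^{s+2}-1\}$. The paper packages this as an induction on $\ell$ applied to the full range $\{2^r,\dots,2^{r+\ell}-1\}$ at once, whereas you phrase it as monotonicity of the layer sums $S_s$ followed by summing over layers; this is only a cosmetic reorganization of the same argument.
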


\begin{proof}
We prove this statement by induction on $\ell$.
The base case $\ell = 1$ holds since the expressions on both sides coincide.
For the inductive step, assume that
\begin{align*}
    \ell \cdot \sum_{t=2^r}^{2^{r+1}-1} d_{f, t} \leq \sum_{t=2^r}^{2^{r+\ell}-1} d_{f, t}
\end{align*}
holds for some $\ell \in \Zplus$; we shall prove the statement for $\ell+1$.
We have
\begin{align*}
    (\ell+1) \cdot \sum_{t=2^r}^{2^{r+1}-1} d_{f, t} &\leq \sum_{t=2^r}^{2^{r+1}-1} d_{f, t} + \sum_{t=2^r}^{2^{r+\ell}-1} d_{f, t} \tag{by the inductive hypothesis} \\
    &\leq \sum_{t=2^r}^{2^{r+1}-1} d_{f, t} + \sum_{t=2^r}^{2^{r+\ell}-1} \left( d_{f, 2t} + d_{f, 2t+1} \right) \tag{by \Cref{lem:condition1_dk}} \\
    &= \sum_{t=2^r}^{2^{r+1}-1} d_{f, t} + \sum_{t=2^{r+1}}^{2^{r+\ell+1}-1} d_{f, t} \\
    &= \sum_{t=2^r}^{2^{r+\ell+1}-1} d_{f, t},
\end{align*}
completing the induction.
\end{proof}

We are now ready to prove \Cref{lem:real_idengood_constant}.

\begin{proof}[Proof of \Cref{lem:real_idengood_constant}]
Let $k \in \Zplus$ be given.
It suffices to show that $d_{f, k} = 0$.
Note that $d_{f, k}$ is non-negative by \Cref{lem:condition1_dk}.

Let $\ell \in \Zplus$.
Define $r = \lfloor \log_2 k \rfloor$ and $s = k - 2^r$, so that $k = 2^r + s$ for $r, s \in \Zpluszero$ with $s < 2^r$.
Then, we have
\begin{align*}
    \ell d_{f, k} &\leq \ell (d_{f, 2^r} + \cdots + d_{f, k} + \cdots + d_{f, 2^r+(2^r-1)}) \tag{since \protect{$d_{f, t} \geq 0$} by \Cref{lem:condition1_dk}} \\
    &= \ell \cdot \sum_{t=2^r}^{2^{r+1}-1} d_{f, t} \\
    &\leq \sum_{t=2^r}^{2^{r+\ell}-1} d_{f, t} \tag{by \Cref{lem:condition1_sum}} \\
    &\leq \sum_{t=1}^{2^{r+\ell}-1} d_{f, t} \tag{since \protect{$d_{f, t} \geq 0$} by \Cref{lem:condition1_dk}} \\
    &= \sum_{t=1}^{2^{r+\ell}-1} \left( \sup \Delta_{f, t} - \inf \Delta_{f, t} \right) \tag{by definition} \\
    &\leq \sum_{t=1}^{2^{r+\ell}-1} \left( \sup \Delta_{f, t} - \sup \Delta_{f, t+1} \right) \tag{by \Cref{lem:condition1_supinfpk}} \\
    &= \sup \Delta_{f, 1} - \sup \Delta_{f, 2^{r+\ell}} \tag{by telescoping sum} \\
    &\leq \sup \Delta_{f, 1}. \tag{since \protect{$\sup \Delta_{f, 2^{r+\ell}} \geq 0$} by \Cref{lem:condition1_supinfpk}}
\end{align*}
This gives $0 \leq d_{f, k} \leq (\sup \Delta_{f, 1})/\ell$.
Since $\ell \in \Zplus$ is arbitrary and $\sup \Delta_{f, 1}$ is a finite non-negative constant by \Cref{lem:condition1_supinfpk}, we conclude that $d_{f, k} = 0$.
\end{proof}

Next, we state another key lemma.
Combined with \Cref{lem:real_idengood_constant}, this lemma shows that if $f$ satisfies {\condrealgeneral}, then it satisfies the statement in \Cref{thm:real_idengood}(b).

\begin{restatable}{lemma}{lemrealidengoodlog}
\label{lem:real_idengood_log}
Let $f: \Rpluszero \to \R \cup \{-\infty\}$ be a strictly increasing function continuous on $\Rplus$ that satisfies {\condrealgeneral}a.
Then, there exist constants $\alpha \in \Rplus$ and $\beta \in \R$ such that $f(x) = \alpha \log x + \beta$ for all $x \in \Rpluszero$.
\end{restatable}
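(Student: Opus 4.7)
The plan is to show that {\condrealgeneral}a forces the shifted function $g(x) := f(x) - f(1)$ to satisfy the multiplicative Cauchy equation $g(\lambda y) = g(\lambda) + g(y)$ on $\Rplus$, and then to invoke the classical characterization of continuous solutions of Cauchy's equation to conclude $g(x) = \alpha \log x$. I would begin by massaging the hypothesis into functional-equation form. {\condrealgeneral}a says $\Delta_{f,k}(x) = \Delta_{f,k}(1) = f(k+1) - f(k)$ for every $k \in \Zplus$ and $x \in \Rplus$; rearranging gives $f(kx) - f(k) = f((k+1)x) - f(k+1)$, so the map $k \mapsto f(kx) - f(k)$ is constant on $\Zplus$. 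Evaluating at $k = 1$ shows its value equals $g(x)$, yielding $g(kx) = g(k) + g(x)$ for all $k \in \Zplus$ and $x \in \Rplus$.

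Next I would extend the scalar from $\Zplus$ to $\Rplus$. Substituting $x = y/k$ in the identity above gives $g(y/k) = g(y) - g(k)$, and combining the two versions yields $g(\lambda y) = g(\lambda) + g(y)$ for every positive rational $\lambda$ and every $y \in \Rplus$. Continuity of $g$ on $\Rplus$ (inherited from that of $f$) together with the density of positive rationals then lets me pass to the limit along any $\lambda_n \to \lambda$ with $\lambda_n \in \mathbb{Q}_{>0}$ to extend the identity to all $\lambda \in \Rplus$. After the substitution $G(t) := g(e^t)$ this becomes the usual additive Cauchy equation $G(s + t) = G(s) + G(t)$ on $\R$, with $G$ continuous, so $G(t) = \alpha t$ for some $\alpha \in \R$ and hence $g(x) = \alpha \log x$. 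Strict monotonicity of $f$ (and thus of $g$) forces $\alpha > 0$; setting $\beta := f(1)$ gives $f(x) = \alpha \log x + \beta$ for all $x \in \Rplus$. To handle $x = 0$, strict monotonicity means $f(0) < f(x)$ for every $x > 0$, and since $\alpha \log x + \beta \to -\infty$ as $x \to 0^+$, this forces $f(0) = -\infty = \alpha \log 0 + \beta$ under the convention $\log 0 = -\infty$.

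The main obstacle is the rational-to-real extension. The identity $g(kx) = g(k) + g(x)$ on its own only constrains $g$ on integer (and, via the substitution $x = y/k$, rational) multiples; at the rational level nothing pins down, say, the ratio $g(3)/g(2)$, since $g$ restricted to $\Zplus$ is merely a multiplicative-to-additive homomorphism and could in principle be specified freely on primes. It is precisely the continuity assumption on $\Rplus$ that lets us promote the functional equation to all positive real scalars and then apply the standard classification of continuous additive functions on $\R$, ruling out the Hamel-basis-style pathological solutions that would otherwise arise.
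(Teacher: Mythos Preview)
Your proof is correct and follows essentially the same route as the paper: both derive from {\condrealgeneral}a the multiplicative Cauchy identity $g(\lambda y)=g(\lambda)+g(y)$ (with $g(x)=f(x)-f(1)$) via telescoping over $\Delta_{f,k}$, extend from rational to real $\lambda$ using continuity, and then read off $g(x)=\alpha\log x$. The only cosmetic difference is that the paper solves the Cauchy equation by hand---showing $q_f(2^{t})=t\,q_f(2)$ for rational $t$ and then invoking density---whereas you cite the standard classification of continuous additive functions directly.
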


We remark that \citet[Lemma 2]{Suksompong23} proved a result similar to \Cref{lem:real_idengood_log} but made the stronger assumption that the function $f$ is \emph{differentiable}.
Our result only assumes \emph{continuity}.
Again, we establish \Cref{lem:real_idengood_log} via a series of intermediate results.
For each $t \in \Rplus$, define $q_{f, t} : \Rplus \to \R$ such that $q_{f, t}(x) = f(tx) - f(x)$.

\begin{lemma}
\label{lem:const_rational}
Let $t > 1$ be a rational number, and let $f: \Rpluszero \to \R \cup \{-\infty\}$ be a strictly increasing function that satisfies {\condrealgeneral}a.
Then, $q_{f, t}$ is a constant function.
\end{lemma}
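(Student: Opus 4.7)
The plan is to handle integer scalings first and then bootstrap from there to rationals. For a positive integer $n \geq 2$ and $x \in \Rplus$, telescoping gives
\begin{align*}
    q_{f, n}(x) = f(nx) - f(x) = \sum_{k=1}^{n-1} \left[ f((k+1)x) - f(kx) \right] = \sum_{k=1}^{n-1} \Delta_{f, k}(x).
\end{align*}
By {\condrealgeneral}a, each $\Delta_{f, k}$ is a constant function on $\Rplus$, so the sum above is independent of $x$. The case $n = 1$ is trivial since $q_{f, 1} \equiv 0$. Hence $q_{f, n}$ is a constant function for every $n \in \Zplus$.

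Now let $t = p/q > 1$ with $p, q \in \Zplus$. The idea is to reparametrize by $y = x/q$, which ranges over $\Rplus$ as $x$ does, so that the integer-scaling result applies at $y$. On the one hand,
\begin{align*}
    f\left( \frac{p}{q} x \right) - f\left( \frac{x}{q} \right) = f(py) - f(y) = q_{f, p}(y),
\end{align*}
which is a constant (say $C_p$) independent of $x$. On the other hand,
\begin{align*}
    f(x) - f\left( \frac{x}{q} \right) = f(qy) - f(y) = q_{f, q}(y),
\end{align*}
which is likewise a constant $C_q$ independent of $x$. Subtracting the second identity from the first,
\begin{align*}
    q_{f, t}(x) = f\left( \frac{p}{q} x \right) - f(x) = C_p - C_q,
\end{align*}
so $q_{f, t}$ is constant on $\Rplus$, as required.

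I do not foresee a serious obstacle: once the integer case is established by telescoping, the rational case follows by a clean substitution that expresses $q_{f, p/q}$ as the difference of two integer-scaling increments evaluated at $x/q$. The only point that requires mild care is ensuring the substitution $y = x/q$ keeps us inside the domain $\Rplus$ where {\condrealgeneral}a applies, which is immediate.
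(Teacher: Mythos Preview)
Your proof is correct and essentially the same as the paper's. The paper writes $t=a/b$ and telescopes directly, obtaining $q_{f,t}(x)=\sum_{k=b}^{a-1}\Delta_{f,k}(x/b)$, which is constant by {\condrealgeneral}a; your two integer telescopes $q_{f,p}(x/q)$ and $q_{f,q}(x/q)$ differ exactly by this same sum, so the two arguments coincide after subtraction.
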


\begin{proof}
Let $x, y \in \Rplus$, and let $t = a/b$ with $a, b \in \Zplus$.
Since $t > 1$, we have $a > b$.
Then,
\begin{align*}
    q_{f, t}(x) &= f(tx) - f(x) \\
    &= \sum_{k=b}^{a-1} f\left( \frac{k+1}{b} x \right) - f\left( \frac{k}{b} x \right) \tag{by telescoping sum} \\
    &= \sum_{k=b}^{a-1} \Delta_{f, k}(x/b) \\
    &= \sum_{k=b}^{a-1} \Delta_{f, k}(y/b) \tag{since \protect{$\Delta_{f, k}$} is constant by {\condrealgeneral}a} \\
    &= \sum_{k=b}^{a-1} f\left( \frac{k+1}{b} y \right) - f\left( \frac{k}{b} y \right)  \\
    &= f(ty) - f(y) \tag{by telescoping sum} \\
    &= q_{f, t}(y).
\end{align*}
It follows that $q_{f, t}$ is a constant function.
\end{proof}

\begin{lemma}
\label{lem:const_irrational}
Let $t > 1$ be an irrational number, and let $f: \Rpluszero \to \R \cup \{-\infty\}$ be a strictly increasing function continuous on $\Rplus$ that satisfies {\condrealgeneral}a.
Then, $q_{f, t}$ is a constant function.
\end{lemma}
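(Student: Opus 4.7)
The plan is to reduce the irrational case to the rational case by a continuity/density argument. Since $t>1$ is irrational and the rationals greater than $1$ are dense in $(1,\infty)$, I can fix a sequence of rationals $(t_k)_{k \in \Zplus}$ with $t_k > 1$ and $t_k \to t$ as $k \to \infty$.

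The key observation is that for each fixed $x \in \Rplus$, the points $t_k x$ all lie in $\Rplus$ and converge to $tx \in \Rplus$. Since $f$ is continuous on $\Rplus$ by hypothesis, we get
\begin{align*}
    \lim_{k \to \infty} q_{f, t_k}(x) = \lim_{k \to \infty} \bigl( f(t_k x) - f(x) \bigr) = f(tx) - f(x) = q_{f, t}(x).
\end{align*}
The same identity holds with $x$ replaced by any other $y \in \Rplus$.

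Next I would invoke \Cref{lem:const_rational}: since each $t_k > 1$ is rational and $f$ satisfies {\condrealgeneral}a, the function $q_{f, t_k}$ is constant, so $q_{f, t_k}(x) = q_{f, t_k}(y)$ for every $k$. Passing to the limit on both sides using the continuity computation above yields $q_{f, t}(x) = q_{f, t}(y)$. As $x, y \in \Rplus$ were arbitrary, $q_{f, t}$ is constant, completing the proof.

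I do not expect any serious obstacle here: the argument is essentially a density plus continuity sandwich. The only small point to be careful about is ensuring the approximating rationals remain strictly greater than $1$ so that \Cref{lem:const_rational} applies, which is automatic because $t > 1$ means we can restrict to a sequence in $(1, \infty) \cap \mathbb{Q}$. Note also that continuity of $f$ on $\Rplus$ (rather than on all of $\Rpluszero$) is sufficient because every evaluation point $t_k x$ and $tx$ is strictly positive.
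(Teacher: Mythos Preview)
Your proposal is correct and takes essentially the same approach as the paper: approximate the irrational $t$ by rationals greater than $1$, invoke \Cref{lem:const_rational} for each approximant, and pass to the limit using continuity of $f$ on $\Rplus$. The paper phrases this via an $\epsilon$--$\delta$ argument with a single rational $t'$ and the triangle inequality, while you use a sequential formulation, but the underlying idea is identical.
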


\begin{proof}
Let $x, y \in \Rplus$, and let $\epsilon > 0$.
Since $f$ is continuous at $tx$ and $ty$, there exists $\delta > 0$ such that for all $x' \in (tx-\delta, tx+\delta)$ we have $|f(tx) - f(x')| < \epsilon/2$, and for all $y' \in (ty-\delta, ty+\delta)$ we have $|f(ty) - f(y')| < \epsilon/2$.
Moreover, there exists a rational number $t'$ such that $t < t' < t + \min \{\delta/x, \delta/y\}$.
Then, we have $t'x \in (tx-\delta, tx+\delta)$ and $t'y \in (ty-\delta, ty+\delta)$, and hence $|f(tx) - f(t'x)| < \epsilon/2$ and $|f(ty) - f(t'y)| < \epsilon/2$.
Thus,
\begin{align*}
    \left|q_{f, t}(x) - q_{f, t}(y)\right| &= |(f(tx) - f(x)) - (f(ty) - f(y))| \\
    &= |(f(tx) - f(t'x) + f(t'x) - f(x)) - (f(ty) - f(t'y) + f(t'y) - f(y))| \\
    &\leq |(f(t'x) - f(x)) - (f(t'y) - f(y))| + |f(tx) - f(t'x)| + |f(ty) - f(t'y)| \tag{by triangle inequality} \\
    &< \left|q_{f, t'}(x) - q_{f, t'}(y)\right| + \epsilon/2 + \epsilon/2 \\
    &= 0 + \epsilon \tag{by \Cref{lem:const_rational}} \\
    &= \epsilon.
\end{align*}
Since $\epsilon > 0$ is arbitrary, we have $q_{f, t}(x) = q_{f, t}(y)$.
It follows that $q_{f, t}$ is a constant function.
\end{proof}

\begin{lemma}
\label{lem:const_q}
Let $t$ be a positive real number, and let $f: \Rpluszero \to \R \cup \{-\infty\}$ be a strictly increasing function continuous on $\Rplus$ that satisfies {\condrealgeneral}a.
Then, $q_{f, t}$ is a constant function.
\end{lemma}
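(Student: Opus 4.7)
The plan is to reduce the statement to the two cases already handled by \Cref{lem:const_rational,lem:const_irrational}. Those lemmas cover $t > 1$ rational and $t > 1$ irrational respectively, so together they immediately give the result whenever $t > 1$. It remains to handle $t = 1$ and $0 < t < 1$.

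The case $t = 1$ is trivial, since $q_{f,1}(x) = f(x) - f(x) = 0$ for all $x \in \Rplus$, so $q_{f,1}$ is the zero function.

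For the case $0 < t < 1$, I would set $s = 1/t > 1$ and relate $q_{f,t}$ to $q_{f,s}$. Specifically, for any $x \in \Rplus$, substituting $y = tx$ into $q_{f,s}(y) = f(sy) - f(y)$ gives $q_{f,s}(tx) = f(x) - f(tx) = -q_{f,t}(x)$. Hence $q_{f,t}(x) = -q_{f,s}(tx)$ for every $x \in \Rplus$. Since $s > 1$, the already-established case tells us $q_{f,s}$ is constant on $\Rplus$, and composing with the bijection $x \mapsto tx$ on $\Rplus$ preserves constancy. Therefore $q_{f,t}$ is constant as well.

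There is no real obstacle here, since all the analytical work has been pushed into \Cref{lem:const_rational,lem:const_irrational}; the remaining step is just the elementary observation that $q_{f,t}$ and $q_{f,1/t}$ determine each other up to sign and rescaling of the argument, which lets us bootstrap the result from $t > 1$ to every positive real $t$.
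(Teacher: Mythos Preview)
Your proposal is correct and follows essentially the same approach as the paper: reduce to the $t>1$ case via \Cref{lem:const_rational,lem:const_irrational}, handle $t=1$ trivially, and for $0<t<1$ use the relation $q_{f,t}(x)=-q_{f,1/t}(tx)$ to conclude from the constancy of $q_{f,1/t}$. The paper states the last step more tersely (``this follows from the fact that $q_{f,1/t}$ is constant''), but the underlying argument is the same.
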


\begin{proof}
For $t > 1$, this follows directly from \Cref{lem:const_rational,lem:const_irrational}.
For $t = 1$, this is trivial.
Finally, for $0 < t < 1$, this follows from the fact that $q_{f, 1/t}$ is constant.
\end{proof}

Define $q_f : \Rplus \to \R$ such that $q_f(t) = q_{f, t}(1)$ for all $t \in \Rplus$.
Since $q_{f, t}$ is a constant function for each $t \in \Rplus$ by \Cref{lem:const_q}, we have $q_f(t) = f(t) - f(1) = f(tx) - f(x)$ for all $x \in \Rplus$.

\begin{lemma}
\label{lem:qf_b}
Let $b$ be a positive integer, and let $f: \Rpluszero \to \R \cup \{-\infty\}$ be a strictly increasing function continuous on $\Rplus$ that satisfies {\condrealgeneral}a.
Then, $q_f\left(2^{1/b}\right) = \frac{1}{b} q_f(2)$.
\end{lemma}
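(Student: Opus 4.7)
The plan is to exploit the fact, already established in \Cref{lem:const_q}, that $q_{f,t}$ is a constant function for every $t \in \Rplus$. This means the definition $q_f(t) = f(tx) - f(x)$ is independent of the choice of $x \in \Rplus$, which is exactly what is needed to turn $q_f$ into a logarithm-like object.

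First, I would derive the multiplicative-to-additive functional equation
\begin{align*}
    q_f(st) = q_f(s) + q_f(t) \quad \text{for all } s, t \in \Rplus.
\end{align*}
This follows in a single line from \Cref{lem:const_q}: using $x = 1$ and then $x = s$, we have
\begin{align*}
    q_f(st) &= f(st) - f(1) \\
    &= \bigl( f(st) - f(s) \bigr) + \bigl( f(s) - f(1) \bigr) \\
    &= q_{f, t}(s) + q_f(s) \\
    &= q_f(t) + q_f(s),
\end{align*}
where the last equality uses that $q_{f,t}(s) = q_{f,t}(1) = q_f(t)$.

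Next, a straightforward induction on $n \in \Zplus$ yields $q_f(t^n) = n \cdot q_f(t)$ for every $t \in \Rplus$. Applying this with $t = 2^{1/b}$ and $n = b$ gives $q_f(2) = b \cdot q_f(2^{1/b})$, and dividing by $b$ completes the proof.

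There is no real obstacle here: once one observes that constancy of each $q_{f,t}$ upgrades $q_f$ to a Cauchy-type multiplicative functional, the identity $q_f(2^{1/b}) = \tfrac{1}{b} q_f(2)$ is just the $b$th-root special case. The only point requiring minimal care is to ensure the argument $x = s$ used in the additivity derivation lies in $\Rplus$, which is guaranteed since $s > 0$, so that \Cref{lem:const_q} applies and $f(s) \in \R$.
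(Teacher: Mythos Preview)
Your proof is correct and amounts to the same argument as the paper's, just packaged slightly differently: the paper writes $q_f(2) = \sum_{k=0}^{b-1}\bigl(f(2^{(k+1)/b}) - f(2^{k/b})\bigr)$ and observes each summand equals $q_f(2^{1/b})$ by \Cref{lem:const_q}, whereas you first isolate the general additivity $q_f(st) = q_f(s) + q_f(t)$ and then iterate it. The telescoping sum is exactly your induction unrolled, so the two proofs are essentially identical in content.
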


\begin{proof}
We have
\begin{align*}
    q_f(2) &= f(2) - f(1) \\
    &= \sum_{k=0}^{b-1} f\left( 2^{(k+1)/b} \right) - f\left( 2^{k/b} \right) \tag{by telescoping sum} \\
    &= \sum_{k=0}^{b-1} f\left( 2^{1/b} \cdot 2^{k/b} \right) - f\left( 1 \cdot 2^{k/b} \right) \\
    &= \sum_{k=0}^{b-1} q_f \left( 2^{1/b} \right) \tag{by \Cref{lem:const_q}} \\
    &= b \cdot q_f \left( 2^{1/b} \right),
\end{align*}
which implies the desired result.
\end{proof}

\begin{lemma}
\label{lem:qf_ab}
Let $t$ be a rational number, and let $f: \Rpluszero \to \R \cup \{-\infty\}$ be a strictly increasing function continuous on $\Rplus$ that satisfies {\condrealgeneral}a.
Then, $q_f\left(2^t\right) = t q_f(2)$.
\end{lemma}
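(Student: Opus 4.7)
The plan is to exploit the multiplicative/additive structure of $q_f$ that follows from the previous lemmas. The first step is to establish that $q_f$ is a logarithm-like function, namely that $q_f(st) = q_f(s) + q_f(t)$ for all $s, t \in \Rplus$. This comes essentially for free from \Cref{lem:const_q}: write
\begin{align*}
q_f(st) = f(st) - f(1) = \bigl(f(st) - f(s)\bigr) + \bigl(f(s) - f(1)\bigr) = q_{f,t}(s) + q_f(s),
\end{align*}
and since $q_{f,t}$ is constant by \Cref{lem:const_q}, we have $q_{f,t}(s) = q_{f,t}(1) = q_f(t)$, giving additivity.

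Given additivity, a simple induction shows $q_f(x^n) = n\, q_f(x)$ for every $n \in \Zplus$ and every $x \in \Rplus$. Moreover, setting $s = x$, $t = 1/x$ in the additive identity gives $q_f(1) = 0$, and then $q_f(1/x) = -q_f(x)$; combining these extends $q_f(x^n) = n\, q_f(x)$ to all integer $n$.

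Now write the rational exponent as $t = p/q$ with $p \in \Z$ and $q \in \Zplus$. Applying the integer-exponent identity to $x = 2^{1/q}$, we get
\begin{align*}
q_f(2^t) = q_f\bigl((2^{1/q})^p\bigr) = p \cdot q_f\bigl(2^{1/q}\bigr).
\end{align*}
By \Cref{lem:qf_b}, $q_f(2^{1/q}) = (1/q)\, q_f(2)$, so the right-hand side equals $(p/q)\, q_f(2) = t\, q_f(2)$, as claimed.

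There is no real obstacle here: all the analytic heavy lifting was absorbed into establishing that $q_{f,t}$ is constant (\Cref{lem:const_q}) and into \Cref{lem:qf_b}. The present lemma is a short algebraic consequence, with the only thing to check being that the additivity $q_f(st) = q_f(s) + q_f(t)$ genuinely follows from the constancy of $q_{f,t}$, which it does by the two-line computation above.
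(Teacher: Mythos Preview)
Your proof is correct and follows essentially the same approach as the paper: both rely on \Cref{lem:const_q} to reduce each increment $f(2^{(k+1)/q}) - f(2^{k/q})$ to the constant $q_f(2^{1/q})$, then invoke \Cref{lem:qf_b}. Your version packages this more cleanly by first isolating the additive identity $q_f(st) = q_f(s) + q_f(t)$, whereas the paper unrolls the telescoping sum directly and handles the negative case by a separate short computation; the underlying content is identical.
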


\begin{proof}
The result is trivial for $t = 0$, since $q_f\left(2^0\right) = q_f(1) = f(1) - f(1) = 0$.

We first prove the result for positive $t$.
Let $a, b \in \Zplus$ such that $t = a/b$.
Then, we have
\begin{align*}
    q_f \left( 2^{a/b} \right) &= f \left( 2^{a/b} \right) - f(1) \\
    &= \sum_{k=0}^{a-1} f\left( 2^{(k+1)/b} \right) - f\left( 2^{k/b} \right) \tag{by telescoping sum} \\
    &= \sum_{k=0}^{a-1} f\left( 2^{1/b} \cdot 2^{k/b} \right) - f\left( 1 \cdot 2^{k/b} \right) \\
    &= \sum_{k=0}^{a-1} q_f \left( 2^{1/b} \right) \tag{by \Cref{lem:const_q}} \\
    &= a \cdot q_f \left( 2^{1/b} \right) \\
    &= \frac{a}{b} q_f(2) \tag{by \Cref{lem:qf_b}} \\
    &= t q_f(2),
\end{align*}
as desired.

Finally, we prove the result for negative $t$.
Let $a$ be a negative integer and $b \in \Zplus$ such that $t = a/b$.
Then, we have
\begin{align*}
    q_f \left( 2^{a/b} \right) &= f \left( 2^{a/b} \cdot 2^{-a/b} \right) - f \left( 1 \cdot 2^{-a/b} \right) \\
    &= -\left( f \left(2^{-a/b} \right) - f(1) \right) \\
    &= -q_f \left(2^{-a/b} \right) \\
    &= -\left( -\frac{a}{b} q_f(2) \right) \tag{by the first part of this proof} \\
    &= t q_f(2),
\end{align*}
as desired.
Hence, $q_f\left(2^t\right) = t q_f(2)$ for all rational $t$.
\end{proof}

We are now ready to prove \Cref{lem:real_idengood_log}.

\begin{proof}[Proof of \Cref{lem:real_idengood_log}]
Let $\alpha = (f(2)-f(1))/\log 2 \in \Rplus$ and $\beta = f(1) \in \R$, and define $f_0 : \Rpluszero \to \R \cup \{-\infty\}$ such that $f_0(x) = \alpha \log x + \beta$ for all $x \in \Rpluszero$.

We show that $f(x) = f_0(x)$ for all $x = 2^t$ where $t$ is rational.
Let $t$ be a rational number.
By \Cref{lem:qf_ab}, $q_f \left(2^t\right) = t q_f(2)$, which implies that $f\left(2^t\right)-f(1) = t(f(2)-f(1))$.
Since $f(1) = \beta$ and $f(2)-f(1) = \alpha \log 2$, we have $f\left(2^t\right) = t \alpha \log 2 + \beta = \alpha \log \left(2^t\right) + \beta = f_0 \left(2^t\right)$, so $f(x) = f_0(x)$ for all $x = 2^t$ where $t$ is rational.

Now, since $f$ and $f_0$ are continuous functions that coincide on the set $\{2^t : t \text{ is rational} \}$, which is dense in $\Rplus$, we must have $f(x) = f_0(x)$ for all $x \in \Rplus$.
For the case $x=0$, since $f$ is strictly increasing, we must have $f(0) < f(x) = \alpha \log x + \beta$ for all $x \in \Rplus$.
The only possibility for $f(0)$ is $f(0) = -\infty = f_0(0)$.
This completes the proof that $f(x) = f_0(x) = \alpha \log x + \beta$ for all $x \in \Rpluszero$.
\end{proof}

Finally, we are ready to prove \Cref{thm:real_idengood}.

\begin{proof}[Proof of \Cref{thm:real_idengood}]
The implication (a) $\Rightarrow$ (b) follows from Lemmas~\ref{lem:real_idengood_cond1}, \ref{lem:real_idengood_constant}, and \ref{lem:real_idengood_log}, while the implication (b) $\Rightarrow$ (a) follows from the result of \citet{CaragiannisKuMo19} that MNW guarantees EF1.
\end{proof}

\Cref{thm:real_idengood} and Lemmas~\ref{lem:real_idengood_cond1}, \ref{lem:real_idengood_constant}, and \ref{lem:real_idengood_log} imply that a continuous function having the logarithmic property is equivalent to the function satisfying Conditions {\condnorealgeneral} and {\condnorealgeneral}a (separately).
We view this as an interesting mathematical property in itself.

\subsection{Two-Value Instances}
\label{subsec:real_twovalue}

We now consider two-value instances, which are instances with only two possible utilities for the goods.
Note that the classes of identical-good instances and two-value instances do not contain each other.
Indeed, an identical-good instance with $n \geq 3$ agents may be \emph{$n$-value} rather than \emph{two-value} since each of the $n$ agents could assign a unique utility to the goods,%
\footnote{However, for two agents, an identical-good instance is necessarily two-value.}
whereas a two-value instance allows an agent to have varying utilities for different goods.

Despite this incomparability, we show that the same characterization for identical-good instances also holds for two-value instances.

\begin{theorem}
\label{thm:real_twovalue}
Let $n \geq 2$ be given, and let $f: \Rpluszero \to \R \cup \{-\infty\}$ be a strictly increasing function continuous on $\Rplus$.
Then, the following statements are equivalent:
\begin{enumerate}[label=(\alph*)]
    \item For every positive-admitting two-value instance with $n$ agents, every allocation chosen by the additive welfarist rule with $f$ is EF1.
    \item There exist constants $\alpha \in \Rplus$ and $\beta \in \R$ such that $f(x) = \alpha \log x + \beta$ for all $x \in \Rpluszero$.
\end{enumerate}
\end{theorem}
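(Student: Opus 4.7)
The plan is to mirror the proof of Theorem~\ref{thm:real_idengood} and to reuse its lemmas essentially verbatim. The implication (b) $\Rightarrow$ (a) is immediate from the result of \citet{CaragiannisKuMo19}: whenever $f(x) = \alpha \log x + \beta$ with $\alpha \in \Rplus$, maximizing $\sum_{i \in N} f(u_i(A_i))$ has the same argmaxes as maximizing $\sum_{i \in N} \log u_i(A_i)$, so the additive welfarist rule with $f$ coincides with the MNW rule on every positive-admitting instance and therefore returns an EF1 allocation.

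For (a) $\Rightarrow$ (b), I would replay the three-step chain from the identical-good case: first establish that $f$ satisfies \condrealgeneral, then invoke \Cref{lem:real_idengood_constant} to promote this to {\condrealgeneral}a, and finally invoke \Cref{lem:real_idengood_log} (whose continuity hypothesis is exactly what the theorem assumes) to conclude that $f(x) = \alpha \log x + \beta$ for some $\alpha \in \Rplus$ and $\beta \in \R$. Only the first step requires a dedicated argument; the remaining two apply off the shelf.

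For that first step I would reuse the very same witness family as in the proof of \Cref{lem:real_idengood_cond1}: $n$ agents and $(k+1)n$ goods, where agent~$1$ assigns every good the value $a \in \Rplus$ and each other agent assigns every good the value $b \in \Rplus$. The crucial observation is that although this instance was introduced as identical-good, it is also two-value: its utility set is contained in $\{a,b\}$, so one can always exhibit distinct $a_1, a_2 \in \Rpluszero$ covering both $a$ and $b$ (falling back to any other positive value for $a_2$ in the degenerate case $a = b$). Hypothesis (a) then forces the additive welfarist rule with $f$ to select the unique EF1 allocation---the balanced one giving each agent exactly $k+1$ goods---so comparing its welfare against the perturbation that shifts one good from agent~$2$ to agent~$1$ yields $\Delta_{f,k}(b) > \Delta_{f,k+1}(a)$. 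Since $k \in \Zpluszero$ and $a, b \in \Rplus$ were arbitrary, $f$ satisfies \condrealgeneral.

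The main obstacle is conceptual rather than technical: the paper explicitly flags that two-value and identical-good classes are incomparable once $n \geq 3$, so one has to verify that the specific construction borrowed from the identical-good proof nevertheless sits in the intersection of both classes and that its unique-EF1 characterization continues to hold when it is repackaged as two-value. Once this check is carried out, the theorem reduces to the identical-good machinery already in place.
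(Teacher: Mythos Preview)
Your proposal is correct and matches the paper's own proof essentially verbatim: the paper likewise derives (a) $\Rightarrow$ (b) by observing that the instance from \Cref{lem:real_idengood_cond1} is already two-value, invoking \Cref{lem:real_idengood_constant,lem:real_idengood_log}, and handles (b) $\Rightarrow$ (a) via \citet{CaragiannisKuMo19}. Your explicit remark about the degenerate case $a=b$ (padding $\{a\}$ out to a two-element set) is a nice clarification the paper leaves implicit.
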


\begin{proof}
(a) $\Rightarrow$ (b): We claim that $f$ satisfies {\condrealgeneral}.
The proof of this claim is identical to that of \Cref{lem:real_idengood_cond1}, since the construction in the proof of \Cref{lem:real_idengood_cond1} uses an instance that is also two-value.
Then, by \Cref{lem:real_idengood_constant,lem:real_idengood_log}, statement (b) holds.

(b) $\Rightarrow$ (a) follows from the result of \citet{CaragiannisKuMo19} that MNW guarantees EF1.
\end{proof}

\subsection{Normalized Instances}
\label{subsec:real_normalized}

Next, we consider normalized instances, where agents assign the same utility to the entire set of goods $M$.
We first consider the case of three or more agents.
We show that just like for the cases of identical-good instances and two-value instances, the only additive welfarist rule that yields EF1 for this case is the MNW rule.
This extends Theorem~6 of \citet{EckartPsVe24}, which only handles $p$-mean rules (as opposed to all additive welfarist rules).

\begin{theorem}
\label{thm:real_normalized_three}
Let $n \geq 3$ be given, and let $f: \Rpluszero \to \R \cup \{-\infty\}$ be a strictly increasing function continuous on $\Rplus$.
Then, the following statements are equivalent:
\begin{enumerate}[label=(\alph*)]
    \item For every positive-admitting normalized instance with $n$ agents, every allocation chosen by the additive welfarist rule with $f$ is EF1.
    \item There exist constants $\alpha \in \Rplus$ and $\beta \in \R$ such that $f(x) = \alpha \log x + \beta$ for all $x \in \Rpluszero$.
\end{enumerate}
\end{theorem}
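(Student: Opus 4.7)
The plan is to prove $(a)\Rightarrow(b)$ by establishing that $f$ must satisfy \condrealgeneral{} from \Cref{def:conditions}; since then, combining \Cref{lem:real_idengood_constant} and \Cref{lem:real_idengood_log} immediately yields the conclusion $f(x)=\alpha\log x+\beta$ on $\Rpluszero$. The converse $(b)\Rightarrow(a)$ is immediate from the EF1 guarantee of the MNW rule \citep{CaragiannisKuMo19}, since any positive affine transformation of $\log$ induces the same additive welfarist rule.

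To establish \condrealgeneral, I would fix $k\in\Zpluszero$ and $a,b\in\Rplus$ and construct a positive-admitting \emph{normalized} instance with the given $n\geq 3$ agents such that every EF1 allocation forces the inequality $\Delta_{f,k}(b)>\Delta_{f,k+1}(a)$ via the welfare comparison between a ``balanced'' and a ``swap'' allocation. The skeleton of the construction mirrors that of \Cref{lem:real_idengood_cond1}: take $(k+1)n$ main goods, with agent~$1$ valuing each at $a$ and each of agents $2,\dots,n$ valuing each at $b$. The balanced allocation gives each agent $k+1$ main goods and is EF1, while the swap allocation, in which agent~$1$ receives $k+2$ main goods, agent~$2$ receives $k$, and others receive $k+1$, is not EF1. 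Writing out the welfare comparison exactly as in \Cref{lem:real_idengood_cond1} then produces $\Delta_{f,k}(b)>\Delta_{f,k+1}(a)$.

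The main obstacle is that this identical-good instance is not normalized when $a\neq b$, so I must attach ``private extras'' that equalize the totals. WLOG assume $a\leq b$; the naive choice of giving agent~$1$ a batch of extras of total value $(k+1)n(b-a)$ (valued $0$ by all other agents) would be assigned entirely to agent~$1$ in any PO allocation, producing a utility shift $E=(k+1)n(b-a)$ and turning the swap comparison into the undesired $\Delta_{f,k}(b)>f((k+2)a+E)-f((k+1)a+E)$ instead of the clean $\Delta_{f,k}(b)>\Delta_{f,k+1}(a)$. To eliminate this shift, I would exploit the extra agents ($n-2\geq 1$), spreading the normalization mass symmetrically across the agents in a way that is pinned down by Pareto optimality (for example, by attaching an analogous batch of agent-specific extras for each of agents $2,\dots,n$, then choosing the magnitudes so that the agent-$1$ shift is cancelled in the balanced-vs-swap comparison while the totals remain equal). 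Verifying that the resulting instance is positive-admitting and that the rule's choice on the auxiliary goods is forced—so that the swap argument isolates exactly the pair of agents $(1,2)$—is where the technical work lies; this extra flexibility is unavailable when $n=2$, which is precisely why the characterization fails in that case. Once \condrealgeneral{} is established for all $k\in\Zpluszero$ and $a,b\in\Rplus$, invoking \Cref{lem:real_idengood_constant} and \Cref{lem:real_idengood_log} completes the proof.
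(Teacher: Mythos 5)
Your overall architecture (establish a condition on $\Delta_{f,k}$, then invoke \Cref{lem:real_idengood_constant} and \Cref{lem:real_idengood_log}, with the converse from \citet{CaragiannisKuMo19}) is sound, but the key step---establishing \condrealgeneral{} from normalized instances---has a genuine gap that your proposed fix does not close. You correctly identify the problem: attaching private extras to normalize the identical-good instance shifts the arguments of $f$ additively. However, this shift cannot be ``cancelled by choosing the magnitudes.'' If agent $i$ receives extras of total value $E_i$ (worth $0$ to everyone else), normalization forces $E_1 - E_2 = (k+1)n(b-a)$, and Pareto optimality sends all extras to their owners in every chosen allocation. The balanced-vs-swap welfare comparison then yields $f((k+1)b+E_2)-f(kb+E_2) > f((k+2)a+E_1)-f((k+1)a+E_1)$; recovering the clean $\Delta_{f,k}(b)>\Delta_{f,k+1}(a)$ would require $E_1=E_2=0$, which is incompatible with normalization whenever $a\neq b$. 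No symmetric redistribution of the normalization mass changes this, because the shifts sit inside the arguments of an unknown $f$ and the constraint $E_1-E_2=(k+1)n(b-a)$ is forced. So the construction as sketched proves only a shifted family of inequalities, not \condrealgeneral.

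The paper's proof (\Cref{lem:real_normalized_implies_constant}) sidesteps this entirely by targeting \emph{\condrealgeneral a} rather than \condrealgeneral: it compares two allocations in which agents $1$ and $2$ hold $k$ versus $k-1$ and $k$ versus $k+1$ main goods respectively, so that both sides of the resulting inequality involve $\Delta_{f,k}$ at the \emph{same} index $k$ (evaluated at $a$ and at a $b$-like point), which is exactly the constancy statement of \condrealgeneral a. Normalization is achieved by an extra agent $n$ who values only a single super-valuable good $g''$ worth the common total $c$, together with a low-value good $g'$ that only agent $1$ values at $b-\epsilon$; the rule is forced to give $g''$ to agent $n$ and $g'$ to agent $1$, and continuity of $f$ absorbs the $\epsilon$-perturbation. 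Since \condrealgeneral a plus continuity already gives the logarithmic form via \Cref{lem:real_idengood_log} (without passing through \Cref{lem:real_idengood_constant}), this weaker target suffices. If you want to salvage your route, you would need to either redesign the instance so that the two agents being compared receive bundles at the same multiplicative level (as the paper does), or prove that the shifted inequalities you actually obtain still imply \condrealgeneral a; as written, the proposal does not do either.
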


To prove \Cref{thm:real_normalized_three}, we cannot use the construction in the proof of \Cref{lem:real_idengood_cond1} since the instance in that construction is not necessarily normalized.
Instead, we show that the function defining the additive welfarist rule must satisfy \emph{{\condrealgeneral}a}.
To this end, we augment the construction of \citet{Suksompong23} by adding a highly valuable good and an extra agent who only values that good, so that the instance becomes normalized.

\begin{lemma}
\label{lem:real_normalized_implies_constant}
Let $n \geq 3$ be given, and let $f: \Rpluszero \to \R \cup \{-\infty\}$ be a strictly increasing function continuous on $\Rplus$ such that the statement in \Cref{thm:real_normalized_three}(a) holds.
Then, $f$ satisfies {\condrealgeneral}a.
\end{lemma}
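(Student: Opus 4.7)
The plan is to adapt the construction from the proof of \Cref{lem:real_idengood_cond1} to the normalized setting by introducing a ``dummy'' agent along with a ``special'' good that equalizes the agents' total utilities. This adapted construction will yield {\condrealgeneral}, and \Cref{lem:real_idengood_constant} will then upgrade it to {\condrealgeneral}a.

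Specifically, for arbitrary $k \in \Zpluszero$ and $a, b \in \Rplus$, I consider an instance with $n-1$ real agents and one dummy agent (agent~$n$). There are $(k+1)(n-1)$ regular goods, each valued at $a$ by agent~$1$, at $b$ by agents $2, \ldots, n-1$, and at $0$ by agent~$n$, along with a single special good $g^*$. To enforce $u_i(M) = T$ for every $i$, I set $u_n(g^*) = T$, $u_1(g^*) = T - (k+1)(n-1)a$, and $u_i(g^*) = T - (k+1)(n-1)b$ for $i \in \{2, \ldots, n-1\}$, with $T$ taken large enough that all these values are positive (so in particular the instance is positive-admitting). Define $\mathcal{A}$ to give each real agent $k+1$ regular goods and $g^*$ to agent~$n$, and let $\mathcal{B}$ be the allocation obtained from $\mathcal{A}$ by transferring one regular good from agent~$2$ to agent~$1$. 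A direct check shows that $\mathcal{A}$ is EF1 (envy toward agent~$n$ vanishes upon removing $g^*$), while $\mathcal{B}$ fails EF1 (agent~$2$ envies agent~$1$ by two units of~$b$). Since the contributions from agent~$n$ and from agents $3,\ldots,n-1$ are identical in $\mathcal{A}$ and $\mathcal{B}$, the welfare difference telescopes to $\Delta_{f,k}(b) - \Delta_{f,k+1}(a)$, so proving that the welfare of $\mathcal{A}$ strictly exceeds that of $\mathcal{B}$ will yield {\condrealgeneral} at $(k, a, b)$, and hence at all $(k, a, b)$ by arbitrariness.

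The main obstacle is to deduce this strict welfare inequality from (a). Because (a) forces every maximizer of the welfarist objective to be EF1, it suffices to prove that $\mathcal{A}$ is the \emph{unique} EF1 allocation of maximum welfare: then $\mathcal{B}$, being non-EF1, cannot tie $\mathcal{A}$ as a maximizer, and the strict inequality follows. When $g^*$ is assigned to agent~$n$, applying the argument of \Cref{lem:real_idengood_cond1} to the $n-1$ real agents forces the equal split as the only EF1 option, so $\mathcal{A}$ is the unique EF1 choice in this case. The more delicate case is when $g^*$ is given to a real agent $j$, making agent~$n$'s utility~$0$: here I will argue that for $T$ sufficiently large, every such alternative is strictly worse than $\mathcal{A}$ in welfare, using the monotonicity bound $f(T) - f(T - C) \geq 0$ for any $C \geq 0$ together with the strict inequality $f((k+1)a) > f(0)$, which provides a positive lower bound on the welfare gap independent of $T$. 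Once $\mathcal{A}$ is confirmed as the unique EF1 maximum, the welfare of $\mathcal{A}$ strictly exceeds that of $\mathcal{B}$, establishing {\condrealgeneral}, and \Cref{lem:real_idengood_constant} then delivers {\condrealgeneral}a.
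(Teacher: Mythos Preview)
Your overall route---establish {\condrealgeneral} via a normalized instance with a dummy agent~$n$ absorbing a high-value good $g^*$, then invoke \Cref{lem:real_idengood_constant}---is a legitimate alternative to the paper's argument, which instead attacks {\condrealgeneral}a directly by a continuity/$\epsilon$-perturbation and an extra low-value good $g'$. Your way avoids the perturbation at the cost of citing \Cref{lem:real_idengood_constant}; the paper's way avoids that lemma but must use continuity inside the construction.

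However, your ``delicate case'' has a real gap. You assert that for $T$ large, \emph{every} allocation $\mathcal{C}$ with $g^*$ at a real agent~$j$ is strictly worse than $\mathcal{A}$, justified by ``$f(T)-f(T-C)\ge 0$'' and ``$f((k+1)a)>f(0)$.'' This does not follow: those two bounds control only the $f(T)$-term and the agent-$n$ term, not the remaining agents. Concretely, with $n=3$, $k=1$, $a=1$, $b=100$, $f(x)=\sqrt{x}$, and $\mathcal{C}$ giving $g^*$ to agent~$1$, all four regular goods to agent~$2$, nothing to agent~$3$, one has
\[
\text{Welfare}(\mathcal{A})-\text{Welfare}(\mathcal{C})=\sqrt{2}+\sqrt{200}+\bigl(\sqrt{T}-\sqrt{T-4}\bigr)-20\ \xrightarrow[T\to\infty]{}\ \sqrt{2}+\sqrt{200}-20<0,
\]
so $\mathcal{C}$ beats $\mathcal{A}$ (and $\mathcal{C}$ is even EF1). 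Of course $\sqrt{x}$ does not satisfy (a), but your sketch never uses (a) at this step, so the argument as written cannot stand.

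The fix is to stop comparing to $\mathcal{A}$ and instead show that \emph{no maximizer} places $g^*$ at a real agent: transferring $g^*$ from $j$ to $n$ changes the welfare by
\[
\bigl[f(T)-f(u_j(C_j))\bigr]+\bigl[f(u_j(C_j)-u_j(g^*))-f(0)\bigr],
\]
both brackets nonnegative by monotonicity, and not simultaneously zero since $u_j(g^*)<T$ while every good has positive value to $j$. This is exactly the maneuver the paper uses for its $g''$. With this in place, any maximizer has $C_n=\{g^*\}$, EF1 among the real agents forces the equal split, and the rest of your plan goes through.
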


\begin{proof}
Assume, for the sake of contradiction, that $f$ does not satisfy {\condrealgeneral}a.
Hence, there exist $k \in \Zplus$ and $a, b \in \Rplus$ such that $\Delta_{f,k}(a) > \Delta_{f,k}(b)$, that is, $ f((k+1)a) - f(ka) > f((k+1)b) - f(kb)$.
By the continuity of $f$, there exists $\epsilon \in (0, b)$ such that
\begin{align}
\label{eq:real_normalized_pk}
    f((k+1)a) - f(ka) > f((k+1)b-\epsilon) - f(kb-\epsilon).
\end{align}
Consider an instance with $n$ agents and $m = k(n-1)+2$ goods.
For ease of notation, let $c = \max\{kna, knb\}$, $N' = N \setminus \{1, n\}$, $g' = g_{m-1}$, $g'' = g_m$, and $M' = M \setminus \{g', g''\}$.
The utilities of the goods are as follows.
\begin{itemize}
    \item $u_1(g') = b - \epsilon$, $u_1(g'') = c - k(n-1)b - (b-\epsilon)$, and $u_1(g) = b$ for $g \in M'$.
    \item For $i \in N'$, $u_i(g') = 0$, $u_i(g'') = c - k(n-1)a$, and $u_i(g) = a$ for $g \in M'$.
    \item $u_n(g'') = c$, and $u_n(g) = u_n(g') = 0$ for $g \in M'$.
\end{itemize}
Note that $|N'| = n-2 \geq 1$, $|M'| = k(n-1) \geq |N'|$, the utility of each good is non-negative, and $u_i(M) = c$ for all $i \in N$.
Therefore, this is a valid normalized instance; moreover, it is positive-admitting since the allocation where agent $1$ receives $g'$, agent $n$ receives $g''$, and agent $i \in N'$ each receives at least one good from $M'$ gives positive utility to every agent.

Let $\mathcal{A} = (A_1, \ldots, A_n)$ be an allocation chosen by the additive welfarist rule with $f$.
We first show that $g' \in A_1$ and $A_n \subseteq \{g''\}$.
If $g' \in A_j$ for some $j \in N \setminus \{1\}$, then giving $g'$ to agent $1$ increases $u_1(A_1)$ and does not change $u_j(A_j)$; the value of $\sum_{i \in N} f(u_i(A_i))$ increases since $f$ is strictly increasing, which contradicts the assumption that $\mathcal{A}$ is chosen by the additive welfarist rule with $f$.
Likewise, if $A_n \setminus \{g''\} \neq \emptyset$, then giving $A_n \setminus \{g''\}$ to agent $1$ results in the same contradiction.

Next, we show that $A_n = \{g''\}$.
Suppose on the contrary that $g'' \in A_j$ for some agent $j \in N \setminus \{n\}$.
Let $\mathcal{A}' = (A'_1, \ldots, A'_n)$ be the allocation such that $g''$ is given instead to agent $n$, i.e., $A'_j = A_j \setminus \{g''\}$, $A'_n = \{g''\}$, and $A'_i = A_i$ for all $i \in N \setminus \{j, n\}$.
Then,
\begin{align*}
    \sum_{i \in N} f(u_i(A'_i)) &= f(u_j(A'_j)) + f(u_n(A'_n)) + \sum_{i \in N \setminus \{j, n\}} f(u_i(A'_i)) \\
    &\geq f(0) + f(c) + \sum_{i \in N \setminus \{j, n\}} f(u_i(A'_i)) \\
    &= f(c) + f(0) + \sum_{i \in N \setminus \{j, n\}} f(u_i(A_i)) \\
    &\geq f(u_j(A_j)) + f(u_n(A_n)) + \sum_{i \in N \setminus \{j, n\}} f(u_i(A_i))
    = \sum_{i \in N} f(u_i(A_i)),
\end{align*}
where the inequalities hold because $u_n(A'_n) = c$ and $u_n(A_n) = 0$.
Now, at least one of the two inequalities must be strict because $f$ is strictly increasing and we have $u_j(A'_j) > 0$ or $u_j(A_j) < c$.
This shows that $\sum_{i \in N} f(u_i(A'_i)) > \sum_{i \in N} f(u_i(A_i))$.
It follows that $\mathcal{A}$ is not chosen by the welfarist rule, a contradiction.
Hence, $A_n = \{g''\}$.

Recall that $\mathcal{A}$ is EF1 since it is chosen by the additive welfarist rule with $f$.
We show that every agent $i \in N \setminus \{n\}$ receives exactly $k$ goods from~$M'$.
If agent $1$ receives at most $k-1$ goods from~$M'$, then some agent $j \in N'$ receives at least $k+1$ goods from~$M'$.
Then, $u_1(A_1) \leq kb-\epsilon$ and $u_1(A_j \setminus \{g\}) \geq kb > u_1(A_1)$ for all $g \in A_j$, which shows that $\mathcal{A}$ is not EF1 for agent $1$.
Therefore, agent~$1$ receives at least $k$ goods from $M'$.
Likewise, if some agent $i \in N'$ receives at most $k-1$ goods from $M'$, then some agent $j \in N \setminus \{i, n\}$ receives at least $k+1$ goods from $M'$.
Then, $u_i(A_i) \leq (k-1)a$ and $u_i(A_j \setminus \{g\}) \geq ka > u_i(A_i)$ for all $g \in A_j$, which shows that $\mathcal{A}$ is not EF1 for agent $i$.
Therefore, every agent $i \in N'$ receives at least $k$ goods from $M'$ as well.
The only way for every agent in $N \setminus \{n\}$ to receive at least $k$ goods from $M'$ is when every agent in $N \setminus \{n\}$ receives \emph{exactly} $k$ goods from $M'$.
Then, we have
\begin{align*}
    \sum_{i \in N} f(u_i(A_i)) &= f(u_1(A_1)) + \sum_{i \in N'} f(u_i(A_i)) + f(u_n(A_n)) \\
    &= f((k+1)b-\epsilon) + (n-2)f(ka) + f(c).
\end{align*}

Let $\mathcal{B} = (B_1, \ldots, B_n)$ be the allocation such that agent $1$ receives $g'$ and $k-1$ goods from $M'$, agent $2$ receives $k+1$ goods from $M'$, each agent $i \in N' \setminus \{2\}$ receives $k$ goods from $M'$, and agent~$n$ receives $g''$.
We have
\begin{align*}
    \sum_{i \in N} f(u_i(B_i)) &= f(u_1(B_1)) + \sum_{i \in N'} f(u_i(B_i)) + f(u_n(B_n)) \\
    &= f(kb-\epsilon) + f((k+1)a) + (n-3)f(ka) + f(c).
\end{align*}
Since $\mathcal{A}$ is chosen, we have $\sum_{i \in N} f(u_i(B_i)) \leq \sum_{i \in N} f(u_i(A_i))$.
Rearranging the terms, we get
\begin{align*}
    f((k+1)a) - f(ka) \leq f((k+1)b-\epsilon) - f(kb-\epsilon),
\end{align*}
contradicting \eqref{eq:real_normalized_pk}.
Therefore, $f$ satisfies {\condrealgeneral}a.
\end{proof}

With this lemma in hand, we are ready to prove \Cref{thm:real_normalized_three}.

\begin{proof}[Proof of \Cref{thm:real_normalized_three}]
The implication (a) $\Rightarrow$ (b) follows from Lemmas \ref{lem:real_normalized_implies_constant} and \ref{lem:real_idengood_log}, while the implication (b) $\Rightarrow$ (a) follows from the result of \citet{CaragiannisKuMo19} that MNW guarantees EF1.
\end{proof}

We now address the case of two agents.
\citet{EckartPsVe24} showed that in this case, the $p$-mean rule guarantees EF1 for all $p \leq 0$, which implies that there is a larger class of additive welfarist rules that guarantee EF1.
Accordingly, our characterization in \Cref{thm:real_normalized_three} does not work for two agents.
The problem of finding a characterization for this case turns out to be rather challenging.
We instead provide a necessary condition and a sufficient condition for the functions defining the additive welfarist rules that guarantee EF1 for normalized instances with two agents.

\begin{restatable}{proposition}{proprealnormalizedtwonecessary}
\label{prop:real_normalized_two_necessary}
Let $f: \Rpluszero \to \R \cup \{-\infty\}$ be a strictly increasing function continuous on $\Rplus$ such that for every positive-admitting normalized instance with two agents, every allocation chosen by the additive welfarist rule with $f$ is EF1.
Then, $f$ is strictly concave.
\end{restatable}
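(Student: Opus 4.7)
The plan is to establish strict concavity of $f$ by exhibiting a family of normalized two-agent instances in which EF1 pins down a unique balanced allocation, forcing the welfarist rule to strictly prefer it over all imbalanced alternatives.

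For each $k \in \Zplus$ and $c \in \Rplus$, I would consider the normalized two-agent instance with $2k$ goods, each valued at $c$ by both agents (positive-admitting via the balanced split). A direct bundle-size check shows that the only EF1 allocation is the balanced split $(k,k)$: in any allocation where one agent receives $j \neq k$ goods, that agent envies the other by at least two goods. Since the welfarist rule with $f$ must always select an EF1 allocation, no non-EF1 allocation can even tie in welfare (otherwise the arbitrary tie-breaking could choose it, contradicting the assumption). This yields the strict inequality $2f(kc) > f(jc) + f((2k-j)c)$ for every $j \in \{0, 1, \ldots, 2k\} \setminus \{k\}$.

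Setting $x = jc$ and $y = (2k-j)c$, this is precisely strict midpoint concavity $2f((x+y)/2) > f(x) + f(y)$. By suitable choices of parameters (namely $k = p+q$, $j = 2q$, $c = x/(2q)$ when $y/x = p/q$ in lowest terms with $p > q \geq 1$, and $j = 0$ when $x = 0$), I cover every $(x, y) \in \Rpluszero \times \Rplus$ with $x \neq y$ and either $x = 0$ or $y/x$ rational. Continuity of $f$ on $\Rplus$ together with density of rational-ratio pairs then upgrades this to weak midpoint concavity on all of $\Rplus \times \Rplus$, and by Sierpi\'nski's classical theorem (continuous midpoint-concave implies concave) $f$ is concave on $\Rplus$. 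To promote this to strict concavity on $\Rplus$, I argue by contradiction: if $f$ were affine on some subinterval $[u, v] \subset \Rplus$, picking $x, y \in (u, v)$ with $y/x$ rational (which exist by density) would force both $2f((x+y)/2) > f(x) + f(y)$ (from the strict inequality above) and $2f((x+y)/2) = f(x) + f(y)$ (from affineness), an impossibility.

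Finally, to handle strict concavity at the boundary point $x = 0$: the case $f(0) = -\infty$ is immediate; for $f(0) \in \R$ and $z \in (0, y)$, I apply strict concavity of $f$ on $\Rplus$ between $z/2$ and $y$ evaluated at $z$, then substitute the strict midpoint bound $f(z/2) > (f(0) + f(z))/2$. The resulting algebra collapses exactly to $f(z) > (1 - z/y) f(0) + (z/y) f(y)$, which is the desired strict inequality at $x = 0$. The main obstacle is this last step: since $f$ is not assumed continuous at $0$, a naive limiting argument from strict concavity on $\Rplus$ only produces a weak inequality when $f(0)$ is finite, so the specific coupling of strict concavity on $\Rplus$ with the additional strict midpoint-against-zero bound is essential to recover strictness at the boundary.
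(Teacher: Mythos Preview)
Your proposal is correct and follows a genuinely different route from the paper. The paper argues by contradiction: assuming $f$ is not strictly concave, it uses an extreme-value-theorem argument to locate an interval $[c-\delta,c+\delta]$ on which midpoint concavity fails uniformly, and then derives a contradiction from a \emph{single} identical-goods instance with $2\lceil c/\delta\rceil$ goods (the balanced split is the unique EF1 allocation, hence the strict welfare maximizer, forcing the opposite midpoint inequality). You instead work constructively: the same family of identical-goods instances, ranged over all $k$ and $c$, yields strict midpoint concavity directly on a dense set (rational-ratio pairs, plus pairs with one endpoint zero), and you then upgrade to full strict concavity via continuity, the midpoint-concave-plus-continuous $\Rightarrow$ concave theorem, and the standard ``concave but not strictly concave $\Rightarrow$ affine on a subinterval'' fact. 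Your approach is more modular and makes the role of continuity explicit, at the cost of invoking two classical external results; the paper's approach is self-contained but spends most of its effort establishing the auxiliary interval-of-failure claim. Your treatment of the boundary $x=0$ is also careful: since $f$ is not assumed continuous there, you correctly avoid a limiting argument and instead couple strict concavity on $\Rplus$ with the strict midpoint bound $2f(z/2) > f(0)+f(z)$---and with $\beta=(y-z)/(y-z/2)$ one indeed gets $(1-\beta)/(1-\beta/2)=z/y$, so the algebra collapses exactly to $f(z) > (1-z/y)f(0)+(z/y)f(y)$ as you claim.
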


\begin{proof}
Assume, for the sake of contradiction, that $f$ is not strictly concave.
We claim that there exist $c \in \Rplus$ and $\delta \in (0, c)$ such that for all $\delta' \in (0, \delta]$, it holds that $f(c) \leq 0.5 f(c-\delta') + 0.5 f(c+\delta')$.
We first prove this claim.
By definition, there exist $a, b \in \Rpluszero$ with $a < b$ and $\tau \in (0, 1)$ such that $f(\tau a + (1-\tau)b) \leq \tau f(a) + (1-\tau)f(b)$.
Let $F(x) = f(x) - \frac{f(b)-f(a)}{b-a}(x-a)$.
Note that $F$ is continuous on $[a, b]$ and $F(a) = F(b) = f(a)$.
By the extreme value theorem, $F$ attains a minimum in the interval $[a, b]$ at some $c \in [a, b]$, i.e., $F(x) \geq F(c)$ for all $x \in [a, b]$.
If $c = a$ or $c = b$, then $c' = \tau a + (1-\tau)b \in (a, b)$ is also a minimum, since
\begin{align*}
    F(c')
    &= f(c') - \frac{f(b)-f(a)}{b-a}(c' - a) \\
    &\leq \tau f(a) + (1-\tau)f(b) - \frac{f(b)-f(a)}{b-a}(c' - a) \\
    &= \tau f(a) + (1-\tau)f(b) - \frac{f(b)-f(a)}{b-a}(1-\tau)(b-a) \\
    &= \tau f(a) + (1-\tau)f(b) - (1-\tau)(f(b)-f(a)) \\
    &= f(a) \  (= F(a) = F(b)).
\end{align*}
Therefore, we may assume without loss of generality that $F$ attains a minimum at some $c \in (a, b)$.
Let $\delta = \min \{c-a, b-c\} > 0$.
Then, for all $x \in [c-\delta, c+\delta]$, we have $F(x) \geq F(c)$.
This means that for all $\delta' \in (0, \delta]$, it holds that $F(c) \leq 0.5F(c-\delta') + 0.5F(c + \delta')$.
Thus, for all $\delta' \in (0, \delta]$, we have
\begin{align*}
    f(c) &= F(c) + \frac{f(b)-f(a)}{b-a}(c-a) \\
    &\leq 0.5F(c-\delta') + 0.5F(c + \delta') + \frac{f(b)-f(a)}{b-a}(c-a) \\
    &= 0.5F(c-\delta') + 0.5F(c + \delta') + 0.5 \cdot \frac{f(b)-f(a)}{b-a}((c - \delta' - a) + (c + \delta' -a)) \\
    &= 0.5\left( F(c-\delta') + \frac{f(b)-f(a)}{b-a}(c - \delta' - a) \right) + 0.5\left( F(c+\delta') + \frac{f(b)-f(a)}{b-a}(c + \delta' - a) \right) \\
    &= 0.5f(c-\delta') + 0.5f(c+\delta'),
\end{align*}
proving the claim.

Now, let $c \in \Rplus$ and $\delta \in (0, c)$ be given such that for all $\delta' \in (0, \delta]$, we have $f(c) \leq 0.5 f(c-\delta') + 0.5 f(c+\delta')$.
Consider an instance with two agents and $2 \lceil c/\delta \rceil$ goods such that each good has utility $\delta' = c/\lceil c/\delta \rceil$ to each agent.
This instance is a normalized instance; moreover, it is positive-admitting since the allocation where every agent receives $\lceil c/\delta \rceil$ goods gives positive utility to every agent.

Let $\mathcal{A} = (A_1, A_2)$ be the allocation where each agent receives $\lceil c/\delta \rceil$ goods.
Then, we have $\sum_{i \in N} f(u_i(A_i)) = 2f(c)$.
Note that $\mathcal{A}$ is the only EF1 allocation; indeed, if some agent receives fewer than $\lceil c/\delta \rceil$ goods, then this agent is not EF1 towards the other agent who receives more than $\lceil c/\delta \rceil$ goods.
Let $\mathcal{B} = (B_1, B_2)$ be the allocation such that agent $1$ receives $\lceil c/\delta \rceil - 1$ goods and agent~$2$ receives $\lceil c/\delta \rceil + 1$ goods.
We have $\sum_{i \in N} f(u_i(B_i)) = f(c-\delta') + f(c+\delta')$.
Since $\mathcal{A}$ is the only EF1 allocation, the additive welfarist rule with $f$ chooses (only) $\mathcal{A}$, and we have $\sum_{i \in N} f(u_i(A_i)) > \sum_{i \in N} f(u_i(B_i))$.
Simplifying the inequality, we get $f(c) > 0.5 f(c-\delta') + 0.5 f(c+\delta')$.
However, since $\delta' = c/\lceil c/\delta \rceil \leq c/(c/\delta) = \delta$, this contradicts the assumption on $f$.
Therefore, $f$ is strictly concave.
\end{proof}

We now turn to the sufficient condition.
Note that this is a generalization of the condition used by \citet[Lemma~3]{EckartPsVe24}.

\begin{restatable}{proposition}{proprealnormalizedtwosufficient}
\label{prop:real_normalized_two_sufficient}
Let $f: \Rpluszero \to \R \cup \{-\infty\}$ be a strictly increasing function that satisfies {\condrealnormtwo}.
Then, for every positive-admitting normalized instance with two agents, every allocation chosen by the additive welfarist rule with $f$ is EF1.
\end{restatable}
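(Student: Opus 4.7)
The plan is to argue by contradiction: if $\mathcal{A} = (A_1, A_2)$ is chosen by the additive welfarist rule with $f$ but fails EF1, I will construct an allocation $\mathcal{A}'$ whose pair of agent utilities $(c, d)$ satisfies the two premises of \condrealnormtwo{} against the pair $(\alpha_1, \alpha_2) := (u_1(A_1), u_2(A_2))$ of $\mathcal{A}$, yielding $f(\alpha_1) + f(\alpha_2) < f(c) + f(d)$ and contradicting the optimality of $\mathcal{A}$. By symmetry I take agent~$1$ as the envier, so $\alpha_1 + u_1(g) < u_1(A_2)$ for every $g \in A_2$; writing $T = u_1(M) = u_2(M)$ (equal by normalization), the strong-envy bound rearranges to $u_1(g) < T - 2\alpha_1$ for all $g \in A_2$, and in particular $\alpha_1 < T/2$.

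I will split on the sign of $T - \alpha_1 - \alpha_2$. When $\alpha_1 + \alpha_2 < T$, the full bundle swap $\mathcal{A}' = (A_2, A_1)$ works: the min premise becomes $\alpha_1 + \alpha_2 \leq T$ and the product premise becomes $T(T - \alpha_1 - \alpha_2) > 0$, both immediate. When $\alpha_1 + \alpha_2 \geq T$, I have $\alpha_2 \geq T - \alpha_1 > T/2 > \alpha_1$, and I move a single good $g^* \in A_2$ chosen to maximize the ratio $u_1(g)/u_2(g)$ (with convention $+\infty$ when $u_2(g) = 0$) among goods with $u_1(g) > 0$; such a $g^*$ exists since strong envy forces $u_1(A_2) > 0$. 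Setting $\gamma_i = u_i(g^*)$, the sub-case $\gamma_2 = 0$ is immediate from $\gamma_1 > 0$ and $\alpha_2 > \alpha_1$. In the main sub-case $\gamma_2 > 0$, I plan to combine three inputs: first, $\gamma_1 < T - 2\alpha_1$ from strong envy; second, the bound $\gamma_1/\gamma_2 \geq (T - \alpha_1)/\alpha_2$ obtained by summing $u_1(g) \leq (\gamma_1/\gamma_2)\,u_2(g)$ over $g \in A_2$ (maximality of the ratio); and third, the algebraic equivalence $\alpha_1(\alpha_1 + \alpha_2 - T) \geq 0 \iff (T - 2\alpha_1)\alpha_2/(T - \alpha_1) \leq \alpha_2 - \alpha_1$. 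The first two inputs give $\gamma_2 \leq \gamma_1 \alpha_2/(T - \alpha_1) < (T - 2\alpha_1)\alpha_2/(T - \alpha_1)$, and the third upgrades this to $\gamma_2 < \alpha_2 - \alpha_1$, delivering the min premise $\alpha_2 - \gamma_2 > \alpha_1$. The product premise $\gamma_1 \alpha_2 > \gamma_2(\alpha_1 + \gamma_1)$ reduces to $(\gamma_1/\gamma_2)(\alpha_2 - \gamma_2) > \alpha_1$ and follows by chaining the strict lower bound $\alpha_2 - \gamma_2 > \alpha_1 \alpha_2/(T - \alpha_1)$ (obtained by subtracting the previous display from $\alpha_2$) with $\gamma_1/\gamma_2 \geq (T - \alpha_1)/\alpha_2$.

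The hard part will be the second case: once $\alpha_1 + \alpha_2 \geq T$ the bundle swap no longer strictly increases the product of utilities, and a careless single-good move risks pulling $\alpha_2 - \gamma_2$ below $\alpha_1$ and breaking the min condition. Choosing the maximum-ratio good is the crucial move, because its $u_2$-value is forced to be small both by strong envy (which caps $\gamma_1$ from above) and by maximality of the ratio (which forces $\gamma_1/\gamma_2$ to exceed the average ratio $(T - \alpha_1)/\alpha_2$); the algebraic equivalence in the third input is what matches these bounds to the case hypothesis exactly, so that both premises of \condrealnormtwo{} are delivered simultaneously.
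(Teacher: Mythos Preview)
Your proposal is correct. Both proofs share the same central device---picking the good in $A_2$ that maximizes $u_1(g)/u_2(g)$ (equivalently, the paper's $\argmin u_2(g)/u_1(g)$) and leveraging the ratio bound to certify the product premise of \condrealnormtwo{}---but the case decompositions and the auxiliary allocations differ. The paper first disposes of $u_2(A_2)<T/2$ by a bare swap (using only that $f$ is increasing), then always moves the chosen good and splits on whether $a\le d_1:=\alpha_2-\gamma_2$; when $a>d_1$ it compares $\mathcal{A}$ not with $(A_1\cup\{g^*\},A_2\setminus\{g^*\})$ but with the ``swap-and-remove'' allocation $(A_2\setminus\{g^*\},A_1\cup\{g^*\})$, and shows $ab<c_1d_1<c_2d_2$. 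You instead split upfront on $\alpha_1+\alpha_2\lessgtr T$: if $<$, a full swap already triggers \condrealnormtwo{}; if $\ge$, the single-good move suffices because the case hypothesis feeds directly into your Input~3 equivalence $\alpha_1(\alpha_1+\alpha_2-T)\ge 0\iff (T-2\alpha_1)\alpha_2/(T-\alpha_1)\le \alpha_2-\alpha_1$, which is precisely what pins $\gamma_2$ below $\alpha_2-\alpha_1$ and secures the min premise. Your route avoids the hybrid swap-and-remove allocation entirely, at the cost of a slightly more delicate algebraic chain; the paper's route establishes the product inequality $ab<c_1d_1$ once and reuses it across cases.
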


\begin{proof}
Let a positive-admitting normalized instance with two agents be given, and let $U = u_1(M) = u_2(M)$.
Let $\mathcal{A} = (A_1, A_2)$ be an allocation chosen by the additive welfarist rule with~$f$.
Assume, for the sake of contradiction, that $\mathcal{A}$ is not EF1.
Without loss of generality, we may assume that agent $1$ is not EF1 towards agent $2$.
Then, we have $u_1(A_1) < u_1(A_2 \setminus \{g\}) \leq u_1(A_2)$ for all $g \in A_2$, and thus $u_1(A_1) < U/2$.
Moreover, if $u_2(A_2) < U/2$, then exchanging the two agents' bundles increases both agents' utilities of their own bundles, thereby increasing the value of $\sum_{i=1}^2 f(u_i(A_i))$ and contradicting the assumption that $\mathcal{A}$ is chosen by the additive welfarist rule with $f$.
Therefore, $u_2(A_2) \geq U/2$.
Let $a = u_1(A_1)$, $b = u_2(A_2)$; note that $a < U/2 \leq b$, and hence $\min \{a, b\} = a$.

Since $u_1(A_2) > u_1(A_1) \geq 0$, there exists a good in $A_2$ with positive utility to agent $1$.
Let $g' = \argmin_{g \in A_2,\, u_1(g) > 0} u_2(g) / u_1(g)$.
Let $B_1 = A_1 \cup \{g'\}$, $B_2 = A_2 \setminus \{g'\}$, $c_1 = u_1(B_1)$, $d_1 = u_2(B_2)$, $c_2 = u_1(B_2)$, and $d_2 = u_2(B_1)$.

We claim that $ab < c_1d_1$.
By the choice of $g'$, we have
\begin{align*}
    \frac{u_2(g')}{u_1(g')} 
    \leq \frac{\sum_{g \in A_2,\, u_1(g) > 0} u_2(g)}{\sum_{g \in A_2,\, u_1(g) > 0} u_1(g)} 
    \leq \frac{\sum_{g \in A_2} u_2(g)}{\sum_{g \in A_2} u_1(g)} 
    = \frac{u_2(A_2)}{u_1(A_2)},
\end{align*}
which gives $(b-d_1)/(c_1-a) \leq b/u_1(A_2)$.
On the other hand, since agent $1$ is not EF1 towards agent $2$ in $\mathcal{A}$, we have $u_1(A_1 \cup \{g'\}) < u_1(A_2)$, or $c_1 < u_1(A_2)$.
Multiplying these two inequalities together and noting that the expressions on both sides of the inequalities are non-negative, we get $c_1(b-d_1)/(c_1-a) < b$.
Simple algebraic manipulation gives us the desired inequality $ab < c_1d_1$.

We consider the following two cases.

\underline{Case 1:} $a \leq d_1$.
Note that $c_1 - a = u_1(g') > 0$, and so $a < c_1$.
Thus, $a \leq \min \{c_1, d_1\}$ in this case.

Now, since $\min \{a, b\} = a \leq \min \{c_1, d_1\}$ and $ab < c_1d_1$, {\condrealnormtwo} implies that $f(a) + f(b) < f(c_1) + f(d_1)$.
On the other hand, since $\mathcal{A}$ is chosen by the additive welfarist rule with $f$, we have $\sum_{i=1}^2 f(u_i(A_i)) \geq \sum_{i=1}^2 f(u_i(B_i))$, where the right-hand side of the inequality corresponds to the allocation $(B_1, B_2)$.
Substituting in the respective values, we get $f(a) + f(b) \geq f(c_1) + f(d_1)$, a contradiction.

\underline{Case 2:} $a > d_1$.
We claim that $a \leq \min \{c_2, d_2\}$.
Since agent $1$ is not EF1 towards agent $2$ in $\mathcal{A}$, we have $a < c_2$.
Moreover, it holds that $a+d_2 > d_1+d_2 = U$, and $a < U/2$ implies that $a < U/2 < d_2$.
Therefore, $a \leq \min\{c_2, d_2\}$.

We show that $ab < c_2d_2$.
To this end, observe that
\begin{align*}
    c_2d_2 &= (U-c_1)(U-d_1) \\
    &= U(U-c_1-d_1) + c_1d_1 
    = U(c_2-d_1) + c_1d_1 
    > U(a-d_1) + c_1d_1 
    > c_1d_1,
\end{align*}
where the first inequality follows from $a < c_2$ and the second inequality from the assumption $a > d_1$.
Therefore, $ab < c_1d_1 < c_2d_2$.

Now, since $\min \{a, b\} = a \leq \min \{c_2, d_2\}$ and $ab < c_2d_2$, {\condrealnormtwo} implies that $f(a) + f(b) < f(c_2) + f(d_2)$.
On the other hand, since $\mathcal{A}$ is chosen by the additive welfarist rule with $f$, we have $\sum_{i=1}^2 f(u_i(A_i)) \geq \sum_{i=1}^2 f(u_i(B_{3-i}))$, where the right-hand side of the inequality corresponds to the allocation $(B_2, B_1)$.
Substituting in the respective values, we get $f(a) + f(b) \geq f(c_2) + f(d_2)$, a contradiction.

Since we have reached a contradiction in both cases, we conclude that $\mathcal{A}$ is EF1.
\end{proof}

We show in Appendix~\ref{ap:real_normalized_two} that beyond the $p$-mean rules for $p \leq 0$, there are (infinitely many) other additive welfarist rules that guarantee EF1 for normalized instances with two agents.
In particular, any linear combination of $\varphi_p$---the function defining the $p$-mean rule---satisfies {\condrealnormtwo}, so the additive welfarist rule with such a function also guarantees EF1 for this class of instances.
On the other hand, the additive welfarist rule with the modified logarithmic function $\lambda_c$ where $c > 0$ and the additive welfarist rule with the modified harmonic function $h_c$ where $c > -1$ do not guarantee EF1 for these instances.

\section{Integer-Valued Instances}
\label{sec:integer}

In this section, we turn our attention to \emph{integer-valued} instances, where the utility of each agent for each good must be a (non-negative) integer.
For these instances, the MNW rule is no longer the unique additive welfarist rule that guarantees EF1: the MHW rule exhibits the same property \citep{MontanariScSu25}.
We shall explore the conditions for rules to satisfy this property, and provide several examples of such rules.

\subsection{Identical-Good Instances}
\label{subsec:integer_idengood}

We begin with the class of (integer-valued) identical-good instances.
We show that the functions defining the additive welfarist rules that guarantee EF1 for such instances are precisely those that satisfy \emph{{\condintidengood}}.
Perhaps unsurprisingly, {\condintidengood} is similar to {\condrealgeneral} (which is the condition corresponding to \emph{real-valued} identical-good instances), with the difference being that the values of $a$ and $b$ in the conditions are positive real numbers in {\condrealgeneral} and positive integers in {\condintidengood}.

\begin{theorem}
\label{thm:integer_idengood}
Let $n \geq 2$ be given, and let $f: \Rpluszero \to \R \cup \{-\infty\}$ be a strictly increasing function.
Then, the following statements are equivalent:
\begin{enumerate}[label=(\alph*)]
    \item For every positive-admitting integer-valued identical-good instance with $n$ agents, every allocation chosen by the additive welfarist rule with $f$ is EF1.
    \item $f$ satisfies {\condintidengood}.
\end{enumerate}
\end{theorem}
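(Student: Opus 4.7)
The plan is to prove both directions, mirroring the structure of \Cref{thm:real_idengood} for real-valued identical-good instances. Direction (a) $\Rightarrow$ (b) reuses the exact construction from the proof of \Cref{lem:real_idengood_cond1}, while (b) $\Rightarrow$ (a) is a direct swap-and-contradict argument that exploits \condintidengood\ with repeated iteration.

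For (a) $\Rightarrow$ (b), I would fix arbitrary $k \in \Zpluszero$ and $a, b \in \Zplus$ and consider the instance with $n$ agents and $(k+1)n$ goods in which agent~$1$ has per-good utility $a$ and each remaining agent has per-good utility $b$. Since $a$ and $b$ are positive integers, this is a positive-admitting integer-valued identical-good instance. The argument in the proof of \Cref{lem:real_idengood_cond1} (which uses only $a, b > 0$, not their being real) then shows that the unique EF1 allocation gives every agent exactly $k+1$ goods, and that its welfare must strictly exceed that of the allocation giving agent~$1$ a bundle of size $k+2$ and some other agent a bundle of size $k$. Rearranging yields $\Delta_{f, k}(b) > \Delta_{f, k+1}(a)$, which is \condintidengood.

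For (b) $\Rightarrow$ (a), I would suppose for contradiction that $f$ satisfies \condintidengood\ yet the additive welfarist rule with $f$ chooses an allocation $\mathcal{A} = (A_1, \ldots, A_n)$ on some positive-admitting integer-valued identical-good instance that is not EF1. Let $a_i \in \Zplus$ denote agent $i$'s per-good utility; crucially, $a_i$ is a \emph{positive} integer because ``identical-good'' requires $a_i \in \Rplus$ while ``integer-valued'' forces $a_i \in \Z$. Writing $m_i = |A_i|$, any EF1 violation by agent $i$ against agent $j$ translates (after cancelling $a_i > 0$) into $m_j \geq m_i + 2$. Consider the allocation $\mathcal{B}$ obtained by transferring one good from $A_j$ to $A_i$: its welfare differs from that of $\mathcal{A}$ by exactly $\Delta_{f, m_i}(a_i) - \Delta_{f, m_j - 1}(a_j)$. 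Applying \condintidengood\ once with $(k, b, a) = (m_i, a_i, a_j)$ gives $\Delta_{f, m_i}(a_i) > \Delta_{f, m_i + 1}(a_j)$, and iterating \condintidengood\ with $a = b = a_j$ along $k = m_i + 1, \ldots, m_j - 2$ yields $\Delta_{f, m_i + 1}(a_j) > \cdots > \Delta_{f, m_j - 1}(a_j)$. Chaining these strict inequalities shows that $\mathcal{B}$ has strictly greater welfare than $\mathcal{A}$, contradicting the optimality of $\mathcal{A}$.

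The main technical care is ensuring that every argument of $\Delta_{f, \cdot}$ appearing in the chain is a positive integer, so that \condintidengood\ is directly applicable; this is exactly what the identical-good assumption (ruling out $a_i = 0$) secures, and it explains why \condintidengood\ is a strictly weaker requirement than its real-valued analogue \condrealgeneral. A minor boundary subtlety is the case $m_i = 0$ when $f(0) = -\infty$: positive-admittingness forces the rule to avoid allocations where some agent has zero utility (the total welfare would be $-\infty$), so $m_i \geq 1$ whenever this degenerate case could arise. When $f(0)$ is finite, $m_i = 0$ is permissible, and the chain still applies verbatim since $\Delta_{f, 0}(a_i) = f(a_i) - f(0)$ is then finite.
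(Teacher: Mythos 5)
Your proposal is correct and takes essentially the same approach as the paper: the necessity direction reuses the construction of \Cref{lem:real_idengood_cond1} with integer utilities, and your sufficiency argument is the paper's one-good-transfer argument, with your inline chaining of {\condintidengood} being precisely the paper's intermediate step of deriving {\condintidengood}a (\Cref{lem:integer_idengood_cond3_3a,lem:integer_idengood_cond3a_ef1}). Your explicit handling of the $f(0) = -\infty$ boundary case is a welcome (if minor) extra precaution that the paper leaves implicit.
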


Due to the similarity between Conditions {\condnorealgeneral} and {\condnointidengood}, the proof for (a) $\Rightarrow$ (b) in \Cref{thm:integer_idengood} follows similarly to that in the real-valued case (\Cref{lem:real_idengood_cond1}).
To prove (b) $\Rightarrow$ (a), we show that {\condintidengood} implies {\condintidengood}a, which in turn implies (a).
Note that the result by \citet{CaragiannisKuMo19} does not apply in this case since there exist functions $f$ satisfying {\condintidengood} that are not logarithmic, so the additive welfarist rule with~$f$ is not necessarily equivalent to the MNW rule.

\begin{lemma}
\label{lem:integer_idengood_cond3_3a}
Let $f: \Rpluszero \to \R \cup \{-\infty\}$ be a strictly increasing function that satisfies {\condintidengood}.
Then, $f$ satisfies {\condintidengood}a.
\end{lemma}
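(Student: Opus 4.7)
The plan is to prove the implication by induction on the gap $d = k - \ell \geq 1$, using Condition 3 as the single-step version and chaining to handle larger gaps. The base case $d = 1$ is literally Condition 3, so there is nothing to do. For the inductive step, assume the conclusion holds for all pairs of indices whose difference is at most $d - 1$, and take $k, \ell \in \Zpluszero$ with $k - \ell = d \geq 2$ and $a, b \in \Zplus$.

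The key idea is to insert the auxiliary index $k - 1$, which satisfies both $\ell < k - 1$ (so that the inductive hypothesis applies at this pair) and $(k-1) + 1 = k$ (so that Condition 3 applies at this pair). Pick the fixed argument $c = 1 \in \Zplus$. First, Condition 3 applied with index $k - 1$ and arguments $b' = 1, a' = a$ gives
\begin{equation*}
\Delta_{f, k-1}(1) \;>\; \Delta_{f, k}(a).
\end{equation*}
Second, the inductive hypothesis applied with indices $\ell < k - 1$ (gap $d - 1$) and arguments $b, 1$ gives
\begin{equation*}
\Delta_{f, \ell}(b) \;>\; \Delta_{f, k-1}(1).
\end{equation*}
Chaining these two inequalities yields $\Delta_{f, \ell}(b) > \Delta_{f, k}(a)$, as required.

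The only subtlety is the boundary case $\ell = 0$ when $f(0) = -\infty$: here $\Delta_{f, 0}(b) = \infty$, but by the remark following \Cref{def:conditions} this is the only way $\Delta_{f, \cdot}(\cdot)$ can be infinite, so $\Delta_{f, k}(a)$ is finite for $k \geq 1$ and the strict inequality $\infty > \Delta_{f, k}(a)$ holds trivially. This edge case is the main, and only, obstacle, and it is handled cleanly by the convention fixed in the preliminaries. Other than that, the proof is a short chaining argument and should occupy only a few lines.
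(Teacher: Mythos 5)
Your proof is correct and takes essentially the same approach as the paper: the paper also just chains \condintidengood{} repeatedly, via $\Delta_{f,\ell}(b) > \Delta_{f,\ell+1}(a) > \cdots > \Delta_{f,k}(a)$, and your induction on the gap $k-\ell$ (with intermediate argument $1$ instead of $a$) is merely a formalization of that same chaining. The edge case $\Delta_{f,0}(b)=\infty$ you flag is harmless in both versions, since the infinite quantity only ever appears on the left of a strict inequality.
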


\begin{proof}
Let $k, \ell \in \Zpluszero$ and $a, b \in \Zplus$ be given such that $\ell < k$.
By applying {\condintidengood} repeatedly, we have $\Delta_{f, \ell}(b) > \Delta_{f, \ell+1}(a) > \cdots > \Delta_{f, k}(a)$, so $f$ indeed satisfies {\condintidengood}a.
\end{proof}

\begin{lemma}
\label{lem:integer_idengood_cond3a_ef1}
Let $n \geq 2$ be given, and let $f: \Rpluszero \to \R \cup \{-\infty\}$ be a strictly increasing function that satisfies {\condintidengood}a.
Then, the statement in \Cref{thm:integer_idengood}(a) holds.
\end{lemma}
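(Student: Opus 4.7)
The plan is a standard local-improvement argument. Suppose for contradiction that $\mathcal{A} = (A_1, \ldots, A_n)$ is chosen by the additive welfarist rule with $f$ but fails to be EF1. Writing $k_r = |A_r|$, so that $u_r(A_r) = k_r a_r$, one observes that in an identical-good instance the EF1 condition collapses to $k_j \leq k_i + 1$ for every pair $i, j$ (this uses $a_i > 0$ to cancel). Moreover, since the instance is integer-valued and positive-admitting, every per-good utility $a_r$ lies in $\Zplus$: positivity rules out $a_r = 0$, as such an agent could never receive positive utility. The failure of EF1 therefore supplies agents $i, j$ with $k_j \geq k_i + 2$.

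Next, I would form $\mathcal{B}$ from $\mathcal{A}$ by moving an arbitrary single good from $A_j$ to $A_i$, keeping every other bundle unchanged. The welfare difference telescopes to
\begin{align*}
\sum_{r \in N} f(u_r(B_r)) - \sum_{r \in N} f(u_r(A_r)) = \Delta_{f, k_i}(a_i) - \Delta_{f, k_j - 1}(a_j).
\end{align*}
Since $k_i < k_j - 1$ with both indices in $\Zpluszero$ and $a_i, a_j \in \Zplus$, invoking {\condintidengood}a with $\ell = k_i$ and $k = k_j - 1$ yields $\Delta_{f, k_i}(a_i) > \Delta_{f, k_j - 1}(a_j)$. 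Hence the displayed quantity is strictly positive, so $\mathcal{B}$ strictly outperforms $\mathcal{A}$, contradicting the choice of $\mathcal{A}$.

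The only bookkeeping point is that these welfare sums should be genuine real numbers (not an $\infty - \infty$). Positive-admittance supplies an allocation of finite welfare, so the chosen $\mathcal{A}$ must likewise have finite welfare; if $f(0) = -\infty$ this already forces $k_r \geq 1$ for every $r$ in $\mathcal{A}$, and in any case $k_j \geq 2$ implies that the two modified bundles in $\mathcal{B}$ remain non-empty, so $\mathcal{B}$ also has finite welfare. I do not anticipate a substantive obstacle: the crux is the trivial reduction of EF1 to a size condition for identical goods, after which {\condintidengood}a applied to a single-good swap finishes the argument essentially mechanically.
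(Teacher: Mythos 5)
Your proof is correct and follows essentially the same route as the paper's: assume EF1 fails, reduce it to the bundle-size gap $|A_j| \geq |A_i| + 2$, transfer one good from $j$ to $i$, and apply Condition~3a with $\ell = |A_i|$ and $k = |A_j| - 1$ to show the welfare strictly increases, contradicting optimality. Your explicit handling of the $-\infty$ bookkeeping is a minor addition the paper leaves implicit, but the substance is identical.
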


\begin{proof}
Let a positive-admitting integer-valued identical-good instance be given, and let $\mathcal{A} = (A_1, \ldots, A_n)$ be an allocation chosen by the additive welfarist rule with $f$.
Assume, for the sake of contradiction, that $\mathcal{A}$ is not EF1.
Then, there exist two agents $i, j \in N$ such that agent $i$ envies agent $j$ by more than one good.
Hence, $0 \leq |A_i| \leq |A_j| - 2$.
Moreover, since the instance is integer-valued and identical-good, we have for each $g\in M$ that $u_i(g) = b$ and $u_j(g) = a$ for some $a, b \in \Zplus$.
Let $k = |A_j| - 1$ and $\ell = |A_i|$; note that $k, \ell \in \Zpluszero$ with $\ell < k$.
We have $\sum_{r \in N} f(u_r(A_r)) = f(\ell b) + f((k+1)a) + \sum_{r \in N \setminus \{i, j\}} f(u_r(A_r))$.

Let $\mathcal{B} = (B_1, \ldots, B_n)$ be the same allocation as $\mathcal{A}$ except that one good is transferred from agent~$j$'s bundle to agent $i$'s bundle.
Then, $\sum_{r \in N} f(u_r(B_r)) = f((\ell+1)b) + f(ka) + \sum_{r \in N \setminus \{i, j\}} f(u_r(A_r))$.
Since $\mathcal{A}$ is chosen by the additive welfarist rule with $f$, we have $\sum_{r \in N} f(u_r(B_r)) \leq \sum_{r \in N} f(u_r(A_r))$.
Rearranging the terms, we get 
\begin{align*}
f((\ell+1)b) - f(\ell b) \leq f((k+1)a) - f(ka),
\end{align*}
or equivalently, $\Delta_{f,\ell}(b) \leq \Delta_{f,k}(a)$, 
contradicting {\condintidengood}a.
Therefore, $\mathcal{A}$ is EF1.
\end{proof}

\begin{proof}[Proof of \Cref{thm:integer_idengood}]
The implication (a) $\Rightarrow$ (b) follows similarly as \Cref{lem:real_idengood_cond1}, while the implication (b) $\Rightarrow$ (a) follows from \Cref{lem:integer_idengood_cond3_3a,lem:integer_idengood_cond3a_ef1}.
\end{proof}

Since Conditions {\condnorealgeneral} and {\condnorealgeneral}a are equivalent (see \Cref{thm:real_idengood} and Lemmas~\ref{lem:real_idengood_cond1}, \ref{lem:real_idengood_constant}, and \ref{lem:real_idengood_log}), one may be tempted to think that there is a condition analogous to {\condrealgeneral}a that {\condintidengood} is equivalent to, perhaps ``$\Delta_{f, k}$ is a constant function (on domain $\Zplus$) for each $k \in \Zplus$''.
However, this condition is too strong, and is not implied by {\condintidengood}.\footnote{Indeed, the MHW rule---which corresponds to the function $h_0$---guarantees EF1 for integer-valued instances~\citep{MontanariScSu25}, but $\Delta_{h_0, 1}$ is not a constant function as $\Delta_{h_0, 1}(1) = h_0(2) - h_0(1) = 1/2 < 7/12 = h_0(4) - h_0(2) = \Delta_{h_0, 1}(2)$.}
Instead, we show that {\condintidengood} is equivalent to {\condintidengood}b.

\begin{restatable}{proposition}{propconditionthreeeqv}
\label{prop:condition_three_eqv}
Let $f: \Rpluszero \to \R \cup \{-\infty\}$ be a strictly increasing function.
Then, the following statements are equivalent:
\begin{enumerate}[label=(\alph*)]
    \item $f$ satisfies {\condintidengood}.
    \item $f$ satisfies {\condintidengood}a.
    \item $f$ satisfies {\condintidengood}b.
\end{enumerate}
\end{restatable}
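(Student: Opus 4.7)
My plan is to prove (a) $\Leftrightarrow$ (b) $\Leftrightarrow$ (c). The implication (a) $\Rightarrow$ (b) is precisely \Cref{lem:integer_idengood_cond3_3a}, and (b) $\Rightarrow$ (a) is immediate: specializing {\condintidengood}a with $\ell = k$ and relabeling the outer index to $k+1$ recovers {\condintidengood}. For (a) $\Rightarrow$ (c), setting $b = 1$ in {\condintidengood} yields the left inequality of {\condintidengood}b, and setting $a = 1$ with $k$ shifted to $k+1$ (and renaming) yields the right inequality.

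The substantive direction is (c) $\Rightarrow$ (a). From {\condintidengood}b I extract two useful forms: (P1)~$\Delta_{f,j}(1) > \Delta_{f,j+1}(c)$ for $j \geq 0, c \geq 1$ (the left inequality), and (P2)~$\Delta_{f,j}(c) > \Delta_{f,j+1}(1)$ for $j \geq 1, c \geq 1$ (the right inequality, reindexed). In particular, (P1) with $c = 1$ shows that $\Delta_{f,j}(1)$ is strictly decreasing in $j$.

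I target {\condintidengood}a directly, splitting cases on $k - \ell$. For $\ell \leq k - 2$, the chain
\[
\Delta_{f,\ell}(b) \;>\; \Delta_{f,\ell+1}(1) \;\geq\; \Delta_{f,k-1}(1) \;>\; \Delta_{f,k}(a)
\]
works, where the first step uses (P2) for $\ell \geq 1$ and is handled separately for $\ell = 0$ by $\Delta_{f,0}(b) \geq \Delta_{f,0}(1) > \Delta_{f,1}(1)$ (using positivity of each increment $I_j \coloneqq f(j) - f(j-1)$ together with (P1)); the middle step is monotonicity of $\Delta_{f,\cdot}(1)$; the last step is (P1). The boundary sub-case $\ell = k - 1$ with $\ell = 0$ similarly reduces to $\Delta_{f,0}(b) \geq \Delta_{f,0}(1) > \Delta_{f,1}(a)$ via (P1).

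The genuinely hard sub-case is $\ell = k - 1 \geq 1$, which is {\condintidengood} itself for $k \geq 1$. For this I plan to induct on $b$, with base $b = 1$ being (P1) directly. For the inductive step, I would expand
\[
\Delta_{f,k}(b+1) - \Delta_{f,k}(b) = \sum_{i=1}^{k+1} I_{(k+1)b+i} \;-\; \sum_{i=1}^{k} I_{kb+i}
\]
and apply {\condintidengood}b at suitably chosen indices to show this difference is non-negative, so that $\Delta_{f,k}(b+1) \geq \Delta_{f,k}(b) > \Delta_{f,k+1}(a)$ by the inductive hypothesis. The hardest part will be exploiting both sides of {\condintidengood}b simultaneously: its upper bound forces each $I_j$ to decay rapidly (so that singletons dominate block sums), while its lower bound prevents over-decay, and balancing these competing constraints is what must make the inductive step close.
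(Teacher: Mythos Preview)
Your reductions among (a), (b), and the forward direction to (c) are fine, and your case analysis for (c) $\Rightarrow$ (a) correctly isolates the hard sub-case $\ell = k-1 \geq 1$. The gap is in your inductive step there: you propose to show $\Delta_{f,k}(b+1) \geq \Delta_{f,k}(b)$ and then conclude by the inductive hypothesis. But this monotonicity claim is \emph{false} under {\condintidengood}b alone. Take $f = h_{-1}$, which satisfies {\condintidengood}b by \Cref{prop:integer_idengood_harmonic}. Using $h_{-1}(x) = \sum_{t=1}^{x-1} 1/t$ for $x \geq 1$, one computes
\[
\Delta_{h_{-1},1}(1) = h_{-1}(2) - h_{-1}(1) = 1, \qquad
\Delta_{h_{-1},1}(2) = h_{-1}(4) - h_{-1}(2) = \tfrac{11}{6} - 1 = \tfrac{5}{6},
\]
so $\Delta_{h_{-1},1}(2) < \Delta_{h_{-1},1}(1)$. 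Thus $\Delta_{f,k}$ can be strictly decreasing in its argument even though $f$ satisfies {\condintidengood}b, and your inductive step cannot close. (More generally, \Cref{prop:integer_twovalue_sufficient}(ii) explicitly allows $\Delta_{f,k}$ to be either non-increasing or non-decreasing, which already signals that {\condintidengood} does not force one direction.)

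The paper's route for (c) $\Rightarrow$ (a) avoids any monotonicity of $\Delta_{f,k}$: after disposing of the boundary cases $b=1$ and $a=1$, it sets $K_1=(k+1)a-1$, $K_2=(k+1)b+1$, $K_3=(k+2)a-1$, $K_4=kb+1$ and observes that $K_1K_2 - K_3K_4 = ab - a - b \geq 0$ for $a,b \geq 2$. This product inequality lets one compare $f(K_2K_3)$ with $f(K_1K_2)$ directly, and two telescoping sums together with both halves of {\condintidengood}b sandwich $\Delta_{f,k}(b)$ above $\Delta_{f,k+1}(a)$ in one shot. So the missing idea is not a sharper bound on each $I_j$ but rather this multiplicative comparison, which sidesteps the need to control how $\Delta_{f,k}(\cdot)$ varies.
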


\begin{proof}
(a) $\Rightarrow$ (b): This follows from \Cref{lem:integer_idengood_cond3_3a}.

(b) $\Rightarrow$ (c): This follows directly by definition.

(c) $\Rightarrow$ (a): Let $f$ be a function that satisfies {\condintidengood}b.
Let $k \in \Zpluszero$ and $a, b \in \Zplus$ be given as in {\condintidengood}.
If $b = 1$, then the left inequality of {\condintidengood}b implies that $f$ satisfies {\condintidengood}.
If $b > 1$ and $a = 1$, then for $k = 0$ we have $\Delta_{f, k}(b) = f(b) - f(0) > f(2) - f(1) = \Delta_{f, k+1}(a)$ where the inequality holds due to $f$ being strictly increasing, and for $k > 0$ the right inequality of {\condintidengood}b implies that $\Delta_{f, k}(b) > \Delta_{f, k+1}(a)$; hence $f$ satisfies {\condintidengood}.
Therefore, it suffices to prove that $f$ satisfies {\condintidengood} for $a, b \geq 2$.

Let $K_1 = (k+1)a-1$, $K_2 = (k+1)b+1$, $K_3 = (k+2)a-1$, and $K_4 = kb+1$.
By algebraic manipulation, we have $K_1 K_2 - K_3 K_4 = ab - a - b$, which is non-negative for $a, b \geq 2$.
This means that $K_1 K_2 \geq K_3 K_4$.
Hence, we have
\begin{align*}
    \Delta_{f, k}(b) &= \sum_{t=K_4-1}^{K_2-2} \Delta_{f, t}(1) \tag{by telescoping sum} \\
    &> \sum_{t=K_4-1}^{K_2-2} \Delta_{f, t+1} (K_3) \tag{since \protect{$f$} satisfies {\condintidengood}b} \\
    &= f(K_2 K_3) - f(K_4 K_3) \tag{by telescoping sum} \\
    &\geq f(K_3 K_2) - f(K_1 K_2) \tag{since \protect{$f$} is strictly increasing} \\
    &= \sum_{t=K_1}^{K_3-1} \Delta_{f, t} (K_2) \tag{by telescoping sum} \\
    &> \sum_{t=K_1}^{K_3-1} \Delta_{f, t+1} (1) \tag{since \protect{$f$} satisfies {\condintidengood}b} \\
    &= \Delta_{f, k+1}(a), \tag{by telescoping sum}
\end{align*}
which shows that $f$ satisfies {\condintidengood}.
\end{proof}

We next give examples of additive welfarist rules that guarantee EF1 for all integer-valued identical-good instances.
We study two families of functions: modified logarithmic functions $\lambda_c$ and modified harmonic functions $h_c$.
By leveraging \Cref{prop:condition_three_eqv} and \Cref{thm:integer_idengood}, we can determine which of these additive welfarist rules ensure EF1 for integer-valued identical-good instances.

\begin{restatable}{proposition}{propintegeridengoodlog}
\label{prop:integer_idengood_log}
The function $\lambda_c$ satisfies {\condintidengood}b if and only if $0 \leq c \leq 1$.
\end{restatable}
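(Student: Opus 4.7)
The plan is to translate Condition~{\condnointidengood}b into elementary inequalities by computing $\Delta_{\lambda_c,k}$ explicitly. For $c \in \Rpluszero$, $k \in \Zpluszero$, and $x \in \Rplus$, we have
\begin{align*}
\Delta_{\lambda_c, k}(x) \;=\; \log\bigl((k+1)x + c\bigr) - \log(kx + c) \;=\; \log\frac{(k+1)x + c}{kx + c},
\end{align*}
with the convention that $\Delta_{\lambda_0, 0}(x) = \infty$ for all $x \in \Rplus$. So Condition~{\condnointidengood}b asks, for every $k \in \Zpluszero$ and $a \in \Zplus$,
\begin{align*}
\frac{k+1+c}{k+c} \;>\; \frac{(k+2)a + c}{(k+1)a + c} \;>\; \frac{k+3+c}{k+2+c}.
\end{align*}

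Next, I would cross-multiply each of these two inequalities. For the left inequality, the boundary case $k = c = 0$ is immediate since the leftmost term is $\infty$. Otherwise all denominators are positive, so after expanding and cancelling we would obtain the equivalent condition $a + c(1-a) > 0$. Similarly, the right inequality simplifies to $a + c(a-1) > 0$. Both simplifications are linear in $c$ and can be verified by straightforward algebra.

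Finally, I would analyze when these two linear inequalities hold for every $a \in \Zplus$. The right inequality $a + c(a-1) > 0$ holds for every $a \geq 1$ whenever $c \geq 0$, so it imposes no constraint here. The left inequality $a + c(1-a) > 0$ holds trivially for $a = 1$, and for $a \geq 2$ is equivalent to $c < a/(a-1)$. Taking the infimum over $a \geq 2$ yields the threshold $c \leq 1$: the key subtlety is that $c = 1$ is still admissible, since in that case the inequality becomes $a - (a-1) = 1 > 0$ for every $a \geq 2$. Conversely, if $c > 1$, then any integer $a > c/(c-1)$ violates the left inequality. Combining both conditions gives $0 \leq c \leq 1$, as claimed.

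There is no substantial obstacle; the main thing to handle carefully is the boundary case $c = 0$, $k = 0$, where $\Delta_{\lambda_0, 0}$ is infinite and the cross-multiplication step must be justified separately, and the tightness at $c = 1$, which requires checking that strict inequality still holds at the endpoint rather than merely taking a limit.
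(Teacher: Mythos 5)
Your proposal is correct and follows essentially the same route as the paper's proof: both compute $\Delta_{\lambda_c,k}$ explicitly, cross-multiply, and reduce each inequality of Condition~{\condnointidengood}b to the $k$-independent linear conditions $a(1-c)+c>0$ and $a+c(a-1)>0$, with the same handling of the $k=c=0$ boundary and the same choice $a \geq c/(c-1)$ for the converse.
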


\begin{proof}
($\Leftarrow$) We show that $\lambda_c$ satisfies {\condintidengood}b for $0 \leq c \leq 1$.
Let $k \in \Zpluszero$ and $a \in \Zplus$ be given as in {\condintidengood}b.
We first prove the left inequality.
To this end, we use the following chain of equivalences:
\begin{alignat*}{2}
    &&\Delta_{\lambda_c, k}(1) &> \Delta_{\lambda_c, k+1}(a) \\
    &\Leftrightarrow\quad &\lambda_c(k+1) - \lambda_c(k) &> \lambda_c((k+2)a) - \lambda_c((k+1)a) \\
    &\Leftrightarrow &\log(k+1+c) - \log(k+c) &> \log((k+2)a+c) - \log((k+1)a+c) \tag{by definition of $\lambda_c$} \\
    &\Leftrightarrow &\log(k+1+c) + \log((k+1)a+c) &> \log(k+c) + \log((k+2)a+c) \\
    &\Leftrightarrow &(k+1+c)((k+1)a+c) &> (k+c)((k+2)a+c) \\
    &\Leftrightarrow &(k+1+c)((k+1)a+c) - (k+c)((k+2)a+c) &> 0.
\end{alignat*}
To simplify the expression in the last line, we observe that 
\begin{align*}
    (k+1+c)((k+1)a+c) - (k+c)((k+2)a+c) &= ((k+1)a+c) - (k+c)a \\
    &= a+c-ac \\
    &= a(1-c) + c.
\end{align*}
We have $a(1-c) + c > 0$ since $1-c, c \geq 0$, at least one of $1-c$ and $c$ is non-zero, and $a > 0$.
This implies that $(k+1+c)((k+1)a+c) > (k+c)((k+2)a+c)$, and equivalently, $\Delta_{\lambda_c, k}(1) > \Delta_{\lambda_c, k+1}(a)$, proving the left inequality of {\condintidengood}b.

We prove the right inequality of {\condintidengood}b by a similar method.
In this case, we have the following chain of equivalences:
\begin{alignat*}{2}
    &&\Delta_{\lambda_c, k+1}(a) &> \Delta_{\lambda_c, k+2}(1) \\
    &\Leftrightarrow\quad &\lambda_c((k+2)a) - \lambda_c((k+1)a) &> \lambda_c(k+3) - \lambda_c(k+2) \\
    &\Leftrightarrow &\log((k+2)a+c) - \log((k+1)a+c) &> \log(k+3+c) - \log(k+2+c) \tag{by definition of $\lambda_c$} \\
    &\Leftrightarrow &\log((k+2)a+c) + \log(k+2+c) &> \log((k+1)a+c) + \log(k+3+c) \\
    &\Leftrightarrow &((k+2)a+c)(k+2+c) &> ((k+1)a+c)(k+3+c) \\
    &\Leftrightarrow &((k+2)a+c)(k+2+c) - ((k+1)a+c)(k+3+c) &> 0.
\end{alignat*}
To simplify the expression in the last line, we observe that 
\begin{align*}
    ((k+2)a+c)(k+2+c) - ((k+1)a+c)(k+3+c) &= a(k+2+c) - ((k+1)a+c) \\
    &= a+ac-c \\
    &= a + c(a-1).
\end{align*}
We have $a + c(a-1) > 0$ since $a > 0$ and $c, a-1 \geq 0$.
This implies that $((k+2)a+c)(k+2+c) > ((k+1)a+c)(k+3+c)$, and equivalently, $\Delta_{\lambda_c, k+1}(a) > \Delta_{\lambda_c, k+2}(1)$, proving the right inequality of {\condintidengood}b.

($\Rightarrow$) We now show that $\lambda_c$ does not satisfy {\condintidengood}b for $c > 1$.
Take $a \in \Zplus$ such that $a \geq c/(c-1)$; note that such $a$ always exists when $c > 1$.
We claim that $\Delta_{\lambda_c, 0}(1) \leq \Delta_{\lambda_c, 1}(a)$, thus violating the left inequality of {\condintidengood}b.
We note the following chain of equivalences:
\begin{alignat*}{2}
    &&\Delta_{\lambda_c, 0}(1) &\leq \Delta_{\lambda_c, 1}(a) \\
    &\Leftrightarrow\quad &\lambda_c(1) - \lambda_c(0) &\leq \lambda_c(2a) - \lambda_c(a) \\
    &\Leftrightarrow &\log(1+c) - \log c &\leq \log(2a+c) - \log(a+c) \tag{by definition of $\lambda_c$} \\
    &\Leftrightarrow &\log(1+c) + \log(a+c) &\leq \log c + \log(2a+c) \\
    &\Leftrightarrow &(1+c)(a+c) &\leq c(2a+c).
\end{alignat*}
We can verify that the last line is true, since 
\begin{align*}
    (1+c)(a+c) - c(2a+c) &= a+c-ac \\
    &= c - a(c-1) \\
    &\leq c - c \tag{by the choice of $a \geq c/(c-1)$} \\
    &= 0.
\end{align*}
Hence, we have $\Delta_{\lambda_c, 0}(1) \leq \Delta_{\lambda_c, 1}(a)$, and so $\lambda_c$ does not satisfy {\condintidengood}b.
\end{proof}

\begin{restatable}{proposition}{propintegeridengoodharmonic}
\label{prop:integer_idengood_harmonic}
The function $h_c$ satisfies {\condintidengood}b if and only if $-1 \leq c \leq \frac{1}{\log 2} - 1$ ($\approx 0.443$).
\end{restatable}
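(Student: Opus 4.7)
The plan is to translate Condition~\condnointidengood{}b into explicit summation inequalities and then pinpoint where the sharp threshold $c = 1/\log 2 - 1$ appears. For $c > -1$, I first record $\Delta_{h_c, k}(1) = 1/(k+1+c)$ and $\Delta_{h_c, k+1}(a) = \sum_{t=(k+1)a+1}^{(k+2)a} 1/(t+c)$, so that {\condintidengood}b amounts to
\begin{align*}
    \frac{1}{k+1+c} > \sum_{t=(k+1)a+1}^{(k+2)a} \frac{1}{t+c} > \frac{1}{k+3+c}
\end{align*}
for all $k \in \Zpluszero$ and $a \in \Zplus$. The boundary case $c = -1$ can be recovered from the same formulas by an index shift, noting that the left inequality is trivially satisfied at $k = 0$ since $\Delta_{h_{-1}, 0}(1) = \infty$.

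For the necessity direction, I exploit the fact that the middle sum is a Riemann sum for $\int_{k+1}^{k+2} du/u = \log((k+2)/(k+1))$, so it tends to this limit as $a \to \infty$. Applying this with $k = 0$, the left inequality forces $1/(1+c) \geq \log 2$, i.e., $c \leq 1/\log 2 - 1$; combined with the standing requirement $c \geq -1$ for $h_c$ to be well-defined, this yields the necessary bound.

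For sufficiency, the right inequality follows from the lower bound $\sum \geq a/((k+2)a+c)$ together with the algebraic check $a(1+c) > c$, which holds for all $c \geq -1$ and $a \in \Zplus$. The left inequality I handle in two cases. When $c \in [-1, 0]$, the crude upper bound $\sum \leq a/((k+1)a+1+c)$ reduces the claim to $c(a-1) \leq 1$, which is automatic for non-positive $c$. When $c \in (0, 1/\log 2 - 1]$, I use the strict integral comparison $\sum < \int_{(k+1)a}^{(k+2)a} dt/(t+c) = \log\bigl(((k+2)a+c)/((k+1)a+c)\bigr)$; for $c \geq 0$, a direct check shows that the resulting ratio is at most $(k+2)/(k+1)$, so $\sum < \log((k+2)/(k+1))$. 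It then suffices to verify $1/(k+1+c) \geq \log((k+2)/(k+1))$ at the boundary $c = 1/\log 2 - 1$, since the LHS decreases in $c$.

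The main obstacle is this last inequality, which rearranges to $\phi(k) := (k\log 2 + 1)\log((k+2)/(k+1)) \leq \log 2$ for all $k \in \Zpluszero$, with equality at $k = 0$ and asymptotically as $k \to \infty$. I would settle this by elementary calculus: computing $\phi'(0) < 0$ so that $\phi$ dips below $\log 2$ just past the origin, showing via an asymptotic expansion that $\phi(k) \to \log 2$ strictly from below, and confirming that $\phi'$ has a single zero on $(0, \infty)$ so that $\phi$ attains its unique minimum in the interior and therefore stays at most $\log 2$ throughout.
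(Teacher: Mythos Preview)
Your proposal is correct and structurally parallels the paper's own proof: both split sufficiency into the ranges $c\in[-1,0)$ (crude termwise bounds) and $c\in[0,1/\log 2-1]$ (reduction to a logarithmic inequality), and both derive necessity from the $a\to\infty$ limit at $k=0$.

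Two differences are worth flagging. First, to bound $\Delta_{h_c,k+1}(a)$ above when $c\ge 0$, you use the one-line integral comparison $\sum_{t}\frac{1}{t+c}<\int_{(k+1)a}^{(k+2)a}\frac{dt}{t+c}\le\log\frac{k+2}{k+1}$. The paper instead proves a separate lemma that $a\mapsto\Delta_{h_c,k+1}(a)$ is strictly increasing on $\Zplus$ for $c\ge 0$, then combines this with $\lim_{a\to\infty}\Delta_{h_c,k+1}(a)=\log\frac{k+2}{k+1}$. Your shortcut is cleaner here, though the monotonicity lemma is reused elsewhere in the paper (for the two-value case). Second, your ``main obstacle'' $\phi(k)=(k\log 2+1)\log\frac{k+2}{k+1}\le\log 2$ is exactly the paper's key lemma in disguise: rearranging, it says $f(k+1)\ge f(1)$ for $f(x)=\frac{1}{\log((x+1)/x)}-x$, and the paper establishes this by proving $f$ is strictly increasing on $\Zplus$. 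The paper's argument is concrete---a Taylor bound shows $f'(z)>0$ for all real $z\ge 7$, and one checks numerically that $f(1)<f(2)<\cdots<f(7)$. Your plan to confirm that $\phi'$ has a single zero is equivalent but less direct to make rigorous; you may find the paper's ``derivative positive beyond a threshold plus finite check'' strategy easier to execute when filling in the details.
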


The proof of \Cref{prop:integer_idengood_harmonic} involves complex calculations and is deferred to Appendix~\ref{ap:integer_idengood_harmonic}.

\Cref{prop:integer_idengood_log,prop:integer_idengood_harmonic} show that beyond the MNW rule and the MHW rule, many other additive welfarist rules guarantee EF1 for integer-valued identical-good instances.
On the other hand, we observe that all such functions must be similar to the logarithmic function in that the marginal increase $f(x+1) - f(x)$ grows as $\Theta(1/x)$.
As a result of this, all $p$-mean rules with $p \neq 0$ do not guarantee EF1 for integer-valued identical-good instances.
We discuss these results formally in Appendix~\ref{ap:integer_idengood}.

\subsection{Binary Instances}
\label{subsec:integer_binary}

Next, we consider binary instances, where each good is worth either $0$ or $1$ to each agent.
We show that the functions defining the additive welfarist rules that guarantee EF1 for all binary instances are those that satisfy \emph{{\condintbinary}}, which is equivalent to the function being strictly concave in the domain of non-negative integers.

\begin{restatable}{theorem}{thmintegerbinary}
\label{thm:integer_binary}
Let $n \geq 2$ be given, and let $f: \Rpluszero \to \R \cup \{-\infty\}$ be a strictly increasing function.
Then, the following statements are equivalent:
\begin{enumerate}[label=(\alph*)]
    \item For every positive-admitting binary instance with $n$ agents, every allocation chosen by the additive welfarist rule with $f$ is EF1.
    \item $f$ satisfies {\condintbinary}.
\end{enumerate}
\end{restatable}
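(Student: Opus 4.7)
For the implication (a) $\Rightarrow$ (b), the plan is to mirror the instance construction used in \Cref{lem:real_idengood_cond1}, tailored to the binary regime. Given any $k \in \Zpluszero$, I would consider the positive-admitting binary instance with $n$ agents and $(k+1)n$ goods in which every good has utility $1$ for every agent. Since $u_i(A_i) = |A_i|$ in this instance, any EF1 allocation must have bundle sizes within $1$ of each other, and combined with the constraint $\sum_i |A_i| = (k+1)n$ this forces every bundle to have size exactly $k+1$; hence every EF1 allocation has welfare $n \cdot f(k+1)$. I would then compare this to the non-EF1 allocation $\mathcal{B}$ in which agent $1$ receives $k+2$ goods, agent $2$ receives $k$ goods, and every other agent receives $k+1$ goods, whose welfare is $f(k+2) + f(k) + (n-2) f(k+1)$. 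For the additive welfarist rule never to select $\mathcal{B}$, its welfare must be strictly less than the EF1 welfare, and rearranging yields $\Delta_{f,k}(1) > \Delta_{f,k+1}(1)$, which is exactly {\condintbinary}.

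For the implication (b) $\Rightarrow$ (a), the plan is a local-improvement argument. Let $\mathcal{A}$ be an allocation chosen by the rule and suppose for contradiction that some agent $i$ envies some agent $j$ by more than one good. Binary valuations make EF1 failure for $(i,j)$ equivalent to the existence of a good in $A_j$ valued at $1$ by $i$ together with $u_i(A_j) \geq u_i(A_i) + 2$. I would split into two cases according to whether $A_j$ contains a good $g^*$ with $u_i(g^*) = 1$ and $u_j(g^*) = 0$. In the affirmative case, transferring $g^*$ from $j$ to $i$ increases $u_i(A_i)$ by $1$ and leaves $u_j(A_j)$ unchanged, so welfare strictly rises because $f$ is strictly increasing, contradicting the optimality of $\mathcal{A}$. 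In the negative case, every good in $A_j$ that $i$ values at $1$ is also worth $1$ to $j$, whence $u_j(A_j) \geq u_i(A_j) \geq u_i(A_i) + 2$; transferring any such $g^*$ then changes welfare by $\Delta_{f, u_i(A_i)}(1) - \Delta_{f, u_j(A_j)-1}(1)$, and since $u_i(A_i) < u_j(A_j) - 1$, iterating {\condintbinary} yields $\Delta_{f, u_i(A_i)}(1) > \Delta_{f, u_j(A_j)-1}(1)$, so welfare strictly rises, again contradicting optimality.

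The main obstacle I anticipate is the negative case of the backward direction: one cannot always find a good in $A_j$ that is worth $1$ to $i$ but $0$ to $j$, so a local transfer generically decreases $u_j(A_j)$ by $1$ and the welfare comparison is no longer automatic. The key observation that unlocks this case is that the binary structure then forces $u_j(A_j) \geq u_i(A_j)$, which combined with the EF1-violation inequality $u_i(A_j) \geq u_i(A_i) + 2$ gives the strict ordering $u_j(A_j) - 1 > u_i(A_i)$ needed to activate the iterated strict-concavity condition {\condintbinary}. A small additional sanity check is that the assumption of a positive-admitting instance, together with $\mathcal{A}$ being welfare-maximal, rules out $-\infty$ welfare and hence any problematic appearance of $f(0) = -\infty$ in the comparison.
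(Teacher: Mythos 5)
Your proposal is correct and follows essentially the same route as the paper's proof: a gadget instance forcing $\Delta_{f,k}(1) > \Delta_{f,k+1}(1)$ for necessity, and a single-good transfer argument (splitting on whether $j$'s bundle contains a good worth $1$ to $i$ but $0$ to $j$, which is the contrapositive of the paper's first step) for sufficiency. The only difference is cosmetic: for (a) $\Rightarrow$ (b) you use the all-ones instance with $(k+1)n$ goods, i.e., the construction of \Cref{lem:real_idengood_cond1} specialized to $a=b=1$, whereas the paper uses an instance in which only agents $1$ and $2$ compete for a pool of $2k+2$ goods; both are valid binary, positive-admitting instances and yield the same inequality, and your explicit remark that positive-admittingness plus welfare-maximality excludes $f(0)=-\infty$ from the comparison is a sound (if not strictly necessary) precaution.
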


\begin{proof}
(a) $\Rightarrow$ (b): Let $k \in \Zpluszero$.
Consider an instance with $n$ agents and $2k+n$ goods, and let $M' = \{g_1, \ldots, g_{2k+2}\}$.
The utilities of the goods are as follows.
\begin{itemize}
    \item For $i \in \{1, 2\}$, $u_i(g) = 1$ for $g \in M'$, and $u_i(g) = 0$ otherwise.
    \item For $i \in \{3, \ldots, n\}$, $u_i(g_{2k+i}) = 1$, and $u_i(g) = 0$ for all other $g \in M$.
\end{itemize}
This instance is a binary instance; moreover, it is positive-admitting since any allocation where each agent $i \in N$ receives $g_{2k+i}$ gives positive utility to every agent.

Let $\mathcal{A} = (A_1, \ldots, A_n)$ be an allocation chosen by the additive welfarist rule with $f$.
Each good must be allocated to an agent who has utility $1$ for the good; otherwise, transferring such a good to an agent who has utility $1$ for the good increases the value of $\sum_{i \in N} f(u_i(A_i))$, contradicting the assumption that $\mathcal{A}$ is chosen by the additive welfarist rule with $f$.
Accordingly, the goods in $M'$ are allocated to agent $1$ and $2$, and for each $i \in \{3, \ldots, n\}$, $g_{2k+i}$ is allocated to agent $i$.

Recall that $\mathcal{A}$ is EF1 since it is chosen by the additive welfarist rule with $f$.
We claim that agent~$1$ and $2$ each receives exactly $k+1$ goods from $M'$.
If not, then one of them receives at most $k$ goods from $M'$, and will envy the other agent (who receives at least $k+2$ goods from $M'$) by more than one good, making the allocation not EF1 and contradicting our assumption.
Then, we have $\sum_{i \in N} f(u_i(A_i)) = 2f(k+1) + (n-2)f(1)$.

Let $\mathcal{B} = (B_1, \ldots, B_n)$ be the allocation such that agent $1$ receives $k$ goods from $M'$, agent~$2$ receives $k+2$ goods from $M'$, and for each $i \in \{3, \ldots, n\}$, agent $i$ receives $g_{2k+i}$.
We have $\sum_{i \in N} f(u_i(B_i)) = f(k) + f(k+2) + (n-2)f(1)$.
Note that $\mathcal{B}$ is not EF1, and cannot be chosen by the additive welfarist rule with $f$.
Therefore, we have $\sum_{i \in N} f(u_i(A_i)) > \sum_{i \in N} f(u_i(B_i))$.
Rearranging the terms, we get 
\begin{align*}
f(k+1) - f(k) > f(k+2) - f(k+1), 
\end{align*}
or equivalently, $\Delta_{f, k}(1) > \Delta_{f, k+1}(1)$.
Since $k \in \Zpluszero$ was arbitrarily chosen, $\Delta_{f, k}(1) > \Delta_{f, k+1}(1)$ holds for all $k \in \Zpluszero$.
Therefore, $f$ satisfies {\condintbinary}.

(b) $\Rightarrow$ (a): Let a positive-admitting binary instance with $n$ agents be given, and let $\mathcal{A} = (A_1, \ldots, A_n)$ be an allocation chosen by the additive welfarist rule with $f$.
Assume, for the sake of contradiction, that $\mathcal{A}$ is not EF1.
Then, there exist $i, j \in N$ such that agent~$i$ envies agent $j$ by more than one good, i.e., $u_i(A_i) < u_i(A_j \setminus \{g\})$ for all $g \in A_j$.
Note that we must have $u_i(g) \leq u_j(g)$ for all $g \in A_j$.
Indeed, otherwise we have $u_i(g) = 1$ and $u_j(g) = 0$, and transferring $g$ from agent $j$'s bundle to agent $i$'s bundle increases $f(u_i(A_i))$ and does not decrease $f(u_j(A_j))$, thereby increasing $\sum_{k \in N} f(u_k(A_k))$ and contradicting the assumption that $\mathcal{A}$ is chosen by the additive welfarist rule with $f$.
Moreover, since $u_i(A_j) > 0$, there exists $g' \in A_j$ such that $u_i(g') = 1$ (and hence $u_j(g') = 1$).

Let $\mathcal{B} = (B_1, \ldots, B_n)$ be the same allocation as $\mathcal{A}$ except that $g'$ is transferred from agent $j$'s bundle to agent $i$'s bundle, i.e., $B_i = A_i \cup \{g'\}$, $B_j = A_j \setminus \{g'\}$, and $B_k = A_k$ for all $k \in N \setminus \{i, j\}$.
Note that $u_i(A_i) < u_i(A_j \setminus \{g'\}) = u_i(B_j)$, and that $u_i(B_j) \leq u_j(B_j)$ since $u_i(g) \leq u_j(g)$ for all $g \in B_j$.
Therefore, $u_i(A_i) < u_j(B_j)$.
By {\condintbinary}, it holds that $\Delta_{f, u_i(A_i)}(1) > \cdots > \Delta_{f, u_j(B_j)}(1)$.
On the other hand, we have $\sum_{k \in N} f(u_k(B_k)) = f(u_i(B_i)) + f(u_j(B_j)) + \sum_{k \in N \setminus \{i, j\}} f(u_k(A_k))$.
Since $\mathcal{A}$ is chosen by the additive welfarist rule with $f$, we have $\sum_{k \in N} f(u_k(B_k)) \leq \sum_{k \in N} f(u_k(A_k))$.
Rearranging the terms, we get
\begin{align*}
    f(u_i(B_i)) - f(u_i(A_i)) \leq f(u_j(A_j)) - f(u_j(B_j)),
\end{align*}
or equivalently, $\Delta_{f, u_i(A_i)}(1) \leq \Delta_{f, u_j(B_j)}(1)$, a contradiction.
Therefore, $\mathcal{A}$ is EF1.
\end{proof}

{\condintbinary} is weaker than {\condintidengood}.
Indeed, if $f$ satisfies {\condintidengood}, then it satisfies {\condintbinary} by definition.
However, the converse is not true: the function $f(x) = \sqrt{x}$ satisfies {\condintbinary}, but it does not satisfy {\condintidengood} since $\Delta_{f, 0}(1) = 1$ is not greater than $\Delta_{f, 1}(6) \approx 1.01$.
This means that each additive welfarist rule that guarantees EF1 for every positive-admitting integer-valued identical-good instance also guarantees EF1 for every positive-admitting binary instance, but the converse does not hold.

\begin{proposition}
\label{prop:integer_idengood_implies_binary}
Let $f: \Rpluszero \to \R \cup \{-\infty\}$ be a strictly increasing function.
If $f$ satisfies {\condintidengood}, then it satisfies {\condintbinary}.
\end{proposition}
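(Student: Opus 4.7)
The plan is to observe that Condition~\condnointbinary{} is the specialization of Condition~\condnointidengood{} obtained by restricting the free parameters $a, b \in \Zplus$ to the single value $1$. Concretely, suppose $f$ satisfies Condition~\condnointidengood. Then for every $k \in \Zpluszero$ and every $a, b \in \Zplus$ we have $\Delta_{f, k}(b) > \Delta_{f, k+1}(a)$. Specializing to $a = b = 1$ yields $\Delta_{f, k}(1) > \Delta_{f, k+1}(1)$ for every $k \in \Zpluszero$, which is exactly Condition~\condnointbinary.

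Since the deduction is a direct substitution into the quantified inequality, there is essentially no obstacle: no auxiliary lemmas, no case analysis, and no use of structural properties of $f$ beyond what Condition~\condnointidengood{} already provides. The only thing worth noting is that the specialization $a = b = 1$ is legitimate because $1 \in \Zplus$, so the hypothesis of Condition~\condnointidengood{} does cover this case. The proof is therefore a single line and can be written as such.
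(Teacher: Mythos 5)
Your proposal is correct and matches the paper's argument exactly: the paper also notes that Condition~\condnointidengood{} implies Condition~\condnointbinary{} ``by definition,'' i.e., by specializing $a = b = 1$. Nothing more is needed.
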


We mentioned earlier that {\condintbinary} is equivalent to the function being strictly concave in the domain of non-negative integers. On the other hand, the function $\varphi_p$ defining the $p$-mean rule is strictly concave if and only if $p < 1$.
We prove that, indeed, $\varphi_p$ satisfies {\condintbinary} exactly when $p < 1$.

\begin{proposition}
\label{prop:integer_binary_pmean}
The function $\varphi_p$ satisfies {\condintbinary} if and only if $p < 1$.
\end{proposition}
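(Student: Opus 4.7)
The plan is to prove the equivalence by direct case analysis on the sign of $p$, exploiting the fact that Condition~\condnointbinary{} is exactly the strict midpoint-concavity inequality $f(k+1) > (f(k) + f(k+2))/2$ for every $k \in \Zpluszero$.

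For the ``if'' direction ($p < 1 \Rightarrow$ Condition~\condnointbinary), I would argue that $\varphi_p$ is strictly concave on $\Rplus$, then transfer the midpoint inequality to integer arguments. I would split into three subcases based on $\varphi_p''$:
\begin{itemize}
  \item $p = 0$: $\varphi_0(x) = \log x$ has $\varphi_0''(x) = -1/x^2 < 0$.
  \item $0 < p < 1$: $\varphi_p(x) = x^p$ has $\varphi_p''(x) = p(p-1) x^{p-2} < 0$ since $p(p-1) < 0$.
  \item $p < 0$: $\varphi_p(x) = -x^p$ has $\varphi_p''(x) = -p(p-1) x^{p-2} < 0$ since $p(p-1) > 0$.
\end{itemize}
In each case, strict concavity on $\Rplus$ yields $\varphi_p(k+1) > (\varphi_p(k) + \varphi_p(k+2))/2$ for every $k \geq 1$, which rearranges to $\Delta_{\varphi_p, k}(1) > \Delta_{\varphi_p, k+1}(1)$. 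The boundary case $k = 0$ needs brief separate treatment: if $p \in (0,1)$, then $\varphi_p(0) = 0$ and strict concavity on the interval $[0,2]$ still gives $\varphi_p(1) > (\varphi_p(0) + \varphi_p(2))/2$; if $p \leq 0$, then $\varphi_p(0) = -\infty$, so $\Delta_{\varphi_p, 0}(1) = \infty$ trivially dominates the finite value $\Delta_{\varphi_p, 1}(1)$.

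For the ``only if'' direction, I would exhibit a single $k$ at which Condition~\condnointbinary{} fails for each $p \geq 1$. At $p = 1$, the function $\varphi_1(x) = x$ is linear, giving $\Delta_{\varphi_1, k}(1) = 1$ for every $k$, which violates the strict inequality. At $p > 1$, take $k = 0$: then $\Delta_{\varphi_p, 0}(1) = 1^p - 0^p = 1$ while $\Delta_{\varphi_p, 1}(1) = 2^p - 1 > 1$, so the condition fails.

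There is no substantial obstacle here; the only piece that requires care is the $k = 0$ boundary when $\varphi_p(0) = -\infty$, which however makes the desired inequality immediate rather than harder. The entire argument reduces to the standard second-derivative test for strict concavity combined with two straightforward counterexamples.
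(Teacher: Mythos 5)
Your proposal is correct and takes essentially the same route as the paper: the necessity direction via the $k=0$ counterexample with $\Delta_{\varphi_p,0}(1)=1\leq 2^p-1=\Delta_{\varphi_p,1}(1)$, and the sufficiency direction via the second-derivative test for strict concavity followed by the midpoint inequality, with the same separate handling of the $k=0$, $p\leq 0$ boundary where $\varphi_p(0)=-\infty$. The only cosmetic differences are that you split $p\geq 1$ into two subcases and the concavity check into three sign cases, where the paper treats each uniformly.
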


\begin{proof}
($\Rightarrow$) Let $p \geq 1$.
Then, $\varphi_p(x) = x^p$ and so $\Delta_{\varphi_p, k}(1) = (k+1)^p - k^p$.
Note that $\Delta_{\varphi_p, 0}(1) = 1^p - 0^p = 1$.
Furthermore, $\Delta_{\varphi_p, 1}(1) = 2^p - 1^p \geq 2 - 1 = 1$ since $p \geq 1$.
Hence,  $\Delta_{\varphi_p, k}(1) \leq \Delta_{\varphi_p, k+1}(1)$ when $k = 0$, so $\varphi_p$ does not satisfy {\condintbinary}.

($\Leftarrow$) Let $p < 1$.
We first prove that, for any $x \in \Rpluszero$, the second derivative $\varphi_p''(x)$ is always negative (although not necessarily finite).
Note that $\varphi_p'(x) = C_px^{p-1}$ for some positive constant $C_p$, where $C_p = p$ if $0 < p < 1$, $C_p = 1$ if $p = 0$, and $C_p = -p$ if $p < 0$. 
Hence, $\varphi_p''(x) = C_p(p-1)x^{p-2}$. 
Since $p < 1$, we have $C_p(p-1) < 0$. 
Then, for all $x \in \Rpluszero$, $\varphi_p''(x) < 0$.
This means that $\varphi_p$ is strictly concave in the non-negative real domain.

Let $k \in \Zpluszero$. If $p \leq 0$ and $k = 0$, then $\Delta_{\varphi_p, 0} (1) = \infty > \Delta_{\varphi_p, 1}(1)$.
We now assume that $p > 0$ or $k > 0$, and so $\varphi_p(k), \varphi_p(k+1), \varphi_p(k+2)$ are all finite.
Since $\varphi_p$ is strictly concave in the non-negative real domain, by letting $x = k$, $y = k+2$, and $\alpha = 1/2$ in the definition of strict concavity, we get $\varphi_p(k+1) > \varphi_p(k)/2 + \varphi_p(k+2)/2$. 
Hence, $2\varphi_p(k+1) > \varphi_p(k) + \varphi_p(k+2)$, which implies that $\Delta_{\varphi_p, k}(1) = \varphi_p(k+1) - \varphi_p(k) > \varphi_p(k+2) - \varphi_p(k+1) = \Delta_{\varphi_p, k+1}(1)$.
It follows that $\varphi_p$ satisfies {\condintbinary}.
\end{proof}

\subsection{Two-Value Instances}
\label{subsec:integer_twovalue}

We now consider integer-valued instances which are two-value, a strict generalization of binary instances.
Recall from \Cref{subsec:real_twovalue} that the classes of identical-good instances and two-value instances are not subclasses of each other (except when $n = 2$).
Therefore, the techniques used in \Cref{subsec:integer_idengood} cannot be used in this section.\footnote{In particular, the proof that {\condintidengood}a is a sufficient condition for the identical-good case (\Cref{lem:integer_idengood_cond3a_ef1}) does not apply to the two-value case.}

We show that the characterization for integer-valued two-value instances is given by \emph{{\condinttwovalue}}.
Before we state the characterization, we examine the relationship between Conditions {\condnointidengood} and~{\condnointtwovalue}.

\begin{restatable}{proposition}{propintegertwovaluesufficient}
\label{prop:integer_twovalue_sufficient}
Let $f: \Rpluszero \to \R \cup \{-\infty\}$ be a strictly increasing function.
\begin{enumerate}[label=(\roman*)]
    \item If $f$ satisfies {\condinttwovalue}, then $f$ satisfies {\condintidengood}.
    \item If $f$ satisfies {\condintidengood} and the property that for each $k \in \Zplus$, the function $\Delta_{f, k}$ is either non-decreasing or non-increasing on the domain of positive integers, then $f$ satisfies {\condinttwovalue}.
\end{enumerate}
\end{restatable}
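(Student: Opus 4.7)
The plan is to prove (i) and (ii) by separate case analyses on the parameters.

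For (i), I would split on whether $a \geq b$ or $a < b$. The case $a \geq b$ is direct: setting $\ell = k$ and $r = 0$ in {\condinttwovalue} (so that the constraint $(k+1)b > kb$ is trivially met) turns the left inequality into $\Delta_{f,k}(b) > \Delta_{f,k+1}(a)$, which is precisely {\condintidengood} in this subcase. The case $a < b$ cannot be handled by a single substitution because {\condinttwovalue} requires $a \geq b$. The plan is to combine the upper bound $\Delta_{f,k+1}(a) < \Delta_{f,k}(1)$ (obtained from {\condinttwovalue} with $B = 1$, $A = a$, $\ell = k$, $r = 0$, $K = k$) with a lower bound $\Delta_{f,k}(b) \geq \Delta_{f,k}(1)$. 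To establish the latter, I would telescope $\Delta_{f,k}(b) = \sum_{i=0}^{b-1} \Delta_{f,kb+i}(1)$ and bound each unit step by carefully chosen instances of {\condinttwovalue}, exploiting the flexibility of the parameter $r$.

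For (ii), the right inequality $\Delta_{f,k+1}(a) > \Delta_{f,k+2}(1)$ of {\condinttwovalue} is immediate from {\condintidengood} applied to the indices $k+1 < k+2$ (this is just Condition~{\condnointidengood}a, which is equivalent to {\condintidengood} by \Cref{prop:condition_three_eqv}). For the left inequality, write $D := f((\ell+1)b+ra) - f(\ell b+ra)$ and perform Euclidean division $\ell b + ra = mb + s$ with $m = \ell + \lfloor ra/b \rfloor$ and $s = ra \bmod b \in [0,b)$; the constraint $(k+1)b > \ell b + ra$ forces $m \leq k$. When $s = 0$, $D = \Delta_{f,m}(b)$ and Condition~{\condnointidengood}a gives $D > \Delta_{f,k+1}(a)$ at once. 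When $s > 0$, $D$ is a shifted $b$-step at the position $mb + s$; here the monotonicity of $\Delta_{f,k}$ on $\Zplus$ (for each fixed $k$) is used to compare $D$ to $\Delta_{f,m}(b)$ or $\Delta_{f,m+1}(b)$, after which Condition~{\condnointidengood}a again concludes. The direction of monotonicity (non-decreasing vs.\ non-increasing) dictates which of the two auxiliary quantities is compared to~$D$.

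The main obstacle I anticipate is the subcase $a < b$ of (i): because {\condinttwovalue} is not directly applicable, the argument has to leverage the condition's full flexibility (in particular varying $r$ and instantiating $B = 1$) and combine several derived inequalities via a telescoping decomposition. A secondary subtlety arises in the $s > 0$ case of (ii), where one must argue that the shift by $s$ does not erode the strict inequality; the monotonicity hypothesis is precisely the extra ingredient that controls this shift.
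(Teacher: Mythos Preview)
Your plan for (i) has a real gap in the subcase $a<b$. The intermediate inequality you aim for, $\Delta_{f,k}(b)\ge\Delta_{f,k}(1)$, is simply false for functions satisfying {\condinttwovalue}: take $f=h_{-1}$ (which satisfies {\condinttwovalue} by \Cref{prop:integer_twovalue_harmonic}), $k=1$, $b=2$, and compute $\Delta_{h_{-1},1}(2)=5/6<1=\Delta_{h_{-1},1}(1)$. So no amount of telescoping or ``exploiting the flexibility of $r$'' will establish that inequality. The paper avoids the case split entirely: substituting $b=1$, $\ell=k$, $r=0$ into {\condinttwovalue} (which is always admissible since $a\ge 1$) yields exactly {\condintidengood}b, and \Cref{prop:condition_three_eqv} then gives {\condintidengood} for \emph{all} $a,b\in\Zplus$ in one stroke. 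You should do the same.

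Your plan for (ii) also breaks down, specifically in the non-decreasing subcase when the Euclidean quotient is $m=k$ and $0<s<b$. The quantity $D=f((k+1)b+s)-f(kb+s)$ is a \emph{shifted} $b$-step, not of the form $\Delta_{f,j}(x)$, so the monotonicity hypothesis on $x\mapsto\Delta_{f,j}(x)$ does not compare $D$ to $\Delta_{f,m}(b)$ or $\Delta_{f,m+1}(b)$ directly. Using {\condintbinary} termwise one does get $D>\Delta_{f,k+1}(b)$, but since $a\ge b$ and $\Delta_{f,k+1}$ is non-decreasing this only yields $D>\Delta_{f,k+1}(b)\le\Delta_{f,k+1}(a)$, which is inconclusive. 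The paper's argument skips the Euclidean division: it telescopes $D=\sum_{t=0}^{b-1}\Delta_{f,\ell b+ra+t}(1)$, applies {\condintidengood}a to bound each unit step below by an \emph{$a$-step} $\Delta_{f,(k+1)b+t}(a)$, and re-telescopes to $\Delta_{f,k+1}(ab)$; then $ab\ge a$ and non-decreasingness give $\Delta_{f,k+1}(ab)\ge\Delta_{f,k+1}(a)$. The non-increasing case is the same idea with the target $\Delta_{f,k+1}(b)$ instead. The key move you are missing is comparing unit steps to $a$-steps (not unit steps) so that the telescoped sum lands at an argument $\ge a$.
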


\begin{proof}
(i) Substituting $b = 1$, $\ell = k$, and $r = 0$ into {\condinttwovalue} gives {\condintidengood}b, which is equivalent to {\condintidengood} by \Cref{prop:condition_three_eqv}.

(ii) Let $a, b \in \Zplus$ and $k, \ell, r \in \Zpluszero$ such that $a \geq b$ and $(k+1)b > \ell b + ra$.
Note that $f$ trivially satisfies the right inequality of {\condinttwovalue}; therefore, it suffices to prove the left inequality. Since $f$ satisfies {\condintidengood}, \Cref{prop:condition_three_eqv} implies that $f$ satisfies {\condintidengood}a.

If $\Delta_{f, k+1}$ is non-increasing, then
\begin{align*}
    f((\ell+1)b+ra) - f(\ell b+ra) &= \sum_{t=0}^{b-1} \Delta_{f, \ell b+ra+t}(1) \tag{by telescoping sum} \\
    &> \sum_{t=0}^{b-1} \Delta_{f, (k+1)b+t}(1) \tag{by {\condintidengood}a and \protect{$(k+1)b > \ell b+ra$}} \\
    &= f((k+2)b) - f((k+1)b) \tag{by telescoping sum} \\
    &= \Delta_{f, k+1}(b) \\
    &\geq \Delta_{f, k+1}(a),
\end{align*}
where the last inequality holds because $\Delta_{f, k+1}$ is non-increasing and $a \geq b$.
This proves the left inequality of {\condinttwovalue}.

On the other hand, if $\Delta_{f, k+1}$ is non-decreasing, then
\begin{align*}
    f((\ell+1)b+ra) - f(\ell b+ra) &= \sum_{t=0}^{b-1} \Delta_{f, \ell b+ra+t}(1) \tag{by telescoping sum} \\
    &> \sum_{t=0}^{b-1} \Delta_{f, (k+1)b+t}(a)
    \tag{by {\condintidengood}a and \protect{$(k+1)b > \ell b+ra$}}\\
    &= f((k+2)ba) - f((k+1)ba) \tag{by telescoping sum} \\
    &= \Delta_{f, k+1}(ab) \\
    &\geq \Delta_{f, k+1}(a),
\end{align*}
where the last inequality holds because $\Delta_{f, k+1}$ is non-decreasing and $ab \geq a$.
This proves the left inequality of {\condinttwovalue}.
\end{proof}

We now state the characterization for integer-valued two-value instances.

\begin{restatable}{theorem}{thmintegertwovalue}
\label{thm:integer_twovalue}
Let $n \geq 2$ be given, and let $f: \Rpluszero \to \R \cup \{-\infty\}$ be a strictly increasing function.
Then, the following statements are equivalent:
\begin{enumerate}[label=(\alph*)]
    \item For every positive-admitting integer-valued two-value instance with $n$ agents, every allocation chosen by the additive welfarist rule with $f$ is EF1.
    \item $f$ satisfies {\condinttwovalue}.
\end{enumerate}
\end{restatable}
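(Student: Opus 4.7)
For direction (b) $\Rightarrow$ (a), let $\mathcal{A} = (A_1, \ldots, A_n)$ be an allocation chosen by the additive welfarist rule with $f$, and suppose for contradiction that $\mathcal{A}$ fails EF1 with agent $i$ envying agent $j$ by more than one good. Let $a \geq b$ denote the two utility values; the $b = 0$ case reduces to \Cref{thm:integer_binary}. Write $u_i(A_i) = \ell b + ra$, where $\ell, r \in \Zpluszero$ count the $b$- and $a$-valued goods in $A_i$ (according to agent $i$). The plan is to exhibit a transfer of one good from $A_j$ to $A_i$ that strictly increases $\sum_k f(u_k(A_k))$, contradicting the optimality of $\mathcal{A}$.

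The clean principal case is when every good in $A_j$ is valued at $b$ by agent $i$ and at $a$ by agent $j$: writing $|A_j| = k+2$ so that $u_j(A_j) = (k+2)a$, the ``envy by more than one good'' condition gives exactly $(k+1)b > \ell b + ra$, matching the hypothesis of Condition~\condnointtwovalue. Transferring any one good then produces a welfare change of $[f((\ell+1)b + ra) - f(\ell b + ra)] - \Delta_{f,k+1}(a)$, which is positive by the left inequality of Condition~\condnointtwovalue---a contradiction. The remaining cases, where $A_j$ contains goods of other valuation patterns, are handled by first peeling off (via welfare-improving transfers) any good $g \in A_j$ with $u_i(g) > u_j(g)$, then applying the right inequality $\Delta_{f,k+1}(a) > \Delta_{f,k+2}(1)$ to reduce $a$-valued transfers to the principal $b$-valued case.

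For direction (a) $\Rightarrow$ (b), fix parameters $(a, b, k, \ell, r)$ with $a \geq b$ and $(k+1)b > \ell b + ra$, and construct a positive-admitting integer-valued two-value instance whose unique EF1 allocation $\mathcal{A}^*$ has agent~$1$ receiving utility $(\ell+1)b + ra$ and agent~$2$ receiving utility $(k+1)a$. Following the template of \Cref{lem:real_idengood_cond1}, place $k+2$ ``mixed'' goods valued $b$ by agent~$1$ and $a$ by agent~$2$, together with $r$ goods valued $a$ by agent~$1$ and $b$ by agent~$2$, and $\ell$ goods valued $b$ by both; the remaining $n-2$ agents are served by dedicated single goods. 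Comparing $\mathcal{A}^*$ with the alternative allocation that transfers one mixed good from agent~$1$ to agent~$2$ gives the left inequality of Condition~\condnointtwovalue. The right inequality $\Delta_{f,k+1}(a) > \Delta_{f,k+2}(1)$ is obtained via a parallel construction contrasting $(k+1)$ $a$-valued goods against $(k+2)$ unit-valued goods.

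The main obstacle lies in direction (b) $\Rightarrow$ (a): in a two-value instance, each good $g \in A_j$ can have one of up to four valuation profiles $(u_i(g), u_j(g)) \in \{a, b\}^2$, and the correct welfare-improving transfer depends on which profiles are present. A secondary technical subtlety in direction (a) $\Rightarrow$ (b) is ensuring that $\mathcal{A}^*$ is indeed the unique EF1 allocation across all parameter regimes of $(a, b, k, \ell, r)$---particularly when $\ell b + ra$ is far below $(k+1)b$---which may require inserting additional ``filler'' goods valued $b$ by both main agents and reparameterizing accordingly.
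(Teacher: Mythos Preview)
Your sketch has the right shape but there are two genuine gaps, one in each direction.

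\textbf{Direction (a) $\Rightarrow$ (b).} Your construction uses ``dedicated single goods'' for the remaining $n-2$ agents, but in a two-value instance every agent--good pair must take a value in $\{a,b\}$. If $b>0$ (you already reduced $b=0$ to the binary case), a dedicated good cannot be worth $0$ to the non-dedicated agents, so either it is worth $b$ or $a$ to them and to agents $1,2$, which destroys the ``dedicated'' structure and contaminates the EF1 analysis. The paper avoids this by making agents $2,\dots,n$ \emph{symmetric}, each valuing every good at $a$, with agent~$1$ valuing $r$ designated goods at $a$ and the rest at $b$; this is manifestly two-value. Moreover, the paper does not try to exhibit a unique EF1 allocation (your own ``secondary subtlety'' shows why this is fragile when $\ell b+ra$ is far below $(k+1)b$). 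Instead it argues by contradiction: assuming the left inequality of {\condinttwovalue} fails, among all optimal allocations it takes the one with smallest $|A_1|$, shows $|A_1|\ge \ell+r+1$ from EF1, and then transfers one $b$-good out of $A_1$ to an agent with at most $k+1$ goods to obtain a contradiction via {\condintidengood}a. The right inequality is obtained simply by observing that the construction of \Cref{lem:real_idengood_cond1} is already two-value.

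\textbf{Direction (b) $\Rightarrow$ (a).} Your ``peel off any $g\in A_j$ with $u_i(g)>u_j(g)$ via a welfare-improving transfer'' is not automatic: the transfer is welfare-improving only if $f(u_i(A_i)+a)-f(u_i(A_i))>f(u_j(A_j))-f(u_j(A_j)-b)$, and neither inequality of {\condinttwovalue} gives this directly. The paper instead splits on whether $u_i(A_j)\le u_j(A_j)$ or $u_i(A_j)>u_j(A_j)$. In the first case, if no $g'\in A_j$ has $u_i(g')\ge u_j(g')$ then you are exactly in your principal case and the left inequality of {\condinttwovalue} applies; otherwise such a $g'$ (upgraded to $g''$ with $u_i(g'')>0$) satisfies $u_i(A_i)<u_j(A_j\setminus\{g''\})$, and the transfer is justified by telescoping plus {\condintidengood}a (which follows from {\condinttwovalue} via \Cref{prop:integer_twovalue_sufficient} and \Cref{prop:condition_three_eqv}), not by the right inequality of {\condinttwovalue}. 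In the second case one first shows $A_i\ne\emptyset$ and that every good in $A_i$ is $a$-valued by $i$ (via a swap argument), and only then can the transfer of $g'$ be justified through {\condintidengood}a and $|A_i|<|A_j|-1$. Your one-line ``reduce $a$-valued transfers to the principal $b$-valued case via the right inequality'' does not capture this, and as stated the argument would not go through.
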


\begin{proof}
(a) $\Rightarrow$ (b): By a similar proof as that in \Cref{lem:real_idengood_cond1}, $f$ satisfies {\condintidengood}, which implies that $f$ satisfies the right inequality of {\condinttwovalue}.
Therefore, it suffices to show that $f$ satisfies the left inequality of {\condinttwovalue}. 
Note also that since $f$ satisfies {\condintidengood}, by \Cref{prop:condition_three_eqv}, $f$ satisfies {\condintidengood}a as well.

Let $a, b \in \Zplus$ and $k, \ell, r \in \Zpluszero$ be given such that $a \geq b$ and $(k+1)b > \ell b + ra$.
If $a = b$, then by {\condintidengood}a, we have $\Delta_{f, \ell+r}(a) > \Delta_{f, k+1}(a)$, which implies $f((\ell+1)a+ra) - f(\ell a+ra) > \Delta_{f, k+1}(a)$, and thus $f$ satisfies {\condinttwovalue}.
Therefore, we shall consider the case $a > b$.

Assume, for the sake of contradiction, that $f((\ell+1)b+ra) - f(\ell b+ra) \leq \Delta_{f, k+1}(a)$.
Consider an instance with $n$ agents and $(k+1)(n-1) + \ell + r + 1$ goods, and let $M' = \{1, \ldots, r\}$.
The utilities of the goods are as follows.
\begin{itemize}
    \item $u_1(g) = a$ for $g \in M'$, and $u_1(g) = b$ otherwise.
    \item For $i \in \{2, \ldots, n\}$, $u_i(g) = a$ for all $g \in M$.
\end{itemize}
This instance is an integer-valued two-value instance; moreover, it is positive-admitting since there are at least $n$ goods, and an allocation where every agent receives at least one good gives positive utility to every agent.

Among all allocations chosen by the additive welfarist rule with $f$, let $\mathcal{A} = (A_1, \ldots, A_n)$ be an allocation with the smallest value of $|A_1|$.
By assumption, $\mathcal{A}$ is EF1.
We claim that $|A_1| \geq \ell + r + 1$.
If $|A_1| \leq \ell + r$, then there are more than $(k+1)(n-1)$ goods allocated to the agents in $N \setminus \{1\}$, and hence some agent $i \in N \setminus \{1\}$ receives at least $k+2$ goods.
Then, we have $u_1(A_1) \leq \ell b + ra < (k+1)b \leq u_1(A_i \setminus \{g\})$ for all $g \in A_i$, contradicting the assumption that $\mathcal{A}$ is EF1.
Now, since $|A_1| \geq \ell + r + 1 > r = |M'|$, there exists a good $g' \in M \setminus M'$ in agent $1$'s bundle.
Moreover, every good in $M'$ must be in $A_1$.
Indeed, if some good $g \in M'$ is in agent $i$'s bundle for some $i \in N \setminus \{1\}$, then exchanging $g$ with $g'$ increases agent $1$'s utility for her own bundle and does not decrease other agents' utilities for their own bundles, thereby increasing $\sum_{j \in N} u_j(A_j)$ and contradicting the assumption that $\mathcal{A}$ is chosen by the additive welfarist rule with $f$.
This gives $K := u_1(A_1) \geq (\ell+1)b + ra$.

Since agent $1$ has at least $\ell + r + 1$ goods, at most $(k+1)(n-1)$ goods are allocated to the agents in $N \setminus \{1\}$.
In particular, there exists an agent $i \in N \setminus \{1\}$ who has at most $k+1$ goods, and thus $L := |A_i| \leq k+1$.
Let $\mathcal{B} = (B_1, \ldots, B_n)$ be the same allocation as $\mathcal{A}$ except that $g'$ is transferred from agent $1$'s bundle to agent $i$'s bundle, i.e., $B_1 = A_1 \setminus \{g'\}$, $B_i = A_i \cup \{g'\}$, and $B_j = A_j$ for all $j \in N \setminus \{1, i\}$.
Since $|B_1| < |A_1|$, $\mathcal{B}$ cannot be chosen by the additive welfarist rule with $f$, and hence $\sum_{j \in N} f(u_j(A_j)) > \sum_{j \in N} f(u_j(B_j))$.
This simplifies to $f(u_1(A_1)) + f(u_i(A_i)) > f(u_1(B_1)) + f(u_i(B_i))$, or $f(K) - f(K-b) > f((L+1)a) - f(La)$.
On the other hand, we have
\begin{align*}
    f(K) - f(K-b) &= \sum_{t=0}^{b-1} \Delta_{f, K-b+t}(1) \tag{by telescoping sum} \\
    &\leq \sum_{t=0}^{b-1} \Delta_{f, ((\ell+1)b+ra)-b+t}(1) \tag{by {\condintidengood}a and \protect{$K \geq (\ell+1)b+ra$}} \\
    &= \sum_{t=0}^{b-1} \Delta_{f, \ell b+ra+t}(1) \\
    &= f((\ell+1)b+ra) - f(\ell b+ra) \tag{by telescoping sum} \\
    &\leq \Delta_{f, k+1}(a) \tag{by assumption} \\
    &\leq \Delta_{f, L}(a) \tag{by {\condintidengood}a and \protect{$L \leq k+1$}} \\
    &= f((L+1)a) - f(La),
\end{align*}
a contradiction.
Therefore, $f((\ell+1)b+ra) - f(\ell b+ra) > \Delta_{f, k+1}(a)$ holds, and so $f$ satisfies {\condinttwovalue}.

(b) $\Rightarrow$ (a): Let a positive-admitting integer-valued two-value instance be given, and let $\mathcal{A} = (A_1, \ldots, A_n)$ be an allocation chosen by the additive welfarist rule with $f$.
Since the instance is two-value, there exist distinct $a, b \in \Zpluszero$ such that $u_i(g) \in \{a, b\}$ for all $i \in N$ and $g \in M$; assume without loss of generality that $a > b$.
Assume, for the sake of contradiction, that $\mathcal{A}$ is not EF1.
Then, there exist two agents $i, j \in N$ such that agent $i$ envies agent $j$ by more than one good.
Note that $f$ satisfies {\condintidengood} by \Cref{prop:integer_twovalue_sufficient}(a), and so $f$ also satisfies {\condintidengood}a by \Cref{prop:condition_three_eqv}.

We consider the following two cases.

\underline{Case 1:} $u_i(A_j) \leq u_j(A_j)$.
We show that there exists $g' \in A_j$ such that $u_i(g') \geq u_j(g')$.
Suppose on the contrary that $u_i(g) < u_j(g)$ for all $g \in A_j$.
Then, we have $u_i(g) = b$ and $u_j(g) = a$ for all $g \in A_j$.
Since agent $i$ is not EF1 towards agent $j$, we have $|A_j| \geq 2$, and $b|A_j| = u_i(A_j) > u_i(A_i) \geq 0$ implies that $a > b > 0$.
Let $k = |A_j|-2$, and let $\ell$ and $r$ be the number of goods in~$A_i$ with utility $b$ and $a$ respectively to agent $i$.
Note that we have $a, b \in \Zplus$ and $k, \ell, r \in \Zpluszero$.
Moreover, $u_i(A_j \setminus \{g\}) > u_i(A_i)$ for any $g \in A_j$, which implies $(k+1)b > \ell b+ra$.
Since $f$ satisfies {\condinttwovalue}, we have $f((\ell+1)b+ra) - f(\ell b+ra) > \Delta_{f, k+1}(a)$.
Rearranging the terms, we get $f((\ell+1)b+ra) + f((k+1)a) > f(\ell b+ra) + f((k+2)a)$.
Now, 
\begin{align*}
    \sum_{s \in N} f(u_s(A_s)) &=  f(u_i(A_i)) +  f(u_j(A_j)) + \sum_{s \in N \setminus \{i, j\}} f(u_s(A_s)) \\
    &= f(\ell b+ra) + f((k+2)a) + \sum_{s \in N \setminus \{i, j\}} f(u_s(A_s)).
\end{align*}
Let $\mathcal{B} = (B_1, \ldots, B_n)$ be the same allocation as $\mathcal{A}$ except that one good is transferred from agent~$j$'s bundle to agent $i$'s bundle.
Then, we have
\begin{align*}
    \sum_{s \in N} f(u_s(B_s)) = f((\ell+1)b+ra) &+ f((k+1)a) 
    + \sum_{s \in N \setminus \{i, j\}} f(u_s(A_s)).
\end{align*}
Since $\mathcal{A}$ is chosen by the additive welfarist rule with $f$, we have $\sum_{s \in N} f(u_s(B_s)) \leq \sum_{s \in N} f(u_s(A_s))$.
Rearranging the terms, we get $f((\ell+1)b+ra) + f((k+1)a) \leq f(\ell b+ra) + f((k+2)a)$, violating {\condinttwovalue}.
Therefore, there exists $g' \in A_j$ such that $u_i(g') \geq u_j(g')$.

If $u_i(g') > 0$, then let $g'' = g'$.
Otherwise, $u_i(g') = 0 = b$.
Since $u_i(A_j) > u_i(A_i) \geq 0$, there exists a good $g'' \in A_j$ such that $u_i(g'') > 0$, and hence $u_i(g'') = a \geq u_j(g'')$.
In either case, we have $u_i(g'') \geq u_j(g'')$ and $u_i(g'') > 0$.
Let $K = u_i(A_i)$ and $L = u_j(A_j \setminus \{g''\})$, and note that
\begin{align*}
    K &= u_i(A_i) \\
    &< u_i(A_j \setminus \{g''\}) \tag{since agent \protect{$i$} is not EF1 towards agent \protect{$j$}} \\
    &= u_i(A_j) - u_i(g'') \\
    &\leq u_j(A_j) - u_j(g'') \tag{since \protect{$u_i(A_j) \leq u_j(A_j)$} and \protect{$u_i(g'') \geq u_j(g'')$}} \\
    &= u_j(A_j \setminus \{g''\}) 
    = L.
\end{align*}
Hence,
\begin{align*}
    f(K + u_i(g'')) - f(K) &= \sum_{t=0}^{u_i(g'')-1} \Delta_{f, K+t}(1) \tag{by telescoping sum} \\
    &> \sum_{t=0}^{u_i(g'')-1} \Delta_{f, L+t}(1) \tag{by {\condintidengood}a, \protect{$u_i(g'') > 0$}, and \protect{$K < L$}} \\
    &= f(L + u_i(g'')) - f(L) \tag{by telescoping sum} \\
    &\geq f(L + u_j(g'')) - f(L). \tag{since \protect{$u_i(g'') \geq u_j(g'')$} and \protect{$f$} is increasing}
\end{align*}

On the other hand, we have
\begin{align*}
    \sum_{s \in N} f(u_s(A_s)) = f(K) + f(L + u_j(g'')) + \sum_{s \in N \setminus \{i, j\}} f(u_s(A_s)).
\end{align*}
Let $\mathcal{B}' = (B'_1, \ldots, B'_n)$ be the same allocation as $\mathcal{A}$ except that $g''$ is transferred from agent $j$'s bundle to agent $i$'s bundle.
We have 
\begin{align*}
    \sum_{s \in N} f(u_s(B'_s)) = f(K + u_i(g'')) + f(L) + \sum_{s \in N \setminus \{i, j\}} f(u_s(A_s)).
\end{align*}
Since $\mathcal{A}$ is chosen by the additive welfarist rule with $f$, we have $\sum_{s \in N} f(u_s(B'_s)) \leq \sum_{s \in N} f(u_s(A_s))$.
Rearranging the terms, we get $f(K + u_i(g'')) - f(K) \leq f(L + u_j(g'')) - f(L)$, which is a contradiction.

\underline{Case 2:} $u_i(A_j) > u_j(A_j)$.
Then, there exists $g' \in A_j$ such that $u_i(g') > u_j(g')$.
This implies that $u_i(g') = a$ and $u_j(g') = b$.

We show that $|A_i| > 0$.
If $|A_i| = 0$, then $A_i = \emptyset$, and
\begin{align*}
    f(u_i(g')) - f(0) &> f(u_j(g')) - f(0) \tag{since \protect{$u_i(g') > u_j(g')$}} \\
    &= \sum_{t=0}^{u_j(g')-1} \Delta_{f, t}(1) \tag{by telescoping sum} \\
    &\geq \sum_{t=0}^{u_j(g')-1} \Delta_{f, u_j(A_j \setminus \{g'\})+t}(1) \tag{by {\condintidengood}a and \protect{$u_j(A_j \setminus \{g'\}) \geq 0$}} \\
    &= f(u_j(A_j \setminus \{g'\})+u_j(g')) - f(u_j(A_j \setminus \{g'\})) \tag{by telescoping sum} \\
    &= f(u_j(A_j)) - f(u_j(A_j \setminus \{g'\})).
\end{align*}
On the other hand, we have
\begin{align*}
    \sum_{s \in N} f(u_s(A_s)) = f(0) + f(u_j(A_j)) + \sum_{s \in N \setminus \{i, j\}} f(u_s(A_s)).
\end{align*}
Let $\mathcal{B} = (B_1, \ldots, B_n)$ be the same allocation as $\mathcal{A}$ except that $g'$ is transferred from agent $j$'s bundle to agent $i$'s bundle.
We have 
\begin{align*}
    \sum_{s \in N} f(u_s(B_s)) = f(u_i(g')) &+ f(u_j(A_j \setminus \{g'\})) 
    + \sum_{s \in N \setminus \{i, j\}} f(u_s(A_s)).
\end{align*}
Since $\mathcal{A}$ is chosen by the additive welfarist rule with $f$, we have $\sum_{s \in N} f(u_s(B_s)) \leq \sum_{s \in N} f(u_s(A_s))$.
Rearranging the terms, we get $f(u_i(g')) - f(0) \leq f(u_j(A_j)) - f(u_j(A_j \setminus \{g'\}))$, which is a contradiction.
Therefore, $|A_i| > 0$.

Next, we show that $u_i(g) = a$ for all $g \in A_i$.
If there exists a good $g'' \in A_i$ such that $u_i(g'') = b$, then exchanging $g'$ with $g''$ increases agent $i$'s utility for her own bundle and does not decrease other agents' utilities for their own bundles, thereby increasing $\sum_{s \in N} f(u_s(A_s))$ and contradicting the assumption that $\mathcal{A}$ is chosen by the additive welfarist rule with $f$.
This shows that $u_i(g) = a$ for all $g \in A_i$, and hence $u_i(A_i) = a|A_i| > 0$.

Let $K = u_i(A_i) = a|A_i|$ and $L = u_j(A_j \setminus \{g'\})$.
Note that $a|A_i| = u_i(A_i) < u_i(A_j \setminus \{g'\}) \leq a(|A_j|-1)$ and hence $|A_i| < |A_j|-1$.
Moreover, $L = u_j(A_j \setminus \{g'\}) \geq b(|A_j|-1)$ since each good has utility at least $b$.
Then, we have
\begin{align*}
    f(K+a) - f(K) &= f(a(|A_i|+1)) - f(a|A_i|) \\
    &= \Delta_{f, |A_i|}(a) \\
    &> \Delta_{f, |A_j|-1}(b) \tag{by {\condintidengood}a and \protect{$|A_i| < |A_j|-1$}} \\
    &= f(b|A_j|) - f(b(|A_j|-1)) \\
    &= \sum_{t=0}^{b-1} \Delta_{f, b(|A_j|-1)+t}(1) \tag{by telescoping sum} \\
    &\geq \sum_{t=0}^{b-1} \Delta_{f, L+t}(1) \tag{by {\condintidengood}a and \protect{$L \geq b(|A_j|-1)$}} \\
    &= f(L+b) - f(L). \tag{by telescoping sum}
\end{align*}
On the other hand, we have
\begin{align*}
    \sum_{s \in N} f(u_s(A_s)) = f(K) + f(L+b) + \sum_{s \in N \setminus \{i, j\}} f(u_s(A_s)).
\end{align*}
Let $\mathcal{B} = (B_1, \ldots, B_n)$ be the same allocation as $\mathcal{A}$ except that $g'$ is transferred from agent $j$'s bundle to agent $i$'s bundle.
We have 
\begin{align*}
    \sum_{s \in N} f(u_s(B_s)) = f(K+a) + f(L) + \sum_{s \in N \setminus \{i, j\}} f(u_s(A_s)).
\end{align*}
Since $\mathcal{A}$ is chosen by the additive welfarist rule with $f$, we have $\sum_{s \in N} f(u_s(B_s)) \leq \sum_{s \in N} f(u_s(A_s))$.
Rearranging the terms, we get $f(K+a) - f(K) \leq f(L+b) = f(L)$, which is a contradiction.

Since we have reached a contradiction in both cases, we conclude that $\mathcal{A}$ is EF1.
\end{proof}

\Cref{prop:integer_twovalue_sufficient} says that {\condinttwovalue} is stronger than {\condintidengood}.
Therefore, \Cref{thm:integer_twovalue} implies that every additive welfarist rule that guarantees EF1 for every positive-admitting integer-valued two-value instance also guarantees EF1 for every positive-admitting integer-valued identical-good instance, even though these two classes of instances are not subclasses of each other.

Next, we give examples of additive welfarist rules that ensure EF1 for all integer-valued two-value instances.
We use Propositions~\ref{prop:integer_idengood_log}, \ref{prop:integer_idengood_harmonic}, and \ref{prop:integer_twovalue_sufficient} to prove these results.
Recall the definitions of the modified logarithmic function $\lambda_c$ and modified harmonic function $h_c$ from \Cref{subsec:examples}.

\begin{restatable}{proposition}{propintegertwovaluelog}
\label{prop:integer_twovalue_log}
The function $\lambda_c$ satisfies {\condinttwovalue} if and only if $0 \leq c \leq 1$.
\end{restatable}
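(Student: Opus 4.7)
The plan is to leverage the already-established machinery connecting Conditions \condnointidengood{} and \condnointtwovalue{} (Proposition~\ref{prop:integer_twovalue_sufficient}) together with the characterization of $\lambda_c$ in terms of Condition~\condnointidengood{}b given in Proposition~\ref{prop:integer_idengood_log}. This lets us reduce the claim to a straightforward monotonicity check on $\Delta_{\lambda_c, k}$.

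For the necessity direction ($\Rightarrow$), suppose $\lambda_c$ satisfies \condinttwovalue. By Proposition~\ref{prop:integer_twovalue_sufficient}(i), $\lambda_c$ then satisfies \condintidengood, which by \Cref{prop:condition_three_eqv} is equivalent to \condintidengood{}b. Applying Proposition~\ref{prop:integer_idengood_log} immediately yields $0 \leq c \leq 1$.

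For the sufficiency direction ($\Leftarrow$), assume $0 \leq c \leq 1$. By Proposition~\ref{prop:integer_idengood_log} and \Cref{prop:condition_three_eqv}, $\lambda_c$ satisfies \condintidengood. To invoke Proposition~\ref{prop:integer_twovalue_sufficient}(ii), it remains to verify that for each $k \in \Zplus$, the function $\Delta_{\lambda_c, k}$ is monotonic on $\Zplus$. We have
\begin{align*}
    \Delta_{\lambda_c, k}(x) = \log\frac{(k+1)x+c}{kx+c},
\end{align*}
and differentiating the argument of the logarithm with respect to $x$ gives $c/(kx+c)^2 \geq 0$. Hence the ratio $\bigl((k+1)x+c\bigr)/(kx+c)$ is non-decreasing in $x$, so $\Delta_{\lambda_c, k}$ is non-decreasing on $\Zplus$. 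Proposition~\ref{prop:integer_twovalue_sufficient}(ii) then yields that $\lambda_c$ satisfies \condinttwovalue, completing the proof.

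There is no real obstacle: the heavy lifting has been done in Propositions~\ref{prop:integer_idengood_log} and~\ref{prop:integer_twovalue_sufficient}, and the only new computation needed is the elementary monotonicity check above, which works uniformly for every $c \geq 0$ (so the upper bound $c \leq 1$ enters solely through the necessity half via Proposition~\ref{prop:integer_idengood_log}).
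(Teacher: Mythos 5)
Your proof is correct and follows essentially the same route as the paper's: both directions reduce to \Cref{prop:integer_idengood_log,prop:condition_three_eqv,prop:integer_twovalue_sufficient}, with the only new work being the monotonicity of $\Delta_{\lambda_c,k}$, which the paper verifies by rewriting it as $\log\left(1+\frac{1}{k+c/x}\right)$ rather than by differentiating the ratio, but the computations are equivalent.
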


\begin{proof}
($\Leftarrow$) We show that $\lambda_c$ satisfies {\condinttwovalue} for $0 \leq c \leq 1$.
By \Cref{prop:integer_idengood_log,prop:condition_three_eqv}, $\lambda_c$ satisfies {\condintidengood}.
By \Cref{prop:integer_twovalue_sufficient}, it suffices to prove that $\Delta_{\lambda_c, k}$ is non-decreasing on the domain of positive integers for each $k \in \Zplus$.
For each positive integer $x$, we have
\begin{align*}
    \Delta_{\lambda_c, k}(x) &= \lambda_c((k+1)x) - \lambda_c(kx) \\
    &= \log ((k+1)x + c) - \log (kx + c) 
    = \log \left(\frac{(k+1)x+c}{kx+c}\right) 
    = \log \left(1 + \frac{1}{k+c/x} \right),
\end{align*}
which is non-decreasing in $x$.
Therefore, $\lambda_c$ satisfies {\condinttwovalue}.

($\Rightarrow$) Since $\lambda_c$ does not satisfy {\condintidengood} for $c > 1$ by \Cref{prop:integer_idengood_log}, it does not satisfy {\condinttwovalue} by \Cref{prop:integer_twovalue_sufficient}.
\end{proof}

\begin{restatable}{proposition}{propintegertwovalueharmonic}
\label{prop:integer_twovalue_harmonic}
The function $h_c$ satisfies {\condinttwovalue} if and only if $-1 \leq c \leq 1/\log 2 - 1$ ($\approx 0.443$).
\end{restatable}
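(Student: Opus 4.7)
The plan is to follow the structure of \Cref{prop:integer_twovalue_log}, substituting the modified harmonic number $h_c$ for $\lambda_c$. The backbone consists of (i) reducing the question to one about \condintidengood, which is already characterized for $h_c$ by \Cref{prop:integer_idengood_harmonic} together with \Cref{prop:condition_three_eqv}, and (ii) verifying the extra monotonicity hypothesis needed to bridge \condintidengood{} and \condinttwovalue{} via \Cref{prop:integer_twovalue_sufficient}.

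For the $(\Rightarrow)$ direction, I would note that by \Cref{prop:integer_twovalue_sufficient}(i), \condinttwovalue{} implies \condintidengood. Combined with \Cref{prop:condition_three_eqv} and the characterization in \Cref{prop:integer_idengood_harmonic}, any $c$ for which $h_c$ satisfies \condinttwovalue{} must lie in $[-1,\, 1/\log 2 - 1]$. This direction is essentially immediate once the earlier propositions are invoked.

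For the $(\Leftarrow)$ direction, suppose $-1 \leq c \leq 1/\log 2 - 1$. By \Cref{prop:integer_idengood_harmonic} combined with \Cref{prop:condition_three_eqv}, $h_c$ satisfies \condintidengood. To apply \Cref{prop:integer_twovalue_sufficient}(ii), it then remains to show that for every $k \in \Zplus$, $\Delta_{h_c, k}$ is either non-decreasing or non-increasing on $\Zplus$. On positive integers, $\Delta_{h_c, k}(x) = h_c((k+1)x) - h_c(kx) = \sum_{t=kx+1}^{(k+1)x} \frac{1}{t+c}$, which I expect to show is non-decreasing in $x$ (consistent with the $\lambda_c$ case, where $\Delta_{\lambda_c,k}(x) = \log(1 + 1/(k + c/x))$ is non-decreasing for $c \geq 0$). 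To verify this cleanly, I would use the integral representation $\Delta_{h_c, k}(x) = \int_0^1 \frac{t^{kx+c}(1-t^x)}{1-t}\, \dd t$ for $c > -1$, differentiate in the real variable $x$, and identify the sign of the resulting integrand on $(0,1)$; the case $c=-1$ reduces to $h_0$ shifted by one and can be handled either by the same integral after shifting, or by comparing $\Delta_{h_{-1},k}(x+1)$ to $\Delta_{h_{-1},k}(x)$ directly through pairing terms $1/(kx+j)$ with $1/(k(x+1)+j)$.

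The main obstacle is this monotonicity verification: unlike $\lambda_c$, where the difference collapses to a single logarithm in closed form, the harmonic difference is a sum of $x$ reciprocals whose length itself grows with $x$, so the comparison between $\Delta_{h_c,k}(x)$ and $\Delta_{h_c,k}(x+1)$ is not term-by-term. I expect that the cleanest route is via the integrand $\frac{t^{kx+c}(1-t^x)}{1-t}$, whose derivative in $x$ factors as $t^{kx+c}(-\log t)\,\big(k(1-t^x) - t^x\bigr)$ up to signs; controlling the sign of $k(1-t^x) - t^x$ across $t \in (0,1)$ (after multiplying by the positive weight $(-\log t)/(1-t)$ and integrating) should give the required monotonicity, while the known range of $c$ ensures \condintidengood{} and hence the remaining hypotheses of \Cref{prop:integer_twovalue_sufficient}(ii).
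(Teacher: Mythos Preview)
Your $(\Rightarrow)$ direction is fine and matches the paper exactly.

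The $(\Leftarrow)$ direction has a genuine gap: the monotonicity hypothesis of \Cref{prop:integer_twovalue_sufficient}(ii) simply fails for a range of negative $c$. You conjecture that $\Delta_{h_c,k}$ is non-decreasing on $\Zplus$ (and propose to prove it via the integral representation), but this is not true for all $c\in[-1,1/\log 2-1]$. In fact $\Delta_{h_c,k}$ need not be monotone in \emph{either} direction. Take $c=-0.58$ and $k=1$: then
\[
\Delta_{h_c,1}(1)=\tfrac{1}{1.42}\approx 0.7042,\qquad
\Delta_{h_c,1}(2)=\tfrac{1}{2.42}+\tfrac{1}{3.42}\approx 0.7056,\qquad
\Delta_{h_c,1}(3)=\tfrac{1}{3.42}+\tfrac{1}{4.42}+\tfrac{1}{5.42}\approx 0.7031,
\]
so $\Delta_{h_c,1}(1)<\Delta_{h_c,1}(2)>\Delta_{h_c,1}(3)$. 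Your own integral calculation foreshadows the problem: the factor $k(1-t^x)-t^x$ changes sign on $(0,1)$, and the weight $t^c/(1-t)$ shifts mass toward $t\to 0$ or $t\to 1$ depending on $c$, so the sign of the derivative genuinely depends on $c$. (For $c=-1$ the sequence is actually non-increasing, e.g.\ $\Delta_{h_{-1},1}(1)=1>\Delta_{h_{-1},1}(2)=5/6$, so even the expected direction flips.)

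The paper avoids this by splitting the argument at $c=0$. For $0\le c\le 1/\log 2-1$ it proves $\Delta_{h_c,k}$ is strictly increasing on $\Zplus$ (this is \Cref{lem:mod_harmonic_scaled_diff_increasing}, stated only for $c\ge 0$) and then invokes \Cref{prop:integer_twovalue_sufficient}(ii) exactly as you propose. For $-1\le c<0$ it does \emph{not} go through \Cref{prop:integer_twovalue_sufficient}(ii); instead it proves a separate inequality (\Cref{lem:h_neg_scaled_diff_minus_1_decreasing}), namely that $h_c((k+1)b-1)-h_c(kb-1)$ is decreasing in $b$, and uses this together with {\condintidengood}a to verify {\condinttwovalue} directly via a telescoping chain. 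So your overall scaffold is right for $c\ge 0$, but for $c<0$ you need a different estimate, not a monotonicity claim about $\Delta_{h_c,k}$.
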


The proof of \Cref{prop:integer_twovalue_harmonic} involves complex calculations and is deferred to Appendix~\ref{ap:integer_twovalue_harmonic}.

\subsection{General Integer-Valued Instances}
\label{subsec:integer_general}

Finally, we consider the class of all integer-valued instances.
It turns out to be challenging to find an exact characterization for this class of instances.
Instead, we provide a necessary condition in the form of {\condintgeneral}a and a sufficient condition in the form of {\condintgeneral}b.

\begin{restatable}{proposition}{propintegergeneralnecessary}
\label{prop:integer_general_necessary}
Let $n \geq 2$ be given, and let $f: \Rpluszero \to \R \cup \{-\infty\}$ be a strictly increasing function such that for every positive-admitting integer-valued instance with $n$ agents, every allocation chosen by the additive welfarist rule with $f$ is EF1.
Then, $f$ satisfies {\condintgeneral}a.
\end{restatable}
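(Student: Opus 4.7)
The plan is to prove {\condintgeneral}a by explicit construction. Fix $k, a, b \in \Zplus$, and consider the instance with $n$ agents and $m = 2k + n - 1$ goods defined as follows: each of the goods $g_1, \ldots, g_{2k}$ has value $a$ to agent~$1$, value $b$ to agent~$2$, and value~$0$ to every other agent; the good $g_{2k+1}$ has value $b - 1$ to agent~$2$ and value~$0$ to every other agent (so it is a zero good when $b = 1$); and for each $j \in \{1, \ldots, n-2\}$, the good $g_{2k+1+j}$ has value~$1$ only to agent $j+2$. This instance is integer-valued and positive-admitting, witnessed by the allocation that gives agent~$1$ one good from $\{g_1, \ldots, g_{2k}\}$, agent~$2$ the remaining goods in $\{g_1, \ldots, g_{2k+1}\}$, and each agent in $\{3, \ldots, n\}$ her unique singleton good.

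Since $f$ is strictly increasing, any allocation chosen by the additive welfarist rule with $f$ must assign each singleton $g_{2k+1+j}$ to agent $j+2$ and, when $b \geq 2$, $g_{2k+1}$ to agent~$2$---otherwise, transferring such a good to an agent who values it positively would strictly increase the welfare, contradicting maximality. Hence the $2k$ contested goods $g_1, \ldots, g_{2k}$ are split between agents~$1$ and~$2$; let $s \in \{0, 1, \ldots, 2k\}$ denote the number going to agent~$1$. Then $u_1(A_1) = sa$ and $u_2(A_2) = (2k - s)b + (b-1) = (2k - s + 1)b - 1$.

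The heart of the argument is a tight EF1 analysis of this split. Agent~$1$ is EF1 toward agent~$2$ if and only if $sa \geq (2k - s - 1)a$, i.e., $s \geq k$; agent~$2$ is EF1 toward agent~$1$ if and only if $(2k - s + 1)b - 1 \geq (s - 1)b$, i.e., $s \leq k$. Hence $s = k$ is the \emph{unique} EF1 split. By hypothesis, the rule's chosen allocation is EF1, so the rule picks $s = k$, and this value of $s$ maximizes the welfare $W(s) = f(sa) + f((2k - s + 1)b - 1) + (n-2) f(1)$ over all $s \in \{0, 1, \ldots, 2k\}$. In particular, $W(k) \geq W(k+1)$; if equality held, the rule could equally have picked an allocation with $s = k+1$, which is not EF1, a contradiction. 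Thus $W(k) > W(k+1)$ strictly, and rearrangement yields
\begin{align*}
    f((k+1)b - 1) - f(kb - 1) > f((k+1)a) - f(ka) = \Delta_{f, k}(a),
\end{align*}
establishing {\condintgeneral}a.

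The main subtlety will be carefully verifying the uniqueness of the EF1 split and handling boundary cases. When $b = 1$, the good $g_{2k+1}$ is a zero good and can be placed anywhere without affecting the welfare or EF1 analysis, and $s = k$ remains the unique EF1 split. When $kb - 1 = 0$ and $f(0) = -\infty$, the claimed inequality holds trivially since its left-hand side is infinite while its right-hand side is finite; this case requires only a brief remark and does not affect the main construction.
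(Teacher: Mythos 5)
Your proof is correct and takes essentially the same approach as the paper: both construct an integer-valued instance whose unique EF1 allocation gives the two relevant agents utilities $(k+1)b-1$ and $ka$, then compare against the adjacent non-EF1 split with utilities $kb-1$ and $(k+1)a$ to extract the strict inequality of {\condintgeneral}a. The only difference is cosmetic---you park agents $3,\dots,n$ on dummy singleton goods while the paper has all $n$ agents compete for the common pool---and your handling of the $b=1$ and $f(0)=-\infty$ boundary cases is sound.
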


\begin{proof}
Let $k, a, b \in \Zplus$ be given. 
Consider an instance with $n$ agents and $kn+1$ goods, and let $M' = M \setminus \{g_1\}$.
The utilities of the goods are as follows.
\begin{itemize}
    \item $u_1(g_1) = b-1$ and $u_1(g) = b$ for $g \in M'$.
    \item For $i \in N \setminus \{1\}$, $u_i(g_1) = 0$ and $u_i(g) = a$ for $g \in M'$.
\end{itemize}
This instance is an integer-valued instance; moreover, it is positive-admitting since an allocation where every agent has $k$ goods from $M'$ gives positive utility to every agent.

Let $\mathcal{A} = (A_1, \ldots, A_n)$ be an allocation chosen by the additive welfarist rule with $f$.
We must have $g_1 \in A_1$.
Indeed, if $g_1$ is in $A_i$ for some $i \in N \setminus \{1\}$, then transferring $g_1$ to agent $1$ increases the value of $f(u_1(A_1))$ and does not decrease $f(u_i(A_i))$, thereby increasing $\sum_{j \in N} f(u_j(A_j))$ and contradicting the assumption that $\mathcal{A}$ is chosen by the additive welfarist rule with $f$.

Recall that $\mathcal{A}$ is EF1.
We claim that every agent receives exactly $k$ goods from $M'$.
If not, then some agent receives fewer than $k$ goods from $M'$, and envies another agent who receives more than $k$ goods from $M'$ by more than one good, making $\mathcal{A}$ not EF1.
Therefore, we have $\sum_{j \in N} f(u_j(A_j)) = f((k+1)b-1) + (n-1)f(ka)$.

Let $\mathcal{B} = (B_1, \ldots, B_n)$ be the same allocation as $\mathcal{A}$ except that one good from $M'$ is transferred from agent $1$'s bundle to agent $2$'s bundle.
We have $\sum_{j \in N} f(u_j(B_j)) = f(kb-1) + f((k+1)a) + (n-2)f(ka)$.
Since $\mathcal{B}$ is not EF1, $\mathcal{B}$ cannot be chosen by the additive welfarist rule with $f$.
Therefore, $\sum_{j \in N} f(u_j(A_j)) > \sum_{j \in N} f(u_j(B_j))$.
Rearranging the terms, we get $f((k+1)b-1) - f(kb-1) > \Delta_{f, k}(a)$.
Since $k, a, b$ were arbitrarily chosen, $f((k+1)b-1) - f(kb-1) > \Delta_{f, k}(a)$ holds for all $k, a, b \in \Zplus$.
Therefore, $f$ satisfies {\condintgeneral}a.
\end{proof}

\begin{restatable}{proposition}{propintegergeneralsufficient}
\label{prop:integer_general_sufficient}
Let $n \geq 2$ be given, and let $f: \Rpluszero \to \R \cup \{-\infty\}$ be a strictly increasing function that satisfies {\condintgeneral}b.
Then, for every positive-admitting integer-valued instance with $n$ agents, every allocation chosen by the additive welfarist rule with $f$ is EF1.
\end{restatable}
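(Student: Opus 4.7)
The plan is to argue by contradiction. Suppose $f$ satisfies \condintgeneral{}b and some positive-admitting integer-valued instance admits an allocation $\mathcal{A}=(A_1,\dots,A_n)$ chosen by the additive welfarist rule with $f$ that is not EF1. Then there exist agents $i,j$ such that $u_i(A_j\setminus\{g\})>u_i(A_i)$ for every $g\in A_j$. I will produce an allocation $\mathcal{B}$ obtained from $\mathcal{A}$ by transferring a single carefully chosen good from $A_j$ to $A_i$ whose welfare under $f$ strictly exceeds that of $\mathcal{A}$, contradicting optimality.

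First I dispose of a degenerate case. Since $u_i(A_j)>u_i(A_i)\geq 0$, some good in $A_j$ has positive utility to $i$. If any such good $g$ satisfies $u_j(g)=0$, then moving $g$ from $j$ to $i$ strictly increases $f(u_i(A_i))$ (as $f$ is strictly increasing) and leaves $f(u_j(A_j))$ unchanged, giving the desired contradiction. So I may assume $u_j(g)>0$ for every $g\in A_j$ with $u_i(g)>0$. In this case I pick
\[
g'\in\argmin\bigl\{u_j(g)/u_i(g) : g\in A_j,\ u_i(g)>0\bigr\},
\]
and set $a=u_i(g')\in\Zplus$, $b=u_j(g')\in\Zplus$, $x=u_i(A_i)\in\Zpluszero$, $y=u_j(A_j)-b\in\Zpluszero$.

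The heart of the argument is verifying the hypothesis $y/b\geq(x+1)/a$ of \condintgeneral{}b (with the roles of the two pairs $(x,a)$ and $(y,b)$ swapped). The minimality of $g'$ gives $u_j(g)\geq(b/a)u_i(g)$ for every $g\in A_j$ with $u_i(g)>0$; summing and using that goods with $u_i(g)=0$ contribute non-negatively to $u_j(A_j)$ yields $a\,u_j(A_j)\geq b\,u_i(A_j)$. The EF1 violation applied to $g'$ itself gives $u_i(A_j)-a>x$, and integrality upgrades this to $u_i(A_j)\geq x+a+1$. Combining,
\[
a\,u_j(A_j)\;\geq\;b\,u_i(A_j)\;\geq\;b(x+a+1),
\]
which rearranges exactly to $a\,y\geq b(x+1)$, i.e.\ $y/b\geq(x+1)/a$.

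Now \condintgeneral{}b, applied with the variables $(x,a,y,b)$ of the condition instantiated as $(y,b,x,a)$, yields $f(x+a)-f(x)>f(y+b)-f(y)$. But the allocation $\mathcal{B}$ obtained from $\mathcal{A}$ by moving $g'$ from $j$ to $i$ satisfies
\[
\sum_{k\in N} f(u_k(B_k))-\sum_{k\in N} f(u_k(A_k))=\bigl[f(x+a)-f(x)\bigr]-\bigl[f(y+b)-f(y)\bigr]>0,
\]
contradicting the choice of $\mathcal{A}$ by the additive welfarist rule. I expect the only subtle step to be the choice of $g'$ and the clean combination of the ratio bound with the integer EF1 gap—getting the ``+1'' from integrality is what makes the hypothesis of \condintgeneral{}b come out exactly tight; everything else is bookkeeping.
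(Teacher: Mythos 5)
Your proposal is correct and follows essentially the same route as the paper's proof: pick the good $g'$ in $A_j$ minimizing $u_j(g)/u_i(g)$ among goods with $u_i(g)>0$ (handling the $u_j(g)=0$ case separately), combine the ratio bound with the integrality-strengthened EF1 violation to verify the hypothesis of Condition~{\condnointgeneral}b, and conclude that transferring $g'$ strictly increases welfare. The only difference is cosmetic — your $(a,b,x,y)$ are the paper's $(b,a,y,x)$, so you invoke the condition with the roles swapped rather than matching its notation directly.
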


\begin{proof}
Let a positive-admitting integer-valued instance with $n$ agents be given, and let $\mathcal{A} = (A_1, \ldots, A_n)$ be an allocation chosen by the additive welfarist rule with $f$.
Assume, for the sake of contradiction, that $\mathcal{A}$ is not EF1.
Then, there exist distinct $i, j \in N$ such that agent $i$ envies agent $j$ by more than one good, i.e., $u_i(A_i) < u_i(A_j \setminus \{g\})$ for all $g \in A_j$.
Since the instance is integer-valued, we have $u_i(A_i) \leq u_i(A_j \setminus \{g\}) - 1$ for all $g \in A_j$, or equivalently, $u_i(A_i \cup \{g\}) + 1 \leq u_i(A_j)$.

Since $u_i(A_j) > u_i(A_i) \geq 0$, there exists a good in $A_j$ with positive utility to agent $i$.
Let $g' = \argmin_{g \in A_j,\, u_i(g) > 0} u_j(g) / u_i(g)$.
Note that $u_j(g') > 0$.
Indeed, if $u_j(g') = 0$, then transferring $g'$ from agent $j$'s bundle to agent $i$'s bundle increases $f(u_i(A_i))$ and does not decrease $f(u_j(A_j))$, thereby increasing $\sum_{k \in N} f(u_k(A_k))$ and contradicting the assumption that $\mathcal{A}$ is chosen by the additive welfarist rule with $f$.

Let $a = u_j(g')$, $b = u_i(g')$, $x = u_j(A_j \setminus \{g'\})$, and $y = u_i(A_i)$.
Note that $a, b \in \Zplus$ and $x, y \in \Zpluszero$, and that $\sum_{k \in N} f(u_k(A_k)) = f(y) + f(x+a) + \sum_{k \in N \setminus \{i, j\}} f(u_k(A_k))$.
By the choice of $g'$, we have
\begin{align*}
    \frac{u_j(g')}{u_i(g')} 
    \leq \frac{\sum_{g \in A_j,\, u_i(g) > 0} u_j(g)}{\sum_{g \in A_j,\, u_i(g) > 0} u_i(g)}
    \leq \frac{u_j(A_j)}{u_i(A_j)}
    \leq \frac{u_j(A_j)}{u_i(A_i \cup \{g'\}) + 1};
\end{align*}
this implies that
\begin{align*}
    \frac{x+a}{a} = \frac{u_j(A_j)}{u_j(g')} \geq \frac{u_i(A_i \cup \{g'\}) + 1}{u_i(g')} = \frac{y+b+1}{b},
\end{align*}
or $x/a \geq (y+1)/b$.
By {\condintgeneral}b, we have $f(y+b) - f(y) > f(x+a) - f(x)$.

Let $\mathcal{B} = (B_1, \ldots, B_n)$ be the same allocation as $\mathcal{A}$ except that $g'$ is transferred from agent $j$'s bundle to agent $i$'s bundle.
Then, $\sum_{k \in N} f(u_k(B_k)) = f(y+b) + f(x) + \sum_{k \in N \setminus \{i, j\}} f(u_k(A_k))$.
Since $\mathcal{A}$ is chosen by the additive welfarist rule with $f$, we must have $\sum_{k \in N} f(u_k(B_k)) \leq \sum_{k \in N} f(u_k(A_k))$.
Rearranging the terms, we get $f(y+b) - f(y) \leq f(x+a) - f(x)$, which contradicts {\condintgeneral}b.
Therefore, $\mathcal{A}$ is EF1.
\end{proof}

We demonstrate that the necessary condition given as {\condintgeneral}a is fairly tight, in the sense that it is stronger than {\condinttwovalue}, which characterizes the functions defining the additive welfarist rules that guarantee EF1 for integer-valued two-value instances.

\begin{restatable}{proposition}{propintegergeneralimpliestwovalue}
\label{prop:integer_general_implies_twovalue}
Let $f: \Rpluszero \to \R \cup \{-\infty\}$ be a strictly increasing function.
If $f$ satisfies {\condintgeneral}a, then it satisfies {\condinttwovalue}.
\end{restatable}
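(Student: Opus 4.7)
The plan is to derive both halves of {\condinttwovalue} from {\condintgeneral}a, using as a key lever the fact that {\condintgeneral}a forces the unit differences $f(m+1) - f(m)$ to be strictly decreasing in $m \in \Zpluszero$. Indeed, setting $a = b = 1$ in {\condintgeneral}a yields $f(k) - f(k-1) > f(k+1) - f(k)$ for every $k \in \Zplus$, which is exactly this monotonicity.

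For the right inequality of {\condinttwovalue}, namely $\Delta_{f, k+1}(a) > \Delta_{f, k+2}(1)$, I would first apply {\condintgeneral}a with the specialization $(k, a, b) \mapsto (k+2, 1, a)$ to obtain
\[
f((k+3)a - 1) - f((k+2)a - 1) > \Delta_{f, k+2}(1).
\]
It then suffices to show $\Delta_{f, k+1}(a) \ge f((k+3)a - 1) - f((k+2)a - 1)$. Expressing both sides as telescoping sums of $a$ consecutive unit differences---on the left starting at position $(k+1)a$ and on the right starting at position $(k+2)a - 1$---and noting that $(k+1)a \le (k+2)a - 1$ for any $a \ge 1$, the strict decreasingness of unit differences gives the comparison term by term. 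Chaining with the strict inequality above yields the desired bound.

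For the left inequality $f((\ell+1)b + ra) - f(\ell b + ra) > \Delta_{f, k+1}(a)$, under the hypotheses $a \ge b$ and $(k+1)b > \ell b + ra$, I would apply {\condintgeneral}a with the specialization $(k, a, b) \mapsto (k+1, a, b)$ to obtain
\[
f((k+2)b - 1) - f((k+1)b - 1) > \Delta_{f, k+1}(a).
\]
It then suffices to show $f((\ell+1)b + ra) - f(\ell b + ra) \ge f((k+2)b - 1) - f((k+1)b - 1)$. Again I would write both sides as telescoping sums of $b$ consecutive unit differences, starting at positions $\ell b + ra$ on the left and $(k+1)b - 1$ on the right. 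Integrality converts the strict hypothesis $(k+1)b > \ell b + ra$ into $\ell b + ra \le (k+1)b - 1$, so the strict decreasingness of unit differences yields the term-by-term comparison, and chaining with the strict inequality above gives the result.

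The main subtlety is that no single application of {\condintgeneral}a produces either half of {\condinttwovalue} directly. The trick is to invoke {\condintgeneral}a twice in distinct guises: once as the ``decreasing unit differences'' principle, which lets us shift positions by telescoping, and once with a parameter specialization that produces a bridging quantity---either $f((k+2)b-1) - f((k+1)b-1)$ or $f((k+3)a-1) - f((k+2)a-1)$---connecting the two sides of the target inequality. The integrality of all the indices, which converts the strict hypothesis $(k+1)b > \ell b + ra$ into the weak one $\ell b + ra \le (k+1)b - 1$, is what makes the bridging quantity align cleanly with the target left-hand side via the term-by-term comparison.
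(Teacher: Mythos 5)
Your proposal is correct and follows essentially the same route as the paper's proof: derive the strictly decreasing unit differences ({\condintbinary}) from the $a=b=1$ case of {\condintgeneral}a, use telescoping sums of unit differences to shift to the bridging quantities $f((k+2)b-1)-f((k+1)b-1)$ and $f((k+3)a-1)-f((k+2)a-1)$, and then apply {\condintgeneral}a once more with the same parameter specializations. No gaps.
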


\begin{proof}
When $a = b = 1$ in {\condintgeneral}a, we have $\Delta_{f, k-1}(1) = f(k) - f(k-1) > \Delta_{f, k}(1)$ for all $k \in \Zplus$, which is {\condintbinary}.
This implies that $f$ satisfies {\condintbinary}.

Let $a, b \in \Zplus$ and $k, \ell, r \in \Zpluszero$ be given such that $a \geq b$ and $(k+1)b > \ell b + ra$.
We have
\begin{align*}
    &f((\ell+1)b+ra) - f(\ell b+ra) \\
    &= \sum_{t=0}^{b-1} \Delta_{f, \ell b+ra+t}(1) \tag{by telescoping sum} \\
    &\geq \sum_{t=0}^{b-1} \Delta_{f, (k+1)b-1+t}(1) \tag{by {\condintbinary} and \protect{$(k+1)b - 1 \geq \ell b + ra$}} \\
    &= f((k+2)b-1) - f((k+1)b-1) \tag{by telescoping sum} \\
    &> \Delta_{f, k+1}(a), \tag{by {\condintgeneral}a}
\end{align*}
which establishes the left inequality of {\condinttwovalue}.
Now,
\begin{align*}
    \Delta_{f, k+1}(a) &= \sum_{t=0}^{a-1} \Delta_{f, (k+1)a+t}(1) \tag{by telescoping sum} \\
    &\geq \sum_{t=0}^{a-1} \Delta_{f, (k+2)a-1+t}(1) \tag{by {\condintbinary} and \protect{$(k+2)a-1 \geq (k+1)a$}} \\
    &= f((k+3)a-1) - f((k+2)a-1) \tag{by telescoping sum} \\
    &> \Delta_{f, k+2}(1), \tag{by {\condintgeneral}a}
\end{align*}
establishing the right inequality of {\condinttwovalue}.
\end{proof}

We now present some examples.
We show that just like for the subclasses of integer-valued identical-good instances (\Cref{prop:integer_idengood_log}), binary instances (\Cref{prop:integer_idengood_implies_binary}), and integer-valued two-value instances (\Cref{prop:integer_twovalue_log}), the additive welfarist rule with the modified logarithmic function $\lambda_c$ for $0 \leq c \leq 1$ guarantees EF1 for all integer-valued instances.

\begin{restatable}{proposition}{propintegergenerallog}
\label{prop:integer_general_log}
The function $\lambda_c$ satisfies {\condintgeneral}b if and only if $0 \leq c \leq 1$.
\end{restatable}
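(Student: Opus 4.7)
The plan is to translate Condition~{\condnointgeneral}b for $\lambda_c$ into a clean algebraic inequality and then analyze it directly. For $\lambda_c(x) = \log(x+c)$, the condition $\lambda_c(y+b) - \lambda_c(y) > \lambda_c(x+a) - \lambda_c(x)$ is equivalent, after exponentiating, to
\begin{align*}
\frac{y+b+c}{y+c} > \frac{x+a+c}{x+c},
\end{align*}
and after cross-multiplication and cancellation this simplifies to
\begin{align*}
b(x+c) > a(y+c), \text{ i.e., } bx - ay + c(b-a) > 0.
\end{align*}
So the verification of {\condnointgeneral}b reduces to showing $bx - ay + c(b-a) > 0$ whenever $a, b \in \Zplus$ and $x, y \in \Zpluszero$ satisfy $bx \geq a(y+1)$.

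For sufficiency ($0 \leq c \leq 1$), I would use the hypothesis $bx \geq a(y+1)$ to obtain the bound
\begin{align*}
bx - ay + c(b-a) \geq a + c(b-a) = a(1-c) + bc.
\end{align*}
Since $a, b \geq 1$, the two terms $a(1-c)$ and $bc$ are both non-negative, and they cannot both be zero (that would require simultaneously $c = 1$ and $c = 0$). Hence the expression is strictly positive, establishing {\condnointgeneral}b.

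For necessity ($c > 1$), I would exhibit a counterexample. Pick $b = 1$, $y = 0$, and $x = a$ with $a \in \Zplus$ chosen sufficiently large (specifically $a > c/(c-1)$); the hypothesis $bx = a \geq a = a(y+1)$ is satisfied, but $bx - ay + c(b-a) = a + c(1-a) = a(1-c) + c$, which is negative for large enough $a$. So {\condnointgeneral}b fails. This step is the slight obstacle in that one must be careful to make sure $a, b \in \Zplus$ (rather than real) can be chosen to realize the failure, but since $c/(c-1)$ is a finite real number for any $c > 1$, integer $a$ exceeding this bound clearly exists, and so no real difficulty arises. The whole proof is therefore essentially the one algebraic identity $b(x+c) - a(y+c) = (bx - ay) + c(b-a)$ combined with the bound from the hypothesis.
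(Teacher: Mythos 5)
Your proof is correct, and the sufficiency direction is essentially the paper's argument in different clothing: the paper shows $(x+c)/a \geq x/a \geq (y+1)/b \geq (y+c)/b$ with at least one strict inequality, which is exactly your $b(x+c) > a(y+c)$; your version via $bx - ay \geq a$ and $a(1-c)+bc > 0$ is a clean equivalent. One small point you should make explicit: the "exponentiate and cross-multiply" equivalence presupposes $y+c > 0$, and when $y = c = 0$ one has $\lambda_0(0) = -\infty$, so the left-hand side of {\condintgeneral}b is $+\infty$ and the inequality holds trivially (the paper treats this case separately; your reduced inequality $bx > 0$ still happens to be true there since $x \geq a(y+1)/b > 0$, so nothing breaks, but the equivalence claim needs the caveat). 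Where you genuinely diverge is the necessity direction: the paper argues indirectly, invoking \Cref{prop:integer_idengood_log} and \Cref{prop:condition_three_eqv} to conclude that $\lambda_c$ violates {\condintidengood} for $c>1$, then \Cref{thm:integer_idengood} to produce a non-EF1 instance, and finally the contrapositive of \Cref{prop:integer_general_sufficient}. You instead exhibit the violating tuple $b=1$, $y=0$, $x=a$ with $a > c/(c-1)$ directly inside {\condintgeneral}b (which is in fact the same numerical counterexample underlying the paper's \Cref{prop:integer_idengood_log}, namely $\Delta_{\lambda_c,0}(1) \leq \Delta_{\lambda_c,1}(a)$). Your route is more self-contained and elementary; the paper's route buys nothing extra here beyond reusing already-proved machinery.
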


\begin{proof}
($\Leftarrow$) We show that $\lambda_c$ satisfies {\condintgeneral}b for $0 \leq c \leq 1$.
Let $a, b \in \Zplus$ and $x, y \in \Zpluszero$ such that $x/a \geq (y+1)/b$.
Note that $x \geq a(y+1)/b > 0$.

If $y = c = 0$, then $\lambda_c(y) = -\infty$, and hence $\lambda_c(y+b) - \lambda_c(y) > \lambda_c(x+a) - \lambda_c(x)$ holds since the left-hand side is $\infty$ while the right-hand side is finite, and hence {\condintgeneral}b is satisfied.
Therefore, we shall consider the case $y+c > 0$.

Now, we have 
\begin{align*}
    \lambda_c(y+b) - \lambda_c(y) = \log \left(\frac{y+b+c}{y+c}\right) = \log \left( 1 + \frac{b}{y+c} \right)
\end{align*}
and
\begin{align*}
    \lambda_c(x+a) - \lambda_c(x) = \log \left(\frac{x+a+c}{x+c}\right) = \log \left( 1 + \frac{a}{x+c} \right).
\end{align*}
On the other hand, we have
\begin{align*}
    \frac{x+c}{a} \geq \frac{x}{a} \geq \frac{y+1}{b} \geq \frac{y+c}{b},
\end{align*}
where the first inequality is strict if $c > 0$, and the last inequality is strict if $c < 1$; therefore, it holds that $(x+c)/a > (y+c)/b$.
This implies that $b/(y+c) > a/(x+c)$.
Since $\log$ is an increasing function, we have $\lambda_c(y+b) - \lambda_c(y) > \lambda_c(x+a) - \lambda_c(x)$.
Since $a, b, x, y$ were arbitrarily chosen, $\lambda_c(y+b) - \lambda_c(y) > \lambda_c(x+a) - \lambda_c(x)$ holds for all $a, b \in \Zplus$ and $x, y \in \Zpluszero$ such that $x/a \geq (y+1)/b$.
Therefore, $\lambda_c$ satisfies {\condintgeneral}b.

($\Rightarrow$) We show that $\lambda_c$ does not satisfy {\condintgeneral}b for $c > 1$.
By \Cref{prop:integer_idengood_log,prop:condition_three_eqv}, $\lambda_c$ does not satisfy {\condintidengood}, and by \Cref{thm:integer_idengood}, there exists a positive-admitting integer-valued (identical-good) instance with $n \geq 2$ agents such that not every allocation chosen by the welfarist rule with $f$ is EF1.
Then, by the contraposition of \Cref{prop:integer_general_sufficient}, $\lambda_c$ does not satisfy {\condintgeneral}b.
\end{proof}

On the other hand, leveraging {\condintgeneral}a, we show that the additive welfarist rule with the modified harmonic function $h_c$ for $-1 \leq c < -1/2$ does \emph{not} guarantee EF1 for the class of all integer-valued instances.
This stands in contrast to integer-valued identical-good instances (\Cref{prop:integer_idengood_harmonic}), binary instances (\Cref{prop:integer_idengood_implies_binary}), and integer-valued two-value instances (\Cref{prop:integer_twovalue_harmonic}).

\begin{restatable}{proposition}{propintegergeneralharmonic}
\label{prop:integer_general_harmonic}
The function $h_c$ does not satisfy {\condintgeneral}a when $-1 \leq c < -1/2$.
\end{restatable}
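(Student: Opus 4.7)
The strategy is to construct, for each $c\in[-1,-1/2)$, positive integers $k,a,b$ violating the inequality in {\condintgeneral}a. I will always take $a=1$, so that the right-hand side $\Delta_{h_c,k}(1)=h_c(k+1)-h_c(k)$ equals $\tfrac{1}{k+1+c}$ when $c>-1$, and $\tfrac{1}{k}$ when $c=-1$ (the two formulas agree under the natural convention arising from the index shift in the definition of $h_{-1}$). It then suffices to find $k$ and $b$ for which the left-hand side $h_c((k+1)b-1)-h_c(kb-1)$ is at most this right-hand side.

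The central computation is the limit
\begin{equation*}
\lim_{b\to\infty}\bigl(h_c((k+1)b-1)-h_c(kb-1)\bigr)=\log\!\left(1+\tfrac{1}{k}\right),
\end{equation*}
which follows by writing the left-hand side as a block sum $\sum_{s=0}^{b-1}\tfrac{1}{kb+s+c}=\tfrac{1}{b}\sum_{s=0}^{b-1}\tfrac{1}{k+(s+c)/b}$ and recognizing a Riemann sum for $\int_0^1\tfrac{dx}{k+x}=\log(1+1/k)$; the $c=-1$ case is analogous after a one-index shift. Consequently, the desired violation is achieved for some large $b$ as soon as $\log(1+1/k)<\tfrac{1}{k+1+c}$ for some $k$.

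For $c=-1$ this needed inequality is simply the classical $\log(1+1/k)<1/k$, valid for every $k\ge 1$, so any fixed $k$ and sufficiently large $b$ already suffice. For $-1<c<-1/2$, I would Taylor-invert the series $\log(1+1/k)=1/k-1/(2k^2)+1/(3k^3)-\cdots$ to obtain
\begin{equation*}
\frac{1}{\log(1+1/k)}=k+\frac{1}{2}-\frac{1}{12k}+O\!\left(\tfrac{1}{k^2}\right),
\end{equation*}
so the inequality $\log(1+1/k)<\tfrac{1}{k+1+c}$ is equivalent to $c<-\tfrac{1}{2}-\tfrac{1}{12k}+O(1/k^2)$; as the right side tends to $-1/2$ from below, every $c<-1/2$ satisfies it for all sufficiently large $k$. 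The main obstacle is the near-boundary regime $c\uparrow -1/2$: the slack $-\tfrac{1}{2}-c$ shrinks, forcing $k$ of order $1/(12(-\tfrac{1}{2}-c))$, after which $b$ must be chosen large enough that the truncation error in the Riemann-sum convergence (which is $O(1/b)$ with constant depending on $k$ and $c$) is absorbed by this slack. This is the only quantitative subtlety; it follows from an explicit bound $|\psi((k+1)b+c)-\psi(kb+c)-\log(1+1/k)|<C(k,c)/b$ derivable from the digamma asymptotic $\psi(x)=\log x-1/(2x)+O(1/x^2)$.
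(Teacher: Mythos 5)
Your proposal is correct and follows essentially the same route as the paper: fix $a=1$, observe that $\Delta_{h_c,k}(1)=\tfrac{1}{k+1+c}$ exceeds $\log(1+1/k)$ for suitable $k$ whenever $c<-1/2$, and then use the fact that $h_c((k+1)b-1)-h_c(kb-1)\to\log(1+1/k)$ as $b\to\infty$ to pick $b$ violating {\condintgeneral}a. The only (harmless) difference is that you obtain the key inequality $\log(1+1/k)<\tfrac{1}{k+1+c}$ from the asymptotic expansion of $1/\log(1+1/k)$, whereas the paper derives it from the limit $-1/2$ together with a separate monotonicity lemma for $\tfrac{1}{\log((k+1)/k)}-(k+1)$; your version needs only the limit.
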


To prove \Cref{prop:integer_general_harmonic}, we make use of the following lemma.

\begin{lemma}
\label{lem:log_limit_negative_half}
Let $f(x) = \frac{1}{\log((x+1)/x)}-(x+1)$.
Then, \[ \lim_{x \to \infty} f(x) = -1/2. \]
\end{lemma}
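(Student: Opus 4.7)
The plan is to compute the asymptotic expansion of $1/\log(1+1/x)$ as $x \to \infty$ via Taylor series, extract the constant term, and subtract off $x+1$.

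First I would rewrite $\log((x+1)/x) = \log(1+1/x)$ and apply the standard Maclaurin expansion
\[
\log(1+u) = u - \frac{u^2}{2} + O(u^3)
\]
with $u = 1/x$, valid for $x$ large enough. This yields
\[
\log\!\left(1+\frac{1}{x}\right) = \frac{1}{x} - \frac{1}{2x^2} + O\!\left(\frac{1}{x^3}\right) = \frac{1}{x}\left(1 - \frac{1}{2x} + O\!\left(\frac{1}{x^2}\right)\right).
\]

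Next I would invert this expression. Using $1/(1-v) = 1 + v + O(v^2)$ with $v = \frac{1}{2x} + O(1/x^2)$, one obtains
\[
\frac{1}{\log(1+1/x)} = x \cdot \frac{1}{1 - \frac{1}{2x} + O(1/x^2)} = x\left(1 + \frac{1}{2x} + O\!\left(\frac{1}{x^2}\right)\right) = x + \frac{1}{2} + O\!\left(\frac{1}{x}\right).
\]

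Finally, subtracting $(x+1)$ gives
\[
f(x) = \frac{1}{\log((x+1)/x)} - (x+1) = \left(x + \frac{1}{2} + O\!\left(\frac{1}{x}\right)\right) - (x+1) = -\frac{1}{2} + O\!\left(\frac{1}{x}\right),
\]
so letting $x \to \infty$ yields $\lim_{x\to\infty} f(x) = -1/2$, as claimed.

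The main obstacle is merely bookkeeping: one must carry the expansion of $\log(1+1/x)$ out to second order in $1/x$ (the first-order term alone gives only the leading $x$, and the constant $-1/2$ emerges precisely from the $-1/(2x^2)$ correction after inversion). No subtle analytic difficulty arises since all expansions are absolutely convergent for $x > 1$ and the error bounds are standard.
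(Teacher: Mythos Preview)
Your proof is correct and follows essentially the same approach as the paper: both expand $\log(1+1/x)$ to second order and invert. The only cosmetic difference is that the paper writes out explicit two-sided bounds $\frac{1}{x}-\frac{1}{2x^2}\le \log(1+1/x)\le \frac{1}{x}-\frac{1}{2x^2}+\frac{1}{3x^3}$, inverts each side, computes each limit algebraically to be $-1/2$, and invokes the squeeze theorem, whereas you package the same computation in big-$O$ notation.
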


\begin{proof}
By Taylor's theorem, for all large enough $x$,
\[
    \frac{1}{x} - \frac{1}{2x^2} \; \leq \; \log\left(1 + \frac{1}{x}\right) \; \leq \; \frac{1}{x} - \frac{1}{2x^2} + \frac{1}{3x^3},
\]
and so
\[
    \frac{1}{\frac{1}{x} - \frac{1}{2x^2}} - (x+1) \; \geq \; f(x) \; \geq \; \frac{1}{\frac{1}{x} - \frac{1}{2x^2} + \frac{1}{3x^3}} - (x+1).
\]
Now, we have
\begin{align*}
    \lim_{x\to\infty} \left( \frac{1}{\frac{1}{x} - \frac{1}{2x^2}} - (x+1) \right)
    = \lim_{x\to\infty} \left( \frac{- 1 + \frac{1}{x}}{2 - \frac{1}{x}} \right)
    = -\frac{1}{2}
\end{align*}
and
\begin{align*}
    \lim_{x\to\infty} \left( \frac{1}{\frac{1}{x} - \frac{1}{2x^2} + \frac{1}{3x^3}} - (x+1) \right) 
    = \lim_{x\to\infty} \left( \frac{- 3 + \frac{1}{x} - \frac{2}{x^2}}{6 - \frac{3}{x} + \frac{2}{x^2}} \right)
    = -\frac{1}{2}.
\end{align*}
By the squeeze theorem, we get the required limit.
\end{proof}

\begin{proof}[Proof of \Cref{prop:integer_general_harmonic}]
Let $-1 \leq c < -1/2$, and let $f(x) = \frac{1}{\log((x+1)/x)}-(x+1)$.
By \Cref{lem:log_increasing,lem:log_limit_negative_half}, $f$ is strictly increasing on the integer domain and its limit when $x \to \infty$ is equal to $-1/2$, so there exists $k \in \Zplus$ such that $f(k) > c$.
This implies that $1/(k+1+c) > \log ((k+1)/k)$.
Since $\Delta_{h_c, k}(1) = h_c(k+1) - h_c(k) = 1/(k+1+c)$, it holds that $\Delta_{h_c, k}(1) > \log ((k+1)/k)$.

On the other hand, we have
\begin{align*}
    &\left. \lim_{b \to \infty} \middle( h_c((k+1)b-1) - h_c(kb-1) \right) \\
    &= \left. \lim_{b \to \infty} \middle( \log((k+1)b+c) - \log(kb+c) \right) \tag{by \Cref{lem:limit_harmonic_log_difference}} \\
    &= \log \left(\frac{k+1}{k}\right),
\end{align*}
where the last equality follows from the proof of \Cref{lem:limit_harmonic_scaled_difference}.

Then, by definition of limit, there exists $b \in \Zplus$ such that
\begin{align*}
    h_c((k+1)b-1) - h_c(kb-1) < \Delta_{h_c, k}(1).
\end{align*}
Thus, {\condintgeneral}a is not satisfied.
\end{proof}

\section{Conclusion}
\label{sec:conclusion}

In this paper, we have characterized additive welfarist rules that guarantee EF1 allocations for various classes of instances.
In the real-valued case, we strengthened the result of \citet{Suksompong23} by showing that only the maximum Nash welfare (MNW) rule ensures EF1 even for the most restricted class of identical-good instances.
On the other hand, in the practically important case where all values are integers, we demonstrated that even for the most general setting with no additional restrictions, there is a wide range of additive welfarist rules that always return EF1 allocations.

Since we have established the existence of several alternatives that perform as well as MNW in terms of guaranteeing EF1 allocations for integer-valued instances, a natural follow-up direction would be to compare these rules with respect to other measures, both to MNW and to one another.
As illustrated in \Cref{ex:mnw}, some additive welfarist rules may produce more preferable allocations for certain instances than others.
It would be interesting to formalize this observation, for example, by comparing their prices in terms of social welfare, in the same vein as the ``price of fairness'' \citep{CaragiannisKaKa12,BeiLuMa21,CelineDzKo23,LiLiLu24}.

Other possible future directions include completely characterizing additive welfarist rules for normalized instances with two agents (\Cref{subsec:real_normalized}) and for general integer-valued instances (\Cref{subsec:integer_general}) by tightening their respective necessary and sufficient conditions.
Obtaining characterizations of welfarist rules that are not necessarily additive is also a possible extension of our work.
Beyond EF1, one could consider other fairness notions from the literature, such as proportionality up to one good (PROP1).
Finally, another intriguing avenue is to study the problem of computing (or approximating) an allocation produced by various additive welfarist rules, as has been done extensively for the MNW rule \citep{Lee17,ColeGk18,AmanatidisBiFi21,AkramiChHo22}.


\subsection*{Acknowledgments}

This work was partially supported by the Singapore Ministry of Education under grant number MOE-T2EP20221-0001 and by an NUS Start-up Grant.
A preliminary version of the paper appeared in Proceedings of the 24th International Conference on Autonomous Agents and Multiagent Systems, May 2025.
We would like to thank the anonymous reviewers of AAMAS 2025 and ACM Transactions on Economics and Computation for their valuable comments, as well as Sanjay Jain and Frank Stephan for their help with the proof of \Cref{lem:log_increasing}.

\bibliographystyle{ACM-Reference-Format}
\bibliography{main}

\appendix

\section{Omitted Proofs}
\label{ap:proofs}

\subsection{Proof of \Cref{prop:integer_idengood_harmonic}}
\label{ap:integer_idengood_harmonic}

We shall prove \Cref{prop:integer_idengood_harmonic} via a series of intermediate results.

\begin{lemma}
\label{lem:log_increasing}
Let $x, y \in \Zplus$ be such that $x < y$.
Then, 
\begin{align}
\label{eq:log_increasing}
    \frac{1}{\log((x+1)/x)}-x < \frac{1}{\log((y+1)/y)}-y.
\end{align}
\end{lemma}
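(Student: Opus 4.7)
The plan is to prove the stronger statement that the real-valued function $f(x) = \frac{1}{\log((x+1)/x)} - x$ is strictly increasing on the entire interval $(0, \infty)$, which immediately yields \eqref{eq:log_increasing} upon restriction to positive integers. This reduces the problem to a single-variable calculus exercise.

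Write $g(x) = \log(1 + 1/x) = \log(x+1) - \log(x)$, so that $g'(x) = -\frac{1}{x(x+1)}$. Then
\[
f'(x) \;=\; -\frac{g'(x)}{g(x)^2} - 1 \;=\; \frac{1}{x(x+1)\, g(x)^2} - 1,
\]
so $f'(x) > 0$ is equivalent to $g(x)^2 < \frac{1}{x(x+1)}$, i.e., to the inequality
\[
\log\!\left(1 + \tfrac{1}{x}\right) \;<\; \frac{1}{\sqrt{x(x+1)}} \qquad (x > 0).
\]
Substituting $t = 1/x$ and computing $\sqrt{x(x+1)} = \sqrt{1+t}/t$, this is in turn equivalent to
\[
\log(1+t) \;<\; \frac{t}{\sqrt{1+t}} \qquad (t > 0). \tag{$\star$}
\]

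To establish $(\star)$, I would define $h(t) = \frac{t}{\sqrt{1+t}} - \log(1+t)$ and verify that $h(0) = 0$. A direct differentiation gives
\[
h'(t) \;=\; \frac{1 + t/2}{(1+t)^{3/2}} - \frac{1}{1+t} \;=\; \frac{(1 + t/2) - \sqrt{1+t}}{(1+t)^{3/2}}.
\]
Since $(1 + t/2)^2 = 1 + t + t^2/4 > 1 + t$ for $t > 0$, we get $1 + t/2 > \sqrt{1+t}$, so $h'(t) > 0$ on $(0, \infty)$, which together with $h(0) = 0$ yields $(\star)$.

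There is essentially no obstacle here beyond carefully managing the algebra and verifying the derivative computation; the only mildly delicate point is recognizing that the reformulation in $t$ leads to an inequality between $\log(1+t)$ and $t/\sqrt{1+t}$ that can be handled by the elementary observation $(1 + t/2)^2 > 1 + t$. Since $f'(x) > 0$ for all $x > 0$, $f$ is strictly increasing on $(0, \infty)$, and in particular $f(x) < f(y)$ whenever $x, y \in \Zplus$ with $x < y$, establishing the lemma.
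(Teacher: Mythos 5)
Your proposal is correct, and it takes a genuinely different — and sharper — route than the paper after the common starting point. Both arguments consider the real function $f(z) = \frac{1}{\log((z+1)/z)} - z$ and compute the same derivative $f'(z) = \frac{1}{z(z+1)(\log((z+1)/z))^2} - 1$. The paper then lower-bounds $f'$ by truncating the Taylor series of $\log(1+1/z)$ at the cubic term; this bound is only strong enough to give $f'(z) > 0$ for $z \geq 7$ (the relevant cubic changes sign near $z \approx 6.63$), so the paper must finish by numerically verifying $f(1) < f(2) < \cdots < f(7)$. You instead observe that $f'(x) > 0$ is \emph{exactly} equivalent to $\log(1+1/x) < 1/\sqrt{x(x+1)}$, i.e., to the inequality $\log(1+t) < t/\sqrt{1+t}$ for $t>0$ (which is the classical fact that the logarithmic mean of $x$ and $x+1$ exceeds their geometric mean), and you prove it cleanly via $h(t) = t/\sqrt{1+t} - \log(1+t)$, $h(0)=0$, and $h'(t) = \bigl(1 + t/2 - \sqrt{1+t}\bigr)/(1+t)^{3/2} > 0$, which follows from $(1+t/2)^2 > 1+t$. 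All of these computations check out. Your approach buys monotonicity of $f$ on all of $(0,\infty)$ in one uniform argument, with no case analysis and no numerical root-finding; the paper's approach is more mechanical but requires the finite verification step for the small integers, which is a mild blemish your argument avoids entirely.
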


\begin{proof}
Let $f(z) = \frac{1}{\log((z+1)/z)}-z$ be defined on $z \in \Rplus$.
We have
\begin{align*}
    f'(z) = \frac{1}{z(z+1)\left( \log \left(\frac{z+1}{z}\right) \right)^2} - 1.
\end{align*}
We first show that $f'(z) > 0$ for all $z \geq 7$.

Let $z \geq 7$. By Taylor's theorem, we have
\begin{align*}
    \log\left(1 + \frac{1}{z}\right) &= \sum_{t=1}^\infty {\frac{(-1)^{t+1}}{tz^{t}}} \\
    &= \frac{1}{z} - \frac{1}{2z^2} + \frac{1}{3z^3} - \sum_{t=2}^\infty \left(\frac{1}{2tz^{2t}} - \frac{1}{(2t+1)z^{2t+1}}\right) \\
    &< \frac{1}{z} - \frac{1}{2z^2} + \frac{1}{3z^3} \tag{since $z \geq 1$}.
\end{align*}
Then,
\begin{align*}
    f'(z) &> \frac{1}{(z^2 + z)\left(\frac{1}{z} - \frac{1}{2z^2} + \frac{1}{3z^3}\right)^2} - 1 
    = \frac{36z^5}{(z + 1)\left(6z^2 - 3z + 2\right)^2} - 1.
\end{align*}
Expanding the denominator of the above fraction gives
\[
    (z + 1)\left(6z^2 - 3z + 2\right)^2 = 36z^5 - 3z^3 + 21z^2 - 8z + 4.
\]
It can be shown that the cubic equation $-3x^3 + 21x^2 - 8x + 4 = 0$ has exactly one real root at $x \approx 6.62802$.
Moreover, for all $x > 6.62802$, we have $-3x^3 + 21x^2 - 8x + 4 < 0$, and hence $(x + 1)\left(6x^2 - 3x + 2\right)^2 < 36x^5$.
Since $z \geq 7$, it follows that
\begin{align*}
    f'(z)
    > \frac{36z^5}{(z + 1)\left(6z^2 - 3z + 2\right)^2} - 1
    > \frac{36z^5}{36z^5} - 1
    > 0.
\end{align*}

Now, since $f$ is differentiable on $\Rplus$, we have shown that \eqref{eq:log_increasing} holds for all $x, y \in \Zplus$ with $7 \leq x < y$.
For the remaining cases, one can verify that $f(1) < f(2) < \cdots < f(7)$.
Hence, \eqref{eq:log_increasing} holds for all $x, y \in \Zplus$ with $x < y$.
\end{proof}

The limits in the following three lemmas (\ref{lem:euler_mascheroni_constant}--\ref{lem:limit_harmonic_scaled_difference}) are taken over only integers (as opposed to real numbers).

\begin{lemma}[\citet{Euler1740,Mascheroni1790}]
\label{lem:euler_mascheroni_constant}
Let $\gamma \approx 0.57721$ be the Euler--Mascheroni constant.
Then, $\lim_{x \to \infty} (h_0(x) - \log x) = \gamma$.
\end{lemma}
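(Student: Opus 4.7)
The statement is essentially the classical definition of the Euler--Mascheroni constant, specialized to the integer domain. Since the excerpt notes that when the domain of $h_0$ is restricted to $\Zpluszero$ one has $h_0(x) = \sum_{t=1}^{x} 1/t$, and since the limit here is taken over integers only (as the lemma explicitly flags), the plan is simply to exhibit that the sequence $a_x := h_0(x) - \log x = \sum_{t=1}^{x} \frac{1}{t} - \log x$ converges as $x \to \infty$ through positive integers, and to invoke the standard fact that this limit is precisely the Euler--Mascheroni constant $\gamma$.

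The plan is to establish convergence by the monotone convergence theorem. First I would show that $(a_x)_{x \in \Zplus}$ is strictly decreasing: the difference $a_{x+1} - a_x = \frac{1}{x+1} - \log\!\left(\frac{x+1}{x}\right)$ is negative because the Taylor expansion of $\log(1+1/x)$ yields $\log(1+1/x) > \frac{1}{x+1}$ for all $x \geq 1$ (indeed, $\log(1+u) > u/(1+u)$ for $u > 0$, applied with $u = 1/x$). Second, I would show that $(a_x)$ is bounded below by $0$: comparing the Riemann sum with the integral, $h_0(x) = \sum_{t=1}^{x} \frac{1}{t} \geq \int_{1}^{x+1} \frac{1}{t}\, \dd t = \log(x+1) > \log x$, so $a_x > 0$. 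Hence $\lim_{x \to \infty} a_x$ exists and is non-negative.

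Having established existence of the limit, I would then identify it with $\gamma$ by appealing to the standard definition of the Euler--Mascheroni constant (e.g., as given by Euler and Mascheroni), which is precisely $\gamma = \lim_{x \to \infty} \left(\sum_{t=1}^{x} \frac{1}{t} - \log x\right)$. Since the lemma is stated as a classical result with the original references cited, no further justification beyond quoting the definition is needed.

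There is no real obstacle here: the entire content of the lemma is the well-known convergence theorem defining $\gamma$. The only minor care needed is to emphasize that the restricted formula $h_0(x) = \sum_{t=1}^{x} 1/t$ is used (so that one is working with the classical harmonic numbers), and that the limit is taken through integers, matching the standard setting in which $\gamma$ is defined.
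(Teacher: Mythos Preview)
Your proposal is correct and standard. Note that the paper does not supply its own proof of this lemma at all: it is stated with the classical references to Euler and Mascheroni and used as a black box. Your monotone-convergence argument (decreasing via $\log(1+1/x) > 1/(x+1)$, bounded below via $\sum_{t=1}^{x} 1/t > \log x$) is the textbook justification and goes slightly beyond what the paper requires; simply appealing to the definition of $\gamma$ as in your final step already matches the paper's treatment.
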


\begin{lemma}
\label{lem:limit_harmonic_log_difference}
For any $c \geq -1$, $\lim_{x \to \infty} (h_c(x) - \log(x+c+1))$ exists and is finite.
\end{lemma}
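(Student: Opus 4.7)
The plan is to split into the two cases $c > -1$ and $c = -1$ and reduce both to the Euler--Mascheroni limit (\Cref{lem:euler_mascheroni_constant}).

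\medskip

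\noindent\textbf{Case $c > -1$.} First I would write, for every positive integer $x$,
\begin{align*}
    h_c(x) - h_0(x) = \sum_{t=1}^x \left( \frac{1}{t+c} - \frac{1}{t} \right) = -c \sum_{t=1}^x \frac{1}{t(t+c)}.
\end{align*}
Since $c > -1$, we have $t + c \geq 1 + c > 0$ for every $t \geq 1$, so each term is well defined and the general term satisfies $\frac{1}{t(t+c)} = O(1/t^2)$ as $t \to \infty$. Hence the series $\sum_{t=1}^\infty \frac{1}{t(t+c)}$ converges absolutely, and so $\lim_{x \to \infty}(h_c(x) - h_0(x))$ exists and is finite. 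Next I would decompose
\begin{align*}
    h_c(x) - \log(x+c+1) &= \bigl(h_c(x) - h_0(x)\bigr) + \bigl(h_0(x) - \log x\bigr) \\
    &\quad + \bigl(\log x - \log(x+c+1)\bigr).
\end{align*}
The first bracket tends to a finite limit by the above; the second tends to the Euler--Mascheroni constant $\gamma$ by \Cref{lem:euler_mascheroni_constant}; the third equals $\log\frac{x}{x+c+1}$, which tends to $\log 1 = 0$. Adding these finite limits yields the desired conclusion.

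\medskip

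\noindent\textbf{Case $c = -1$.} Here the paper's integer-domain formula gives $h_{-1}(x) = \sum_{t=1}^{x-1} \frac{1}{t} = h_0(x-1)$ for every integer $x \geq 1$, while $\log(x + c + 1) = \log x$. Thus
\begin{align*}
    h_{-1}(x) - \log(x+c+1) = \bigl(h_0(x-1) - \log(x-1)\bigr) + \log\tfrac{x-1}{x},
\end{align*}
which, by \Cref{lem:euler_mascheroni_constant} and the fact that $\log\frac{x-1}{x} \to 0$, tends to the finite value $\gamma$.

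\medskip

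Both cases are routine comparison arguments, so no single step is a real obstacle; the only mildly delicate point is ensuring that the telescoping/rewriting is valid at $c = -1$, where one cannot directly use the $c > -1$ series expression because the $t = 1$ term would blow up. Handling this case separately via the identity $h_{-1}(x) = h_0(x-1)$ sidesteps the issue cleanly.
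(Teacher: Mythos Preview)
Your proof is correct, and it takes a genuinely different and more elementary route than the paper's. For $c>-1$ the paper represents $h_c(x)$ and $\log(x+c+1)$ as integrals, rewrites the difference as $\int_2^{x+1}\bigl(\tfrac{1}{\lfloor t\rfloor+c}-\tfrac{1}{t+c}\bigr)\,dt$ plus boundary terms, and then shows the improper integral converges by comparison with the $c=0$ case (which is itself reduced back to the Euler--Mascheroni limit). You instead compare $h_c$ directly to $h_0$ via the term-by-term identity $h_c(x)-h_0(x)=-c\sum_{t=1}^x \tfrac{1}{t(t+c)}$ and observe that this is an absolutely convergent series since the terms are $O(1/t^2)$; the remaining pieces $(h_0(x)-\log x)$ and $\log\tfrac{x}{x+c+1}$ are handled exactly as the paper would. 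Your argument is shorter and avoids the integral machinery entirely; the paper's integral comparison is more self-contained in that it does not explicitly single out $h_0$ as a reference point, but it is also considerably longer. For $c=-1$ the two proofs are essentially identical, both using $h_{-1}(x)=h_0(x-1)$ and the Euler--Mascheroni limit.
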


\begin{proof}
We first consider the case $c = -1$. Observe that $h_{-1}(x) = h_0(x-1)$ for any integer $x > 0$. Furthermore, we have 
\begin{align*}
    \lim_{x \to \infty} (\log x - \log(x-1)) &= \lim_{x \to \infty} \log(\frac{x}{x-1}) 
    = \lim_{x \to \infty} \log\left(1 + \frac{1}{x-1}\right) 
    = 0.
\end{align*}
It follows that
\begin{align*}
    \lim_{x \to \infty} (h_{-1}(x) - \log x) &= \lim_{x \to \infty} (h_{-1}(x) - \log(x-1)) \\
    &= \lim_{x \to \infty} (h_0(x-1) - \log(x-1)) \\
    &= \gamma, \tag{by \Cref{lem:euler_mascheroni_constant}}
\end{align*}
where $\gamma$ is the Euler--Mascheroni constant, thus proving the result for $c = -1$.

For $c > -1$, we have, for any integer $x > 2$,
\begin{align*}
    h_c(x) &= \sum_{k=1}^x \frac{1}{k+c} 
    = \sum_{k=1}^x \int_k^{k+1} \frac{1}{\lfloor t\rfloor+c} \, \dd t 
    = \int_1^{x+1} \frac{1}{\lfloor t\rfloor+c} \, \dd t
\end{align*}
and 
\begin{align*}
    \log(x+c+1) &= \int_1^{x+c+1} \frac{1}{s} \, \dd s \\
    &= \int_{1-c}^{x+1} \frac{1}{t+c} \, \dd t \tag{by substituting $s = t+c$} \\
    &= \int_{1}^{x+1} \frac{1}{t+c} \, \dd t + \int_{1-c}^{1} \frac{1}{t+c} \, \dd t.
\end{align*}
Hence,
\begin{align*}
    h_c(x) - \log(x+c+1) 
    &= \int_{1}^{x+1} \left(\frac{1}{\lfloor t\rfloor+c} - \frac{1}{t+c}\right) \, \dd t - \int_{1-c}^{1} \frac{1}{t+c} \, \dd t \\
    &= \int_{2}^{x+1} \left(\frac{1}{\lfloor t\rfloor+c} - \frac{1}{t+c}\right) \, \dd t
    + \int_{1}^{2} \frac{1}{\lfloor t\rfloor+c} \, \dd t 
    - \int_{1-c}^{2} \frac{1}{t+c} \, \dd t.
\end{align*}
Then, we have
\begin{align*}
    \lim_{x \to \infty} (h_c(x) - \log(x+c+1)) 
    &= \int_{2}^{\infty} \left(\frac{1}{\lfloor t\rfloor+c} - \frac{1}{t+c}\right) \, \dd t + D_1(c) - D_2(c), 
\end{align*}
where $D_1(c) = \int_{1}^{2} \frac{1}{\lfloor t\rfloor+c} \, \dd t$ and $D_2(c) = \int_{1-c}^{2} \frac{1}{t+c} \, \dd t$ are finite.
Therefore, it suffices to prove that $\int_{2}^{\infty} \left(\frac{1}{\lfloor t\rfloor+c} - \frac{1}{t+c}\right) \, \dd t$ is finite.
When $c = 0$, we have
\begin{align*}
    \int_{2}^{\infty} \left(\frac{1}{\lfloor t\rfloor} - \frac{1}{t}\right) \, \dd t + D_1(0) - D_2(0)
    &= \lim_{x \to \infty} (h_0(x) - \log(x+1)) \\
    &= \lim_{x \to \infty} (h_{-1}(x+1) - \log(x+1)) \\
    &= \lim_{y \to \infty} (h_{-1}(y) - \log y) \\
    &= \gamma
\end{align*}
as shown in the case $c = -1$; hence, $\int_{2}^{\infty} \left(\frac{1}{\lfloor t\rfloor} - \frac{1}{t}\right) \, \dd t$ is finite.
Now, for $t \geq 2$, we have $\lfloor t \rfloor + c > \lfloor t \rfloor / 2$ and $t+c > t/2$, and so
\begin{align*}
    \frac{1}{\lfloor t\rfloor+c} - \frac{1}{t+c} = \frac{t - \lfloor t \rfloor}{(\lfloor t\rfloor+c)(t+c)}
    \le \frac{t - \lfloor t \rfloor}{\frac{\lfloor t\rfloor}{2} \cdot \frac{t}{2}}
    = 4 \left(\frac{1}{\lfloor t\rfloor} - \frac{1}{t}\right).
\end{align*}
Since $\int_{2}^{\infty} \left(\frac{1}{\lfloor t\rfloor} - \frac{1}{t}\right) \, \dd t$ is finite, we get that $\int_{2}^{\infty} \left(\frac{1}{\lfloor t\rfloor+c} - \frac{1}{t+c}\right) \, \dd t$ is finite by the comparison test for integrals.
This shows that $\lim_{x \to \infty} (h_c(x) - \log(x+c+1))$ exists for $c > -1$.
\end{proof}

Using \Cref{lem:limit_harmonic_log_difference}, we can now prove the following result.

\begin{lemma}
\label{lem:limit_harmonic_scaled_difference}
Let $k \in \Zpluszero$ and $c \geq -1$. Then,
\[
    \lim_{x \to \infty} \Delta_{h_c, k}(x) = \lim_{x \to \infty} \Delta_{\lambda_{c+1},k}(x) = \log\left(\frac{k+1}{k}\right).
\]
\end{lemma}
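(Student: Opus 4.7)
The plan is to treat the two limits in turn, handling $\lambda_{c+1}$ by direct calculation and reducing the limit for $h_c$ to that of $\lambda_{c+1}$ by subtracting the asymptotic $\log(x+c+1)$ via \Cref{lem:limit_harmonic_log_difference}.

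First I would compute the limit for $\lambda_{c+1}$ directly. By definition,
\[
\Delta_{\lambda_{c+1},k}(x) = \log((k+1)x+c+1) - \log(kx+c+1) = \log\!\left(\frac{(k+1)x+c+1}{kx+c+1}\right),
\]
which is well-defined for $k \geq 1$ and any integer $x \geq 1$ (and for $k = 0$ is $\log((x+c+1)/(c+1))$ if $c > -1$, or $\log x$ if $c = -1$). For $k \geq 1$ the argument of the logarithm tends to $(k+1)/k$, and for $k = 0$ it diverges to $\infty$; in either case the limit equals $\log((k+1)/k)$ with the convention $\log \infty = \infty$.

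Next I would treat $h_c$. The key trick is to add and subtract $\log((k+1)x+c+1)$ and $\log(kx+c+1)$:
\[
\Delta_{h_c,k}(x) = \bigl[h_c((k+1)x) - \log((k+1)x+c+1)\bigr] - \bigl[h_c(kx) - \log(kx+c+1)\bigr] + \Delta_{\lambda_{c+1},k}(x).
\]
By \Cref{lem:limit_harmonic_log_difference}, the quantity $h_c(y) - \log(y+c+1)$ tends to a finite limit $L_c$ as $y \to \infty$ (over the integers), so for $k \geq 1$ both bracketed terms tend to $L_c$ and their difference tends to $0$. Combined with the first part, this gives $\lim_{x\to\infty}\Delta_{h_c,k}(x) = \log((k+1)/k)$.

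Finally I would handle the boundary case $k = 0$ separately. Here $\Delta_{h_c,0}(x) = h_c(x) - h_c(0)$, and by \Cref{lem:limit_harmonic_log_difference} this behaves like $\log(x+c+1) + L_c - h_c(0)$ (or like $\log x + L_{-1}$ when $c = -1$), which diverges to $\infty = \log((0+1)/0)$, matching the claim. I do not expect any real obstacle: the only subtlety is ensuring that \Cref{lem:limit_harmonic_log_difference} applies at the shifted arguments $kx$ and $(k+1)x$ (which it does, since both tend to $\infty$ with $x$), and being careful with the $k=0$ boundary and the $c=-1$ case where $h_{-1}(0) = -\infty$.
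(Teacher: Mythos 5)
Your proposal is correct and follows essentially the same route as the paper: compute the limit for $\Delta_{\lambda_{c+1},k}$ directly, then reduce $\Delta_{h_c,k}$ to it by showing the difference tends to zero via \Cref{lem:limit_harmonic_log_difference}. Your explicit treatment of the $k=0$ and $c=-1$ boundary cases is slightly more careful than the paper's, which leaves those implicit.
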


\begin{proof}
For the right equality, we have
\begin{align*}
    \lim_{x \to \infty} \Delta_{\lambda_{c+1},k}(x)
    &= \lim_{x \to \infty} \left(\lambda_{c+1}((k+1)x) - \lambda_{c+1}(kx)\right) \\
    &= \lim_{x \to \infty} \left(\log((k+1)x+c+1) - \log(kx+c+1)\right) \\
    &= \lim_{x \to \infty} \log\left(\frac{(k+1)x+c+1}{kx+c+1}\right) 
    = \lim_{x \to \infty} \log\left(\frac{k+1+\frac{c+1}{x}}{k+\frac{c+1}{x}}\right) 
    = \log\left(\frac{k+1}{k}\right).
\end{align*}
For the left equality, we have
\begin{align*}
    & \left. \lim_{x \to \infty} \middle(\Delta_{h_c, k}(x) - \Delta_{\lambda_{c+1},k}(x)\right) \\
    &= \left. \lim_{x \to \infty} \middle( (h_c((k+1)x) - h_c(kx) ) 
    -  (\lambda_{c+1}((k+1)x) - \lambda_{c+1}(kx) ) \right) \\
    &= \left. \lim_{x \to \infty} \middle( h_c((k+1)x) - \lambda_{c+1}((k+1)x) \right) 
    - \left. \lim_{x \to \infty} \middle(h_c(kx) - \lambda_{c+1}(kx)\right)
    = 0,
\end{align*}
where the last transition follows from \Cref{lem:limit_harmonic_log_difference}.
\end{proof}

\begin{lemma}
\label{lem:approx_sum_with_int}
Let $a, b \in \Zplus$ with $a \leq b$. Then, 
\[
    \sum_{k=a}^b \frac{1}{k} < \int_{a-1/2}^{b+1/2} \frac{1}{x} \; \dd x.
\]
\end{lemma}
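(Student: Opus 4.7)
The plan is to prove the inequality term by term: for each integer $k \geq a$, I will show that $1/k < \int_{k-1/2}^{k+1/2} (1/x)\, dx$, and then sum from $k=a$ to $k=b$, using the fact that the intervals $[k-1/2, k+1/2]$ for consecutive $k$ tile $[a-1/2, b+1/2]$ exactly.

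The per-term inequality follows immediately from the strict convexity of $1/x$ on $\Rplus$. Indeed, by Jensen's inequality applied to the uniform probability measure on $[k-1/2, k+1/2]$ (whose mean is $k$) and the strictly convex function $x \mapsto 1/x$, we have
\[
\frac{1}{k} \;<\; \int_{k-1/2}^{k+1/2} \frac{1}{x}\, dx.
\]
If one prefers an entirely elementary route, one can evaluate the integral as $\log \frac{2k+1}{2k-1}$ and use the Taylor expansion $\log\frac{1+t}{1-t} = 2t + \tfrac{2t^3}{3} + \tfrac{2t^5}{5} + \cdots > 2t$ with $t = 1/(2k)$, which gives $\log \frac{2k+1}{2k-1} > 1/k$.

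Summing over $k \in \{a, a+1, \ldots, b\}$ and using additivity of the integral then yields
\[
\sum_{k=a}^b \frac{1}{k} \;<\; \sum_{k=a}^b \int_{k-1/2}^{k+1/2} \frac{1}{x}\, dx \;=\; \int_{a-1/2}^{b+1/2} \frac{1}{x}\, dx,
\]
as desired. There is no real obstacle here; this is a routine midpoint-rule bound, and the only thing to be slightly careful about is that the inequality is strict (so one invokes \emph{strict} convexity rather than plain convexity).
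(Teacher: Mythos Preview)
Your proof is correct and follows essentially the same term-by-term decomposition as the paper's proof. The paper establishes each per-term inequality $\tfrac{1}{k} < \int_{k-1/2}^{k+1/2} \tfrac{1}{x}\,\dd x$ by a symmetrization substitution together with the AM--HM inequality, which is just another way of invoking the strict convexity of $1/x$ that you use directly via Jensen (or, equivalently, via the Taylor expansion of $\log\frac{1+t}{1-t}$).
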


\begin{proof}
We have
\begin{align*}
    \int_{a-\frac{1}{2}}^{b+\frac{1}{2}} \frac{1}{x} \, \dd x - \sum_{k=a}^b \frac{1}{k} 
    &= \sum_{k=a}^b \left(\int_{k-\frac{1}{2}}^{k+\frac{1}{2}} \left(\frac{1}{x} - \frac{1}{k}\right) \, \dd x \right) \\
    &= \sum_{k=a}^b \left(\int_{k-\frac{1}{2}}^{k} \left(\frac{1}{x} - \frac{1}{k}\right) \, \dd x + \int_{k}^{k+\frac{1}{2}} \left(\frac{1}{x} - \frac{1}{k}\right) \, \dd x \right) \\
    &= \sum_{k=a}^b \left(\int_{k-\frac{1}{2}}^{k} \left(\frac{1}{x} - \frac{1}{k}\right) \, \dd x - \int_{k}^{k-\frac{1}{2}} \left(\frac{1}{2k-x} - \frac{1}{k}\right) \, \dd x \right) \tag{by the substitution \protect{$x \leftarrow 2k-x$}} \\
    &= \sum_{k=a}^b \left(\int_{k-\frac{1}{2}}^{k} \left(\frac{1}{x} - \frac{1}{k}\right) \, \dd x + \int_{k-\frac{1}{2}}^{k} \left(\frac{1}{2k-x} - \frac{1}{k}\right) \, \dd x \right) \\
    &= \sum_{k=a}^b \left(\int_{k-\frac{1}{2}}^{k} \left(\frac{1}{x} + \frac{1}{2k-x} - \frac{2}{k}\right) \, \dd x \right) \\
    &> \sum_{k=a}^b \left(\int_{k-\frac{1}{2}}^{k} \left(\frac{4}{2k} - \frac{2}{k}\right) \, \dd x \right) \\
    &= 0,
\end{align*}
where the inequality follows from the inequality of arithmetic and harmonic means, and this inequality is strict for a positive interval of $x$.
\end{proof}

\begin{lemma}
\label{lem:mod_harmonic_scaled_diff_increasing}
Let $c \geq 0$, and let $a, b, k \in \Zplus$ such that $a < b$.
Then, $\Delta_{h_c, k}(a) < \Delta_{h_c, k}(b)$.
\end{lemma}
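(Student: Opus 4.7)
The plan is to reduce to showing $\Delta_{h_c, k}(x+1) > \Delta_{h_c, k}(x)$ for every $x \in \Zplus$: the general case $\Delta_{h_c, k}(a) < \Delta_{h_c, k}(b)$ for $a < b$ then follows by chaining consecutive inequalities. So fix $x, k \in \Zplus$ and $c \geq 0$, and I will establish the one-step increase.

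Expanding both sides and using the fact that $\Delta_{h_c, k}(x+1) = \sum_{j=1}^{k+1} \frac{1}{(k+1)x+j+c} + \sum_{j=k+2}^{k+1+\text{(overlap)}}(\cdots)$, more directly, I pair the $j$-th term ($j=1,\ldots,k$) of a reindexed $\Delta_{h_c, k}(x)$ against the corresponding terms of $\Delta_{h_c, k}(x+1)$, obtaining
\[
    \Delta_{h_c, k}(x+1) - \Delta_{h_c, k}(x) \;=\; \frac{1}{(k+1)(x+1)+c} \;-\; \sum_{j=1}^{k} \frac{x}{(kx+j+c)((k+1)x+j+c)}.
\]
Proving the lemma then reduces to bounding this $k$-term sum strictly below $\frac{1}{(k+1)(x+1)+c}$.

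To carry this out, I would rewrite each summand via partial fractions as $\frac{1}{kx+j+c} - \frac{1}{(k+1)x+j+c}$, so the whole sum collapses to $[h_c(kx+k) - h_c(kx)] - [h_c((k+1)x+k) - h_c((k+1)x)]$. By concavity of $h_c$, each of these differences is bounded by an integral of $\frac{1}{t+c}$ over the appropriate unit intervals, and a direct analogue of \Cref{lem:approx_sum_with_int} (the same midpoint/convexity argument goes through with $\frac{1}{t}$ replaced by $\frac{1}{t+c}$) yields sharp upper and lower bounds in terms of logarithms. Comparing the resulting log expression with $\frac{1}{(k+1)(x+1)+c}$ reduces to a purely algebraic inequality.

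The main obstacle is that the first-order integral approximations of the two differences match at leading order (both tend to $\log((k+1)/k)$ by \Cref{lem:limit_harmonic_scaled_difference}), so the required gap is only of second order in $1/x$, and the naive integral bounds are not tight enough. I anticipate handling this either by invoking an Euler--Maclaurin-style second-order correction, or by a case split: for $c > 0$, the half-integer shift produced by the midpoint bound leaves slack proportional to $c$, and for $c = 0$ (or small $c$) one can directly verify the inequality for small $x$ and then propagate by observing that, since $h_c$ is strictly concave, the sequence $\delta(x) = \Delta_{h_c, k}(x+1) - \Delta_{h_c, k}(x)$ is monotonically decreasing to $0$, so positivity of any initial segment suffices to force $\delta(x) > 0$ for all $x$.
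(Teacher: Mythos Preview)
Your reduction to the one-step case and the formula
\[
    \Delta_{h_c,k}(x+1)-\Delta_{h_c,k}(x)=\frac{1}{(k+1)(x+1)+c}-\sum_{j=1}^{k}\frac{x}{(kx+j+c)((k+1)x+j+c)}
\]
are correct, and you are right that the leading terms on both sides match (both behave like $\frac{1}{(k+1)x}$ as $x\to\infty$), so this is genuinely a second-order comparison. (Incidentally, the two bracketed differences $h_c(kx+k)-h_c(kx)$ and $h_c((k+1)x+k)-h_c((k+1)x)$ tend to $0$, not to $\log\frac{k+1}{k}$; the reference to \Cref{lem:limit_harmonic_scaled_difference} is misplaced.)

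The real gap is in your proposed resolution. The claim that strict concavity of $h_c$ forces $\delta(x)=\Delta_{h_c,k}(x+1)-\Delta_{h_c,k}(x)$ to be monotonically decreasing is a non-sequitur: monotonicity of $\delta$ is concavity of $x\mapsto\Delta_{h_c,k}(x)$, i.e.\ $(k+1)^2 h_c''((k+1)x)<k^2 h_c''(kx)$, which does \emph{not} follow from $h_c''<0$ alone. In fact, since $|h_c''(y)|\sim 1/y^2$, the two sides again agree at leading order, so proving $\delta$ is decreasing is another second-order question of exactly the same difficulty as the one you started with --- the argument is circular. And note that if $\delta$ \emph{were} known to be decreasing with limit $0$, positivity would follow immediately for all $x$; the ``verify for small $x$ and then propagate'' step would be superfluous, which is a sign that the logic here is confused. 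Separately, ``directly verify for small $x$ and small $c$'' cannot work as stated, since $k$ ranges over all positive integers and $c$ over a continuum.

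The paper bypasses the one-step reduction entirely. It compares $\Delta_{h_c,k}(a)$ and $\Delta_{h_c,k}(b)$ directly for arbitrary $a<b$: it bounds the $a$-term sum $\sum_{i=1}^{a}\frac{1}{ka+i+c}$ above by $\int_{1/2}^{a+1/2}\frac{dx}{ka+x+c}$ via \Cref{lem:approx_sum_with_int}, then cuts this integral into $b$ equal pieces so that both quantities become sums of $b$ terms. Each term-by-term comparison reduces to an elementary inequality handled by a change of variables and a sign analysis (with a case split on whether $a/b\le 1/2$). The advantage is that the midpoint integral bound supplies exactly the second-order accuracy you identified as missing, uniformly in $k$, $a$, $b$, and $c\ge 0$.
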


\begin{proof}
Note that
\begin{equation}\label{eq:diff_scaled_by_b_as_sum}
    \Delta_{h_c, k}(b) = h_c((k+1)b) - h_c(kb) = \sum_{i=1}^b \frac{1}{kb+i+c}
\end{equation}
and 
\begin{equation}\label{eq:diff_scaled_by_a_as_sum}
    \Delta_{h_c, k}(a) = h_c((k+1)a) - h_c(ka) = \sum_{i=1}^a \frac{1}{ka+i+c}.
\end{equation}
Since the two sums have different numbers of terms,
we cannot compare each term in the sum in~\eqref{eq:diff_scaled_by_b_as_sum} 
with the corresponding term in the sum in \eqref{eq:diff_scaled_by_a_as_sum}.
Instead, we approximate the sum in \eqref{eq:diff_scaled_by_a_as_sum} using an integral. By \Cref{lem:approx_sum_with_int},
\begin{equation}\label{eq:diff_scaled_by_a_as_integral}
    \sum_{i=1}^a \frac{1}{ka+i+c} < \int_{\frac{1}{2}}^{\frac{1}{2}+a} \frac{1}{ka+x+c} \, \dd x. 
\end{equation}
We can now write this integral as a sum of $b$ terms, by dividing the interval into $b$ equal parts:
\begin{equation}\label{eq:diff_scaled_by_a_as_sum_of_b_terms}
    \int_{\frac{1}{2}}^{\frac{1}{2}+a} \frac{1}{ka+x+c} \, \dd x
    = \sum_{i=1}^b 
        \left(
            \int_{\frac{1}{2}+\frac{(i-1)a}{b}}^{\frac{1}{2}+\frac{ia}{b}} \frac{1}{ka+x+c} \, \dd x
        \right).
\end{equation}
Now each of the sums in \eqref{eq:diff_scaled_by_b_as_sum} and \eqref{eq:diff_scaled_by_a_as_sum_of_b_terms} has exactly $b$ terms,
and we can compare each term with the corresponding term in the other sum.
We show that for each $i \in \{1, \ldots, b\}$,
\begin{equation}\label{eq:compare_each_summed_term}
    \frac{1}{kb+i+c} \geq \int_{\frac{1}{2}+\frac{(i-1)a}{b}}^{\frac{1}{2}+\frac{ia}{b}} \frac{1}{ka+x+c} \, \dd x.
\end{equation}
To compare the two terms in \eqref{eq:compare_each_summed_term}, we first write the left-hand side as an integral with the same interval as the right-hand side:
\begin{equation}\label{eq:diff_scaled_by_b_as_integral}
\begin{aligned}
    \frac{1}{kb+i+c} = \frac{a}{b} \cdot \frac{1}{ka+\frac{ia}{b}+\frac{ca}{b}} 
    = \int_{\frac{1}{2}+\frac{(i-1)a}{b}}^{\frac{1}{2}+\frac{ia}{b}}
        \frac{1}{ka+\frac{ia}{b}+\frac{ca}{b}} \, \dd x 
    \geq \int_{\frac{1}{2}+\frac{(i-1)a}{b}}^{\frac{1}{2}+\frac{ia}{b}}
        \frac{1}{ka+\frac{ia}{b}+c} \, \dd x,
\end{aligned}
\end{equation}
where the last inequality follows from the fact that $\inlinefrac{ca}{b} \leq c$ as $a < b$ and $c \geq 0$.
Hence, it suffices to show that
\begin{align*}
    \int_{\frac{1}{2}+\frac{(i-1)a}{b}}^{\frac{1}{2}+\frac{ia}{b}}
        \frac{1}{ka+\frac{ia}{b}+c} \, \dd x
    &\geq \int_{\frac{1}{2}+\frac{(i-1)a}{b}}^{\frac{1}{2}+\frac{ia}{b}} 
        \frac{1}{ka+x+c} \, \dd x,
\end{align*}
which by substituting $x = \frac{ia}{b} + y$ is equivalent to
\begin{align*}
    \int_{\frac{1}{2}-\frac{a}{b}}^{\frac{1}{2}} \frac{1}{ka+\frac{ia}{b}+c} \, \dd y
        &\geq \int_{\frac{1}{2}-\frac{a}{b}}^{\frac{1}{2}} \frac{1}{ka+\frac{ia}{b}+y+c} \, \dd y.
\end{align*}
Let $t = ka+\inlinefrac{ia}{b}+c$. Then, the above inequality is equivalent to 
\begin{equation}\label{eq:integral_diff}
\begin{aligned}
    \int_{\frac{1}{2}-\frac{a}{b}}^{\frac{1}{2}} \left(\frac{1}{t} - \frac{1}{t+y}\right) \, \dd y \geq 0.
\end{aligned}
\end{equation}

Note that if $\inlinefrac{a}{b} \leq \inlinefrac{1}{2}$,
then $\inlinefrac{1}{2} - \inlinefrac{a}{b} \geq 0$.
In this case, for all $y$ in the interval, we have $y \geq 0$ and so $\inlinefrac{1}{t} \geq \inlinefrac{1}{(t+y)}$. 
Subsequently, the integrand in \eqref{eq:integral_diff} is never negative,
and is positive when $y = \inlinefrac{1}{2}$.
As a result, inequality \eqref{eq:integral_diff} holds.

It remains to prove that \eqref{eq:integral_diff} holds when $\inlinefrac{a}{b} > \inlinefrac{1}{2}$, that is, $\inlinefrac{1}{2}-\inlinefrac{a}{b} < 0$.
We first divide the integral in \eqref{eq:integral_diff} into two parts, noting that $\inlinefrac{1}{t} - \inlinefrac{1}{(t+y)} > 0$ if and only if $y > 0$:
\begin{align}
    \int_{\frac{1}{2}-\frac{a}{b}}^{\frac{1}{2}} \left(\frac{1}{t} - \frac{1}{t+y}\right) \, \dd y 
        &= \int_{0}^{\frac{1}{2}} \left(\frac{1}{t} - \frac{1}{t+y}\right) \, \dd y
        + \int_{\frac{1}{2}-\frac{a}{b}}^{0} \left(\frac{1}{t} - \frac{1}{t+y}\right) \, \dd y. \label{eq:integral_pos_neg}
\end{align}
We then rewrite the second integral on the right-hand side so that it has the same interval as the first integral, 
using a series of variable substitutions:
\begin{align*}
    \int_{\frac{1}{2}-\frac{a}{b}}^0 \left(\frac{1}{t} - \frac{1}{t+y}\right) \, \dd y
    &= - \int_{\frac{a}{b}-\frac{1}{2}}^0 \left(\frac{1}{t} - \frac{1}{t-x}\right) \, \dd x 
        \tag{by substituting $y = -x$} \\
    &= \int_0^{\frac{a}{b}-\frac{1}{2}} \left(\frac{1}{t} - \frac{1}{t-x}\right) \, \dd x \\
    &= \int_0^{\frac{1}{2}} \left(\frac{2a}{b}-1\right) \left( 
        \frac{1}{t} - \frac{1}{t-\left(\frac{2a}{b}-1\right)z} \right) \, \dd z. 
        \tag{by substituting $x = \left(\frac{2a}{b}-1\right)z$}
\end{align*}
Substituting back into \eqref{eq:integral_pos_neg} and replacing the variable $z$ with $y$, 
we get
\begin{align*}
   \int_{\frac{1}{2}-\frac{a}{b}}^{\frac{1}{2}}& \left(\frac{1}{t} - \frac{1}{t+y}\right) \, \dd y 
   = \int_0^{\frac{1}{2}} \left(\left(\frac{1}{t} - \frac{1}{t+y}\right) + \left(\frac{2a}{b}-1\right) \left( 
        \frac{1}{t} - \frac{1}{t-\left(\frac{2a}{b}-1\right)y} \right)\right) \, \dd y.
\end{align*}
Next, we use algebraic manipulation to show that
the integrand is non-negative for all $y$ in the interval.
Writing $q = \frac{2a}{b} - 1$, the integrand becomes
\begin{align}
    \frac{1}{t} - \frac{1}{t+y} + q \left( \frac{1}{t} - \frac{1}{t-qy} \right) 
    &= \frac{y}{t(t+y)} - \frac{q^2 y}{t(t-qy)} \nonumber \\
    &= \frac{y(t-qy-q^2t-q^2y)}{t(t+y)(t-qy)} \nonumber \\
    &= \frac{y(t(1-q^2) - qy(1+q))}{t(t+y)(t-qy)} \nonumber \\
    &= \frac{y(t(1-q)(1+q) - qy(1+q))}{t(t+y)(t-qy)} \nonumber \\
    &= \frac{y(1+q)(t(1-q) - qy)}{t(t+y)(t-qy)}. \label{eq:integrand_pos}
\end{align}    
We now show that the expression in the last line is non-negative for all $y \in [0, \inlinefrac{1}{2}]$,
by showing that all the factors in the numerator are non-negative and all the factors in the denominator are positive.
Clearly, $y \geq 0$ and $1+q = \inlinefrac{2a}{b} \geq 0$. 
Moreover, since $t = ka + \inlinefrac{ia}{b} + c$, we have $t+y \geq t > 0$.
Hence, it remains to prove that $t(1-q) - qy \geq 0$ and $t-qy > 0$.

To show that $t(1-q) - qy \geq 0$ for all $y \in [0, \inlinefrac{1}{2}]$,
it suffices to show that $t \geq q/(2-2q) \geq qy/(1-q)$.
Note that $1-q = 2 - 2a/b > 0$ as we assume that $a < b$.
By algebraic manipulation, we can see that
\begin{align*}
    \frac{q}{2-2q} = \frac{\frac{2a}{b}-1}{4-\frac{4a}{b}} 
    = \frac{2a-b}{4b-4a} 
    &= \frac{1}{4}\left(\frac{a}{b - a} - 1\right) 
    < a 
    < t,
\end{align*}
where the first inequality holds because $b-a \ge 1$.
Hence, we have $t(1-q) - qy \geq 0$ for all $y \in [0, 1/2]$.
Furthermore, by the assumption that $a/b > 1/2$, we have $q = 2a/b - 1 > 0$, and so $1 - q < 1$. 
Then,
$t - qy > t(1-q) - qy \geq 0$.
As a result, the expression in \eqref{eq:integrand_pos} is non-negative for all $y \in [0, 1/2]$. 
It follows that the inequality \eqref{eq:integral_diff} also holds for the case where $a/b > 1/2$.

Putting everything together, we have
\begin{align*}
    \Delta_{h_c, k}(b) - \Delta_{h_c, k}(a) &= \sum_{i=1}^b \frac{1}{kb+i+c} - \sum_{i=1}^a \frac{1}{ka+i+c} 
        \tag{by \eqref{eq:diff_scaled_by_b_as_sum} and \eqref{eq:diff_scaled_by_a_as_sum}}\\
    &> \sum_{i=1}^b \frac{1}{kb+i+c}
        - \sum_{i=1}^b 
        \left(
            \int_{\frac{1}{2}+\frac{(i-1)a}{b}}^{\frac{1}{2}+\frac{ia}{b}} \frac{1}{ka+x+c} \, \dd x
        \right)
        \tag{by \eqref{eq:diff_scaled_by_a_as_integral} 
            and \eqref{eq:diff_scaled_by_a_as_sum_of_b_terms}} \\
    &= \sum_{i=1}^b \left(
        \frac{1}{kb+i+c}
        - \int_{\frac{1}{2}+\frac{(i-1)a}{b}}^{\frac{1}{2}+\frac{ia}{b}} 
            \frac{1}{ka+x+c} \, \dd x
        \right) \\
    &\geq \sum_{i=1}^b \left(
        \int_{\frac{1}{2}+\frac{(i-1)a}{b}}^{\frac{1}{2}+\frac{ia}{b}}
            \frac{1}{ka+\frac{ia}{b}+c} \, \dd x
        - \int_{\frac{1}{2}+\frac{(i-1)a}{b}}^{\frac{1}{2}+\frac{ia}{b}} 
                \frac{1}{ka+x+c} \, \dd x
        \right)
        \tag{by \eqref{eq:diff_scaled_by_b_as_integral}} \\
    &\geq 0, \tag{by \eqref{eq:integral_diff}}
\end{align*}
which yields the desired inequality.
\end{proof}

We are now ready to prove \Cref{prop:integer_idengood_harmonic}.

\propintegeridengoodharmonic*

\begin{proof}
($\Leftarrow$) We show that $h_c$ satisfies {\condintidengood}b for $-1 \leq c \leq 1/\log 2 - 1$.
We first show this for $-1 \leq c < 0$.
Let $k \in \Zpluszero$ and $a \in \Zplus$.
Consider the left inequality of {\condintidengood}b.
If $c = -1$ and $k = 0$, then $\Delta_{h_c, k}(1) = \infty > \Delta_{h_c, k+1}(a)$, so the left inequality holds.
Otherwise, $k + 1 + c > 0$.
Then, we have
\begin{align*}
    \Delta_{h_c, k+1}(a) &= h_c((k+2)a) - h_c((k+1)a) \\
    &= \sum_{t=1}^{a} \frac{1}{(k+1)a+t+c} \\
    &\leq \sum_{t=1}^{a} \frac{1}{(k+1)a+1+c} \\
    &= \frac{a}{(k+1)a+1+c} 
    = \frac{1}{(k+1)+(1+c)/a} 
    \le \frac{1}{k+1} 
    < \frac{1}{(k+1)+c} 
    = \Delta_{h_c, k}(1),
\end{align*}
proving the left inequality of {\condintidengood}b.
Now, consider the right inequality of {\condintidengood}b.
By a similar approach, we have
\begin{align*}
    \Delta_{h_c, k+1}(a) &= h_c((k+2)a) - h_c((k+1)a) \\
    &= \sum_{t=1}^{a} \frac{1}{(k+1)a+t+c} \\
    &\geq \sum_{t=1}^{a} \frac{1}{(k+1)a+a+c} \\
    &= \frac{a}{(k+2)a+c} 
    = \frac{1}{(k+2)+c/a} 
    > \frac{1}{k+2} 
    \ge \frac{1}{(k+2)+1+c} 
    = \Delta_{h_c, k+2}(1),
\end{align*}
proving the right inequality of {\condintidengood}b.

Next, we show that $h_c$ satisfies {\condintidengood}b for $0 \leq c \leq 1/\log 2 - 1$.
Let $k \in \Zpluszero$.
Consider the left inequality of {\condintidengood}b.
By \Cref{lem:log_increasing}, we have
\begin{align*}
    c \leq \frac{1}{\log 2}-1 \leq \frac{1}{\log \left(\frac{k+2}{k+1}\right)}-(k+1),
\end{align*}
and therefore,
\begin{align*}
    \Delta_{h_c, k}(1) = h_c(k+1)-h_c(k) = \frac{1}{k+1+c} \geq \log \left(\frac{k+2}{k+1}\right).
\end{align*}
On the other hand, from \Cref{lem:limit_harmonic_scaled_difference}, we have
\[
    \lim_{x \to \infty} \Delta_{h_c, k+1}(x) = \log \left(\frac{k+2}{k+1}\right).
\]
By \Cref{lem:mod_harmonic_scaled_diff_increasing}, $\Delta_{h_c, k+1}$ is strictly increasing on the domain of positive integers. So, we have 
\[
    \Delta_{h_c, k+1}(a) < \log \left(\frac{k+2}{k+1}\right) \leq \Delta_{h_c, k}(1)
\]
for all $a \in \Zplus$, proving the left inequality of {\condintidengood}b.
Now, consider the right inequality of {\condintidengood}b.
By \Cref{lem:mod_harmonic_scaled_diff_increasing}, $\Delta_{h_c, k+1}$ is increasing on the domain of positive integers. 
It follows that
\begin{align*}
    \Delta_{h_c, k+1}(a) \geq \Delta_{h_c, k+1}(1) = \frac{1}{k+2+c} > \frac{1}{k+3+c} = \Delta_{h_c, k+2}(1)
\end{align*}
for all $a \in \Zplus$, proving the right inequality of {\condintidengood}b.

($\Rightarrow$) We show that $h_c$ does not satisfy {\condintidengood}b for $c > 1/\log 2 - 1$.
To this end, we prove that $\Delta_{h_c, 0}(1) \leq \Delta_{h_c, 1}(a)$ for some $a \in \Zplus$.
The expression on the left-hand side of the inequality is $\Delta_{h_c, 0}(1) = h_c(1) - h_c(0) = 1/(1+c) < \log 2$ since $c > 1/\log 2 - 1$.
For the expression on the right-hand side of the inequality,
by \Cref{lem:limit_harmonic_scaled_difference}, we have $\lim_{x \to \infty} \Delta_{h_c, 1}(x) = \log 2$.
Since $\Delta_{h_c, 0}(1) < \log 2$, by definition of limit, there exists $a \in \Zplus$ such that $\Delta_{h_c, 1}(a) \geq \Delta_{h_c, 0}(1)$.
Hence, $h_c$ does not satisfy {\condintidengood}b.
\end{proof}

\subsection{Proof of \Cref{prop:integer_twovalue_harmonic}}
\label{ap:integer_twovalue_harmonic}

To prove \Cref{prop:integer_twovalue_harmonic}, we need the following intermediate result concerning $h_c$.

\begin{lemma}\label{lem:h_neg_scaled_diff_minus_1_decreasing}
Let $-1 \leq c < 0$, and let $a, b, k \in \Zplus$ such that $a > b$. Then, 
\begin{equation}\label{eq:h_neg_scaled_diff_minus_1_decreasing}
    h_c((k+1)b-1) - h_c(kb-1) > h_c((k+1)a-1) - h_c(ka-1).
\end{equation}
\end{lemma}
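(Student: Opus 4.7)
The plan is to adapt the integral-comparison technique from the proof of \Cref{lem:mod_harmonic_scaled_diff_increasing} to the present reversed regime. Starting from the series representation of $h_c$, for any $m \in \Zplus$ with $km \geq 2$ (or any $m$ when $c > -1$) one has
\[
    h_c((k+1)m-1) - h_c(km-1) = \sum_{i=0}^{m-1}\frac{1}{km+i+c}.
\]
The sole remaining sub-case, $c = -1$ with $k = b = 1$, makes the left-hand side of \eqref{eq:h_neg_scaled_diff_minus_1_decreasing} equal to $+\infty$, so the inequality is immediate. In all other cases the target reduces to showing that a sum of $b$ positive reciprocals strictly exceeds the analogous sum of $a > b$ smaller reciprocals, where each summand $1/(km+i+c)$ is amplified relative to $1/(km+i)$ because $c < 0$, and the amplification is most pronounced for small $m$.

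My approach is to generalize \Cref{lem:approx_sum_with_int} (its proof uses only convexity of the integrand; since $ka - 1/2 + c \geq 1/2 > 0$ under the remaining hypotheses, the function $1/(y+c)$ is convex on the range of integration) to obtain
\[
    \sum_{i=0}^{a-1}\frac{1}{ka+i+c} < \int_{ka-1/2}^{(k+1)a-1/2}\frac{1}{y+c}\,\dd y,
\]
then partition this integration range into $b$ equal sub-intervals of width $a/b > 1$, one for each term of the smaller sum $\sum_{j=1}^{b}\frac{1}{kb+j-1+c}$, and aim to show that every sub-integral is strictly dominated by its matching term, which upon summation yields the required inequality.

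The main obstacle is precisely this term-wise comparison. A naive bound using the maximum of the integrand on each sub-interval only succeeds when $|c|(a-b) \geq b/2$; for small $|c|$ or small $a-b$ the estimate points in the wrong direction. To close the gap I would mirror the refined argument used in the proof of \Cref{lem:mod_harmonic_scaled_diff_increasing}: split each sub-integral about its midpoint, pair the two halves via a reflection substitution, and combine them into a single integrand whose sign can be controlled through an algebraic identity analogous to the one involving $q = 2a/b - 1$ that appears there. Relative to the original, both the role of $a$ versus $b$ and the sign of $c$ are reversed, so $q > 1$ (rather than $q < 1$) and the key inequalities involving $c$ flip accordingly; the positivity of every denominator encountered follows from $ka - 1/2 + c \geq 1/2$, and the strict inequality is driven by the hypotheses $a > b$ and $c \in [-1, 0)$.
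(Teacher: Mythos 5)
Your setup matches the paper's proof exactly: the same special case ($c=-1$, $k=b=1$ giving $+\infty$ on the left), the same upper bound on the $a$-term sum via (a translate of) \Cref{lem:approx_sum_with_int}, the same partition of the resulting integral into $b$ sub-intervals of width $a/b$, and the same goal of a term-wise comparison against $\sum_{i=1}^{b}\frac{1}{kb+i-1+c}$. The gap is in the one step you defer, and your sketch of how to close it would not go through as described. You propose to "mirror" the sign-control identity of \Cref{lem:mod_harmonic_scaled_diff_increasing} with the roles of $a,b$ reversed, but that identity reduces the combined integrand to $\frac{y(1+q)\left(t(1-q)-qy\right)}{t(t+y)(t-qy)}$ with $q=\frac{2a}{b}-1$; in the present regime $a>b$ forces $q>1$, so the factor $t(1-q)-qy$ is \emph{negative} for $y\geq 0$, and the "flipped" identity points the wrong way rather than "accordingly." Relatedly, your claim that "the positivity of every denominator encountered follows from $ka-1/2+c\geq 1/2$" is false after the reflection substitution: writing the combined integrand as $f(x)=\frac{1}{x+r}-\frac{1}{x+s}$ on $[-\frac{a}{b},-\frac{a}{b}+\frac{1}{2}]$, one has $x+r>0$ but $x+s<0$ throughout the interval.

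That sign pattern is not a nuisance but the engine of the paper's actual argument: since $x+r>0$ and $x+s<0$, one gets $f''(x)=\frac{2}{(x+r)^3}-\frac{2}{(x+s)^3}>0$, so $f$ is convex and its maximum over the sub-interval is attained at an endpoint; the bound $f(x)\leq \frac{2}{kb+i-1+c}$ is then verified separately at the two endpoints. The right endpoint check uses $\frac{b}{a}<1$ and $c<0$, while the left endpoint check requires showing $f(x_0)\leq f(x_1)$, which comes down to the inequality $\left(ka+(i-1)\frac{a}{b}+c-\frac12\right)\left(\frac{a}{b}-1\right)\geq \frac14$, established using $a\geq b+1\geq 2$, $k\geq 1$, and $c\geq -1$. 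None of this is present or foreshadowed in your proposal, and it is the mathematically substantive part of the lemma; as written, your argument does not yet constitute a proof.
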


\begin{proof}
First, if $kb = 1$ and $c = -1$, then $h_c((k+1)b-1) - h_c(kb-1) = \infty > h_c((k+1)a-1) - h_c(ka-1)$.
We can now assume that $kb + c > 0$.
Similarly to the proof of \Cref{lem:mod_harmonic_scaled_diff_increasing}, we can approximate the right-hand side of \eqref{eq:h_neg_scaled_diff_minus_1_decreasing} as a sum of $b$ terms:
\begin{align*}
    h_c((k+1)a-1) - h_c(ka-1) &= \sum_{i=1}^a \frac{1}{ka+i-1+c} \\
    &< \int_{\frac{1}{2}}^{a+\frac{1}{2}} \frac{1}{ka+x-1+c} \, \dd x
        \tag{by \Cref{lem:approx_sum_with_int}} \\
    &= \sum_{i=1}^b 
        \int_{\frac{(i-1)a}{b}+\frac{1}{2}}^{\frac{ia}{b}+\frac{1}{2}}
            \frac{1}{ka+x-1+c} 
        \, \dd x \\
    &= \sum_{i=1}^b
        \int_{-\frac{a}{b}}^{0}
            \frac{1}{ka+\frac{ia}{b}+y-\frac{1}{2}+c}
        \, \dd y.
        \tag{by substituting \protect{$x = y+\frac{ia}{b}+\frac{1}{2}$}}
\end{align*}
Hence, we have
\begin{align*}
    &\Big( h_c((k+1)b-1) - h_c(kb-1) \Big) - \Big( h_c((k+1)a-1) - h_c(ka-1) \Big) \\
    &> \sum_{i=1}^b \frac{1}{kb+i-1+c} 
        - \sum_{i=1}^b
            \int_{-\frac{a}{b}}^{0}
                \frac{1}{ka+\frac{ia}{b}+y-\frac{1}{2}+c}
            \, \dd y \\
    &= \sum_{i=1}^b 
        \left(
            \frac{1}{kb+i-1+c} 
            - \int_{-\frac{a}{b}}^{0}
                \frac{1}{ka+\frac{ia}{b}+y-\frac{1}{2}+c}
            \, \dd y
        \right).
\end{align*}
We now show that for each $i \in \{1, \ldots, b\}$, the summand is non-negative, that is,
\begin{equation}\label{eq:h_neg_compare_summands}
    \frac{1}{kb+i-1+c} 
    \geq \int_{-\frac{a}{b}}^{0}
        \frac{1}{ka+\frac{ia}{b}+y-\frac{1}{2}+c}
    \, \dd y.
\end{equation}
We first split the integral on the right-hand side into two parts with intervals $[-\frac{a}{b}, -\frac{a}{b}+\frac{1}{2}]$ and $[-\frac{a}{b}+\frac{1}{2}, 0]$ respectively.
We then rewrite the second part so that both parts have the same interval of integration, by substituting
$
    y = \left(-\frac{2a}{b}+1\right)\left(v+\frac{a}{b}\right).
$
Let
$
    q(y) = \inlinefrac{1}{(ka+\frac{ia}{b}+y-\frac{1}{2}+c)}.
$
Then, the second part of the integral becomes
\begin{align}
    \int_{-\frac{a}{b}+\frac{1}{2}}^{0} q(y) \, \dd y 
    &= \int_{-\frac{a}{b}+\frac{1}{2}}^{0}
        \frac{1}{ka+\frac{ia}{b}+y-\frac{1}{2}+c} \, \dd y \notag \\
    &=  \int_{-\frac{a}{b}+\frac{1}{2}}^{-\frac{a}{b}} 
            \left(
                \frac{1}{ka+\frac{ia}{b}+\left(-\frac{2a}{b}+1\right)\left(v+\frac{a}{b}\right)-\frac{1}{2}+c}   
            \right)
            \left(-\frac{2a}{b}+1\right)
        \, \dd v \notag \\
    &=  \int_{-\frac{a}{b}}^{-\frac{a}{b}+\frac{1}{2}}
            \left(\frac{2a}{b}-1\right)
            \left(
                \frac{1}{ka+\frac{ia}{b}-\left(\frac{2a}{b}-1\right)\left(v+\frac{a}{b}\right)-\frac{1}{2}+c}
            \right)
        \, \dd v.
\label{eq:h_neg_substitute_integral}
\end{align}
Now we can combine back the two parts into one integral under the same interval of integration:
\begin{align*}
    \int_{-\frac{a}{b}}^{0} q(y) \, \dd y 
    &= \int_{-\frac{a}{b}}^{-\frac{a}{b}+\frac{1}{2}} q(y) \, \dd y
    + \int_{-\frac{a}{b}+\frac{1}{2}}^{0} q(y) \, \dd y \\
    &= \int_{-\frac{a}{b}}^{-\frac{a}{b}+\frac{1}{2}} 
            \frac{1}{ka+\frac{ia}{b}+y-\frac{1}{2}+c} \, \dd y \\
    &\qquad +\int_{-\frac{a}{b}}^{-\frac{a}{b}+\frac{1}{2}} 
            \left(\frac{2a}{b}-1\right)
            \cdot \frac{1}{
                ka+\frac{ia}{b}
                -\left(\frac{2a}{b}-1\right)\left(v+\frac{a}{b}\right)
                -\frac{1}{2}+c}
        \, \dd v
    \tag{by \eqref{eq:h_neg_substitute_integral}} \\
    &=\int_{-\frac{a}{b}}^{-\frac{a}{b}+\frac{1}{2}} 
            \Bigg(\frac{1}{x+\left(ka+\frac{ia}{b}-\frac{1}{2}+c\right)} 
        - \frac{1}{x+\frac{a}{b}-\frac{1}{\frac{2a}{b}-1}
            \left(ka+\frac{ia}{b}-\frac{1}{2}+c\right)} \Bigg)
            \, \dd x.
    \tag{by replacing variables \protect{$v, y$ with $x$}}
\end{align*}
Define
$
    f(x) = \frac{1}{x+r} - \frac{1}{x+s},
$
where
\[
    r = ka+\frac{ia}{b}-\frac{1}{2}+c, \qquad
    s = \frac{a}{b}-\frac{1}{\frac{2a}{b}-1}\left(ka+\frac{ia}{b}-\frac{1}{2}+c\right)
\]
are terms independent of $x$.
We can then rewrite inequality \eqref{eq:h_neg_compare_summands} that we are trying to prove:
\begin{align*}
    \int_{-\frac{a}{b}}^{-\frac{a}{b}+\frac{1}{2}} 
        f(x)
    \, \dd x
    &= \int_{-\frac{a}{b}}^{0} q(y) \, \dd y 
    \leq \frac{1}{kb+i-1+c} 
    = \int_{-\frac{a}{b}}^{-\frac{a}{b}+\frac{1}{2}} \frac{2}{kb+i-1+c} \, \dd x.
\end{align*}
Note that the first and last expressions are integrals with the same interval of integration $[-\frac{a}{b}, -\frac{a}{b}+\frac{1}{2}]$.
Hence, it suffices to compare the integrand in this common interval of integration. That is, it suffices to prove that for each $x \in [-\frac{a}{b}, -\frac{a}{b}+\frac{1}{2}]$, we have
\begin{equation}\label{eq:h_neg_compare_integrands}
    f(x) \leq \frac{2}{kb+i-1+c}.
\end{equation}
To prove this, we examine the second derivative
\[
    f''(x) = \frac{2}{(x+r)^3} - \frac{2}{(x+s)^3}.
\]
Note that $r \geq 2 + \frac{a}{b} - \frac{1}{2} - 1 > \frac{a}{b}$. Then, $s = \frac{a}{b} - \inlinefrac{r}{(\frac{2a}{b}-1)} < \frac{a}{b} - \inlinefrac{(\frac{a}{b})}{(\frac{2a}{b} - 1)} < \frac{a}{b} - \frac{1}{2}$. So, for any $x \in [-\frac{a}{b}, -\frac{a}{b}+\frac{1}{2}]$, we have $x + r > 0$ and $x + s < 0$. Hence, we have $(x+r)^3 > 0$ and $(x+s)^3 < 0$. Therefore, the second derivative $f''(x)$ is always positive for any $x \in [-\frac{a}{b}, -\frac{a}{b}+\frac{1}{2}]$.
Subsequently, the first derivative $f'$ is always increasing in the interval.
So, we can conclude that the maximum of $f$ in the interval $x \in [-\frac{a}{b}, -\frac{a}{b}+\frac{1}{2}]$ must be attained at one of the endpoints;
any local extremum in the interior of the interval must be a minimum.
We name the endpoints $x_0 = -\frac{a}{b}$ and $x_1 = -\frac{a}{b}+\frac{1}{2}$.
It suffices to prove that $f$ satisfies the inequality \eqref{eq:h_neg_compare_integrands} at $x = x_0$ and $x = x_1$.
Note that
\begin{align}
    x_1 + r &= ka+(i-1)\frac{a}{b}+c; 
        \label{eq:h_neg_x1p} \\
    x_1 + s
        &= \frac{1}{2}
            -\frac{1}{
                \frac{2a}{b}-1}\left(ka+\frac{ia}{b}-\frac{1}{2}+c\right) 
        = -\frac{1}{\frac{2a}{b}-1}\left(ka+(i-1)\frac{a}{b}+c\right).
        \label{eq:h_neg_x1q}
\end{align}
We can now show that $f$ satisfies \eqref{eq:h_neg_compare_integrands} at $x = x_1$.
We have
\begin{align*}
    f(x_1) &=\frac{1}{x_1+r} - \frac{1}{x_1+s} \\
    &= \frac{1}{ka+(i-1)\frac{a}{b}+c}
        + \left(\frac{2a}{b}-1\right)\frac{1}{ka+(i-1)\frac{a}{b}+c}
        \tag{by \eqref{eq:h_neg_x1p} and \eqref{eq:h_neg_x1q}} \\
    &= \frac{2a}{b} \cdot \frac{1}{ka+(i-1)\frac{a}{b}+c} \\
    &= \frac{2}{kb+i-1+\frac{cb}{a}} \\
    &< \frac{2}{kb+i-1+c}.
        \tag{since $\frac{b}{a} < 1$ and $c < 0$}
\end{align*}
We next show that $f(x_0) \leq f(x_1)$, which means that $f$ also satisfies \eqref{eq:h_neg_compare_integrands} at $x = x_0$.
Note that $x_0 = x_1 - \frac{1}{2}$. Then,
\begin{align}
    x_0 + r &= x_1+r-\frac{1}{2}; 
        \label{eq:h_neg_x0p}\\
    x_1 + s &= -\frac{1}{\frac{2a}{b}-1}(x_1+r); 
        \label{eq:h_neg_x1q_with_p}\\
    x_0 + s &= -\frac{1}{\frac{2a}{b}-1}(x_1+r) - \frac{1}{2} 
    = -\frac{1}{\frac{2a}{b}-1}\left(x_1+r+\frac{a}{b}-\frac{1}{2}\right).
        \label{eq:h_neg_x0q}
\end{align}
It follows that
\begin{align*}
    &f(x_1) - f(x_0) \\
    &= \left(\frac{1}{x_1+r} - \frac{1}{x_1+s}\right) 
        - \left(\frac{1}{x_0+r} - \frac{1}{x_0+s}\right) \\
    &= \frac{s-r}{(x_1+r)(x_1+s)} - \frac{s-r}{(x_0+r)(x_0+s)} \\
    &= (r-s)\left(\frac{2a}{b}-1\right)\left(\frac{1}{(x_1+r)^2}-\frac{1}{\left(x_1+r-\frac{1}{2}\right)\left(x_1+r+\frac{a}{b}-\frac{1}{2}\right)}\right).
    \tag{by \eqref{eq:h_neg_x0p}, \eqref{eq:h_neg_x1q_with_p}, and \eqref{eq:h_neg_x0q}}
\end{align*}
We now show that $f(x_1) - f(x_0) \geq 0$. Clearly, $\frac{2a}{b} - 1 > 0$. Moreover, recall that $r > \frac{a}{b}$ and $s < \frac{a}{b} - \frac{1}{2}$, so $r-s > \frac{1}{2} > 0$.
Furthermore,
\begin{align*}
    &\left(x_1+r-\frac{1}{2}\right)\left(x_1+r+\frac{a}{b}-\frac{1}{2}\right) - \left(x_1+r\right)^2 \\
    &= (x_1+r)\left(\frac{a}{b}-1\right)-\frac{1}{2}\left(\frac{a}{b}-\frac{1}{2}\right) \\
    &= \left(ka+(i-1)\frac{a}{b}+c\right)\left(\frac{a}{b}-1\right)-\frac{1}{2}\left(\frac{a}{b}-\frac{1}{2}\right) 
        \tag{by \eqref{eq:h_neg_x1p}}\\
    &= \left(ka+(i-1)\frac{a}{b}+c\right)\left(\frac{a}{b}-1\right)-\frac{1}{2}\left(\frac{a}{b}-1+\frac{1}{2}\right) \\
    &= \left(ka+(i-1)\frac{a}{b}+c-\frac{1}{2}\right)\left(\frac{a}{b}-1\right)-\frac{1}{4} \\
    &\geq \left(ka-\frac{3}{2}\right)\left(\frac{a}{b}-1\right)-\frac{1}{4} 
        \tag{since \protect{$i-1 \geq 0$, $c \geq -1$, and $\frac{a}{b}-1 > 0$}}\\
    &= \frac{a}{b}\left(k-\frac{3}{2a}\right)\left(a-b\right)-\frac{1}{4} \\
    &\geq \left(1-\frac{3}{4}\right)-\frac{1}{4} 
        \tag{since \protect{$a \geq b + 1 \geq 2$ and $k \geq 1$}}\\
    &= 0.
\end{align*}
Hence,
\[
    \frac{1}{(x_1+r)^2}-\frac{1}{\left(x_1+r-\frac{1}{2}\right)\left(x_1+r+\frac{a}{b}-\frac{1}{2}\right)} \geq 0,
\]
and so
\[
    f(x_0) \leq f(x_1) \leq \frac{2}{kb+i-1+c},
\]
as desired.
\end{proof}

\propintegertwovalueharmonic*

\begin{proof}
($\Leftarrow$) 
We first show this for $-1 \leq c < 0$.
Let $a, b \in \Zplus$ and $k, \ell, r \in \Zpluszero$ be given such that $a \geq b$ and $(k+1)b > \ell b + ra$.
Note that $h_c$ satisfies {\condintidengood}b by \Cref{prop:integer_idengood_harmonic} and hence {\condintidengood}a by \Cref{prop:condition_three_eqv}.
Then,
\begin{align*}
    &h_c((\ell+1)b+ra) - h_c(\ell b+ra) \\
    &= \sum_{t=0}^{b-1} \Delta_{h_c, \ell b+ra+t}(1) \tag{by telescoping sum} \\
    &\geq \sum_{t=0}^{b-1} \Delta_{h_c, (k+1)b-1+t}(1) \tag{by {\condintidengood}a and \protect{$(k+1)b - 1 \geq \ell b + ra$}} \\
    &= h_c((k+2)b-1) - h_c((k+1)b-1)  \tag{by telescoping sum} \\
    &\geq h_c((k+2)a-1) - h_c((k+1)a-1)
        \tag{by \Cref{lem:h_neg_scaled_diff_minus_1_decreasing} and \protect{$a \geq b$}} \\
    &= \sum_{t=0}^{a-1} \Delta_{h_c, (k+1)a-1+t} (1) \tag*{(by telescoping sum)} \\
    &> \sum_{t=0}^{a-1} \Delta_{h_c, (k+1)a+t} (1) \tag*{(by {\condintidengood}a)} \\
    &= h_c((k+2)a) - h_c((k+1)a) \tag*{(by telescoping sum)} \\
    &= \Delta_{h_c, k+1}(a) \\
    &> \Delta_{h_c, k+2}(1), \tag{by {\condintidengood}a}
\end{align*}
so $h_c$ satisfies {\condinttwovalue}.

Next, we show that $h_c$ satisfies {\condinttwovalue} for $0 \leq c \leq 1/\log 2 - 1$.
Note that $h_c$ satisfies {\condintidengood}b by \Cref{prop:integer_idengood_harmonic} and hence {\condintidengood}, and that $\Delta_{h_c, k}$ is non-decreasing on the domain of positive integers for each $k \in \Zplus$ by \Cref{lem:mod_harmonic_scaled_diff_increasing}.
By \Cref{prop:integer_twovalue_sufficient}, $h_c$ satisfies {\condinttwovalue}.

($\Rightarrow$) Since $h_c$ does not satisfy {\condintidengood} for $c > 1/\log 2 - 1$ by \Cref{prop:integer_idengood_harmonic}, it does not satisfy {\condinttwovalue} by \Cref{prop:integer_twovalue_sufficient}.
\end{proof}

\section{Welfare Function Must Be Strictly Increasing to Guarantee EF1}
\label{ap:welfare_function_strictly_increasing}

In this section, we justify why we require the function defining the additive welfarist rules to be strictly increasing.

Let $f: \Rpluszero \to \R \cup \{-\infty\}$ be a function continuous on $\Rplus$.
Suppose that $f$ is non-decreasing but not strictly increasing.
We show that there exists a simple instance such that the additive welfarist rule with $f$ does not always choose an EF1 allocation.

Since $f$ is not strictly increasing, there exist $a, b \in \Rplus$ with $a < b$ such that for all $x \in (a, b)$, we have $f(a) = f(x) = f(b)$.
Let $d$ be a positive integer larger than $3/(b-a)$.
We claim that there exists $c \in \Zplus$ such that $(c-1)/d, c/d, (c+1)/d \in (a, b)$.
Let $W = \{w \in \Zplus : w/d > a\}$. 
Since $\lceil ad+1 \rceil \in W$, $W$ is nonempty.
Let $c'$ be the smallest element in $W$.
We show that $c = c'+1$ satisfies the claim.
Since $c'-1 \notin W$, we have $(c'-1)/d \leq a$.
Then, $(c'+2)/d = (c'-1)/d + 3/d < a + (b-a) = b$.
This implies that $a < c'/d < (c'+1)/d < (c'+2)/d < b$, proving the claim.

Now, consider an instance with $n \geq 2$ agents and $m = nc$ goods such that $u_i(g) = 1/d$ for all $i \in N$, $g \in M$.
This instance is positive-admitting, identical-good, and normalized.
The only EF1 allocation $\mathcal{A} = (A_1, \ldots, A_n)$ gives $c$ goods to each agent, thereby achieving $\sum_{i \in N} f(u_i(A_i)) = nf(c/d)$.
Consider another allocation $\mathcal{B} = (B_1, \ldots, B_n)$ where agent $1$ receives $c+1$ goods, agent $2$ receives $c-1$ goods, and every other agent receives $c$ goods; this allocation is not EF1.
We have $\sum_{i \in N} f(u_i(B_i)) = f((c+1)/d) + f((c-1)/d) + (n-2)f(c/d) = nf(c/d) = \sum_{i \in N} f(u_i(A_i))$.
This means that if the additive welfarist rule with $f$ can choose an EF1 allocation (i.e., $\mathcal{A}$), it can also choose $\mathcal{B}$, which is not EF1.
Hence, this additive welfarist rule cannot always guarantee EF1.

\section{Values of the Modified Harmonic Function on Integers}
\label{ap:harmonic_extension}

Let $c > -1$ and $x \in \Zpluszero$.
Then,
\begin{align*}
    h_c(x) &= \int_{0}^{1} \frac{t^c-t^{x+c}}{1-t} \, \dd t \\
    &= \int_{0}^{1} \left( (t^c-t^{x+c}) \sum_{r=0}^\infty t^r \right) \, \dd t \tag{expansion of \protect{$(1-t)^{-1}$}} \\
    &= \int_{0}^{1} \sum_{r=0}^\infty (t^{r+c}-t^{r+c+x}) \, \dd t \\
    &= \sum_{r=0}^\infty \int_{0}^{1} (t^{r+c}-t^{r+c+x}) \, \dd t \tag{since \protect{$t^{r+c}-t^{r+c+x} \geq 0$}} \\
    &= \sum_{r=0}^\infty \left( \frac{1}{r+1+c} - \frac{1}{r+1+c+x} \right) \tag{\protect{$r+c, r+c+x \neq -1$}} \\
    &= \sum_{r=1}^\infty \left( \frac{1}{r+c} - \frac{1}{r+c+x} \right).
\end{align*}
When $x = 0$, this sum is equal to $0$, and hence $h_c(0) = 0$.
Otherwise, $x \in \Zplus$, and so
\begin{align*}
    h_c(x) &= \lim_{s \to \infty} \sum_{r=1}^s \left( \frac{1}{r+c} - \frac{1}{r+c+x} \right) \\
    &= \lim_{s \to \infty} \left( \sum_{r=1}^s \frac{1}{r+c} - \sum_{r=1}^s \frac{1}{r+c+x} \right) \\
    &= \lim_{s \to \infty} \left( \sum_{r=1}^x \frac{1}{r+c} + \sum_{r=x+1}^s \frac{1}{r+c}
    - \sum_{r=1}^{s-x} \frac{1}{r+c+x} - \sum_{r=s-x+1}^{s} \frac{1}{r+c+x} \right) \\
    &= \sum_{r=1}^x \frac{1}{r+c} - \lim_{s \to \infty} \sum_{r=s-x+1}^{s} \frac{1}{r+c+x} \\
    &= \sum_{r=1}^x \frac{1}{r+c} - \lim_{s \to \infty} \sum_{r=1}^{x} \frac{1}{s+c+r} \\
    &= \sum_{r=1}^x \frac{1}{r+c} - \sum_{r=1}^{x} \lim_{s \to \infty} \frac{1}{s+c+r} \\
    &= \sum_{r=1}^x \frac{1}{r+c}.
\end{align*}
This shows that
\[
    h_c(x) =
    \begin{cases}
        \frac{1}{1+c} + \frac{1}{2+c} + \cdots + \frac{1}{x+c} & \text{if $x \geq 1$}; \\
        0                                                      & \text{if $x = 0$}
    \end{cases}
\]
holds for $c > -1$.

Now, we consider the case $c = -1$.
We have
\begin{align*}
    h_{-1}(0) = \int_{0}^{1}{\frac{1-t^{-1}}{1-t}}\, \dd t = \int_{0}^{1} -\frac{1}{t} \, \dd t = -\infty,
\end{align*}
and for $x \in \Zplus$, we have
\begin{align*}
    h_{-1}(x) = \int_{0}^{1}{\frac{1-t^{x-1}}{1-t}}\, \dd t = \int_{0}^{1}{\frac{t^0-t^{(x-1)+0}}{1-t}}\, \dd t = h_0(x-1).
\end{align*}
This shows that
\[
    h_{-1}(x) =
    \begin{cases}
        1 + \frac{1}{2} + \cdots + \frac{1}{x-1} & \text{if $x \geq 2$}; \\
        0                                        & \text{if $x = 1$}; \\
        -\infty                                  & \text{if $x = 0$}
    \end{cases}
\]
holds.

\section{Other Properties of Additive Welfarist Rules}

\subsection{Normalized Instances with Two Agents}
\label{ap:real_normalized_two}

We continue the discussion from \Cref{subsec:real_normalized}.

Recall the definition of $\varphi_p$ from \Cref{subsec:examples}.
\citet{EckartPsVe24} proved that for $p \leq 0$, the additive welfarist rule with $\varphi_p$ guarantees EF1 for the class of normalized instances with two agents.
We shall show that there exist other additive welfarist rules that also ensure EF1 for this class of instances.

We rewrite Lemma 3 of \citet{EckartPsVe24} as follows.

\begin{proposition}[{\citet[Lemma 3]{EckartPsVe24}}]
\label{prop:real_normalized_two_pmean}
$\varphi_p$ satisfies {\condrealnormtwo} for all $p \leq 0$.
\end{proposition}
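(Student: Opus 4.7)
The plan is to dispose of the case $p = 0$ by direct calculation and then handle $p < 0$ via a two-step comparison through an intermediate point, exploiting the convexity of $x \mapsto x^p$ (for $p < 0$) at fixed product.

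For $p = 0$, observe that $\varphi_0(a) + \varphi_0(b) = \log(ab) < \log(cd) = \varphi_0(c) + \varphi_0(d)$, which also covers the degenerate case $\min\{a,b\} = 0$ since then $\log(ab) = -\infty$ while the right-hand side is at worst $-\infty$ only if $cd = 0$, contradicting $ab < cd$.

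For $p < 0$, write $\varphi_p(x) = -x^p$ (with $\varphi_p(0) = -\infty$). If $\min\{a,b\} = 0$ then $ab = 0 < cd$ forces $c, d > 0$, so the left-hand side is $-\infty$ and we are done. Otherwise, all four values are positive; assume without loss of generality that $a = \min\{a,b\}$, $c = \min\{c,d\}$, so $0 < a \le c \le d$ and $a \le b$. I would introduce the intermediate point $b^* := cd/a$, which satisfies $a b^* = cd > ab$, hence $b^* > b$. Since $\varphi_p$ is strictly decreasing, $\varphi_p(b) > \varphi_p(b^*)$, so
\[
\varphi_p(a) + \varphi_p(b) > \varphi_p(a) + \varphi_p(b^*).
\]
It then remains to show $\varphi_p(a) + \varphi_p(b^*) \ge \varphi_p(c) + \varphi_p(d)$, where the two pairs now share the common product $P := cd$.

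The second comparison I would prove via the one-variable function $g(t) := \varphi_p(t) + \varphi_p(P/t) = -t^p - P^p t^{-p}$ on $(0, \sqrt{P}]$. A direct computation gives
\[
g'(t) = p\, t^{-1}\bigl( (P/t)^p - t^p \bigr),
\]
and for $t < \sqrt{P}$ we have $P/t > t$, hence $(P/t)^p < t^p$ (since $p < 0$), so the bracket is negative and $g'(t) > 0$. Thus $g$ is strictly increasing on $(0, \sqrt{P})$. Because $c \le d$ implies $c \le \sqrt{P}$, and $a \le c$, we conclude $g(a) \le g(c)$, i.e.\ $\varphi_p(a) + \varphi_p(b^*) \ge \varphi_p(c) + \varphi_p(d)$. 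Chaining the two inequalities yields the strict inequality required by \condrealnormtwo.

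The main obstacle is verifying the monotonicity of $g$ cleanly (getting the signs right for negative $p$) and checking that the WLOG reduction truly puts $a$ in the strictly decreasing region $a \le c \le \sqrt{P}$; the rest is bookkeeping for the edge cases at $0$ and for $p = 0$ versus $p < 0$.
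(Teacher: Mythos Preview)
Your overall strategy is sound and would give a self-contained proof (the paper itself simply cites \citet[Lemma~3]{EckartPsVe24} and only supplies the extension to the boundary case $\min\{a,b\}=0$). However, the write-up as it stands has the inequality directions backwards in the key chain for $p<0$, and this is not a harmless slip: chaining your two displayed inequalities would yield $\varphi_p(a)+\varphi_p(b) > \varphi_p(c)+\varphi_p(d)$, the opposite of Condition~\condnorealnormtwo.

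The specific errors are: (i) $\varphi_p(x)=-x^p$ is strictly \emph{increasing} for $p<0$ (as it must be for an additive welfarist rule), not decreasing; hence from $b<b^*$ you get $\varphi_p(b)<\varphi_p(b^*)$ and so $\varphi_p(a)+\varphi_p(b)<\varphi_p(a)+\varphi_p(b^*)$. (ii) Your computation that $g$ is strictly increasing on $(0,\sqrt{P}]$ is correct, but $g(a)\le g(c)$ translates to $\varphi_p(a)+\varphi_p(b^*)\le \varphi_p(c)+\varphi_p(d)$, not $\ge$. With both corrections the chain becomes
\[
\varphi_p(a)+\varphi_p(b) \;<\; \varphi_p(a)+\varphi_p(b^*) \;\le\; \varphi_p(c)+\varphi_p(d),
\]
which is exactly what is needed, and strictness is guaranteed by the first step since $ab<cd$ forces $b<b^*$. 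So the argument is salvageable with sign fixes alone; just be careful that the welfare function is increasing by construction.
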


Note that \citet[Lemma 3]{EckartPsVe24} used a restricted version of {\condrealnormtwo} where $a, b, c, d \in \Rplus$ instead of $\Rpluszero$, but this can be easily extended to our version of {\condrealnormtwo}.
Indeed, if at least one of $a, b, c, d$ is zero, then $\min\{a, b\} \leq \min\{c, d\}$ and $ab < cd$ implies that at least one of $a$ and $b$ is zero while $c$ and $d$ are both positive.
Hence, for $p \leq 0$ we have $\varphi_p(a) + \varphi_p(b) = -\infty < \varphi_p(c) + \varphi_p(d)$, and so $\varphi_p$ satisfies {\condrealnormtwo}.

We show that any linear combination of such $\varphi_p$ also satisfies {\condrealnormtwo}.
More formally, let $\Psi$ be the set of all functions $\psi : \Rpluszero \to \R \cup \{-\infty\}$ for which there exist $k \in \Zplus$, $a_1, \ldots, a_k \in \Rplus$, and $p_1, \ldots, p_k \leq 0$ such that $\psi = \sum_{t=1}^k a_t \varphi_{p_t}$.
Then, the following proposition follows directly from \Cref{prop:real_normalized_two_sufficient,prop:real_normalized_two_pmean}.

\begin{proposition}
\label{prop:real_normalized_two_pmeanlincomb}
Let $\psi \in \Psi$.
Then, $\psi$ satisfies {\condrealnormtwo}.
Therefore, the additive welfarist rule with~$\psi$ guarantees EF1 for normalized instances with two agents.
\end{proposition}

Clearly, $\Psi$ contains functions other than those defining the $p$-mean rules.
However, it is possible that additive welfarist rules defined by two distinct functions are equivalent to each other.
For example, the MNW rule is equivalent to the additive welfarist rule defined by $\alpha\log x + \beta$ for any $\alpha \in \Rplus$ and $\beta\in\R$.
We prove via the following proposition that there exists an additive welfarist rule that is not equivalent to a $p$-mean rule but nevertheless guarantees EF1 for the class of normalized instances with two agents.

\begin{proposition}
\label{prop:real_normalized_two_notpmean}
Let $\psi = \varphi_0 + 40\varphi_{-1}$.
Then, the additive welfarist rule with $\psi$ is not equivalent to the additive welfarist rule with $\varphi_p$ for any $p \leq 0$.
\end{proposition}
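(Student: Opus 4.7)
The plan is to exploit the fact that every $p$-mean rule with $p \leq 0$ is \emph{scale-invariant}: multiplying every agent's utility for every good by a common positive constant $c$ does not change the argmax set of $\sum_i \varphi_p(u_i(A_i))$, since for $p = 0$ the sum shifts by the additive constant $n \log c$, while for $p < 0$ it is multiplied by the positive constant $c^p$. By contrast, $\psi(cx) = \log c + \log x - (40/c)(1/x)$ mixes the two scaling behaviours incompatibly, so the $\psi$-welfare of an allocation depends on the absolute size of the utilities, not only on the relative ratios. I will turn this asymmetry into a separation: exhibit an instance $I_1$ and a rescaling $I_c$ of it on which $\psi$ picks different allocations.

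The instance $I_1$ has two agents and two goods with $u_1(g_1) = 2$, $u_1(g_2) = 1$, $u_2(g_1) = 5$, $u_2(g_2) = 2$. The two allocations that assign both goods to one agent yield $\psi$-welfare $-\infty$ and can be safely ignored; the only remaining candidates are $\mathcal{A} = (\{g_1\}, \{g_2\})$, giving the utility vector $(2, 2)$, and $\mathcal{B} = (\{g_2\}, \{g_1\})$, giving $(1, 5)$. A short calculation will show $\psi(\mathcal{A}) = 2\log 2 - 40 > \log 5 - 48 = \psi(\mathcal{B})$, so the $\psi$-rule uniquely picks $\mathcal{A}$ on $I_1$. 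Next I will rescale every utility by a large constant $c$ (for concreteness, $c = 100$) to form the instance $I_c$; the two candidate utility vectors become $(2c, 2c)$ and $(c, 5c)$, and the difference $[\psi(c) + \psi(5c)] - [\psi(2c) + \psi(2c)]$ simplifies to $\log(5/4) - 8/c$, which is strictly positive once $c > 8/\log(5/4)$. Hence the $\psi$-rule uniquely picks $\mathcal{B}$ on $I_c$.

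To finish, I will argue by contradiction: if the $\psi$-rule coincided with the $\varphi_p$-rule for some $p \leq 0$, then the $\varphi_p$-argmax would be $\{\mathcal{A}\}$ on $I_1$ and $\{\mathcal{B}\}$ on $I_c$, but the scale-invariance observation forces these two argmax sets to be identical, contradicting $\mathcal{A} \neq \mathcal{B}$. The only delicate point is the quantitative verification that the two strict $\psi$-inequalities indeed point in opposite directions at the chosen scales, and that no additional allocation can spoil the $\psi$-ranking; both are routine, and I do not foresee any genuine obstacle beyond this check.
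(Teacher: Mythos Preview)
Your argument is correct and considerably cleaner than the paper's. The paper constructs a parameterized family of \emph{normalized} instances $\mathcal{I}_k$ with $2k+1$ goods, verifies by exhaustion that $\psi$ and $\varphi_0$ pick different allocations on $\mathcal{I}_{20}$, and then for $p<0$ uses a first-order analysis of the $\varphi_p$-maximizer (locating the real argmax $y^*$ of a one-variable function) to show that no single $p$ can match $\psi$ on both $\mathcal{I}_{20}$ and $\mathcal{I}_{100}$. You instead isolate the structural property that distinguishes $\psi$ from every $\varphi_p$: all $p$-mean rules with $p\le 0$ are invariant under a common rescaling of utilities, whereas $\psi=\varphi_0+40\varphi_{-1}$ mixes two different scaling behaviours and is not. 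A single two-good instance together with one rescaled copy then suffices, and the arithmetic you sketch ($2\log 2 - 40 > \log 5 - 48$ at scale $1$, and $\log(5/4)-8/c>0$ at scale $c=100$) checks out.

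The one thing you give up relative to the paper is that your instances $I_1$ and $I_c$ are not normalized ($u_1(M)=3$ versus $u_2(M)=7$), whereas the paper's $\mathcal{I}_k$ are. The proposition as stated does not require normalized instances, so your proof is complete; but the paper's choice has the side benefit of showing that $\psi$ and $\varphi_p$ already differ within the class of normalized two-agent instances, which is the class under discussion in that section. If you wanted to recover this, you could add a dummy good valued only by agent~$1$ to equalize the totals, though this forces you to rule out a few more allocations. As written, your proof establishes the proposition and does so by a more transparent mechanism.
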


\begin{proof}
For each $k \in \Zplus$, define the instance $\mathcal{I}_k$ with two agents and $m = 2k+1$ goods such that the utilities of the goods are as follows.
\begin{itemize}
    \item $u_1(g_j) = 4$ for $j \in \{1, \ldots, k\}$, $u_1(g_{k+1}) = 1$, and $u_1(g_j) = 0$ for $j \in \{k+2, \ldots, m\}$.
    \item $u_2(g_j) = 2$ for $j \in \{1, \ldots, 2k\}$, and $u_2(g_m) = 1$.
\end{itemize}
This instance is a normalized instance; moreover, it is positive-admitting since the allocation where agent $1$ receives $g_1$ and agent~$2$ receives the remaining goods gives positive utility to every agent.

Let $f$ be a function that satisfies {\condrealnormtwo}, and let $\mathcal{A} = (A_1, A_2)$ be an allocation chosen by the additive welfarist rule with $f$.
We have $\{g_{k+2}, \ldots, g_m\} \subseteq A_2$; otherwise, if some of these goods are with agent $1$, then transferring these goods to agent $2$ increases $u_2(A_2)$ and does not decrease $u_1(A_1)$, thereby increasing $\sum_{i=1}^2 f(u_i(A_i))$ and contradicting the assumption that $\mathcal{A}$ is chosen by the additive welfarist rule with $f$.
By \Cref{prop:real_normalized_two_sufficient}, $\mathcal{A}$ is EF1.
We have $|A_1| \geq k/2$; otherwise, if $|A_1| < k/2$, then we have $u_1(A_1) < 4(k/2) = 2k$ and $u_1(A_2 \setminus \{g\}) > 4(k/2) = 2k$ for all $g \in A_2$, making the allocation not EF1.
Moreover, we have $|A_1| \leq k$; otherwise, if $|A_1| > k$, then $A_1 = \{g_1, \ldots, g_{k+1}\}$ and $A_2 = \{g_{k+2}, \ldots, g_m\}$, and we have $u_2(A_2) = 2k-1 < 2k = u_2(A_1 \setminus \{g\})$, making the allocation not EF1.
Finally, we have $g_{k+1} \in A_2$; otherwise, exchanging $g_{k+1} \in A_1$ with $g \in A_2$ for any $g \in \{g_1, \ldots, g_k\}$ increases $u_1(A_1)$ and does not decrease $u_2(A_2)$, thereby increasing $\sum_{i=1}^2 f(u_i(A_i))$ and contradicting the assumption that $\mathcal{A}$ is chosen by the additive welfarist rule with $f$.
Overall, we have agent~$1$ receiving $x$ goods from $\{g_1, \ldots, g_k\}$ for some $k/2 \leq x \leq k$ and agent~$2$ receiving the remaining goods.
This gives $u_1(A_1) = 4x$ and $u_2(A_2) = 2(2k-x)+1 = 4k+1-2x$.

Note that $\varphi_p$ and $\psi$ both satisfy {\condrealnormtwo} by \Cref{prop:real_normalized_two_pmean,prop:real_normalized_two_pmeanlincomb}.

For $p = 0$, it can be verified (e.g., by exhaustion) that in the instance $\mathcal{I}_{20}$, we have $x = 20$ for~$\varphi_0$ and $x = 18$ for $\psi$, so the additive welfarist rules defined by the two respective functions are different.

Consider the case $p < 0$.
For each $k \in \Zplus$, let $F_k(y) = -(4y)^p - (4k+1-2y)^p$.
We have $\sum_{i=1}^2 \varphi_p(u_i(A_i)) = F_k(x)$.
Then,
\begin{align*}
    x = \argmax_{y \in [k/2, k] \cap \Z} F_k(y).
\end{align*}
Note that
\begin{align*}
    \frac{\partial F_k}{\partial y} = 2p\left((4k+1-2y)^{p-1} - 2(4y)^{p-1}\right)
\end{align*}
and
\begin{align*}
    \frac{\partial^2 F_k}{\partial y^2} = -4p(p-1)\left((4k+1-2y)^{p-2} + 4(4y)^{p-2}\right) < 0,
\end{align*}
so $F_k(y)$ (when viewed as a real function) achieves the maximum at $\partial F_k/\partial y = 0$.
Let $y^*$ be the value of $y$ at $\partial F_k/\partial y = 0$.
Then, we have $y^* = (4k+1)/(2+2^{(2p-1)/(p-1)})$, and $x = \lfloor y^* \rfloor$ or $x = \lceil y^* \rceil$.

Suppose that the additive welfarist rule with $\varphi_p$ chooses the same allocation as that with $\psi$ in $\mathcal{I}_{20}$, i.e., $x = 18$.
Then, we have $y^* < 19$, and with $k = 20$ it follows that $1/(2+2^{(2p-1)/(p-1)}) < 19/81$.
Now, consider the instance $\mathcal{I}_{100}$.
We have $y^* = (4k+1)/(2+2^{(2p-1)/(p-1)}) < 401(19/81) < 95$, so the additive welfarist rule with $\varphi_p$ chooses an allocation that gives fewer than $96$ goods to agent~$1$.
On the other hand, it can be verified that the additive welfarist rule with $\psi$ chooses an allocation that gives $96$ goods to agent~$1$.
Therefore, the two additive welfarist rules are not equivalent.
\end{proof}

Since the function $\psi$ in the statement of \Cref{prop:real_normalized_two_notpmean} satisfies $\psi \in \Psi$, \Cref{prop:real_normalized_two_pmeanlincomb,prop:real_normalized_two_notpmean} imply the following corollary.

\begin{corollary}
There exists an additive welfarist rule that is not equivalent to the $p$-mean rule for any $p\le 0$ but guarantees EF1 for every normalized instance with two agents.
\end{corollary}

By a similar argument as in \Cref{prop:real_normalized_two_notpmean}, it can be shown that for any $p_4 < p_3 \leq p_2 < p_1 \leq 0$, there exists $a \in \Rplus$ such that the additive welfarist rule with $\psi = \varphi_{p_1} + a\varphi_{p_2}$ is not a $p$-mean rule for any $p \leq 0$, and there exists $b \in \Rplus$ such that the additive welfarist rule with $\varphi_{p_1} + b\varphi_{p_2}$ is different from the additive welfarist rule with $\varphi_{p_3} + b\varphi_{p_4}$.

We now show that the additive welfarist rules defined by the modified logarithmic functions~$\lambda_c$ and modified harmonic functions $h_c$ do not guarantee EF1 for normalized instances with two agents.

\begin{proposition}
\label{prop:real_normalized_two_log}
For each $c > 0$, there exists a positive-admitting normalized instance with two agents such that the additive welfarist rule with $\lambda_c$ does not choose an EF1 allocation.
\end{proposition}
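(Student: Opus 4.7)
The plan is to exhibit, for each $c > 0$, an explicit two-agent, three-good normalized instance whose unique $\lambda_c$-welfare-maximizing allocation violates EF1. First I would define the instance by
\[
    u_1(g_1) = u_1(g_2) = c, \quad u_1(g_3) = 7c/8,
\]
\[
    u_2(g_1) = u_2(g_2) = 4c/3, \quad u_2(g_3) = 5c/24.
\]
Both agents' totals equal $23c/8$, so the instance is normalized, and the candidate allocation $\mathcal{A} = (\{g_3\}, \{g_1, g_2\})$ witnesses positive-admittance by yielding strictly positive utility to each agent.

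Next I would argue that $\mathcal{A}$ is not EF1: agent~1 has value $u_1(A_1) = 7c/8$, while $u_1(A_2 \setminus \{g\}) = c > 7c/8$ for each $g \in A_2$, so agent~1 envies agent~2 by more than one good. The crux is then to verify that $\mathcal{A}$ is the unique allocation maximizing $\sum_{i \in N} \lambda_c(u_i(A_i))$, equivalently the product $(u_1(A_1) + c)(u_2(A_2) + c)$. I would simply enumerate all eight partitions of $\{g_1, g_2, g_3\}$ and evaluate this product; expressed in units of $c^2/24$, the eight values come out to $93, 122, 122, 165, 87, 161, 161, 93$, with the maximum $165$ attained uniquely at $\mathcal{A}$. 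Since the welfarist rule must return $\mathcal{A}$, and $\mathcal{A}$ is not EF1, the proposition follows.

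The hard part is calibrating utilities so that two conflicting requirements hold simultaneously for \emph{every} $c > 0$: non-EF1 demands $u_1(g_3) < u_1(g_1) = u_1(g_2)$, whereas welfare-maximality of $\mathcal{A}$ requires the no-transfer inequality $u_2(g_i)\bigl(u_1(A_1) + u_1(g_i) + c\bigr) > u_1(g_i)\bigl(u_2(A_2) + c\bigr)$ for $i \in \{1, 2\}$. Strict concavity of $\lambda_c$ normally favors balanced (EF1) splits, and the construction threads the needle by exploiting the property that $\lambda_c(0) = \log c$ is \emph{finite} when $c > 0$, so leaving agent~1 with a small bundle is less costly under $\lambda_c$ than under the pure logarithm. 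Choosing every utility as a scalar multiple of $c$ makes every welfare product a multiple of $c^2$, reducing all comparisons to the same purely numerical inequalities that are valid for every $c > 0$.
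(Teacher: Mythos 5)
Your proof is correct. I verified the arithmetic: scaling so that $c=24$, the goods have utilities $(24,32)$, $(24,32)$, $(21,5)$, both agents' totals are $69$, and the products $(u_1(A_1)+c)(u_2(A_2)+c)$ over the eight allocations are indeed $93,122,122,165,87,161,161,93$ in units of $c^2/24$, with the unique maximum at $(\{g_3\},\{g_1,g_2\})$, which fails EF1 for agent~$1$ since $u_1(\{g_3\})=7c/8 < c = u_1(A_2\setminus\{g\})$ for either $g \in A_2$. Because every utility is a fixed multiple of $c$, all comparisons are independent of $c$, so the construction works uniformly for all $c>0$. Your route is genuinely different from the paper's: the paper builds a nine-good instance $\mathcal{I}_z$ with utilities proportional to a separate parameter $z$, assumes for contradiction that the rule always returns EF1 allocations, derives structural constraints on the chosen allocation, and then shows via a derivative computation at $z=0^+$ that for sufficiently small $z$ a welfare-improving transfer exists. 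Your argument is more elementary and fully explicit---a finite check over eight allocations with no limits or case analysis of the chosen allocation---and it cleanly isolates the reason the counterexample exists (finiteness of $\lambda_c(0)$). What the paper's limiting technique buys is reusability: essentially the same instance and the same derivative-at-zero argument are repeated verbatim for the modified harmonic functions $h_c$, where an explicit closed-form enumeration of welfare values would be much less tractable.
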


\begin{proof}
Let $c > 0$.
Suppose, for the sake of contradiction, that the additive welfarist rule with function $\lambda_c$ guarantees EF1 for all positive-admitting normalized instances with two agents.

Define the instance $\mathcal{I}_z$ for any $z \in \Rplus$ to be the normalized instance with $2$ agents, $9$ goods, and the following utilities:
\begin{itemize}
    \item $u_1(g_j) = 3z$ for $j \leq 8$ and $u_1(g_9) = z$;
    \item $u_2(g_j) = 5z$ for $j \leq 5$ and $u_2(g_j) = 0$ for $j \geq 6$.
\end{itemize}

Take an arbitrary $z \in \Rplus$ and consider the allocation $(A_1, A_2)$ chosen by the additive welfarist rule with function $\lambda_c$.
We must have $\{g_6, \ldots, g_9\} \subseteq A_1$; otherwise, the welfare can be increased by moving any of $g_6, \ldots, g_9$ to $A_1$.
Furthermore, by our assumption, this allocation must be EF1. Hence, we must have $\{g_6, \ldots, g_9\} \subsetneq A_1$; otherwise, agent~$1$ is not EF1 towards agent~$2$. Then, $u_1(A_1) \geq 13z$ and $u_2(A_2) \leq 20z$.

Fix some good $g_j \in A_1$ with $j \leq 5$. Note that
\begin{equation}
\label{eq:real_normalized_two_log_eq1}
\begin{aligned}
    \lambda_c (u_1(A_1)) - \lambda_c (u_1(A_1) - u_1(g_j)) 
    &= \log(u_1(A_1) + c) - \log(u_1(A_1) - u_1(g_j) + c) \\
    &= \log\left(\frac{u_1(A_1) + c}{u_1(A_1) - u_1(g_j) + c}\right) \\
    &\leq \log\left(\frac{13z + c}{10z + c}\right),
\end{aligned}
\end{equation}
and similarly,
\begin{equation}
\label{eq:real_normalized_two_log_eq2}
\begin{aligned}
    \lambda_c(u_2(A_2) + u_2(g_j)) - \lambda_c(u_2(A_2)) 
    &= \log(u_2(A_2) + u_2(g_j) + c) - \log(u_2(A_2) + c) \\
    &= \log\left(\frac{u_2(A_2) + u_2(g_j) + c}{u_2(A_2) + c}\right) \\
    &\geq \log\left(\frac{25z + c}{20z + c}\right).
\end{aligned}
\end{equation}
Observe that
\begin{align*}
    &\lim_{z \to 0^+} \left(\dv{z} \left( \log\left(\frac{13z + c}{10z + c}\right) - \log\left(\frac{25z + c}{20z + c}\right) \right) \right)\\
    &= \lim_{z \to 0^+} \left(\left(\frac{13}{13z + c} - \frac{10}{10z + c}\right) - \left(\frac{25}{25z + c} - \frac{20}{20z + c}\right)\right) 
    = \frac{3}{c} - \frac{5}{c} 
    < 0.
\end{align*}
When $z = 0$ we have
\[
    \log\left(\frac{13z + c}{10z + c}\right) = \log\left(\frac{25z + c}{20z + c}\right) = \log\left(\frac{c}{c}\right) = 0,
\]
so for small enough $z > 0$, 
\begin{equation}
\label{eq:real_normalized_two_log_eq3}
    \log\left(\frac{13z + c}{10z + c}\right) < \log\left(\frac{25z + c}{20z + c}\right).
\end{equation}
Fix such a small enough $z$, and consider the instance $\mathcal{I}_z$ and the allocation $(A_1, A_2)$ chosen by the additive welfarist rule with function $\lambda_c$.
Combining \eqref{eq:real_normalized_two_log_eq1}, \eqref{eq:real_normalized_two_log_eq2}, and \eqref{eq:real_normalized_two_log_eq3}, we get for this $z$ that
\begin{align*}
    \lambda_c (u_1(A_1)) - \lambda_c (u_1(A_1) - u_1(g_j)) < \lambda_c(u_2(A_2) + u_2(g_j)) - \lambda_c(u_2(A_2)).
\end{align*}
In other words, moving some $g_j \in A_1$ with $j \leq 5$ to $A_2$ strictly increases the welfare. 
This contradicts the assumption that the allocation $(A_1, A_2)$ is chosen by the additive welfarist rule with function~$\lambda_c$.
\end{proof}

\begin{proposition}
For each $c > -1$, there exists a positive-admitting normalized instance with two agents such that the additive welfarist rule with $h_c$ does not choose an EF1 allocation.
\end{proposition}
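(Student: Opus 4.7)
The plan is to mimic the proof of the preceding proposition (\Cref{prop:real_normalized_two_log}) using the same family $\{\mathcal{I}_z\}_{z > 0}$ of scaled normalized instances with two agents and nine goods. Assume for contradiction that the additive welfarist rule with $h_c$ chooses an EF1 allocation on every positive-admitting normalized two-agent instance. Fix $z > 0$ and let $(A_1, A_2)$ be the allocation chosen on $\mathcal{I}_z$. Exactly as in the proof of \Cref{prop:real_normalized_two_log}, welfare-maximization forces $\{g_6, \ldots, g_9\} \subseteq A_1$, and the assumed EF1 property forces some $g_j \in A_1$ with $j \leq 5$; in particular, $u_1(A_1) \geq 13z$ and $u_2(A_2) \leq 20z$. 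It will suffice to show that for all sufficiently small $z > 0$, transferring such a $g_j$ from $A_1$ to $A_2$ strictly increases the total $h_c$-welfare, contradicting the welfare-maximality of $(A_1, A_2)$.

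To bound the welfare change, I would use the strict concavity of $h_c$ on $\Rplus$, which follows by differentiating twice under the integral sign from the definition:
\[
    h_c''(x) = \int_0^1 \frac{-t^{x+c}(\ln t)^2}{1-t}\, \dd t < 0.
\]
Concavity makes both $x \mapsto h_c(x) - h_c(x - 3z)$ and $x \mapsto h_c(x + 5z) - h_c(x)$ non-increasing, so $u_1(A_1) \geq 13z$ and $u_2(A_2) \leq 20z$ give
\[
    h_c(u_1(A_1)) - h_c(u_1(A_1) - 3z) \leq h_c(13z) - h_c(10z)
\]
and
\[
    h_c(u_2(A_2) + 5z) - h_c(u_2(A_2)) \geq h_c(25z) - h_c(20z).
\]
The net welfare change is therefore at least $F(z) := \bigl(h_c(25z) - h_c(20z)\bigr) - \bigl(h_c(13z) - h_c(10z)\bigr)$, so it suffices to exhibit $z > 0$ with $F(z) > 0$.

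The key computation is $F'(0)$. Since $h_c(0) = 0$, we have $F(0) = 0$; differentiating term by term gives
\[
    F'(0) = (25 - 20 - 13 + 10)\, h_c'(0) = 2\, h_c'(0),
\]
where $h_c'(0) = \int_0^1 \frac{-t^c \ln t}{1-t}\, \dd t$. The main obstacle is verifying that $h_c'(0)$ is finite and strictly positive whenever $c > -1$. Positivity is immediate since the integrand is strictly positive on $(0, 1)$. For finiteness, near $t = 0$ the integrand behaves like $-t^c \ln t$, which is integrable since $c > -1$, and near $t = 1$ it is bounded since $-\ln t / (1-t) \to 1$. Thus $F'(0) > 0$, and a first-order Taylor expansion yields $F(z) > 0$ for all sufficiently small $z > 0$, producing the required contradiction.
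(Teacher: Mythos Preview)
Your proposal is correct and follows essentially the same route as the paper's own proof: the same family of instances $\mathcal{I}_z$, the same structural reduction to showing that moving some $g_j$ with $j\le 5$ from $A_1$ to $A_2$ increases welfare for small $z$, and the same one-sided derivative computation at $z=0$. The only cosmetic difference is that you phrase the bounds $h_c(u_1(A_1))-h_c(u_1(A_1)-3z)\le h_c(13z)-h_c(10z)$ and $h_c(u_2(A_2)+5z)-h_c(u_2(A_2))\ge h_c(25z)-h_c(20z)$ via strict concavity of $h_c$, while the paper writes out the integral representation of each difference and uses pointwise monotonicity of the integrand; the resulting inequalities are identical, and your $F'(0)=2\,h_c'(0)=2\int_0^1 \frac{-t^c\ln t}{1-t}\,\dd t$ is exactly the paper's $-\int_0^1 \frac{2t^c\log t}{1-t}\,\dd t$.
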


\begin{proof}
Let $c > -1$.
Suppose, for the sake of contradiction, that the additive welfarist rule with function $h_c$ guarantees EF1 for all positive-admitting normalized instances with two agents. Note that we use the extension of $h_c$ to the non-negative real domain, since the agents' utilities can take any non-negative real value.

For an arbitrary $z \in \Rplus$, consider the instance $\mathcal{I}_z$ defined in the proof of \Cref{prop:real_normalized_two_log} and the allocation $(A_1, A_2)$ chosen by the additive welfarist rule with function $h_c$.
As shown in the proof of \Cref{prop:real_normalized_two_log}, there exists $g_j \in A_1$ with $j \leq 5$, and moreover $u_1(A_1) \geq 13z$ and $u_2(A_2) \leq 20z$.

Fix some good $g_j \in A_1$ with $j \leq 5$. Note that
\begin{equation}
\label{eq:real_normalized_two_harmonic_eq1}
\begin{aligned}
    h_c (u_1(A_1)) - h_c (u_1(A_1) - u_1(g_j)) 
    &= \int_{0}^{1} \frac{t^c-t^{u_1(A_1)+c}}{1-t} \, \dd t 
    - \int_{0}^{1} \frac{t^c-t^{u_1(A_1)-u_1(g_j)+c}}{1-t} \, \dd t \\
    &= \int_{0}^{1}{\frac{t^{u_1(A_1)-u_1(g_j)+c}-t^{u_1(A_1)+c}}{1-t}}\, \dd t \\
    &\leq \int_{0}^{1}{\frac{t^{10z+c}-t^{13z+c}}{1-t}}\, \dd t,
\end{aligned}
\end{equation}
and similarly,
\begin{equation}
\label{eq:real_normalized_two_harmonic_eq2}
\begin{aligned}
    h_c(u_2(A_2) + u_2(g_j)) - h_c(u_2(A_2)) 
    &= \int_{0}^{1} \frac{t^c-t^{u_2(A_2)+u_2(g_j)+c}}{1-t} \, \dd t 
    - \int_{0}^{1} \frac{t^c-t^{u_2(A_2)+c}}{1-t} \, \dd t \\
    &= \int_{0}^{1}{\frac{t^{u_2(A_2)+c}-t^{u_2(A_2)+u_2(g_j)+c}}{1-t}}\, \dd t \\
    &\geq \int_{0}^{1}{\frac{t^{20z+c}-t^{25z+c}}{1-t}}\, \dd t.
\end{aligned}
\end{equation}
Observe that
\begin{align*}
    &\lim_{z \to 0^+} \left(\dv{z} \left( \int_{0}^{1}{\frac{t^{10z+c}-t^{13z+c}}{1-t}}\, \dd t - \int_{0}^{1}{\frac{t^{20z+c}-t^{25z+c}}{1-t}}\, \dd t \right) \right)\\
    &= \lim_{z \to 0^+} \left( \int_{0}^{1}{\frac{\left(10t^{10z+c}-13t^{13z+c} - 20t^{20z+c}+25t^{25z+c}\right)\log t}{1-t}}\, \dd t \right) 
    = \int_{0}^{1}{\frac{2t^{c}\log t}{1-t}}\, \dd t 
    < 0.
\end{align*}
When $z = 0$ we have
\[
    \int_{0}^{1}{\frac{t^{10z+c}-t^{13z+c}}{1-t}}\, \dd t = \int_{0}^{1}{\frac{t^{20z+c}-t^{25z+c}}{1-t}}\, \dd t = 0,
\]
so for small enough $z > 0$, 
\begin{equation}
\label{eq:real_normalized_two_harmonic_eq3}
    \int_{0}^{1}{\frac{t^{10z+c}-t^{13z+c}}{1-t}}\, \dd t < \int_{0}^{1}{\frac{t^{20z+c}-t^{25z+c}}{1-t}}\, \dd t.
\end{equation}
Fix such a small enough $z$, and consider the instance $\mathcal{I}_z$ and the allocation $(A_1, A_2)$ chosen by the additive welfarist rule with function $h_c$.
Combining \eqref{eq:real_normalized_two_harmonic_eq1}, \eqref{eq:real_normalized_two_harmonic_eq2}, and \eqref{eq:real_normalized_two_harmonic_eq3}, we get for this $z$ that
\begin{align*}
    h_c (u_1(A_1)) - h_c (u_1(A_1) - u_1(g_j)) 
    < h_c(u_2(A_2) + u_2(g_j)) - h_c(u_2(A_2)).
\end{align*}
In other words, moving some $g_j \in A_1$ with $j \leq 5$ to $A_2$ strictly increases the welfare. 
This contradicts the assumption that the allocation $(A_1, A_2)$ is chosen by the additive welfarist rule with function~$h_c$.
\end{proof}

\subsection{Integer-Valued Identical-Good Instances}
\label{ap:integer_idengood}

We continue the discussion from \Cref{subsec:integer_idengood}.

We state an interesting result regarding functions $f$ satisfying {\condintidengood}: the marginal increase $f(x+1)-f(x)$ grows as $\Theta(1/x)$ on the positive integers, which means that the function is similar to the logarithmic function in terms of growth rate.

\begin{proposition}
\label{prop:integer_idengood_marginal}
Let $f: \Rpluszero \to \R \cup \{-\infty\}$ be a strictly increasing function that satisfies {\condintidengood}.
Then, there exist $C_1, C_2 \in \Rplus$ such that for all $x \in \Zplus$, we have $C_1/x \leq f(x+1) - f(x) \leq C_2/x$.
\end{proposition}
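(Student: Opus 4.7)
The plan is to exhibit explicit constants $C_1, C_2 \in \Rplus$ depending only on $f(1), f(2), f(3)$ and to derive the two bounds by combining a telescoping identity with the monotonicity of $\Delta_{f,y}(1)$ in $y$ (\condintidengood{}a, which follows from \condintidengood{} by \Cref{lem:integer_idengood_cond3_3a}). The key observation is that {\condintidengood} gives, in one direction, that $\Delta_{f,1}(x) > \Delta_{f,2}(1) = f(3) - f(2)$ for every $x \in \Zplus$, while in the other direction it gives $\Delta_{f,2}(x) < \Delta_{f,1}(1) = f(2) - f(1)$. So the quantities $f(2x)-f(x)$ and $f(3x)-f(2x)$ are controlled by fixed constants on both sides.

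For the lower bound, take $C_1 = f(3) - f(2) > 0$. By telescoping, $f(2x) - f(x) = \sum_{y=x}^{2x-1} \Delta_{f,y}(1)$, and {\condintidengood}a implies $\Delta_{f,y}(1) \leq \Delta_{f,x}(1)$ for each $y \geq x$. Hence $x \cdot \Delta_{f,x}(1) \geq f(2x) - f(x) > C_1$, giving $f(x+1) - f(x) > C_1/x$.

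For the upper bound, take $C_2' = f(2) - f(1) > 0$. Telescoping yields $f(3x) - f(2x) = \sum_{y=2x}^{3x-1} \Delta_{f,y}(1)$, and {\condintidengood}a implies each summand is at least $\Delta_{f, 3x-1}(1)$, so $x \cdot \Delta_{f, 3x-1}(1) \leq f(3x) - f(2x) < C_2'$, giving $\Delta_{f, 3x-1}(1) < C_2'/x$. To pass from integers of the form $3x-1$ to all $y \in \Zplus$, set $x' = \lfloor (y+1)/3 \rfloor$ for $y \geq 2$; then $3x' - 1 \leq y$ and $x' \geq y/6$, so by monotonicity $\Delta_{f,y}(1) \leq \Delta_{f, 3x'-1}(1) < C_2'/x' \leq 6C_2'/y$. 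For $y = 1$ the bound $\Delta_{f,1}(1) = f(2) - f(1) \leq 6C_2'$ is immediate, so $C_2 = 6(f(2) - f(1))$ works uniformly.

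There is no real obstacle here beyond being slightly careful with the gap between $y$ and the nearest integer of the form $3x-1$ when converting the bound to all of $\Zplus$; that is handled by the constant-factor loss in choosing $x' \geq y/6$. Everything else follows immediately from {\condintidengood} applied with $k=1$ and the monotonicity consequence {\condintidengood}a.
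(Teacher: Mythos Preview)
Your argument is correct and follows essentially the same approach as the paper's proof: the lower bound via $x\,\Delta_{f,x}(1) \ge \Delta_{f,1}(x) > \Delta_{f,2}(1)$ is identical, and the upper bound via $x\,\Delta_{f,3x-1}(1) \le \Delta_{f,2}(x) < \Delta_{f,1}(1)$ together with the floor trick $x' = \lfloor (y+1)/3 \rfloor$ is the same mechanism the paper uses. Your constant $C_2 = 6(f(2)-f(1))$ is in fact slightly sharper than the paper's $C_2 = 9(f(2)-f(1))$, since your estimate $x' \ge y/6$ for $y \ge 2$ is tighter than the paper's $y \ge x/9$ for $x \ge 3$.
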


\begin{proof}
We first prove the left inequality.
For all $x \in \Zplus$, we have $\Delta_{f, x}(1) \geq \Delta_{f, x+i}(1)$ for all $i \in \Zpluszero$ by {\condintidengood}a.
Therefore,
\begin{align*}
    f(x+1) - f(x) &= \Delta_{f, x}(1) \\
    &\geq \frac{1}{x} \sum_{i=0}^{x-1} \Delta_{f, x+i}(1) \\
    &= \frac{1}{x} \sum_{i=0}^{x-1} (f(x+i+1) - f(x+i)) \\
    &= \frac{1}{x} (f(2x) - f(x)) \tag{by telescoping sum} \\
    &= \frac{1}{x} \Delta_{f, 1}(x) \\
    &> \frac{1}{x} \Delta_{f, 2}(1) \tag{by {\condintidengood}} \\
    &= \frac{f(3) - f(2)}{x}.
\end{align*}
Noting that $C_1 := f(3) - f(2) \in \Rplus$ since $f$ is strictly increasing, this establishes the left inequality.

We now turn to the right inequality.
Let $x \geq 3$ be a positive integer, and let $y = \lfloor (x+1)/3 \rfloor$.
Note that $y \in \Zplus$ and $3y-1 \leq x$, so $\Delta_{f, 3y-1}(1) \geq \Delta_{f, x}(1)$ by {\condintidengood}a.
Moreover, we have $\Delta_{f, 3y-1}(1) \leq \Delta_{f, 2y+i}(1)$ for all $0 \leq i \leq y-1$ by {\condintidengood}a.
Therefore,
\begin{align*}
    f(x+1) - f(x) &= \Delta_{f, x}(1) \\
    &\leq \Delta_{f, 3y-1}(1) \\
    &\le \frac{1}{y} \sum_{i=0}^{y-1} \Delta_{f, 2y+i}(1) \\
    &= \frac{1}{y} \sum_{i=0}^{y-1} (f(2y+i+1) - f(2y+i)) \\
    &= \frac{1}{y} (f(3y) - f(2y)) \tag{by telescoping sum} \\
    &= \frac{1}{y} \Delta_{f, 2}(y) \\
    &< \frac{1}{y} \Delta_{f, 1}(1) \tag{by {\condintidengood}} \\
    &= \frac{f(2) - f(1)}{y}.
\end{align*}
Now, for $x \geq 3$, we have
\begin{align*}
    y = \left\lfloor \frac{x+1}{3} \right\rfloor
    \geq \frac{x+1}{3} - 1
    = \frac{x}{3} - \frac{2}{3}
    \geq \frac{x}{3} - \frac{2}{3} \cdot \frac{x}{3}
    = \frac{x}{9},
\end{align*}
and hence
\begin{align*}
    f(x+1) - f(x) < \frac{f(2) - f(1)}{y} \leq \frac{9(f(2) - f(1))}{x}.
\end{align*}
This shows that $f(x+1) - f(x) \leq C_2/x$ for all integers $x \geq 3$, where $C_2 := 9(f(2) - f(1))$ is positive since $f$ is strictly increasing.

Note that $f(x+1) - f(x) \leq C_2/x$ holds trivially for $x=1$.
For $x=2$, we have $f(3)-f(2) = \Delta_{f,2}(1) < \Delta_{f,1}(1) = f(2)-f(1) = (2/9)(C_2/2) \leq C_2/2$, and so the inequality $f(x+1) - f(x) \leq C_2/x$ holds as well.
\end{proof}

Next, we show that within the class of $p$-mean rules, the only rule that guarantees EF1 for integer-valued identical-good instances is the $0$-mean rule (i.e., MNW).

\begin{proposition}
For each $n\ge 2$, it holds that $p = 0$ if and only if for every positive-admitting integer-valued identical-good instance with $n$ agents, every allocation chosen by the $p$-mean rule is EF1.
\end{proposition}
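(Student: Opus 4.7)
The plan is to leverage \Cref{thm:integer_idengood}, which reduces the question to whether $\varphi_p$ satisfies \condintidengood.

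For the direction $p = 0 \Rightarrow$ EF1, note that $\varphi_0 = \log$, so the $0$-mean rule is the MNW rule, and by \citet{CaragiannisKuMo19} the MNW rule guarantees EF1 for every positive-admitting instance---in particular, for every positive-admitting integer-valued identical-good instance.

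For the converse, I would argue by contrapositive: for each $p \neq 0$, I show that $\varphi_p$ fails \condintidengood, so by \Cref{thm:integer_idengood} there is a counterexample. I split into three cases. For $p \geq 1$, \Cref{prop:integer_binary_pmean} tells us $\varphi_p$ does not satisfy \condintbinary; since \condintidengood\ implies \condintbinary\ by \Cref{prop:integer_idengood_implies_binary}, $\varphi_p$ cannot satisfy \condintidengood. For $0 < p < 1$, observe that $\varphi_p(x) = x^p$ gives $\Delta_{\varphi_p, 0}(1) = 1$ while $\Delta_{\varphi_p, 1}(a) = a^p(2^p - 1) \to \infty$ as $a \to \infty$ (since $2^p - 1 > 0$); picking $a$ a large enough integer yields $\Delta_{\varphi_p, 0}(1) \leq \Delta_{\varphi_p, 1}(a)$, violating \condintidengood. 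For $p < 0$, we have $\varphi_p(x) = -x^p$ with $\varphi_p(0) = -\infty$, so for any fixed $k \geq 1$ we compute $\Delta_{\varphi_p, k}(b) = b^p (k^p - (k+1)^p)$, which tends to $0^+$ as $b \to \infty$; since $\Delta_{\varphi_p, k+1}(1) = (k+1)^p - (k+2)^p > 0$ is a fixed positive constant, picking $b$ a large enough integer yields $\Delta_{\varphi_p, k}(b) < \Delta_{\varphi_p, k+1}(1)$, again violating \condintidengood.

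In each case, the witness $a, b, k$ to the failure of \condintidengood\ can be traced back through the construction in the proof of \Cref{lem:real_idengood_cond1} (which uses an identical-good, integer-valued instance) to yield an explicit integer-valued identical-good instance on which the $p$-mean rule chooses an allocation that is not EF1. I expect no serious obstacle here: the three sub-cases are essentially routine limit/growth comparisons, and the heavy lifting is already done by \Cref{thm:integer_idengood} and \Cref{prop:integer_binary_pmean,prop:integer_idengood_implies_binary}. The only subtlety is handling the case $p < 0$ carefully, where $\varphi_p(0) = -\infty$ forces us to take $k \geq 1$ rather than $k = 0$ when exhibiting the violating triple.
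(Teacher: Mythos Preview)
Your proposal is correct and follows the same overall architecture as the paper: reduce via \Cref{thm:integer_idengood} to showing that $\varphi_p$ fails \condintidengood\ for every $p\neq 0$, and handle $p=0$ via \citet{CaragiannisKuMo19}. The details differ in each sub-case, though.

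For $p\ge 1$, the paper gives a one-line direct violation ($\Delta_{\varphi_p,0}(1)=1\le 2^p-1=\Delta_{\varphi_p,1}(1)$), whereas you route through \Cref{prop:integer_binary_pmean} and \Cref{prop:integer_idengood_implies_binary}; both work, the paper's is shorter. For $0<p<1$ and $p<0$, the roles reverse: the paper invokes the growth-rate result \Cref{prop:integer_idengood_marginal} (that $f(x{+}1)-f(x)=\Theta(1/x)$ whenever $f$ satisfies \condintidengood) and checks that $\varphi_p$ violates this asymptotic, while you exhibit explicit violating triples directly---$\Delta_{\varphi_p,1}(a)=a^p(2^p-1)\to\infty$ for $0<p<1$, and $\Delta_{\varphi_p,k}(b)=b^p(k^p-(k+1)^p)\to 0$ for $p<0$ with $k\ge 1$. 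Your argument here is more elementary and self-contained, as it avoids the auxiliary proposition; the paper's approach has the advantage of isolating a reusable structural fact about \condintidengood. Your remark that one must take $k\ge 1$ when $p<0$ (since $\Delta_{\varphi_p,0}(b)=\infty$) is exactly the right observation.
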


\begin{proof}
($\Leftarrow$) 
By \Cref{thm:integer_idengood}, it suffices to show that $\varphi_p$ does not satisfy {\condintidengood} whenever $p\ne 0$.

First, we show that $\varphi_p$ does not satisfy {\condintidengood} for $p \geq 1$.
Since $2^p \geq 2$, we have $\Delta_{\varphi_p, 0}(1) = \varphi_p(1) - \varphi_p(0) = 1 \leq 2^p - 1 = \varphi_p(2) - \varphi_p(1) = \Delta_{\varphi_p, 1}(1)$, so {\condintidengood} is violated.

Next, we show that $\varphi_p$ does not satisfy {\condintidengood} for $0 < p < 1$.
For any $x \in \Zplus$, we have
\begin{align*}
    x(\varphi_p(x+1) - \varphi_p(x)) &= x\int_x^{x+1} \varphi'_p(t) \; \dd t \\
    &= x\int_x^{x+1} pt^{p-1} \; \dd t \\
    &\geq x\int_x^{x+1} p(x+1)^{p-1} \; \dd t \tag{since \protect{$p > 0$} and \protect{$p-1 < 0$}} \\
    &= xp(x+1)^{p-1}.
\end{align*}
Note that $\lim_{x \to \infty} xp(x+1)^{p-1} = \lim_{x \to \infty} p(x+1)^p = \infty$, so there does not exist a constant $C \in \Rplus$ such that $x(\varphi_p(x+1) - \varphi_p(x)) \leq C$ for all $x \in \Zplus$.
By \Cref{prop:integer_idengood_marginal}, $\varphi_p$ does not satisfy {\condintidengood}.

Finally, we show that $\varphi_p$ does not satisfy {\condintidengood} for $p < 0$.
For any $x \in \Zplus$, we have
\begin{align*}
    x(\varphi_p(x+1) - \varphi_p(x)) &= x\int_x^{x+1} \varphi'_p(t) \; \dd t \\
    &= x\int_x^{x+1} - pt^{p-1} \; \dd t \\
    &\leq x\int_x^{x+1} - px^{p-1} \; \dd t \tag{since \protect{$p, p-1 < 0$}} \\
    &= -px^p.
\end{align*}
Note that $\lim_{x \to \infty} -px^p = 0$, so there does not exist a constant $C \in \Rplus$ such that $x(\varphi_p(x+1) - \varphi_p(x)) \geq C$ for all $x \in \Zplus$.
By \Cref{prop:integer_idengood_marginal}, $\varphi_p$ does not satisfy {\condintidengood}.

($\Rightarrow$) This follows from the result of \citet{CaragiannisKuMo19} that MNW guarantees EF1.
\end{proof}

\end{document}